\documentclass[11pt]{article}

\usepackage{amsthm, amsmath}
\usepackage{amssymb}
\usepackage{cancel}

\usepackage[margin=1in, letterpaper]{geometry}
\usepackage{microtype}
\usepackage[T1]{fontenc}
\usepackage{lmodern}
\usepackage{verbatim}
\usepackage{enumitem}
\usepackage{array}
\usepackage{multirow}
\usepackage{afterpage}
\usepackage[usenames,dvipsnames]{xcolor}
\usepackage[colorlinks=true,allcolors=Maroon]{hyperref}
\usepackage{caption} 
\usepackage{subcaption}
\usepackage{setspace}
\usepackage{braket}
\usepackage{graphicx}
\usepackage{algpseudocode}
\usepackage{algorithm}

\algtext*{EndWhile}
\algtext*{EndFor}
\algtext*{EndIf}
\algtext*{EndFunction}

\newtheorem{theorem}{Theorem}[section]
\newtheorem{lemma}[theorem]{Lemma}

\newtheorem{claim}[theorem]{Claim}

\newtheorem{corollary}[theorem]{Corollary}

\newtheorem{definition}[theorem]{Definition}

\begin{document}

\title{
Optimal Hashing-based Time--Space Trade-offs for\\Approximate Near Neighbors%
\footnote{This paper merges two arXiv preprints:~\cite{Laarhoven2015}
  (appeared online on November~24,~2015) and~\cite{ALRW16} (appeared
  online on May~9,~2016), and subsumes both of these articles. An
  extended abstract of this paper appeared in the proceedings of 28th Annual ACM--SIAM Symposium on Discrete Algorithms (SODA~'2017).}\\ }

\author{
Alexandr Andoni\\Columbia
\and
Thijs Laarhoven\\IBM Research Z\"urich
\and
Ilya Razenshteyn\\MIT CSAIL
\and 
Erik Waingarten\\Columbia
}

\maketitle

\begin{abstract}
We show tight upper and lower bounds for time--space trade-offs for
the $c$-Approximate Near Neighbor Search problem.  For the
$d$-dimensional Euclidean space and $n$-point datasets, we develop a
data structure with space $n^{1 + \rho_u + o(1)} + O(dn)$ and query
time $n^{\rho_q + o(1)} + d n^{o(1)}$ for every $\rho_u, \rho_q \geq
0$ with:
\begin{equation}
\label{tradeoff_abstract}
c^2 \sqrt{\rho_q} + (c^2 - 1) \sqrt{\rho_u} = \sqrt{2c^2 - 1}.
\end{equation}

For example, for the approximation $c = 2$ we can achieve:
\begin{itemize}
\item Space $n^{1.77\ldots}$ and query time $n^{o(1)}$,
significantly improving upon known data structures that support very fast queries~\cite{IM, KOR00};
\item Space $n^{1.14\ldots}$ and query time $n^{0.14\ldots}$, matching the optimal data-dependent Locality-Sensitive Hashing (LSH) from~\cite{AR-optimal};
\item Space $n^{1 + o(1)}$ and query time $n^{0.43\ldots}$,
making significant progress in the regime of near-linear space, which is arguably
of the most interest for practice~\cite{lv2007multi}. 
\end{itemize}
This is the first data structure that achieves sublinear query time
and near-linear space for \emph{every} approximation factor $c > 1$,
improving upon~\cite{Kap15-tradeoff}.
The data structure is a culmination of a~long line of work on the problem for all
space regimes; it builds on spherical Locality-Sensitive Filtering~\cite{BDGL16-subRho} and
data-dependent hashing~\cite{AINR-subLSH, AR-optimal}. 

Our matching lower bounds are of two types: conditional and
unconditional. First, we prove tightness of the \emph{whole}
trade-off~\eqref{tradeoff_abstract} in a restricted model of
computation, which captures all known hashing-based approaches. We
then show \emph{unconditional} cell-probe lower bounds for one and two
probes that match~\eqref{tradeoff_abstract} for $\rho_q = 0$, improving upon the best known lower bounds from \cite{PTW10}.
In particular, this is the first space lower bound (for \emph{any}
static data structure) for two probes which is not polynomially
smaller than the corresponding one-probe bound. To show the result for two probes, we
establish and exploit a connection to \emph{locally-decodable codes}.
\end{abstract}

\thispagestyle{empty}
\newpage
\thispagestyle{empty}

{
\small 
\tableofcontents
}

\newpage
\setcounter{page}{1}

\onehalfspacing

\def\colorful{1}

\ifnum\colorful=1
\newcommand{\violet}[1]{{\color{violet}{#1}}}
\newcommand{\orange}[1]{{\color{orange}{#1}}}
\newcommand{\blue}[1]{{{\color{blue}#1}}}
\newcommand{\red}[1]{{\color{red} {#1}}}
\newcommand{\green}[1]{{\color{green} {#1}}}
\newcommand{\pink}[1]{{\color{pink}{#1}}}
\newcommand{\gray}[1]{{\color{gray}{#1}}}

\fi
\ifnum\colorful=0
\newcommand{\violet}[1]{{{#1}}}
\newcommand{\orange}[1]{{{#1}}}
\newcommand{\blue}[1]{{{#1}}}
\newcommand{\red}[1]{{{#1}}}
\newcommand{\green}[1]{{{#1}}}
\newcommand{\gray}[1]{{{#1}}}

\fi
 
\newcommand{\R}{\mathbb{R}}
\newcommand{\E}{\mathbb{E}}
\newcommand{\1}{\mathbf{1}}
\newcommand{\A}{\mathcal{A}}
\newcommand{\Pd}{\mathcal{P}}
\newcommand{\Q}{\mathcal{Q}}
\newcommand{\X}{\mathcal{X}}
\newcommand{\eps}{\varepsilon}
\newcommand{\uth}{{\huge \red{UP TO HERE}}}
\newcommand{\Rbb}{\mathbb{R}}
\newcommand{\Pc}{\mathcal{P}}
\newcommand{\Rc}{\mathcal{R}}
\newcommand{\Lc}{\mathcal{L}}
\newcommand{\Prb}[2]{\underset{#1}{\mathrm{Pr}}\left[#2\right]}

\newcommand{\clevercomplete}{\mathrm{clevercomplete}}

\newcommand{\ignore}[1]{}

\newcommand{\aanote}[1]{}
\newcommand{\irnote}[1]{}
\newcommand{\ewnote}[1]{}
\newcommand{\tlnote}[1]{}

\newcommand{\Qeta}{\eta_q}
\newcommand{\Ueta}{\eta_u}


\section{Introduction}

\subsection{Approximate Near Neighbor problem (ANN)}

The Near Neighbor Search problem (NNS) is a basic and fundamental problem in
computational geometry, defined as follows. We are given a dataset $P$ of
$n$ points from a metric space $(X, d_X)$ and a distance threshold
$r > 0$.  The goal is to preprocess $P$ in order to answer
\emph{near neighbor queries}: given a query point $q \in X$, return a
dataset point $p \in P$ with $d_X(q, p) \leq r$, or report that there is no
such point. The $d$-dimensional Euclidean $(\R^d, \ell_2)$ and Manhattan/Hamming $(\R^d,
\ell_1)$ metric spaces have received the most attention.  Besides
classical applications to similarity search over many types of data
(text, audio, images, etc; see~\cite{NNNIPS} for an overview), NNS has
been also recently used for cryptanalysis~\cite{MO15-nns, L15a, l-svp-15, BDGL16-subRho} and
optimization~\cite{DRT11, HLM15, ZYS16}. 

The performance of an NNS data structure is primarily characterized by
two key metrics:
\begin{itemize}
\item space: the amount of memory a data structure occupies, and
\item query time: the time it takes to answer a query.
\end{itemize}

All known time-efficient data
structures for NNS (e.g., \cite{Cl1, Mei93}) require space
exponential in the dimension $d$, which is prohibitively expensive unless
$d$ is very small.  To overcome this so-called \emph{curse of
dimensionality}, researchers proposed the $(c,r)$-\emph{Approximate}
Near Neighbor Search problem, or $(c,r)$-ANN. In this relaxed version,
we are given a dataset $P$
and a
distance threshold $r > 0$, as well as an approximation factor $c >
1$.  Given a query point $q$ with the promise that there is at least one
data point in $P$ within distance at most $r$ from $q$, the goal is to return a~data
point $p \in P$ within a distance at most $cr$ from $q$.

ANN does allow efficient data structures with
a query time sublinear in~$n$, and only polynomial dependence in $d$
in all parameters \cite{IM, GIM, KOR00, I-thesis, I98, Char, CR04, DIIM, Pan, AI-CACM, TT-spherical, AC-fastJL, AINR-subLSH, Kap15-tradeoff, AR-optimal, Pagh16-deterministic, BDGL16-subRho, ARS17, ANRW17}. In
practice, ANN algorithms are often successful even when one is
interested in \emph{exact} nearest neighbors~\cite{ADIIM, AILSR15}.
We refer the reader to~\cite{HIM12, AI-CACM, Andoni-Thesis} for
a~survey of the theory of ANN, and
\cite{wssj-hsss-14,WLKC15-learningHash} for a more practical
perspective.

In this paper, we obtain tight time--space trade-offs for ANN in hashing-based models. Our
upper bounds are stated in Section~\ref{sec:UBresults}, and the lower
bounds are stated in Section~\ref{sec:LBresults}.  We provide more
background on the problem next. 

\subsection{Locality-Sensitive Hashing (LSH) and beyond}

A classic technique for ANN is \emph{Locality-Sensitive Hashing}
(LSH), introduced in~1998 by Indyk and Motwani~\cite{IM, HIM12}. The
main idea is to use \emph{random space partitions}, 
for which a pair of close points
(at distance at most $r$) is more likely to belong to the same part than a pair of far
points (at distance more than $cr$). Given such a partition, the data
structure splits the dataset $P$ according to the partition, and, given a query,
retrieves all the data points which belong to the same part as the query.
In order to return a near-neighbor with high probability of success, one maintains several partitions
and checks all of them during the query stage. LSH yields data structures with space $O(n^{1 + \rho} + d n)$
and query time $O(d n^{\rho})$, where $\rho$ is the key quantity measuring the quality of the random space partition for a particular metric space and approximation $c \geq 1$.
Usually, $\rho = 1$ for $c = 1$ and $\rho \to 0$ as $c \to \infty$.

Since the introduction of LSH in \cite{IM}, subsequent research
established optimal values of the LSH exponent $\rho$ for several
metrics of interest, including $\ell_1$ and $\ell_2$. For the Manhattan
distance~($\ell_1$), the optimal value is $\rho=\frac{1}{c} \pm
o(1)$~\cite{IM, MNP, OWZ11}. For the Euclidean metric~($\ell_2$), it is
$\rho=\frac{1}{c^2} \pm o(1)$~\cite{IM, DIIM, AI-CACM, MNP, OWZ11}.

More recently, it has been shown that better bounds on $\rho$ are possible
if random space partitions are \emph{allowed to depend on the
  dataset}\footnote{Let us note that the idea of data-dependent random
  space partitions is ubiquitous in practice, see,
  e.g.,~\cite{wssj-hsss-14,WLKC15-learningHash} for a survey. But the
  perspective in practice is that the given datasets are not ``worst
  case'' and hence it is possible to adapt to the additional ``nice''
  structure.}. That is, the algorithm is based on an observation that every dataset has some structure to exploit. 
This more general framework of \emph{data-dependent LSH}
yields $\rho = \frac{1}{2c - 1} + o(1)$ for the $\ell_1$ distance, and
$\rho = \frac{1}{2c^2 - 1} + o(1)$ for $\ell_2$~\cite{AINR-subLSH,
  r-msthesis-14, AR-optimal}. Moreover, these bounds are known to be
tight for data-dependent LSH~\cite{ar15lower}.

\subsection{Time--space trade-offs}

\label{tradeoff_sec}

Since the early results on LSH, a natural question has been whether
one can obtain query time vs.\ space trade-offs for a fixed approximation $c$. Indeed, data 
structures with \emph{polynomial} space and
\emph{poly-logarithmic} query time were introduced~\cite{IM, KOR00} simultaneously with LSH.

In practice, the most important regime is that of
\emph{near-linear} space, since space is usually a harder constraint than
time: see, e.g., \cite{lv2007multi}. The main
question is whether it is possible to
obtain near-linear space and sublinear query time.
This regime has been studied
since~\cite{I-thesis}, with subsequent improvements in \cite{Pan, AI-CACM,
  lv2007multi, Kap15-tradeoff, AILSR15}. In particular, 
\cite{lv2007multi, AILSR15} introduce practical versions of the above
theoretical results.

Despite significant progress in the near-linear space regime, no known 
algorithms obtain near-linear space and a sublinear query time 
for \emph{all} approximations $c>1$. For
example, the best currently known algorithm of \cite{Kap15-tradeoff}
obtained query time of roughly $n^{4/(c^2+1)}$, which becomes trivial 
for $c<\sqrt{3}$.

\subsection{Lower bounds}
\label{sec:lowerBounds}

Lower bounds for NNS and ANN have also received considerable
attention.  Such lower bounds are ideally obtained in the
\emph{cell-probe} model~\cite{MNSW, miltersen1999cell}, where one measures the \emph{number of memory cells} the
query algorithm accesses. Despite a number of success stories, high
cell-probe lower bounds are notoriously hard to prove. In fact, there
are few techniques for proving high cell-probe lower bounds, for any
(static) data structure problem. For ANN in particular, we have no
viable techniques to prove $\omega(\log n)$ query time lower
bounds. 
Due to this state of affairs, one may rely on
\emph{restricted} models of computation, which nevertheless capture
existing algorithmic approaches.

Early lower bounds for NNS were obtained for data structures in {\em
  exact} or {\em deterministic} settings~\cite{BORl,CCGL, BR,
  Liu-deterNNS, jayram05match, CR04, PT, Y16a}. \cite{CR04, LPY16}
obtained an almost tight cell-probe lower bound for the randomized
Approximate \emph{Nearest} Neighbor Search under the~$\ell_1$
distance. In that problem, there is no distance threshold $r$,
and instead the goal is to find a data point that is not much further
than the \emph{closest} data point.
This
twist is the main source of hardness, so the result
is not applicable to the ANN problem as introduced above.

There are few results that show lower bounds for \emph{randomized}
data structures for ANN.
The first such result~\cite{AIP} shows that any data structure that
solves $(1 + \eps, r)$-ANN for $\ell_1$ or $\ell_2$ using $t$ cell
probes requires space $n^{\Omega(1 / t\eps^2)}$.\footnote{The correct
  dependence on $1/\eps$ requires the stronger Lopsided Set Disjointness lower bound from
  \cite{patrascu11structures}.} This result shows that the algorithms
of~\cite{IM, KOR00} are tight up to constants in the exponent for
$t=O(1)$. 

In~\cite{PTW10} (following up on~\cite{PTW08}), the authors introduce
a general framework for proving lower bounds for ANN under any
metric. They show that lower bounds for ANN are implied by the
\emph{robust expansion} of the underlying metric space.
Using this framework, \cite{PTW10} show that $(c, r)$-ANN using $t$
cell probes requires space $n^{1 + \Omega(1 / tc)}$ for the Manhattan
distance and $n^{1 + \Omega(1 / tc^2)}$ for the Euclidean distance
(for every $c > 1$).

Lower bounds have also been obtained for other metrics.  For the $\ell_\infty$
distance, \cite{ACP08} show a lower bound for deterministic ANN data
structures. This lower bound was later generalized to randomized data
structures \cite{PTW10, KP12-nns}. A~recent result~\cite{AV15} adapts
the framework of~\cite{PTW10} to Bregman divergences.  

To prove higher lower bounds, researchers resorted to lower bounds for
restricted models. These examples include: decision trees
\cite{ACP08} (the corresponding upper bound~\cite{I98} is in the same model),
LSH~\cite{MNP, OWZ11, AILSR15} and data-dependent
LSH~\cite{ar15lower}.

\subsection{Our results: upper bounds}
\label{sec:UBresults}

We give an algorithm obtaining the entire range of time--space
tradeoffs, obtaining sublinear query time for all $c>1$, for the
entire space $\mathbb{R}^d$. Our main theorem is the following:

\begin{theorem}[see Sections~\ref{spherical_sec} and ~\ref{apx:upper_general}]
  \label{main_upper_intro}
  For every $c > 1$, $r > 0$, $\rho_q \geq 0$ and $\rho_u \geq 0$ such that
  \begin{equation}
  \label{main_tradeoff_intro}
  c^2 \sqrt{\rho_q} + \big(c^2 - 1\big) \sqrt{\rho_u} \geq \sqrt{2c^2 - 1},
  \end{equation}
  there exists a data structure for $(c, r)$-ANN for the Euclidean
  space $\Rbb^d$, with space $n^{1 + \rho_u + o(1)} + O(dn)$ and query
  time $n^{\rho_q + o(1)} + dn^{o(1)}$.
\end{theorem}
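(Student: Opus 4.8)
The plan is to assemble the data structure in three layers: a reduction to the unit sphere, a data-independent spherical filtering scheme analyzed on pseudo-random instances, and a data-dependent recursion that reduces the worst case to the pseudo-random one; the trade-off~\eqref{main_tradeoff_intro} then comes out of a one-parameter optimization inside the filtering scheme. First I would run the standard chain of reductions (clamping the relevant distance scales, a Johnson--Lindenstrauss-type dimension reduction with a slowly vanishing distortion, and splitting the dataset by distance scale, as in~\cite{IM, HIM12, AINR-subLSH}) to reduce $(c,r)$-ANN on $\Rbb^d$ to the following spherical problem: the dataset and all queries lie on $S^{d'-1}$ with $d' = n^{o(1)}$, a ``near'' pair is at Euclidean distance at most some explicit threshold below $\sqrt2$ that depends only on $c$, and every other dataset point is ``far''. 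Keeping the original points around to verify true distances accounts for the $O(dn)$ space term and the $dn^{o(1)}$ query term; from now on only the sphere case matters, where we can exploit spherical symmetry.

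On the sphere I would use spherical Locality-Sensitive Filtering~\cite{BDGL16-subRho}. Fix two thresholds $\eta_q, \eta_u$ and a family of $M$ ``filter centers'': a data point $x$ is stored in the bucket of center $z$ iff $\langle z, x\rangle \ge \eta_u$, and a query $q$ probes the bucket of $z$ iff $\langle z, q\rangle \ge \eta_q$; on a query we scan all probed buckets and report any point within the target distance. Choosing $M$ just large enough that, with high probability, every genuinely near pair shares a filter that the query probes, a direct Gaussian tail computation gives that the number of stored (point, filter) incidences is $n^{1+\rho_u+o(1)}$, the number of filters a query probes is $n^{\rho_q+o(1)}$, and --- \emph{provided far pairs behave like random, near-orthogonal points on the sphere} --- the expected number of far points scanned is also $n^{\rho_q+o(1)}$, where $(\rho_q,\rho_u)$ ranges precisely over the solutions of~\eqref{main_tradeoff_intro} as $\eta_q/\eta_u$ varies; the symmetric choice $\eta_q=\eta_u$ recovers the balanced data-dependent exponent $\rho_q=\rho_u=\tfrac{1}{2c^2-1}$. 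To keep the filtering step itself cheap one cannot afford $M=n^{\Omega(1)}$ independent Gaussian centers, so I would build the centers as a tensor product of $k=\omega(1)$ blocks of few Gaussian vectors each: the set of probed/storing filters of any point is then enumerated in $n^{o(1)}$ time by decoding each block, and a pruning argument shows the product family is still ``complete'' for near pairs.

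It remains to remove the pseudo-randomness assumption, and this data-dependent step~\cite{AINR-subLSH, AR-optimal} is what I expect to be the main obstacle. At preprocessing time, recursively search for a dense low-diameter cluster: a ball containing an $n^{-o(1)}$ fraction of the current points whose radius is a $(1-\delta)$ factor below the ``random-pair'' distance of the current sphere. If one exists, re-center it and reproject onto its strictly smaller enclosing sphere --- which \emph{improves} the effective geometry, since near pairs stay near while the ambient radius shrinks --- and recurse on it, and in parallel recurse on the remainder with the cluster deleted; when no dense cluster remains, the instance is pseudo-random and the spherical filtering scheme above applies with its ideal parameters. A query is answered by running the query procedure inside every cluster it falls into (so a near neighbor hidden in a deleted cluster is not missed) together with the pseudo-random leftover structure. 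The work is in: (i) choosing $\delta$ so the recursion depth is $O(\log\log n)$ and all induced overheads are $n^{o(1)}$; (ii) checking that the clustering never separates a near pair; and, most delicately, (iii) tracking how $\eta_q,\eta_u$ and the shrinking radius co-evolve down the recursion so that the \emph{asymmetric} exponents stay exactly on the curve~\eqref{main_tradeoff_intro} up to $o(1)$ --- the balanced analysis of~\cite{AR-optimal} must be carried out with the two thresholds decoupled. Combining the three layers yields space $n^{1+\rho_u+o(1)}+O(dn)$ and query time $n^{\rho_q+o(1)}+dn^{o(1)}$ for every $(\rho_q,\rho_u)$ satisfying~\eqref{main_tradeoff_intro}.
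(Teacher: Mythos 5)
Your plan follows essentially the same route as the paper: reduce to the sphere (randomly shifted grid / Valiant-style lifting plus Johnson--Lindenstrauss), analyze a data-independent spherical filtering scheme with decoupled thresholds $\eta_u,\eta_q$ to get the trade-off~\eqref{main_tradeoff_intro} on pseudo-random instances, and then run the \cite{AR-optimal}-style dense-cluster extraction and recursion (with the annulus/reprojection bookkeeping and threshold tracking you flag as items (i)--(iii)) to handle worst-case datasets. The only cosmetic difference is that you implement the filtering via the product-code LSF decoding of~\cite{BDGL16-subRho}, whereas the paper's main exposition uses a depth-$K$ tree of Gaussian caps, an alternative the paper itself describes as asymptotically equivalent.
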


This algorithm has optimal exponents for all hashing-based algorithms,
as well as one- and two-probe data structures, as we prove in later
sections. In particular, Theorem~\ref{main_upper_intro} recovers or
improves upon all earlier results on ANN in the entire time-space
trade-off. For the near-linear space regime, setting $\rho_u = 0$, we obtain space $n^{1
  + o(1)}$ with query time $n^{\frac{2c^2 - 1}{c^4} + o(1)}$, which is
sublinear for every $c > 1$. For $\rho_q = \rho_u$, we recover the
best data-dependent LSH bound from~\cite{AR-optimal}, with space $n^{1
  + \frac{1}{2c^2 - 1} + o(1)}$ and query time $n^{\frac{1}{2c^2 - 1}
  + o(1)}$. Finally, setting $\rho_q = 0$, we obtain query time
$n^{o(1)}$ and space $n^{\left(\frac{c^2}{c^2 - 1}\right)^2 + o(1)}$,
which, for $c = 1 + \eps$ with $\eps \to 0$, becomes $n^{1 / (4
  \eps^2) + \ldots}$.  

Using a reduction from~\cite{Nguyen-thesis}, we obtain a similar
trade-off for the $\ell_p$ spaces for $1 \leq p < 2$ with $c^2$
replaced with $c^p$.  In particular, for the $\ell_1$ distance we get:
$$
c \sqrt{\rho_q} + \big(c - 1\big) \sqrt{\rho_u} \geq \sqrt{2c - 1}.
$$

Our algorithms can support insertions/deletions with
  only logarithmic loss in space/query time, using the
  \emph{dynamization} technique for decomposable search problems
  from~\cite{ovl81}, achieving update time of $d n^{\rho_u+o(1)}$. To apply this technique,
  one needs to ensure that the preprocessing time is near-linear in the space used,
  which is the case for our data structure.

\subsubsection{Techniques}

We now describe the proof of
Theorem~\ref{main_upper_intro} at a high level.  It consists of two major stages. In
the first stage, we give an algorithm 
for \emph{random} Euclidean instances (introduced formally in
Section~\ref{sec:rand_inst_sec}). In the random Euclidean instances, we generate a dataset
uniformly at random on a unit sphere $S^{d - 1} \subset \Rbb^d$ and
plant a query at random within distance $\sqrt{2} / c$ from a
randomly chosen data point.  In the second stage, we show the claimed
result for the \emph{worst-case} instances by combining ideas from the
first stage with data-dependent LSH from~\cite{AINR-subLSH,
  AR-optimal}.

\paragraph{Data-independent partitions.}
To handle random instances, we use a~certain \emph{data-independent} random process, which we
briefly introduce below.
It can be seen as a modification of spherical Locality-Sensitive Filtering from~\cite{BDGL16-subRho},
and is related to a cell-probe \emph{upper bound} from \cite{PTW10}.
While this data-independent approach can be extended to
\emph{worst case} instances, it gives a bound significantly worse than~\eqref{main_tradeoff_intro}.

We now describe the random process which produces a decision tree to solve 
an instance of ANN on a \emph{Euclidean unit
   sphere} $S^{d - 1} \subset \Rbb^d$. We take our
initial dataset~$P \subset S^{d-1}$ and sample $T$ i.i.d.\ standard
Gaussian $d$-dimensional vectors $z_1$, $z_2$, \ldots, $z_T$. The sets $P_i \subseteq P$ (not necessarily disjoint) are defined for each $z_i$
as follows:
$$
P_i = \{p \in P \mid \langle z_i, p\rangle \geq \Ueta\}.
$$ 
We then recurse and repeat the above procedure for each non-empty
$P_i$.  We stop the recursion once we reach depth $K$. The above procedure
generates a tree of depth $K$ and degree at most $T$, where each leaf explicitly stores
the corresponding subset of the dataset.
To answer a query $q \in S^{d - 1}$, we start at the root
and descend into (potentially multiple) $P_i$'s for which $\langle
z_i, q\rangle \geq \Qeta$.  When we
eventually reach the $K$-th level, we iterate through all the points stored in the accessed leaves searching for a near neighbor.

The parameters $T$, $K$, $\Ueta$ and $\Qeta$ depend on the distance threshold~$r$, the approximation factor~$c$, as well as the desired
space and query time exponents $\rho_u$ and $\rho_q$. The special case of $\Ueta = \Qeta$ corresponds to the ``LSH regime'' $\rho_u = \rho_q$;
$\Ueta < \Qeta$ corresponds to the ``fast queries'' regime $\rho_q < \rho_u$ (the query procedure is more selective);
and $\Ueta > \Qeta$ corresponds to the ``low memory'' regime $\rho_u < \rho_q$. The analysis of this algorithm relies on bounds
on the Gaussian area of certain two-dimensional sets~\cite{AR-optimal, AILSR15}, which are routinely needed for understanding ``Gaussian'' partitions.

This algorithm has two important consequences. First, we obtain the
desired trade-off~\eqref{main_tradeoff_intro} for random instances by
setting $r = \frac{\sqrt{2}}{c}$.  Second, we obtain an inferior
trade-off for \emph{worst-case} instances of $(c, r)$-ANN over a unit
sphere $S^{d - 1}$. Namely, we get:

\begin{equation}
\label{suboptimal_tradeoff_intro}
(c^2 + 1)\sqrt{\rho_q} + (c^2 - 1)\sqrt{\rho_u} \geq 2c.
\end{equation}
Even though it is inferior to the desired bound from~\eqref{main_tradeoff_intro}\footnote{See
  Figure~\ref{trade_off_plot} for comparison for the case $c = 2$.},
it is already non-trivial. In particular,~(\ref{suboptimal_tradeoff_intro}) is better than \emph{all the prior
  work} on time--space trade-offs for ANN, including the most recent trade-off~\cite{Kap15-tradeoff}.  Moreover, using a
reduction from~\cite{Valiant12}, we achieve the
bound~(\ref{suboptimal_tradeoff_intro}) for the whole $\Rbb^d$ as
opposed to just the unit sphere.  Let us formally record it below:
\begin{theorem}
  \label{suboptimal_upper_intro}
  For every $c > 1$, $r > 0$, $\rho_q \geq 0$ and $\rho_u \geq 0$ such
  that~\eqref{suboptimal_tradeoff_intro} holds, there exists a data
  structure for $(c, r)$-ANN for the \emph{whole} $\Rbb^d$ with space
  $n^{1 + \rho_u + o(1)} + O(dn)$ and query time $n^{\rho_q + o(1)} +
  dn^{o(1)}$.
\end{theorem}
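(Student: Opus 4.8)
The plan is to reduce the worst-case $(c,r)$-ANN problem over $\Rbb^d$ to the case of a unit sphere, and then to solve the spherical case via the data-independent Gaussian partitioning scheme described above (the one producing a depth-$K$, degree-$T$ tree with thresholds $\Ueta,\Qeta$). Concretely, I would proceed in three stages.

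\emph{Stage 1: reduction to the sphere.} First I would invoke the reduction of~\cite{Valiant12} to reduce ANN over all of $\Rbb^d$ to ANN on a Euclidean unit sphere $S^{d-1}$, with the approximation factor $c$ essentially preserved (up to $1+o(1)$ factors) and only an $n^{o(1)}$ loss in space and query time. This reduction partitions/rescales the dataset so that all relevant point pairs land at a controlled distance scale on a sphere; after it, it suffices to solve $(c,r')$-ANN for points on $S^{d-1}$ where $r'$ corresponds to the angular scale induced by the original threshold~$r$. I would state precisely what ``near'' and ``far'' translate to: a near pair has inner product $\geq \alpha$ and a far pair has inner product $\leq \beta$ for the appropriate $\alpha>\beta$ determined by $c$ and $r'$ (with $r' = \sqrt{2}/c$ in the random-instance normalization, but a general $r'$ in the worst case).

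\emph{Stage 2: analyze the Gaussian tree on the sphere.} This is the technical core. For a single level, a point $p$ enters bucket $P_i$ iff $\langle z_i,p\rangle \geq \Ueta$, and a query $q$ probes bucket $i$ iff $\langle z_i,q\rangle \geq \Qeta$. I would compute, for a standard Gaussian $z$: the ``storage'' probability $p_u = \Pr[\langle z,p\rangle \geq \Ueta]$, the ``query'' probability $p_q = \Pr[\langle z,q\rangle \geq \Qeta]$, and, crucially, the collision probabilities $p_1 = \Pr[\langle z,p\rangle\geq\Ueta \wedge \langle z,q\rangle\geq\Qeta]$ for a near pair and $p_2$ for a far pair. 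The first three are one-dimensional Gaussian tail integrals; the collision probabilities are Gaussian areas of two-dimensional sets (an intersection of two halfplanes whose normals form the angle between $p$ and $q$), and I would cite the asymptotic estimates for such Gaussian areas from~\cite{AR-optimal, AILSR15}. Then, choosing $T$ so that $T p_u = n^{o(1)}$ and $K$ so that the tree has the right depth, a standard calculation shows: the expected space is $n \cdot (T p_u)^K \cdot (\text{leaf size}) \le n^{1+\rho_u+o(1)}$ where $\rho_u$ is governed by the ratio $\log p_u / \log p_2$; the expected query time is $n \cdot (T p_q)^K \cdot(\text{stuff}) \le n^{\rho_q+o(1)}$ with $\rho_q$ governed by $\log p_q/\log p_2$; and the near neighbor is found with probability $n^{-o(1)}$ because the success probability per level is $p_1/p_u$ on the storage side, which one boosts by $n^{o(1)}$ independent repetitions of the whole tree. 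Optimizing the free parameters $\Ueta,\Qeta$ (equivalently, the two ``angle'' parameters in the 2D Gaussian-area formulas) subject to the space and time budgets yields exactly the constraint $(c^2+1)\sqrt{\rho_q} + (c^2-1)\sqrt{\rho_u} \ge 2c$; I would present this optimization as a Lagrange-multiplier / substitution computation but not grind through every line, referring forward to Section~\ref{spherical_sec} for details.

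\emph{Stage 3: assembling the data structure.} Finally I would package Stages 1--2: build $n^{o(1)}$ independent copies of the tree, answer a query by descending all of them and scanning the reached leaves, and verify correctness (probability of success $1-o(1)$ after the repetitions), the space bound $n^{1+\rho_u+o(1)}+O(dn)$ (the additive $O(dn)$ for storing the points themselves), and the query bound $n^{\rho_q+o(1)}+dn^{o(1)}$ (the additive $dn^{o(1)}$ for evaluating the $z_i$ inner products and the final distance checks).

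\emph{Main obstacle.} The delicate step is Stage~2's parameter optimization combined with getting the Gaussian-area asymptotics tight enough: one must control $p_1, p_2$ up to subexponential factors uniformly as the dimension and the thresholds $\Ueta,\Qeta$ grow, and then show that the optimum of the resulting constrained program is exactly~\eqref{suboptimal_tradeoff_intro} rather than something weaker. A secondary subtlety is checking that the reduction of~\cite{Valiant12} is compatible with the whole time--space trade-off (not just the single LSH point), i.e., that it does not degrade $\rho_u$ or $\rho_q$ by more than $o(1)$; I would verify this by noting the reduction only adds an $n^{o(1)}$ multiplicative overhead and preserves decomposability, so it commutes with the trade-off.
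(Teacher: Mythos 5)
Your proposal matches the paper's own route: reduce worst-case $(c,r)$-ANN in $\Rbb^d$ to a spherical instance with small radius scale (the randomly-shifted-grid plus lift-to-a-sphere reduction \`a la \cite{Valiant12} in Corollary~\ref{worst_case_corollary}), then apply the data-independent Gaussian-cap tree of Section~\ref{spherical_sec} with the $F,G$ tail estimates, whose trade-off \eqref{alpha_beta_tradeoff} degenerates to \eqref{suboptimal_tradeoff_intro} as the spherical radius scale tends to $0$. The only cosmetic difference is that you boost a per-tree success probability of $n^{-o(1)}$ by $n^{o(1)}$ repetitions, whereas the paper gets constant success probability from a single tree by taking branching factor $T = 100/G(r,\Ueta,\Qeta)$; this does not affect the exponents.
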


\paragraph{Data-dependent partitions.}
We then improve Theorem~\ref{suboptimal_upper_intro} for
worst-case instances and obtain the final result,
Theorem~\ref{main_upper_intro}. We build on the ideas of
data-dependent LSH from~\cite{AINR-subLSH, AR-optimal}.  Using the
reduction from~\cite{Valiant12}, we may assume that the dataset and
queries lie on a unit sphere~$S^{d - 1}$. 

If pairwise distances between data points are
distributed roughly like a random instance, we could apply the data-independent procedure.  In absence of such a
guarantee, we manipulate the dataset in order to reduce it to
a random-looking case. Namely, we look for low-diameter
clusters that contain many data points. We extract these clusters,
and we enclose each of them in a ball of radius non-trivially smaller than one, and we recurse on each cluster. For the remaining points, which do not lie in any cluster, we perform one step of
the data-independent algorithm: we sample $T$ Gaussian vectors,
form $T$ subsets of the dataset, and recurse on each subset. Overall, we make
progress in two ways: for the clusters, we make them a bit more 
isotropic after re-centering, which, after several re-centerings,
makes the instance amenable to the data-independent algorithm, and for
the remainder of the points, we can show that the absence of dense clusters makes
the data-independent algorithm work for a single level of the
tree (though, when recursing into $P_i$'s, dense clusters may re-appear, 
which we will need to extract).

While the above intuition is very simple and, in hindsight, natural,
the actual execution requires a good amount of work. For example, we
need to formalize ``low-diameter'', ``lots of points'', ``more
isotropic'', etc. 
However, compared to~\cite{AR-optimal}, we
manage to simplify certain parts. For example, we do not need to analyze the
behavior of Gaussian partitions on \emph{triples} of points. While this was
necessary in~\cite{AR-optimal}, we can avoid that analysis here, which makes the overall argument much cleaner. The algorithm still requires fine tuning of many moving parts, and we hope that it will be further simplified in the future.

Let us note that prior work suggested that time--space trade-offs might be possible
with data-dependent partitions. To quote~\cite{Kap15-tradeoff}: ``{\em
  It would be very interesting to see if
similar [\ldots to \cite{AINR-subLSH} \ldots] analysis can be used to
improve our tradeoffs}''.

\subsection{Our results: lower bounds}
\label{sec:LBresults}

We show new \emph{cell-probe} and \emph{restricted} lower bounds for $(c,
r)$-ANN matching our upper bounds. All our lower bounds rely on a certain
canonical hard distribution for the Hamming space (defined later in
Section~\ref{sec:rand_inst_sec}). Via a standard reduction~\cite{LLR},
we obtain similar hardness results for $\ell_p$ with $1 < p \leq 2$
(with $c$ being replaced by $c^p$).

\subsubsection{One cell probe}

First, we show a tight lower bound on the
space needed to solve ANN for a random instance, for query algorithms
that use a {\em single} cell probe. More formally, we
prove the following theorem:

\begin{theorem}[see Section~\ref{sec:oneProbe}]
  \label{one_probe_thm}
  Any data structure that:
  \begin{itemize}
        \item solves $(c,r)$-ANN for the Hamming random instance (as defined in
          Section~\ref{sec:rand_inst_sec}) with probability at least~$2/3$,
        \item operates on memory cells of size $n^{o(1)}$,
        \item for each query, looks up a \emph{single} cell,
  \end{itemize}
  must use at least $n^{\left(\frac{c}{c - 1}\right)^2 - o(1)}$ words of memory.
\end{theorem}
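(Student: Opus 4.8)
The plan is to prove this lower bound using the framework of Panigrahy, Talwar, and Wieder \cite{PTW10}, which ties cell-probe lower bounds to the \emph{robust expansion} of the underlying metric space, specialized to a single probe. Concretely, for the Hamming random instance, the relevant geometric quantity is the following: for a random query point $q$ and a random dataset point $p$ planted at the appropriate distance, how much can a single memory cell ``cover'' of the query space while being consistent with not too many data points? First I would set up the hard distribution precisely (as in Section \ref{sec:rand_inst_sec}): $n$ uniformly random points in $\{0,1\}^d$ with $d = \omega(\log n)$, a planted near neighbor at relative Hamming distance $1/(2c) \cdot$(something), so that a far point sits at the ``random'' distance $1/2$. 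The key combinatorial fact is a \emph{hypercontractivity} / small-set expansion statement: for the noise operator $T_\rho$ on the hypercube with $\rho$ corresponding to the $1/c$-correlation between $q$ and its near neighbor, a set $A$ of measure $\mu$ satisfies $\langle \mathbf{1}_A, T_\rho \mathbf{1}_A\rangle \le \mu^{2/(1+\rho)}$ (or, one-sided, the probability that a $\rho$-correlated pair both land in $A$), and this exponent $2/(1+\rho)$ is exactly what produces the $(c/(c-1))^2$ in the bound.

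The main steps, in order: (1) Fix a data structure with $m$ cells, each of size $n^{o(1)}$, answering each query with one probe. The query algorithm is a map from the query point to a cell index; partition the query space $\{0,1\}^d$ into regions $Q_1, \dots, Q_m$ according to which cell is probed. (2) For the data structure to succeed with probability $\ge 2/3$ over the random instance, on a constant fraction of (query, planted-point) pairs the cell probed by $q$ must ``contain enough information'' to output the planted point $p$; since a cell holds only $n^{o(1)}$ words, it can be consistent with the correct answer for only $n^{o(1)}$ essentially-different planted points. (3) This forces: for a typical cell $i$, the set of dataset points $p$ that could be the planted near neighbor for some $q \in Q_i$ has size $n^{o(1)}$; but a random planted pair $(q,p)$ has $q \in Q_i$ with probability $\approx \mu(Q_i) = |Q_i|/2^d$, and conditioned on $q$, the planted point $p$ is $\rho$-correlated with $q$ — so the probability that $p$ also lies in the ``shadow'' of $Q_i$ is governed by robust expansion. (4) Combining the success requirement with the hypercontractive bound $\mu(\text{shadow}) \ge \mu(Q_i)^{2/(1+\rho)} / n^{o(1)}$ and summing over cells $i$ (using $\sum_i \mu(Q_i) = 1$ and convexity), one gets $m \cdot n^{o(1)} \ge \big(\sum_i \mu(Q_i)\big)^{?} \cdot n \ge n^{(c/(c-1))^2 - o(1)}$ after plugging $\rho = 1 - 1/c$ (so that $2/(1+\rho) = 2c/(2c-1)$, and chasing the arithmetic through the planted-distance normalization yields exactly the claimed exponent). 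I would double-check the exponent bookkeeping against the $\rho_q = 0$ specialization of \eqref{main_tradeoff_intro}: there $c^2\sqrt{\rho_u} / (c^2-1) = \sqrt{\ldots}$ forces $\rho_u = (c^2/(c^2-1))^2$ for Euclidean, and the Hamming version with $c \mapsto c$ gives $(c/(c-1))^2$, consistent with the statement.

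The hard part, I expect, is step (3): making rigorous the claim that a single small cell can be ``useful'' for only $n^{o(1)}$ planted points, and correctly quantifying ``useful.'' The subtlety is that the cell's content may depend on the entire dataset (not just the planted point), and the query algorithm's decoding may use the cell content in a complicated way; one must argue via an averaging/counting argument over the randomness of the $n-1$ non-planted points that, with high probability, the cell $q$ probes cannot ``accidentally'' encode a planted point that is far from all the structure the cell was built to represent. This is typically handled by a Yao-style fixing of the data structure and a union bound over the (few) possible cell contents, but the interaction with the continuous-looking robust-expansion bound requires care — in particular, discretizing the query regions $Q_i$ and handling the $o(1)$ slack uniformly. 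A secondary obstacle is ensuring the hypercontractive inequality is applied with the correct one-sided form and correlation parameter matching the planted distance; getting a constant wrong here shifts the exponent, so I would verify it reduces to the known $\rho = 1/(2c^2-1)$ data-dependent LSH bound \cite{ar15lower} in the balanced regime as a sanity check before committing to the constants.
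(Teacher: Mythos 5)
Your high-level route is the same as the paper's: bound the robust expansion of the hypercube via hypercontractivity and plug it into the one-probe case of the framework of \cite{PTW10} (the paper simply invokes Theorem~\ref{thm:ptwbound} with $t=1$ rather than re-deriving the cell-probe-to-expansion reduction that your steps (2)--(3) sketch; your averaging/counting outline of that reduction is far too loose to substitute for it, but citing it is enough). The genuine gap is in your key inequality and the resulting exponent bookkeeping. You invoke the \emph{symmetric} small-set expansion bound $\langle \mathbf{1}_A, T_\sigma \mathbf{1}_A\rangle \le \mu(A)^{2/(1+\sigma)}$ with $\sigma = 1-\tfrac{1}{c}$ and assert that the exponent $2/(1+\sigma)=2c/(2c-1)$ is ``exactly what produces'' $(c/(c-1))^2$. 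It is not: the symmetric inequality only controls the diagonal case $B=A$ of the robust expansion, which corresponds to choosing $p=q=1+\sigma$ in the H\"older/hypercontractivity argument, and pushing that through Theorem~\ref{thm:ptwbound} yields only $m \ge n^{c/(c-1)-o(1)}$ (for $c=2$, exponent $2$ rather than the claimed $4$); no ``planted-distance normalization'' recovers the square.

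The tight exponent $1/\sigma^2=(c/(c-1))^2$ requires the \emph{asymmetric} two-function form $\langle T_\sigma \chi_A, \chi_B\rangle \le \|\chi_A\|_p\,\|\chi_B\|_q$ valid for every pair with $(p-1)(q-1)=\sigma^2$ (Lemma~\ref{lem:robustisoperimetry}), applied in the degenerate limit $p = 1+\tfrac{\log\log n}{\log n}$, $q = 1+\sigma^2\tfrac{\log n}{\log\log n}$, as in Lemma~\ref{lem:robustexpansion} and the proof in Section~\ref{sec:oneProbe}. Geometrically, the minimizing set $B$ of data points is not the query region $A$ but a blow-up of it with $\mu(B)\approx\mu(A)^{\sigma^2}$ (the Generalized Small-Set Expansion Theorem with sets of very different measures, $b\approx\sigma a$), which is what makes $\Phi_r(1/m,\gamma)\approx m^{1-\sigma^2}$ and hence $m\gtrsim (n/w)^{1/\sigma^2}$. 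Without this asymmetric step your argument, as written, proves a polynomially weaker space bound. (A minor additional point: your proposed sanity check against the balanced-regime LSH exponent $1/(2c-1)$ is misplaced; the one-probe bound corresponds to the $\rho_q=0$ endpoint, i.e.\ $\rho_u=(c/(c-1))^2$ in Hamming space, which is the correct consistency check.)
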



The space lower bound matches:
\begin{itemize}
\item Our upper bound for \emph{random instances} that can be made single-probe;
\item Our upper bound for worst-case instances with query time $n^{o(1)}$.
\end{itemize}
The previous best lower bound from~\cite{PTW10} for a single probe
are weaker by a polynomial factor.

We prove Theorem~\ref{one_probe_thm} by computing tight bounds on the
robust expansion of a hypercube $\{-1, 1\}^d$ as defined
in~\cite{PTW10}. Then, we invoke a result from~\cite{PTW10}, which
yields the desired cell probe lower bound. We obtain estimates on the
robust expansion via a combination of the~hypercontractivity
inequality and H\"{o}lder's inequality~\cite{AOBF}. Equivalently, one could obtain the same bounds by an application of the Generalized Small-Set Expansion Theorem 
for $\{-1,1\}^d$ from~\cite{AOBF}.

\subsubsection{Two cell probes}

To state our results for two cell probes, we first define the {\em
  decision} version of ANN (first introduced in \cite{PTW10}).  Suppose
that with every data point $p \in P$ we associate a bit $x_p \in \{0,
1\}$. A new goal is: given a query $q \in \{-1, 1\}^d$ which is within
distance $r$ from a data point $p \in P$, if
$P\setminus \{p\}$ is at distance at least $cr$ from $q$, return $x_p$
with probability at least $2/3$.  It is easy to see that any algorithm
for $(c,r)$-ANN would solve this decision version.

We prove the following lower bound for data structures making
only two cell probes per query.

\begin{theorem}[see Section~\ref{sec:twoProbes}]
  \label{two_probe_thm}
  Any data structure that:
  \begin{itemize}
        \item solves the decision ANN for the random instance
          (Section \ref{sec:rand_inst_sec}) with probability $2/3$,
        \item operates on memory cells of size $o(\log n)$,
        \item accesses at most two cells for each query,
  \end{itemize}
  must use at least $n^{\left(\frac{c}{c - 1}\right)^2 - o(1)}$ words of memory.
\end{theorem}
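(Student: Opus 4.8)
The plan is to reduce a two‑probe data structure to a combination of two impossibilities: a one‑probe data structure for the same random instance, ruled out by Theorem~\ref{one_probe_thm} (together with the robust‑expansion estimates for the hypercube used to prove it), and a smooth two‑query locally‑decodable code (LDC) that packs $\Omega(n)$ bits into $n^{o(1)}$‑bit cells, ruled out by the exponential lower bound of Kerenidis--de Wolf. The point that a naive $m^t$‑type argument misses — and that the LDC bound captures — is that the two probed cells jointly carry only $2w=o(\log n)$ bits, so a \emph{pair} of cells cannot locally decode many more label bits than a single cell, even though there are $m^2$ pairs.

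First I would put the data structure into normal form: fix a good setting of its coins by averaging, condition on a typical dataset $P=\{p_1,\dots,p_n\}$ drawn from the random instance, and regard the memory image $M(P,x)\in(\{0,1\}^w)^m$ as a function of the label vector $x\in\{0,1\}^n$, where $w=o(\log n)$. For a $1-o(1)$ fraction of indices $i$, a uniformly random query $q$ near $p_i$ is answered correctly with probability $\ge 2/3-o(1)$ over $q$ and over the other labels; as in the one‑probe analysis, this forces at least one of the two cells probed on input $q$ to depend on $x_{p_i}$ for a constant fraction of such $q$.

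Now fix a threshold and call a cell \emph{popular} if it is among the two probes of at least a $\theta$‑fraction of all queries, so there are at most $2/\theta$ popular cells. Assume for contradiction that $m\le n^{(c/(c-1))^2-\eps}$ and set $\theta=n^{-(c/(c-1))^2+\eps/2}$, so that every non‑popular cell is queried with probability $<\theta<1/m$ while the number of popular cells stays below the one‑probe threshold $n^{(c/(c-1))^2-o(1)}$. Split the good indices according to whether they are decoded mostly through a popular probe or mostly through non‑popular probes. In the first case, conditioning on the cell and contents of the popular probe (only $n^{o(1)}$ possibilities, since $2^w=n^{o(1)}$) leaves a one‑probe procedure on the remaining cell, using at most $m<n^{(c/(c-1))^2}$ cells; since the one‑probe bound of Theorem~\ref{one_probe_thm} really says that so few cells cannot push the success probability above $1/2+o(1)$, averaging over the $n^{o(1)}$ conditionings contradicts that these indices are decoded with advantage bounded away from $0$. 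In the second case, the non‑popular cells — a codeword of length $\le m$ over an alphabet of size $2^w=n^{o(1)}$ — together with the per‑index decoders (sample $q$ near $p_i$, read the two non‑popular cells dictated by the query algorithm, output its answer) form a two‑query LDC that is $O(1/m)$‑smooth (each non‑popular cell is hit with probability $<1/m$) and recovers each of $\Omega(n)$ bits with probability $\ge 2/3-o(1)$; Kerenidis--de Wolf then forces $m\ge 2^{\Omega(n/w)}=n^{\omega(1)}$, contradicting $m\le n^{(c/(c-1))^2}$. Either way we are done, so $m\ge n^{(c/(c-1))^2-o(1)}$.

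The main obstacle is making both branches precise in the presence of the adaptive second probe. In the LDC branch one must either branch over the $n^{o(1)}$ possible contents of the first cell, at the cost of a constant blow‑up in query complexity that the LDC bound tolerates, or invoke a version of the bound that already allows adaptive two‑query decoders; one must also check that no explicit error tolerance is needed, since the randomness of the dataset and the $2/3$‑success requirement already play the role of decoding from a corrupted codeword, which is exactly the regime of the smooth‑code/quantum argument. In the one‑probe branch the analogous care is required to verify that conditioning on the popular probe genuinely yields a structure covered by Theorem~\ref{one_probe_thm}, regardless of whether the popular probe is the first or the second access. The robust‑expansion estimates for $\{-1,1\}^d$ and the one‑probe cell‑probe bound itself are reused verbatim from the proof of Theorem~\ref{one_probe_thm}.
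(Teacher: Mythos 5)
There is a genuine gap, and it sits at the heart of your Case~2. You declare a cell ``non-popular'' if it is probed by fewer than a $\theta$-fraction of \emph{all} queries, and from this you conclude that the induced decoders form an $O(1/m)$-smooth two-query LDC. But the decoder for index $i$ samples $q \sim N(p_i)$, a neighborhood of exponentially small measure in $\{-1,1\}^d$, and global unpopularity says nothing about this conditional distribution: a cell hit by fewer than a $\theta$-fraction of uniform queries can still absorb almost all of $N(p_i)$ for polynomially many data points $p_i$. This is exactly what the paper's shattering step (Lemma~\ref{lem:shattering}, inherited from~\cite{PTW10}) is for: it uses the \emph{robust expansion} $\Phi_r(1/m,\gamma)$ of the Hamming cube to show that, after making the structure low-contention, most data points have their query neighborhood spread over roughly $K = \Omega(\Phi_r(1/m,\gamma))$ cells. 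Consequently the object one gets is not a constant-noise (or $1/m$-smooth) LDC but only a weak LDC tolerating about $\eps K$ corruptions, i.e.\ noise rate $\delta = \eps K/m \ll 1$; the final inequality $m\log m\cdot 2^{O(w)}/n \geq \Omega(\Phi_r(1/m,\gamma))$ comes from the Kerenidis--de~Wolf argument's \emph{linear} dependence on this sub-constant $\delta$, and that is precisely where the exponent $(c/(c-1))^2$ enters. Your version bypasses the geometry entirely and would, if valid, prove an exponential space bound $m \geq 2^{\Omega(n/w)}$ for any two-probe structure all of whose cells are globally unpopular --- but the paper's own $n^{(c/(c-1))^2+o(1)}$-space upper bound for the random instance is such a structure (every cell is probed by a $\ll \theta$ fraction of queries under your threshold), so the inference from global unpopularity to smoothness cannot be repaired without reintroducing the robust-expansion/shattering machinery.

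Two secondary points also need more than you give them. In Case~1 you need an ``advantage'' form of the one-probe bound (success at most $1/2+o(1)$ with few cells, for the GNS/decision problem, and stable under conditioning on the popular cell's identity and contents, where the labels may now be correlated with what you conditioned on); this is plausible within the framework of~\cite{PTW10} but is not what Theorem~\ref{one_probe_thm} states. And the LDC you obtain is only an \emph{average-case} one (it decodes $x_i$ only for $i$ in a set of size $\Omega(n)$, with advantage averaged over random $x$, over an alphabet of size $2^w$), so Kerenidis--de~Wolf cannot be invoked as a black box; the paper handles this by the parity reduction to one-bit words (losing $2^{-O(w)}$ in advantage, acceptable for $w=o(\log n)$) and by reproving the quantum random-access-code step under these weaker guarantees, and a complete writeup of your approach would have to do the same.
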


Informally speaking, Theorem~\ref{two_probe_thm} shows
 that the second cell probe cannot improve
the space bound by more than a subpolynomial factor.  To the best of our
knowledge, this is the first lower bound on the space of \emph{any}
static data structure problem without a polynomial gap between $t=1$
and $t\ge 2$ cell-probes. Previously, the highest ANN lower bound for
two queries was weaker by a polynomial factor~\cite{PTW10}. This
remains the case even if we plug the tight bound on the robust
expansion of a hypercube into the
framework of~\cite{PTW10}. Thus, in order to obtain a higher lower
bound for $t=2$, we must depart from the framework of \cite{PTW10}.

Our proof establishes a connection between two-query data structures
(for the decision version of ANN), and two-query locally-decodable
codes (LDC)~\cite{Y12}. A possibility of such a connection was suggested
in~\cite{PTW08}.  In particular, we show that any data structure
violating the lower bound from Theorem~\ref{two_probe_thm} implies a
too-good-to-be-true
two-query LDC, which contradicts known LDC lower bounds
from~\cite{KW2004, BARW08}.

The first lower bound for unrestricted two-query LDCs was proved
in~\cite{KW2004} via a \emph{quantum} argument. Later, the argument was
simplified and made \emph{classical} in~\cite{BARW08}. It turns out that,
for our lower bound, we need to resort to the original quantum
argument of~\cite{KW2004} since it has a better dependence on the
noise rate a code is able to tolerate.
During the course of our proof, we do not obtain a full-fledged LDC,
but rather an object which can be called an \emph{LDC on average}. For this
reason, we are unable to use~\cite{KW2004} as a black box but rather
adjust their proof to the average case.

Finally, we point out an important difference with
Theorem~\ref{one_probe_thm}: in Theorem~\ref{two_probe_thm} we allow
words to be merely of size $o(\log n)$ (as opposed to
$n^{o(1)}$). Nevertheless, for the \emph{decision version} of ANN for random instances our upper bounds
hold even for such ``tiny''
words.  In fact, our techniques do not allow us to handle words of
size $\Omega(\log n)$ due to the weakness of known lower bounds for
two-query LDC for \emph{large alphabets}.  In particular, our argument
can not be pushed beyond word size $2^{\widetilde{\Theta}(\sqrt{\log
    n})}$ \emph{in principle}, since this would contradict known constructions
of two-query LDCs over large alphabets~\cite{DG15}!

\subsubsection{The general time--space trade-off}

Finally, we prove {\em conditional} lower bound on the entire time--space trade-off matching our upper bounds that up to~$n^{o(1)}$ factors. Note
that---since we show polynomial query time lower bounds---proving
similar lower bounds {\em unconditionally} is far beyond the current
reach of techniques. Any such statement would constitute a major breakthrough in cell probe lower bounds. 

Our lower bounds are proved in the following model, which can be
loosely thought of comprising all hashing-based frameworks we are
aware of:

\begin{definition}
\label{def:lip}
A {\em list-of-points data structure} for the ANN problem is defined as
follows:
\begin{itemize}
\item We fix (possibly random) sets $A_i\subseteq
  \{-1,1\}^d$, for $1 \leq i \leq m$; also, with each possible query point
  $q \in \{-1, 1\}^d$, we associate a (random) set of indices $I(q) \subseteq [m]$;
\item For a given dataset $P$, the data structure maintains $m$ lists of
  points $L_1, L_2, \dots, L_m$, where $L_i=P\cap A_i$;
\item On query $q$, we scan through each list $L_i$ for $i \in I(q)$
  and check whether there exists some $p \in L_i$ with $\|p - q\|_1
  \leq cr$. If it exists, return $p$.
\end{itemize}
The total space is defined as $s = m + \sum_{i=1}^m |L_i|$ and the
query time is $t = |I(q)| + \sum_{i \in I(q)} |L_i|$. 
\end{definition}

For this model, we prove the following theorem.

\begin{theorem}[see Section~\ref{sec:noCoding}]
\label{thm:noCoding}
Consider any list-of-points data structure for $(c, r)$-ANN for random
instances of $n$ points in the $d$-dimensional Hamming space with
$d=\omega(\log n)$, which achieves a total space of $n^{1 + \rho_u}$,
and has query time $n^{\rho_q - o(1)}$, for $2/3$ success
probability. Then it must hold that:
\begin{equation}
\label{eq:power-relation}
c \sqrt{\rho_q} + (c - 1) \sqrt{\rho_u} \geq \sqrt{2c - 1}.
\end{equation}
\end{theorem}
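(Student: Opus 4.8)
The plan is to adapt the robust-expansion method of Panigrahy--Talwar--Wieder~\cite{PTW10} and Andoni--Razenshteyn~\cite{ar15lower} to the list-of-points model, and to drive the resulting inequalities with a \emph{sharp} small-set expansion estimate on the Boolean cube---the same estimate (hypercontractivity together with H\"older's inequality, cf.\ the discussion after Theorem~\ref{one_probe_thm}) that powers the one-probe bound---so that they reproduce~\eqref{eq:power-relation} exactly rather than up to a polynomial.

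First I would fix the hard distribution from Section~\ref{sec:rand_inst_sec}: sample $n$ points i.i.d.\ uniformly from $\{-1,1\}^d$ with $d=\omega(\log n)$, plant a point $p^\star$ chosen uniformly among them, and form the query $q$ by flipping each coordinate of $p^\star$ independently with probability $\tfrac{1-\rho}{2}$, where $\rho=\tfrac{c-1}{c}+o(1)$ is taken slightly above $\tfrac{c-1}{c}$. With $r=\tfrac{d(1-\rho)}{2}(1+o(1))$ one has $\|q-p^\star\|_1\le r$ with high probability, while a union bound over the other $n-1$ points (which are uniform and independent of $q$) shows each of them is at distance $\ge cr$ from $q$ with high probability; crucially $(p^\star,q)$ is a $\rho$-correlated pair and $q$ is marginally uniform. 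By a standard averaging argument I fix the internal randomness of the list-of-points structure so that it still succeeds with probability $\ge\tfrac23-o(1)$ while meeting the space and query-time bounds up to $n^{o(1)}$; thereafter the sets $A_i$ (which, by the definition of the model, do not depend on the dataset) and the map $I(\cdot)$ are deterministic. Put $\beta_i=|A_i|/2^d$ and $\alpha_i=\Pr_q[i\in I(q)]$. Since the only data point that may be returned is $p^\star$ and it must lie in some scanned list, $\sum_i p_i\ge\tfrac23-o(1)$ where $p_i:=\Pr_{(p^\star,q)}[\,p^\star\in A_i,\ i\in I(q)\,]$. Taking expectations over the dataset and the query yields the space bound $m+n\sum_i\beta_i\le n^{1+\rho_u+o(1)}$ and, bounding $|I(q)|$ and $\sum_{i\in I(q)}|L_i|$ separately, the two query-time bounds $\sum_i\alpha_i\le n^{\rho_q+o(1)}$ and $n\sum_i\alpha_i\beta_i\le n^{\rho_q+o(1)}$. (The collision-count bound is the essential one: without it, lists with vanishingly small $\beta_i$ would evade the argument.)

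Next I would estimate each $p_i$ by the Generalized Small-Set Expansion Theorem for the $\rho$-correlated cube~\cite{AOBF}: writing $\alpha_i=n^{-\mathsf{a}_i}$ and $\beta_i=n^{-\mathsf{b}_i}$, one obtains $p_i\le n^{-F(\mathsf{a}_i,\mathsf{b}_i)+o(1)}$ with $F(\mathsf{a},\mathsf{b})=\dfrac{\mathsf{a}-2\rho\sqrt{\mathsf{a}\mathsf{b}}+\mathsf{b}}{1-\rho^2}$ on the admissible range $\rho^2\le\mathsf{a}_i/\mathsf{b}_i\le\rho^{-2}$, and the trivial bound $p_i\le\min(\alpha_i,\beta_i)$ outside it. Lists with $\alpha_i$ or $\beta_i$ below $n^{-O(1)}$ contribute negligibly to $\sum_i p_i$, so I can partition the remaining $m$ lists into $O(\log^2 n)$ buckets by the rounded values of $(\mathsf{a}_i,\mathsf{b}_i)$; some bucket $G$ carries $\sum_{i\in G}p_i=\Omega(1/\log^2 n)$, and all $i\in G$ share common $\alpha=n^{-\mathsf{a}}$ and $\beta=n^{-\mathsf{b}}$ up to constant factors. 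Writing $m_G=|G|=n^{\mathsf{m}}$, all the inequalities above collapse (up to $o(1)$ in each exponent) to the following system in the variables $\mathsf{a},\mathsf{b}\ge 0$:
\begin{equation}
\mathsf{m}\ge F(\mathsf{a},\mathsf{b}),\qquad \mathsf{m}\le\mathsf{b}+\rho_u,\qquad \mathsf{m}\le 1+\rho_u,\qquad \mathsf{m}\le\mathsf{a}+\rho_q,\qquad \mathsf{m}\le\mathsf{a}+\mathsf{b}+\rho_q-1.
\end{equation}
Finally I would show this system is infeasible whenever $c\sqrt{\rho_q}+(c-1)\sqrt{\rho_u}<\sqrt{2c-1}$. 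Eliminating $\mathsf{m}$ and substituting $u=\sqrt{\mathsf{a}}$, $v=\sqrt{\mathsf{b}}$, the constraint $F\le\mathsf{b}+\rho_u$ reads $(u-\rho v)^2\le(1-\rho^2)\rho_u$, the collision constraint $F\le\mathsf{a}+\mathsf{b}+\rho_q-1$ reads $(\rho u-v)^2\le(1-\rho^2)(v^2-1+\rho_q)$ (hence $v^2\ge 1-\rho_q$), and $F\le 1+\rho_u$ caps $\mathsf{b}$. Carrying out the resulting low-dimensional optimization, the extremal feasible point in the main regime is $v=1$ with $u=\rho+\sqrt{(1-\rho^2)\rho_u}$, at which every constraint becomes equivalent to $\rho_q\ge\big(\sqrt{1-\rho^2}-\rho\sqrt{\rho_u}\big)^2$, i.e.\ $\sqrt{\rho_q}+\rho\sqrt{\rho_u}\ge\sqrt{1-\rho^2}$. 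Substituting $\rho=\tfrac{c-1}{c}$, $1-\rho^2=\tfrac{2c-1}{c^2}$ and clearing denominators gives precisely~\eqref{eq:power-relation}, the $o(1)$ slack in $\rho$ disappearing in the limit. Two boundary regimes then need separate checks---large $\rho_u$, where the extremum instead sits at $u=v/\rho$ on the edge of the admissible GSSE range with $\mathsf{m}\le 1+\rho_u$ binding; and the case where $p_i\le\min(\alpha_i,\beta_i)$ governs---but in each of them feasibility reduces to the same inequality or to a slack consequence of it.

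I expect the main obstacle to be this last step: verifying that the small-set-expansion estimate is \emph{sharp enough} for the $(u,v)$-program to bottom out exactly at~\eqref{eq:power-relation} rather than at a weaker inequality, and carefully covering the crossover between the two query-time constraints and the edge of the admissible GSSE range without leaking constants into the exponents. A secondary, more mechanical nuisance is the bookkeeping that turns ``expected space'', ``expected query time'', and ``success probability over the internal randomness'' into a single deterministic structure, and the check that the accumulated $n^{o(1)}$ factors are harmless---which is exactly where the hypothesis $d=\omega(\log n)$ is used.
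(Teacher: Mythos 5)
Your proposal is correct and follows essentially the same route as the paper's proof in Section~\ref{sec:noCoding}: fix the random Hamming instance, account per list for the success probability ($\sum_i p_i=\Omega(1)$), the space ($n\sum_i\beta_i$, $m$), and the two query-time terms ($\sum_i\alpha_i$ and $n\sum_i\alpha_i\beta_i$), and drive the trade-off with the sharp hypercontractivity/H\"older small-set-expansion bound on the $\rho$-correlated cube (the paper's Lemma~\ref{lem:robustexpansion}). The only difference is organizational---the paper optimizes the resulting sums directly with an explicit choice of H\"older exponents $(p,q)$ as functions of $\rho_u$ and a three-case analysis, whereas you round exponents, pick a heavy bucket, and solve a small quadratic feasibility program---and your extremal point $v=1$, $u=\rho+\sqrt{(1-\rho^2)\rho_u}$ reproduces the same binding configuration, so the two arguments coincide in substance.
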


We note that our model captures the basic hashing-based algorithms, in
particular most of the known algorithms for the high-dimensional ANN
problem \cite{KOR00, IM, I98, I-thesis, GIM, Char, DIIM, Pan,
  AC-fastJL, AI-CACM, Pagh16-deterministic, Kap15-tradeoff}, including the
recently proposed Locality-Sensitive Filters scheme from
\cite{BDGL16-subRho}. The only data structures not
captured are the data-dependent schemes from \cite{AINR-subLSH, r-msthesis-14,
  AR-optimal}; we conjecture that the natural extension of the
list-of-point model to data-dependent setting would yield the same
lower bound.  In particular, Theorem \ref{thm:noCoding} uses the
random instance as a hard distribution, for which being
data-dependent seems to offer no advantage.  Indeed, a data-dependent
lower bound in the standard LSH regime (where $\rho_q=\rho_u$) has
been recently shown in \cite{ar15lower}, and matches~\eqref{eq:power-relation} for $\rho_q=\rho_u$.

\subsection{Related work: past and concurrent}

There have been many recent algorithmic advances on high-dimensional
similarity search. The closest pair problem, which can seen as the
off-line version of NNS/ANN, has received much attention
recently~\cite{Valiant12, JW15-closestPair, KKK16-fasterSubquadratic,
  KKKO16, eps13}. ANN solutions with $n^{1+\rho_u}$ space (and
preprocessing), and $n^{\rho_q}$ query time imply closest pair problem
with $O(n^{1+\rho_u}+n^{1+\rho_q})$ time (implying that the balanced,
LSH regime is most relevant). Other work includes locality-sensitive
filters~\cite{BDGL16-subRho} and LSH without false
negatives~\cite{GPY,I00, arasu2006efficient, Pagh16-deterministic,
  PP16}. A step towards bridging the data-depending hashing to the
practical algorithms has been made in \cite{arsn17-lsfForest}.  See
also the surveys \cite{AI-CACM, AI17-handbook}.

\paragraph{Relation to the article of~\cite{christiani16-lower}.}
The article of \cite{christiani16-lower} has significant intersection
with this paper (and, in particular, with the arXiv preprints
\cite{Laarhoven2015,ALRW16} that are now merged to give this paper),
as we explain next. In November 2015, \cite{Laarhoven2015} announced
the optimal trade-off (i.e., Theorem~\ref{main_upper_intro}) for
random instances. As mentioned earlier, it is possible to extend this
result to the entire Euclidean space, albeit with an inferior
trade-off, from Theorem~\ref{suboptimal_upper_intro}; for this, one
can use a standard reduction \'a la \cite{Valiant12} (this extension
was not discussed in~\cite{Laarhoven2015}). On May 9, 2016, both
\cite{christiani16-lower} and \cite{ALRW16} have been announced on
arXiv. In \cite{christiani16-lower}, the author also obtains an upper
bound similar to Theorem \ref{suboptimal_upper_intro} (trade-offs for
the entire $\R^d$, but which are suboptimal), using a different
(data-{\em independent}) reduction from the worst-case to the
spherical case. Besides the upper bound, the author of
\cite{christiani16-lower} also proved a conditional lower bound,
similar to our lower bound from Theorem~\ref{thm:noCoding}. This lower
bound of \cite{christiani16-lower} is independent of our work
in~\cite{ALRW16} (which is now a part of the current paper).

\subsection{Open problems}

We compile a list of exciting open problems:
\begin{itemize}
\item While our upper bounds are optimal (at least, in the hashing framework), the most general algorithms are, unfortunately, impractical. Our trade-offs for random instances on the sphere may well be practical (see also~\cite{BDGL16-subRho, L15a} for an experimental comparison with e.g.~\cite{Char, AILSR15} for $\rho_q = \rho_u$), but a specific bottleneck for the extension to worst-case instances in $\Rbb^d$ is
the clustering step inherited from~\cite{AR-optimal}. Can one obtain
simple and practical algorithms that achieve the optimal time--space
trade-off for these instances as well? For
the balanced regime $\rho_q=\rho_u$, a step in this direction was
taken in \cite{arsn17-lsfForest}.
\item The constructions presented here are optimal when $\omega(\log n) \leq d \leq n^{o(1)}$. Do the same constructions give optimal algorithms in the $d = \Theta(\log n)$ regime? 
\item Our new algorithms for the Euclidean case come tantalizingly close to the best known data structure for the $\ell_{\infty}$ distance~\cite{I98}. 
Can we unify them and extend in a smooth way to the $\ell_p$ spaces for $2 < p < \infty$?
\item Can we improve the dependence on the word size in the reduction from ANN data structures to LDCs used in the two-probe lower bound? As discussed above, the word size can not be pushed beyond $2^{\widetilde{\Theta}(\sqrt{\log n})}$ due to known constructions~\cite{DG15}.
\item A more optimistic view is that LDCs may provide a way to avoid
  the barrier posed by hashing-based approaches. We have shown that
  ANN data structures can be used to build weak forms of LDCs, and an
  intriguing open question is whether known LDC constructions can help
  with designing even more efficient ANN data structures.
\end{itemize}


\section{Random instances}
\label{sec:rand_inst_sec}

In this section, we introduce the \emph{random} instances of ANN for the
Hamming and Euclidean spaces. These instances play a crucial role
for both upper bounds (algorithms) and the lower bounds in all the
subsequent sections (as well as some prior work). 
For upper bounds, we focus on the Euclidean space, since algorithms
for $\ell_2$ yield the algorithms for the Hamming space using standard reductions. 
For the lower bounds, we focus on the Hamming space, since these
yield lower bounds for the Euclidean space.

\paragraph{Hamming distance.} We now describe a distribution supported on dataset-query pairs $(P, q)$, where $P \subset \{-1, 1\}^d$ and $q \in \{-1, 1\}^d$. Random instances of ANN for the Hamming space will be dataset-query pairs drawn from this distribution.
\begin{itemize}
  \item A dataset $P \subset \{-1, 1\}^d$ is given by $n$ points, where each point is drawn independently and uniformly from $\{-1, 1\}^d$, where $d = \omega(\log n)$;
  \item A query $q \in \{-1, 1\}^d$ is drawn by first picking a dataset point $p \in P$ uniformly at random, and then flipping each coordinate of $p$ independently with probability $\frac{1}{2c}$.
  \item The goal of the data structure is to preprocess $P$ in order to recover the data point $p$ from the query point $q$.
\end{itemize}

The distribution defined above is similar to the classic
distribution introduced for the \emph{light bulb problem} in \cite{valiant1988functionality}, 
which can be seen as the \emph{off-line} setting of ANN. This distribution has served as 
the hard distribution in many of the lower bounds for ANN mentioned in Section~\ref{sec:lowerBounds}. 

\paragraph{Euclidean distance.} Now, we describe the distribution supported on dataset-query pairs $(P, q)$, where $P \subset S^{d-1}$ and $q \in S^{d-1}$. Random instances of ANN for Euclidean space will be instances drawn from this distribution.

\begin{itemize}
  \item A dataset $P \subset S^{d-1}$ is given by $n$ unit vectors, where each vector is drawn independently and uniformly at random from $S^{d-1}$. We assume that
  $d = \omega(\log n)$, so pairwise distances are sufficiently concentrated around $\sqrt{2}$.
  \item A query $q \in S^{d-1}$ is drawn by first choosing a dataset
    point $p \in P$ uniformly at random, and then choosing $q$ uniformly at random from all points in $S^{d-1}$ within distance $\frac{\sqrt{2}}{c}$ from $p$.
  \item The goal of the data structure is to preprocess $P$ in order to recover the data point $p$ from the query point $q$.
\end{itemize}

Any data structure for $\left(c + o(1), \frac{\sqrt{2}}{c}\right)$-ANN over $\ell_2$ 
must handle this instance. \cite{AR-optimal} showed how to reduce \emph{any} $(c,r)$-ANN instance to several \emph{pseudo}-random instances without increasing query
time and space too much. These pseudo-random instances have the necessary properties of the random instance above in order for the data-independent algorithms (which are designed with the random instance in mind) to achieve optimal bounds.
Similarly to \cite{AR-optimal}, a data structure for these instances will lie at the core of our algorithm. 

\section{Upper bounds: data-independent partitions}
\label{spherical_sec}

\subsection{Setup}

For $0 < s < 2$, let $\alpha(s) = 1 - \frac{s^2}{2}$ be the cosine of the angle between two points on a unit Euclidean sphere $S^{d - 1}$ with distance $s$ between them,
and $\beta(s) = \sqrt{1 - \alpha^2(s)}$ be the sine of the same angle.

We introduce two functions that will be useful later.
First, for $\eta > 0$, let $$F(\eta) = \Prb{z \sim N(0, 1)^d}{\langle z, u\rangle \geq \eta},$$ where $u \in S^{d - 1}$ is an arbitrary point on the unit
sphere, and $N(0, 1)^d$ is a distribution over $\Rbb^d$, where coordinates of a vector are distributed as i.i.d.\ standard Gaussians. Note that $F(\eta)$ does not depend on the specific choice of $u$ due to the spherical symmetry of Gaussians.

Second, for $0 < s < 2$ and $\eta, \sigma > 0$, let
$$
G(s, \eta, \sigma) = \Prb{z \sim N(0, 1)^d}{\langle z, u\rangle \geq \eta\mbox{ and }\langle z, v\rangle \geq \sigma},
$$
where $u, v \in S^{d - 1}$ are arbitrary points from the unit sphere with $\|u - v\|_2 = s$. As with $F$, the value of $G(s, \eta, \sigma)$ does not
depend on the specific points $u$ and $v$; it only depends on the distance $\|u - v\|_2$ between them. Clearly, $G(s, \eta, \sigma)$ is non-increasing in $s$, for fixed $\eta$ and $\sigma$.

We state two useful bounds on $F(\cdot)$ and $G(\cdot, \cdot, \cdot)$.
The first is a standard tail bound for $N(0, 1)$ and the second follows from a standard computation (see the appendix of \cite{AILSR15} for a proof).
\begin{lemma}
\label{f_bound}
For $\eta \to \infty$,
$$
F(\eta) = e^{-(1 + o(1)) \cdot \frac{\eta^2}{2}}.
$$
\end{lemma}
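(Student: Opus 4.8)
The plan is to collapse the $d$-dimensional probability into a one-dimensional standard Gaussian tail and then quote the classical two-sided estimate for that tail. First I would observe that, by rotational invariance of the standard Gaussian measure $N(0,1)^d$, the random variable $\langle z, u\rangle$ with $z \sim N(0,1)^d$ and $u \in S^{d-1}$ fixed is distributed exactly as a single standard Gaussian $g \sim N(0,1)$: rotating $u$ to the first standard basis vector maps $\langle z, u\rangle$ to the first coordinate $z_1$. (This is also the reason, already noted in the text, that $F(\eta)$ does not depend on $u$.) Hence $F(\eta) = \Prb{g \sim N(0,1)}{g \geq \eta}$, and the problem is reduced to understanding the standard Gaussian tail.

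Next I would sandwich this tail between the standard bounds: for every $\eta > 0$,
$$
\frac{1}{\sqrt{2\pi}}\cdot\frac{\eta}{\eta^2+1}\cdot e^{-\eta^2/2} \;\le\; \Prb{g \sim N(0,1)}{g \geq \eta} \;\le\; \frac{1}{\sqrt{2\pi}}\cdot\frac{1}{\eta}\cdot e^{-\eta^2/2}.
$$
The upper bound follows from $e^{-t^2/2} \le \tfrac{t}{\eta} e^{-t^2/2}$ for $t \ge \eta$ and integrating; the lower bound is the usual estimate obtained from the antiderivative of $(t^{-1}-t^{-3})e^{-t^2/2}$ (any equivalent argument works). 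Taking logarithms of both sides gives $\log F(\eta) = -\tfrac{\eta^2}{2} + \Theta(\log\eta)$, and since $\log\eta = o(\eta^2)$ as $\eta \to \infty$, this equals $-(1+o(1))\cdot\tfrac{\eta^2}{2}$, i.e.\ $F(\eta) = e^{-(1+o(1))\eta^2/2}$, as claimed.

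There is essentially no genuine obstacle here; the only point worth care is that the statement pins down the exponent only up to a $(1+o(1))$ factor, so one does need a matching \emph{lower} bound on the tail, not merely the familiar upper bound $\Prb{g \sim N(0,1)}{g \geq \eta} \le e^{-\eta^2/2}$. If one prefers to avoid the explicit constants above, the crude estimate
$$
\Prb{g \sim N(0,1)}{g \geq \eta} \;\ge\; \Prb{g \sim N(0,1)}{\eta \le g \le \eta + \tfrac{1}{\eta}} \;\ge\; \frac{1}{\eta\sqrt{2\pi}}\, e^{-(\eta + 1/\eta)^2/2} \;=\; \frac{e^{-\eta^2/2 - O(1)}}{\eta\sqrt{2\pi}}
$$
already carries far more precision than the $o(1)$ in the exponent requires, so the same conclusion follows.
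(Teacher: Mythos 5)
Your proof is correct and matches the paper's treatment: the paper simply invokes the standard one-dimensional Gaussian tail bound (after the same reduction via rotational invariance, which is implicit in the definition of $F$), without writing out the details you supply. Your two-sided Mills-ratio estimate, and in particular the explicit lower bound on the tail, is exactly the standard argument being cited.
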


\begin{lemma}
\label{g_bound}
If $\eta, \sigma \to \infty$, then, for every $s$, one has:
$$
G(s, \eta, \sigma) = e^{-(1 + o(1)) \cdot \frac{\eta^2 + \sigma^2 - 2 \alpha(s) \eta \sigma}{2 \beta^2(s)}}.
$$
\end{lemma}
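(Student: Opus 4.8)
The plan is to reduce the $d$-dimensional Gaussian probability to a two-dimensional integral and evaluate it by a Laplace/saddle-point estimate. First I would note that if $z\sim N(0,1)^d$ and $u,v\in S^{d-1}$ with $\langle u,v\rangle=\alpha:=\alpha(s)$, then $(X,Y):=(\langle z,u\rangle,\langle z,v\rangle)$ is a centered bivariate Gaussian with unit variances and covariance $\alpha$, so that $1-\alpha^2=\beta^2:=\beta^2(s)$ and
$$
G(s,\eta,\sigma)=\frac{1}{2\pi\beta}\int_\eta^\infty\!\!\int_\sigma^\infty \exp\!\Big(-\frac{x^2-2\alpha xy+y^2}{2\beta^2}\Big)\,dx\,dy .
$$
Write $Q(x,y)=\frac{x^2-2\alpha xy+y^2}{2\beta^2}$ for the quadratic form in the exponent; since $|\alpha|<1$ for $0<s<2$ it is positive definite, so $Q(x,y)$ is within constant factors (depending only on the fixed $s$) of $x^2+y^2$. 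The lemma asserts that, up to a $(1+o(1))$ factor in the exponent, $G$ is governed by the minimum of $Q$ over the quadrant $\{x\ge\eta,\ y\ge\sigma\}$, and that this minimum equals $Q(\eta,\sigma)=\frac{\eta^2+\sigma^2-2\alpha\eta\sigma}{2\beta^2}$.

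For the upper bound I would substitute $x=\eta+a,\ y=\sigma+b$ with $a,b\ge0$ and use the exact identity
$$
Q(\eta+a,\sigma+b)=Q(\eta,\sigma)+\frac{(\eta-\alpha\sigma)\,a+(\sigma-\alpha\eta)\,b}{\beta^2}+Q(a,b).
$$
In the regime where the corner $(\eta,\sigma)$ is the $Q$-closest quadrant point to the origin, i.e.\ $\eta\ge\alpha\sigma$ and $\sigma\ge\alpha\eta$ --- automatic for $\alpha\le0$, and for $\alpha>0$ valid whenever $\eta$ and $\sigma$ are within a constant factor of each other, as in all applications --- the linear term above is nonnegative, hence $Q(\eta+a,\sigma+b)\ge Q(\eta,\sigma)+Q(a,b)$. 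Integrating,
$$
G(s,\eta,\sigma)\le e^{-Q(\eta,\sigma)}\cdot\frac{1}{2\pi\beta}\int_0^\infty\!\!\int_0^\infty e^{-Q(a,b)}\,da\,db\le e^{-Q(\eta,\sigma)},
$$
since the last integral is a probability, hence $\le1$; this already matches the claimed bound (even without the $o(1)$).

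For the matching lower bound I would keep only the contribution of the unit box $a,b\in[0,1]$ at the corner. There $Q(a,b)=O(1)$ and the linear term is at most $\frac{(\eta-\alpha\sigma)+(\sigma-\alpha\eta)}{\beta^2}=O(\eta+\sigma)$, so $Q(\eta+a,\sigma+b)\le Q(\eta,\sigma)+O(\eta+\sigma)$ uniformly on the box, whence
$$
G(s,\eta,\sigma)\ge\frac{1}{2\pi\beta}\,e^{-Q(\eta,\sigma)-O(\eta+\sigma)}=e^{-(1+o(1))\,Q(\eta,\sigma)},
$$
using $\eta+\sigma=o(\eta^2+\sigma^2)=o(Q(\eta,\sigma))$ as $\eta,\sigma\to\infty$ (here $\beta$, and hence every implied constant, depends only on the fixed $s$). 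Combining the two bounds proves the lemma.

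The one genuinely delicate point is the identification of the constrained minimum of $Q$ over the quadrant: the clean value $Q(\eta,\sigma)$ is the answer precisely when the corner is the minimizer, i.e.\ in the balanced regime isolated above. In the strongly unbalanced case (say $\alpha>0$ and $\sigma\gg\eta/\alpha$) the minimizer slides onto the edge $y=\sigma$, where $Q$ equals $\sigma^2/2$, and $G$ degenerates to $e^{-(1+o(1))\sigma^2/2}$ --- essentially $F(\sigma)$ --- which is why the lemma is invoked only with $\eta$ and $\sigma$ of comparable magnitude; I would either record that mild hypothesis explicitly or carry the edge case on the side, with the rest of the argument unchanged.
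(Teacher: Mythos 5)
Your argument is correct, and it is worth noting that the paper itself contains no proof of this lemma: it is dispatched as ``a standard computation'' with a pointer to the appendix of \cite{AILSR15}. What you wrote is a sound, self-contained version of exactly that computation: the reduction to the bivariate Gaussian $(\langle z,u\rangle,\langle z,v\rangle)$ with correlation $\alpha(s)$, the exact expansion $Q(\eta+a,\sigma+b)=Q(\eta,\sigma)+\frac{(\eta-\alpha\sigma)a+(\sigma-\alpha\eta)b}{\beta^2}+Q(a,b)$, the upper bound via nonnegativity of the linear term plus $\Pr[X\ge 0,\,Y\ge 0]\le 1$, and the lower bound from a unit box at the corner with the $O(\eta+\sigma)$ error absorbed into $o(1)\cdot Q(\eta,\sigma)$ by positive definiteness of $Q$ (here $|\alpha(s)|<1$ since $0<s<2$) all check out. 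Your closing caveat is also a genuine observation about the statement as written: the asymptotic $e^{-(1+o(1))Q(\eta,\sigma)}$ is only valid when the corner is the constrained minimizer, i.e.\ $\eta\ge\alpha(s)\sigma$ and $\sigma\ge\alpha(s)\eta$ (automatic when $\alpha(s)\le 0$), and in the strongly unbalanced regime with $\alpha(s)>0$ the probability degenerates to $e^{-(1+o(1))\sigma^2/2}$, so the lemma implicitly assumes comparable thresholds. This does not harm the paper: in the parameter setting of Section~\ref{par_set} the relation $\sqrt{\sigma}=\alpha(cr)\sqrt{\tau}+\beta(cr)$ together with the admissible range of $\sqrt{\tau}$ keeps both invocations of the lemma (with $s=r$ and $s=cr$) inside the corner-minimizer regime, with the endpoints $\rho_u=0$ and $\rho_q=0$ sitting exactly on its boundary; so recording the mild hypothesis explicitly, as you propose, is the right way to state the lemma precisely.
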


Finally, by using the Johnson--Lindenstrauss lemma~\cite{JL, DG03} we can assume that $d = \Theta(\log n \cdot \log \log n)$
incurring distortion at most $1 + \frac{1}{\log^{\Omega(1)} \log n}$.

\subsection{Results}

Now we formulate the main result of Section~\ref{spherical_sec}, which we later significantly improve in Section~\ref{apx:upper_general}.

\begin{theorem}
  \label{main_thm_spherical}
  For every $c > 1$, $r > 0$, $\rho_q \geq 0$ and $\rho_u \geq 0$ such that $cr < 2$ and
  \begin{equation}
  \label{alpha_beta_tradeoff}
  \big(1 - \alpha(r) \alpha(cr)\big) \sqrt{\rho_q} + \big(\alpha(r) - \alpha(cr)\big) \sqrt{\rho_u} \geq \beta(r) \beta(cr),
  \end{equation}
  there exists a data structure for $(c, r)$-ANN on a unit sphere $S^{d-1} \subset \Rbb^d$ with space $n^{1 + \rho_u + o(1)}$
  and query time $n^{\rho_q + o(1)}$.
\end{theorem}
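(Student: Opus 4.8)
The plan is to implement and analyze the data-independent Gaussian-partition decision tree outlined in the introduction, but with two separate thresholds: an \emph{insertion} threshold $\eta_u$ governing how the dataset is split, and a \emph{query} threshold $\eta_q$ governing how a query descends. Using the Johnson--Lindenstrauss reduction recalled at the end of Section~\ref{spherical_sec}, we may assume $d = \Theta(\log n\cdot\log\log n) = n^{o(1)}$, so arithmetic with vectors costs only a subpolynomial factor. Fix integers $T, K$ and reals $\eta_u, \eta_q > 0$. Build a rooted tree of depth $K$: at a node holding $P'\subseteq P$, draw i.i.d.\ standard Gaussians $z_1,\dots,z_T\in\Rbb^d$, form children holding $P'_i = \{p\in P' : \langle z_i, p\rangle \ge \eta_u\}$, and recurse on non-empty children; each leaf stores its subset explicitly. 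To answer $q$, descend from the root into every child $i$ with $\langle z_i, q\rangle \ge \eta_q$, and at each reached leaf scan its points, returning the first $p$ with $\|p-q\|_2 \le cr$. Finally build $L = n^{o(1)}$ independent such trees and query all of them.

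Fix a query $q$ and a data point $p^\ast$ at distance $\le r$ (so $\langle p^\ast, q\rangle \ge \alpha(r)$). Because fresh Gaussians are drawn at every node, the quantities of interest factor over the $K$ levels as powers of per-level means. First, the number of depth-$K$ leaves that simultaneously contain $p^\ast$ and are reached by $q$ is a branching process with offspring mean $T\cdot G(\|p^\ast-q\|_2,\eta_u,\eta_q) \ge T\cdot G(r,\eta_u,\eta_q)$, using that $G(s,\eta_u,\eta_q)$ is non-increasing in $s$. Second, a fixed data point is replicated into $(T\,F(\eta_u))^K$ leaves in expectation, and the number of internal nodes is of the same order up to a $\mathrm{poly}(K)$ factor, so $\E[\text{space}] \le n^{o(1)}\cdot n\cdot (T\,F(\eta_u))^K$. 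Third, $q$ reaches $(T\,F(\eta_q))^K$ leaves in expectation, a data point at distance $\ge cr$ is scanned by $q$ with expected multiplicity at most $(T\,G(cr,\eta_u,\eta_q))^K$, and since the search returns the first point within distance $cr$, everything it scans is either such a far point or the single returned point; hence $\E[\text{query time}] \le n^{o(1)}\cdot\big((T\,F(\eta_q))^K + n\,(T\,G(cr,\eta_u,\eta_q))^K\big)$.

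Now we choose parameters. Take $K = \lceil\log\log n\rceil$, fix a constant ratio $t = \eta_q/\eta_u > 0$, let $\eta_u \to \infty$ with $u := K\eta_u^2/(2\log n)$ equal to a constant to be chosen (so $\eta_u, \eta_q \to \infty$ and Lemmas~\ref{f_bound} and~\ref{g_bound} apply), and set $T = \lceil 2/G(r,\eta_u,\eta_q)\rceil$. Then the branching process above has mean $\ge 2$, hence reaches depth $K$ with probability bounded below by a positive constant, while $\log T = (1+o(1))\log(1/G(r,\eta_u,\eta_q)) = o(\log n)$, so $T = n^{o(1)}$ and all the $n^{o(1)}$ and $\mathrm{poly}(K)$ slack above is harmless (note also $T\,F(\eta_u)\ge T\,G(r,\eta_u,\eta_q)\ge 2$, so the depth-$K$ terms dominate). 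Plugging Lemmas~\ref{f_bound} and~\ref{g_bound} into the space and query estimates and using $1 + t^2 - 2\alpha(s)t = (t-\alpha(s))^2 + \beta^2(s)$, the space exponent becomes $1 + u\,(t-\alpha(r))^2/\beta^2(r)$ and the query exponent becomes
\[
\max\left\{\, u\,\frac{(1-\alpha(r)\,t)^2}{\beta^2(r)}\;,\;\; 1 + u\!\left(\frac{(t-\alpha(r))^2}{\beta^2(r)} - \frac{(t-\alpha(cr))^2}{\beta^2(cr)}\right) \right\}.
\]
Setting the space exponent equal to $1 + \rho_u$ pins down $u$ as a function of $t$ and $\rho_u$; a short algebraic elimination (repeatedly using $(1 - \alpha(r)\alpha(cr)) + (\alpha(r) - \alpha(cr)) = (1 + \alpha(r))(1 - \alpha(cr))$ and its variants) shows that the choice
\[
t \;=\; \frac{\beta(r)\sqrt{\rho_u} + \alpha(r)\,\beta(cr)}{\alpha(cr)\,\beta(r)\sqrt{\rho_u} + \beta(cr)}
\]
makes both entries of the $\max$ equal to the value $\rho_q$ determined by $(1 - \alpha(r)\alpha(cr))\sqrt{\rho_q} + (\alpha(r) - \alpha(cr))\sqrt{\rho_u} = \beta(r)\beta(cr)$; the boundary case $\rho_q = 0$ is obtained with $t = 1/\alpha(r)$. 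Any $(\rho_u, \rho_q)$ with ``$\ge$'' in~\eqref{alpha_beta_tradeoff} is reached by enlarging $\rho_u$ and/or $\rho_q$, which only slackens the estimates. Finally, Markov's inequality bounds the total space by $n^{1 + \rho_u + o(1)}$ and the query time by $n^{\rho_q + o(1)}$ except with small constant probability, and with constant probability at least one of the $L$ trees reaches a depth-$K$ leaf containing $p^\ast$; a union bound gives overall success probability at least $2/3$, which proves Theorem~\ref{main_thm_spherical}.

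The crux is the parameter calculation in the third step: one must select $T, K, \eta_u, \eta_q$ so that the three competing constraints---the ``find $p^\ast$'' branching process must not die out, the per-point replication must stay within the space budget, and the expected number of scanned far points must stay within the time budget---can be made \emph{simultaneously} tight, and then verify that the resulting feasible set of exponents is exactly~\eqref{alpha_beta_tradeoff} rather than something weaker. By comparison, the branching-process argument for correctness, the use of early termination to absorb ``grey-zone'' points at distance in $(r, cr]$, and the Markov/union-bound boosting of the success probability are all routine.
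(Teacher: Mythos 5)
Your proposal is correct and follows essentially the same route as the paper: the same Gaussian-threshold tree with separate thresholds $\Ueta,\Qeta$, the same $F/G$-based bounds on success probability, space, and scanned far points, and the same balancing of exponents (your $(u,t)$ parametrization is just the paper's $(\sigma,\tau)$ with $\sigma=u$, $\tau=ut^2$, and your formula for $t$ matches the paper's relation for $\sqrt{\tau}$). The only deviations are cosmetic: the paper takes $T=100/G(r,\Ueta,\Qeta)$ and proves a per-tree success probability of $0.9$ by induction on the depth (so a single tree suffices, with $K\sim\sqrt{\ln n}$), whereas you take $T\approx 2/G$, argue supercritical branching-process survival (valid here because the offspring law is $\mathrm{Binomial}(T,\cdot)$), and boost with $n^{o(1)}$ independent trees plus Markov.
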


We instantiate Theorem~\ref{main_thm_spherical} for two important cases. First, we get a single trade-off between $\rho_q$
and $\rho_u$ \emph{for all $r > 0$ at the same time} by observing that~(\ref{alpha_beta_tradeoff})
is the worst when $r \to 0$. Thus, we get a bound on $\rho_q$ and $\rho_u$ that depends on
the approximation $c$ only, which then can easily be translated to a result for the \emph{whole}~$\Rbb^d$ using a reduction
from~\cite{Valiant12}.

\begin{corollary}
  \label{worst_case_corollary}
  For every $c > 1$, $r > 0$, $\rho_q \geq 0$ and $\rho_u \geq 0$ such that
  \begin{equation}
  \label{worst_case_tradeoff}
  \big(c^2 + 1\big) \sqrt{\rho_q} + \big(c^2 - 1\big) \sqrt{\rho_u} \geq 2c,
  \end{equation}
  there exists a data structure for $(c, r)$-ANN for the \emph{whole} $\Rbb^d$ with space $n^{1 + \rho_u + o(1)}$
  and query time $n^{\rho_q + o(1)}$.
\end{corollary}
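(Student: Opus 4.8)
The plan is to derive Corollary~\ref{worst_case_corollary} from Theorem~\ref{main_thm_spherical} in the two steps suggested by the surrounding text: first pass from the worst case over $\Rbb^d$ to a spherical instance whose distance scale is arbitrarily small, and then check that the spherical trade-off~\eqref{alpha_beta_tradeoff} degenerates to exactly~\eqref{worst_case_tradeoff} as that scale tends to $0$.

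For the first step, I would invoke the reduction of~\cite{Valiant12}. After the standard preprocessing that reduces $(c,r)$-ANN over $\Rbb^d$ to instances of bounded aspect ratio, one embeds the point set into a sphere of large radius and rescales to $S^{d'-1}$ with $d' = d + O(1)$; the near and far radii become $r'$ and $c r'$ up to a $(1\pm o(1))$ multiplicative distortion, where $r' > 0$ may be taken as small as we wish. Composing with the Johnson--Lindenstrauss step from the setup, the ambient dimension is $\mathrm{polylog}(n) = n^{o(1)}$, so the additive $O(dn)$ term is absorbed into $n^{1+o(1)}$. It therefore suffices to show that every $(\rho_q, \rho_u)$ satisfying~\eqref{worst_case_tradeoff} for $c$ also satisfies~\eqref{alpha_beta_tradeoff} with $r$ replaced by a sufficiently small $r' > 0$ (which also guarantees $cr' < 2$), and then to apply Theorem~\ref{main_thm_spherical} to the spherical instance and pull the resulting data structure back through the reduction.

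For the second step, I would divide~\eqref{alpha_beta_tradeoff} by $\beta(r')\beta(cr') > 0$ and expand as $r' \to 0$, using $\alpha(s) = 1 - s^2/2$ and $\beta(s) = s\sqrt{1 - s^2/4}$. A direct computation gives $1 - \alpha(r')\alpha(cr') = \frac{(c^2+1)r'^2}{2} - \frac{c^2 r'^4}{4}$, $\alpha(r') - \alpha(cr') = \frac{(c^2-1)r'^2}{2}$, and $\beta(r')\beta(cr') = cr'^2\sqrt{(1-r'^2/4)(1-c^2r'^2/4)}$, whence
\[
\frac{1 - \alpha(r')\alpha(cr')}{\beta(r')\beta(cr')} = \frac{c^2+1}{2c}\left(1 + \frac{(c^2-1)^2}{8(c^2+1)}\,r'^2 + O(r'^4)\right),
\]
\[
\frac{\alpha(r') - \alpha(cr')}{\beta(r')\beta(cr')} = \frac{c^2-1}{2c}\left(1 + \frac{c^2+1}{8}\,r'^2 + O(r'^4)\right).
\]
Both correction terms are nonnegative, so for all small enough $r'$ the two coefficients of $\sqrt{\rho_q}$ and $\sqrt{\rho_u}$ are at least $\frac{c^2+1}{2c}$ and $\frac{c^2-1}{2c}$ respectively. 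Since~\eqref{worst_case_tradeoff} is precisely the inequality $\frac{c^2+1}{2c}\sqrt{\rho_q} + \frac{c^2-1}{2c}\sqrt{\rho_u} \ge 1$, it implies~\eqref{alpha_beta_tradeoff} at radius $r'$, as needed.

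I expect the only genuine subtlety to be the boundary of the trade-off, i.e.\ when~\eqref{worst_case_tradeoff} holds with equality: there the leading order of the expansion only gives~\eqref{alpha_beta_tradeoff} asymptotically, so one really does need the sign of the $O(r'^2)$ corrections (the first of which is where $c > 1$ is used) to conclude for an honest $r' > 0$ rather than a limiting one. Equivalently, and a bit more robustly, one can show that $r \mapsto \frac{1 - \alpha(r)\alpha(cr)}{\beta(r)\beta(cr)}$ and $r \mapsto \frac{\alpha(r) - \alpha(cr)}{\beta(r)\beta(cr)}$ are nondecreasing on $(0, 2/c)$ --- this is the precise content of the assertion that~\eqref{alpha_beta_tradeoff} is worst as $r \to 0$, and would make pushing $r'$ to $0$ via the reduction unnecessary --- but that single-variable monotonicity check is the main bit of real work; everything else (the reduction, the $1+o(1)$ loss in the approximation factor, and the bookkeeping of the $n^{o(1)}$ factors) is routine.
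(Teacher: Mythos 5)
Your proposal is correct and follows essentially the same route as the paper: reduce worst-case $\Rbb^d$ instances to a spherical instance of radius $r' \to 0$ (the paper does this via a randomly shifted grid plus lifting to a sphere of radius $\widetilde{O}(\log^2 n)$, losing $1/10$ in success probability and $o(1)$ in the approximation), and then observe that~\eqref{alpha_beta_tradeoff} degenerates to~\eqref{worst_case_tradeoff} as $r' \to 0$. The only difference is that you verify the sign of the $O(r'^2)$ corrections so that~\eqref{worst_case_tradeoff} implies~\eqref{alpha_beta_tradeoff} at an honest $r' > 0$, whereas the paper simply substitutes the limiting estimates (any discrepancy there could in any case be absorbed into the $n^{o(1)}$ factors since $r' \to 0$ with $n$).
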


\begin{proof}
We will show that we may transform an instance of $(c, r)$-ANN on $\Rbb^d$ to an instance of $(c+o(1), r')$-ANN on the sphere with $r'\to 0$. 
When $r' \to 0$, we have:
\begin{align*}
1 - \alpha(r') \alpha(cr') &= \frac{(c^2 + 1)r'^2}{2} + O_c(r'^4),\\
\alpha(r') - \alpha(cr') &= \frac{(c^2 - 1)r'^2}{2} + O_c(r'^4),\\
\beta(r') \beta(cr') &= c r'^2 + O_c(r'^4).
\end{align*}
Substituting these estimates into~(\ref{alpha_beta_tradeoff}), we get~(\ref{worst_case_tradeoff}).

Now let us show how to reduce ANN over $\Rbb^d$ to the case, when all the points and queries lie on a unit sphere.

We first rescale all coordinates so as to assume $r = 1$. Now let us partition the whole space $\Rbb^d$ into randomly shifted cubes with the side length $s = 10 \cdot \sqrt{d}$ and consider each cube separately. For any query $q \in \Rbb^d$, with near neighbor $p \in P$,
\[ \Pr[\text{$p$ and $q$ are in different cubes}] \leq \sum_{i=1}^d \dfrac{|p_i - q_i|}{s} = \dfrac{\|p-q\|_1}{s} \leq \dfrac{\sqrt{d} \cdot \|p-q\|_2}{s} \leq \dfrac{1}{10}. \]
The $\ell_2$ diameter of a single cube is $d$. Consider one particular cube $C$, where we first translate points so $x \in C$ have $\|x\|_2 \leq d$. We let $\pi \colon C \to \R^{d+1}$ where
\[ \pi(x) = (x, R), \]
where we append coordinate $R \gg d$ as the $(d+1)$-th coordinate.
For any point $x \in C$, 
\[ \left\| \pi(x) - \left(\frac{R}{\|\pi(x)\|_2}\right)\cdot \pi(x) \right\|_2 \leq \dfrac{\|x\|_2^2}{2R}\]
and for any two points $x, y \in C$, $\|x - y\|_2 = \|\pi(x) - \pi(y)\|_2$; thus,
$$\left\|\left( \frac{R}{\|\pi(x)\|_2}\right)\pi(x) - \left(\frac{R}{\|\pi(y)\|_2}\right) \pi(y)\right\|_2 \leq \frac{d^2}{R} + \|x-y\|_2.
$$
In addition, since $\left( \frac{R}{\|\pi(x)\|_2}\right) \pi(x)$ lies in a sphere of radius $R$ for each point $x \in C$. Thus, letting $R = d^2 \cdot \log \log n \leq O(\log^2 n \cdot \log^3 \log n)$ (which is without loss of generality by the Johnson--Lindenstrauss Lemma), we get that an instance of $(c, r)$-ANN on $\R^d$ corresponds to an instance of $(c+o(1), \frac{1}{d^2 \log \log n})$-ANN on the surface of the unit sphere $S^d \subset \R^{d+1}$, where we lose $\frac{1}{10}$ in the success probability due to the division into disjoint cubes. Applying Theorem~\ref{main_thm_spherical}, we obtain the desired bound.
\end{proof}

If we instantiate Theorem~\ref{main_thm_spherical} with inputs (dataset and query) drawn from the \emph{random} instances defined in Section~\ref{sec:rand_inst_sec} (corresponding to the case $r = \frac{\sqrt{2}}{c}$), we obtain a significantly better tradeoff than~(\ref{worst_case_tradeoff}). By simply applying Theorem~\ref{main_thm_spherical}, we give a trade-off for random instances matching the trade-off promised in Theorem~\ref{main_upper_intro}.

\begin{corollary}
  \label{random_corollary}
  For every $c > 1$, $\rho_q \geq 0$ and $\rho_u \geq 0$ such that
  \begin{equation}
  \label{random_tradeoff}
  c^2 \sqrt{\rho_q} + \big(c^2 - 1\big) \sqrt{\rho_u} \geq \sqrt{2c^2 - 1},
  \end{equation}
  there exists a data structure for $\left(c, \frac{\sqrt{2}}{c}\right)$-ANN on a unit sphere $S^{d-1} \subset \Rbb^d$ with space $n^{1 + \rho_u + o(1)}$
  and query time $n^{\rho_q + o(1)}$. In particular, this data structure is able to handle random instances as defined in Section~\ref{sec:rand_inst_sec}.
\end{corollary}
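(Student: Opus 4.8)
The plan is to instantiate Theorem~\ref{main_thm_spherical} with the specific parameter $r = \frac{\sqrt{2}}{c}$, which is exactly the distance scale of the random instances from Section~\ref{sec:rand_inst_sec}. First I would observe that the promised data point lies at distance $\frac{\sqrt{2}}{c}$ from the query, and all other data points are (with high probability over the random dataset) at distance roughly $\sqrt{2}$ from the query, since $d = \omega(\log n)$ forces pairwise distances of $n$ uniform points on $S^{d-1}$ to concentrate tightly around $\sqrt{2}$. Hence solving $\left(c, \frac{\sqrt{2}}{c}\right)$-ANN on the sphere suffices to recover $p$ from $q$ on a random instance (a union bound over the $n-1$ ``far'' points handles the concentration, costing only a subconstant loss in success probability that can be absorbed). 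So the task reduces to a calculation: plug $r = \frac{\sqrt{2}}{c}$ into condition~\eqref{alpha_beta_tradeoff} and check it becomes~\eqref{random_tradeoff}.

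For that computation I would evaluate the $\alpha, \beta$ functions at the two relevant distances. By definition $\alpha(s) = 1 - \frac{s^2}{2}$, so with $r = \frac{\sqrt{2}}{c}$ we get $\alpha(r) = 1 - \frac{1}{c^2} = \frac{c^2 - 1}{c^2}$, and with $cr = \sqrt{2}$ we get $\alpha(cr) = 1 - 1 = 0$. Consequently $\beta(cr) = \sqrt{1 - 0} = 1$ and $\beta(r) = \sqrt{1 - \left(\frac{c^2-1}{c^2}\right)^2} = \frac{\sqrt{c^4 - (c^2-1)^2}}{c^2} = \frac{\sqrt{2c^2 - 1}}{c^2}$. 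Now substitute into~\eqref{alpha_beta_tradeoff}: the coefficient of $\sqrt{\rho_q}$ is $1 - \alpha(r)\alpha(cr) = 1 - 0 = 1$; the coefficient of $\sqrt{\rho_u}$ is $\alpha(r) - \alpha(cr) = \frac{c^2-1}{c^2}$; and the right-hand side is $\beta(r)\beta(cr) = \frac{\sqrt{2c^2-1}}{c^2}$. Thus~\eqref{alpha_beta_tradeoff} becomes $\sqrt{\rho_q} + \frac{c^2-1}{c^2}\sqrt{\rho_u} \geq \frac{\sqrt{2c^2-1}}{c^2}$, and multiplying through by $c^2$ yields exactly $c^2\sqrt{\rho_q} + (c^2-1)\sqrt{\rho_u} \geq \sqrt{2c^2-1}$, which is~\eqref{random_tradeoff}. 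I should also check the side condition $cr < 2$ of Theorem~\ref{main_thm_spherical}: here $cr = \sqrt{2} < 2$, so it holds.

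Finally I would conclude: for any $\rho_q, \rho_u \geq 0$ satisfying~\eqref{random_tradeoff}, Theorem~\ref{main_thm_spherical} with $r = \frac{\sqrt{2}}{c}$ produces a data structure for $\left(c, \frac{\sqrt{2}}{c}\right)$-ANN on $S^{d-1}$ with space $n^{1+\rho_u+o(1)}$ and query time $n^{\rho_q+o(1)}$, and by the concentration argument above this data structure correctly handles random instances. There is essentially no obstacle here — the corollary is a direct specialization — so the only thing to be careful about is the probabilistic reduction: namely, that a worst-case ANN guarantee at threshold $\frac{\sqrt{2}}{c}$ translates into correctly answering the random instance. This needs the observation that on a random instance the data structure is only ever asked about a query whose true neighbor is at distance $\frac{\sqrt{2}}{c}$ while all decoys sit at distance $(1-o(1))\sqrt{2} > c \cdot \frac{\sqrt{2}}{c} \cdot (1 - o(1))$, so the $(c+o(1))$-approximate guarantee pins down $p$ uniquely; this is where the $d = \omega(\log n)$ hypothesis is used.
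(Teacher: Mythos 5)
Your proposal is correct and follows essentially the same route as the paper: the paper's proof is exactly the one-line substitution of $r = \frac{\sqrt{2}}{c}$ into~\eqref{alpha_beta_tradeoff} using $\alpha(\sqrt{2}) = 0$ and $\beta(\sqrt{2}) = 1$, which after multiplying by $c^2$ gives~\eqref{random_tradeoff}, just as you compute. Your additional discussion of why the worst-case $(c, \frac{\sqrt{2}}{c})$-ANN guarantee suffices for the random instance (concentration of the decoy distances around $\sqrt{2}$ for $d = \omega(\log n)$) goes beyond what the paper spells out and is a reasonable, if slightly informal, supplement.
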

\begin{proof}
Follows from (\ref{alpha_beta_tradeoff}) and that $\alpha(\sqrt{2}) = 0$ and $\beta(\sqrt{2}) = 1$.
\end{proof}

Figure~\ref{trade_off_plot} plots the time-space trade-off in (\ref{worst_case_tradeoff}) and (\ref{random_tradeoff}) for $c = 2$. Note that~(\ref{random_tradeoff}) is much better than~(\ref{worst_case_tradeoff}), especially when $\rho_q = 0$, where
(\ref{worst_case_tradeoff}) gives space $n^{2.77\ldots}$, while (\ref{random_tradeoff}) gives much better space $n^{1.77\ldots}$.
In Section~\ref{apx:upper_general}, we show how to get best of both worlds: we obtain the trade-off~(\ref{random_tradeoff}) for \emph{worst-case} instances.
The remainder of the section is devoted to proving Theorem~\ref{main_thm_spherical}.

\subsection{Data structure}

\subsubsection{Description}

Fix $K$ and $T$ to be positive integers, we determine their exact value later. Our data structure is a \emph{single} rooted tree where each node corresponds to a spherical cap. The tree consists of $K + 1$ levels of nodes where each node has out-degree at most $T$. We will index the levels by $0$, $1$, \ldots, $K$, where the $0$-th level consists of the root denoted by $v_0$, and each node up to the $(K-1)$-th level has at most $T$ children. Therefore, there are at most $T^K$ nodes at the $K$-th level. 

For every node $v$ in the tree, let~$\Lc_v$ be the set of nodes on the path from $v$ to the root $v_0$ excluding the root (but including $v$).
Each node $v$, except for the root, stores a random Gaussian vector $z_v \sim N(0, 1)^d$. For each node~$v$, we define the following subset of the dataset
$P_v \subseteq P$:
$$
P_v = \left\{p \in P \mid \forall v' \in \Lc_v \enspace \langle z_{v'}, p \rangle \geq \Ueta\right\},
$$
where $\Ueta > 0$ is a parameter to be chosen later. 

At the root node $v_0$, $P_{v_0} = P$, since $\Lc_{v_0} = \emptyset$. Intuitively, each set $P_v$ corresponds to a subset of the dataset lying in the intersection of spherical caps centered around $z_{v'}$ for all $v' \in \Lc_v$. Every leaf $\ell$ at the level~$K$ stores the subset $P_\ell$ explicitly. 

We build the tree recursively. For a given node $v$ in levels $0$, \ldots, $K-1$,
we first sample $T$ i.i.d.\ Gaussian vectors $g_1, g_2, \ldots, g_T \sim N(0, 1)^d$. Then, for every $i$ such that $\{p \in P_v\mid \langle g_i, p\rangle \geq \Ueta\}$
is non-empty, we create a new child $v'$ with $z_{v'} = g_i$ and recursively process~$v'$. At the $K$-th level, each node $v$ stores $P_v$ as a list of points.

In order to process a query $q \in S^{d-1}$, we start from the root $v_0$ and descend down the tree. We consider every child $v$ of the root for which $\langle z_v, q\rangle \geq \Qeta$,
where $\Qeta > 0$ is another parameter to be chosen later\footnote{Note that $\Ueta$ may not be equal to $\Qeta$. It is exactly this discrepancy that will govern the time--space trade-off.}. After identifying all the children, we proceed down the children recursively. If we reach leaf $\ell$ at level $K$, we scan through all the points in $P_\ell$ and compute their distance to the query~$q$. If a point lies at a distance at most $cr$ from the query, we return it and stop.

We provide pseudocode for the data structure above in Figure~\ref{spherical_pseudo}. The procedure \textsc{Build}($P$, $0$, $\perp$) builds the data structure for dataset $P$ and returns the root of the tree, $v_0$. The procedure \textsc{Query}($q$, $v_0$) queries the data structure with root $v_0$ at point $q$. 

\begin{figure*}[t!]
\begin{subfigure}[t]{0.5\textwidth}
\centering
{\footnotesize
\begin{algorithmic}
\Function{Build}{$P'$, $l$, $z$}
\State create a tree node $v$
\State store $l$ as $v.l$
\State store $z$ as $v.z$
\If{$l = K$}
\State store $P'$ as $v.P$
\Else
\For{$i \gets 1 \ldots T$}
\State sample a Gaussian vector $z' \sim N(0, 1)^d$
\State $P'' \gets \{p \in P' \mid \langle z', p \rangle \geq \Ueta\}$
\If{$P'' \ne \emptyset$}
\State add \Call{Build}{$P''$, $l + 1$, $z'$} as a child of $v$
\EndIf
\EndFor
\EndIf
\State \Return{$v$}
\EndFunction
\end{algorithmic} }
\end{subfigure}
~\qquad \qquad
\begin{subfigure}[t]{0.5\textwidth}
\centering
{\footnotesize
\begin{algorithmic}
\Function{Query}{$q$, $v$}
\If{$v.l = K$}
\For{$p \in v.P$}
\If{$\|p - q\| \leq cr$}
\State \Return $p$
\EndIf
\EndFor
\Else
\For{$v'$ : $v'$ is a child of $v$}
\If{$\langle v'.z, q\rangle \geq \Qeta$}
\State $p \gets \Call{Query}{q, v'}$
\If{$p \ne \perp$}
\State \Return $p$
\EndIf
\EndIf
\EndFor
\EndIf
\State \Return $\perp$
\EndFunction
\end{algorithmic}
}
\end{subfigure}
\caption{Pseudocode for data-independent partitions}
\label{spherical_pseudo}
\end{figure*}

\subsubsection{Analysis}
\label{analysis_sec}

\paragraph{Probability of success} We first analyze the probability of success of the data structure. We assume that a query $q$ has some $p \in P$ where $\|p - q\|_2 \leq r$. The data structure succeeds when \textsc{Query}($q$, $v_0$) returns some point $p' \in P$ with $\|q - p'\|_2 \leq cr$.

\begin{lemma}
\label{lem:succ-prob}
  If $$T \geq \frac{100}{G\left(r, \Ueta, \Qeta\right)},$$ then
   with probability at least $0.9$, \textsc{Query}\emph{(}$q$, $v_0$\emph{)} finds some point within distance $cr$ from $q$.
\end{lemma}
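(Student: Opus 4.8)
The plan is to trace a fixed near-neighbor pair $(q,p)$ with $\|p-q\|_2 \le r$ down the tree and show that, with constant probability, $p$ survives into some leaf that the query procedure actually visits. Since the query returns \emph{some} point within distance $cr$ whenever it reaches a leaf containing such a point, it suffices to lower-bound the probability that $p$ reaches a reachable leaf.

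First I would isolate the event that matters at a single node. Fix any node $v$ at level $\ell < K$ that already contains $p$ in $P_v$ and that the query has already descended into. When we sample a child direction $z' \sim N(0,1)^d$, call $z'$ \emph{good} if $\langle z', p\rangle \ge \Ueta$ \emph{and} $\langle z', q\rangle \ge \Qeta$; by definition of $G$, a single sampled $z'$ is good with probability exactly $G(\|p-q\|_2, \Ueta, \Qeta) \ge G(r,\Ueta,\Qeta)$, using monotonicity of $G$ in its first argument. If some $z'$ among the $T$ samples is good, then the corresponding child $v'$ is created (its $P_{v'}$ is non-empty since it contains $p$), it satisfies $\langle z_{v'}, q\rangle \ge \Qeta$ so the query descends into it, and $p \in P_{v'}$. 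Thus conditioned on reaching $v$ with $p \in P_v$, the probability that we fail to produce \emph{any} good child is $(1 - G(r,\Ueta,\Qeta))^T \le e^{-T \cdot G(r,\Ueta,\Qeta)} \le e^{-100} < 1/100$ by the hypothesis $T \ge 100 / G(r,\Ueta,\Qeta)$.

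Next I would iterate this over the $K$ levels. Define the event $E_\ell$ that, after descending $\ell$ levels, there is some visited node $v$ at level $\ell$ with $p \in P_v$. We have $E_0$ holding deterministically (the root, with $P_{v_0}=P \ni p$). The one-level argument above shows $\Pr[E_{\ell+1} \mid E_\ell] \ge 1 - e^{-100}$, because from a witnessing node at level $\ell$ we produce a witnessing node at level $\ell+1$ unless all $T$ child-samples at that node are simultaneously non-good — and the samples at distinct nodes are independent, so we may condition cleanly on the witnessing node and its past. A union bound over the $K$ levels gives $\Pr[E_K] \ge 1 - K e^{-100}$. If $p \in P_\ell$ at a leaf $\ell$ that the query visits, then \textsc{Query}$(q,v_0)$ scans $P_\ell$, finds $p$ (or some other point) within distance $cr \ge r \ge \|p-q\|_2$, and returns it. So the success probability is at least $1 - Ke^{-100}$.

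The only subtlety — and the main thing to get right — is that $K$ is not an absolute constant: it is chosen later, possibly growing with $n$, in which case $1 - Ke^{-100}$ is not bounded below by $0.9$. I expect the intended fix (and the step I would flag as the real obstacle) is to replace the crude constant $100$ by a slightly larger quantity absorbing $K$ — i.e. take $T \ge C \log K / G(r,\Ueta,\Qeta)$ for suitable $C$, so that $Ke^{-T G} \le 0.1$ — or, since in the actual parameter setting $K = n^{o(1)}$ and $G = n^{-\Theta(1)}$, to observe that $T \ge 100/G$ already dwarfs any $\mathrm{poly}(\log)$ factor and one can afford $T' = T \log K$ without affecting the claimed exponents; alternatively one simply repeats the whole construction $O(\log(1/\delta))$ times independently. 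In the write-up I would present the clean single-node calculation, the level-by-level union bound, and then remark that the $K$-dependence is harmless because $K$ will be subpolynomial while $1/G$ is polynomial, so the stated hypothesis $T \ge 100/G$ is (up to the subpolynomial slack already hidden in the $n^{o(1)}$ bounds) sufficient.
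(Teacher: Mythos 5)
Your single-node calculation is correct, but the way you assemble it---requiring a good child at every one of the $K$ levels and union-bounding---only yields success probability $1-Ke^{-100}$, and the repairs you then propose (inflate $T$ to $C\log K/G$, invoke the fact that $K$ is subpolynomial while $1/G$ is polynomial, or independently repeat the construction) all amount to changing the hypothesis of the lemma or importing facts from the later parameter setting. The lemma as stated, with the bare constant $100/G(r,\Ueta,\Qeta)$ and no dependence on $K$, is provable, and the missing idea is a different bookkeeping: an induction on depth in which the constant $0.9$ is the invariant. The inductive hypothesis is that from any node $v'$ with $p\in P_{v'}$, the call \textsc{Query}$(q,v')$ returns \emph{some} point within $cr$ with probability at least $0.9$. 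At a node $v$ with $p\in P_v$, each of the $T$ sampled children is simultaneously above threshold for both $p$ and $q$ with probability at least $G(r,\Ueta,\Qeta)$ (monotonicity of $G$ in the distance), and conditioned on that, the recursive call from that child succeeds with probability at least $0.9$ by the inductive hypothesis; since the Gaussian $z_{v_i}$ and the randomness of the subtree below child $i$ are independent across the $T$ children, the failure probability at $v$ is at most $\bigl(1-0.9\,G(r,\Ueta,\Qeta)\bigr)^T\le e^{-90}\le 0.1$.

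In other words, the $0.9$ from the level below is folded into the per-child success probability $0.9\,G$ \emph{before} amplifying over the $T$ independent trials, so the bound is a fixed point of the level-to-level recursion and does not degrade with depth: no union bound over the $K$ levels, and hence no $\log K$ slack in $T$, is needed. This is exactly the route the paper takes, and it is why the obstacle you flagged is not actually an obstacle.
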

\begin{proof}
We prove the lemma by induction on the depth of the tree. Let $q \in S^{d - 1}$ be a query point and $p \in P$ its near neighbor. Suppose we are within the recursive call \textsc{Query}($q$, $v$) for some node $v$ in the tree. Suppose we have not yet failed, that is, $p \in P_v$. We would like to prove that---if the condition of
the lemma is met---the probability that this call returns \emph{some} point within distance $cr$ is at least~$0.9$.
  
When $v$ is a node in the last level $K$, the algorithm enumerates $P_v$ and, since we assume $p \in P_v$, some
good point will be discovered (though not necessarily $p$ itself). Therefore, this case is trivial. 
Now suppose that $v$ is not from the $K$-th level. Using the inductive assumption,
suppose that the statement of the lemma is true for all $T$ potential children of~$v$, i.e., if $p \in P_{v'}$, then with probability 0.9, \textsc{Query}$(q, v')$ returns some point within distance $cr$ from $q$. Then,
  \begin{align*}
\Pr[\mbox{failure}] &\leq \prod_{i=1}^T \left(1 - \Pr_{z_{v_i}}\left[\langle z_{v_i}, p \rangle \geq \Ueta\mbox{ and }\langle z_{v_i}, q \rangle \geq \Qeta\right] \cdot 0.9\right) \\ 
                          &\leq \left(1 - G\left(r, \Ueta, \Qeta\right) \cdot 0.9\right)^T \leq 0.1,
  \end{align*}
  where the first step follows from the inductive assumption and independence between the children of $v$ during the preprocessing phase.
  The second step follows by monotonicity of $G(s, \rho, \sigma)$ in $s$, and the third step is due to the assumption of the lemma.
\end{proof}

\paragraph{Space} We now analyze the space consumption of the data structure.
\begin{lemma}
\label{lem:space}
  The expected space consumption of the data structure is at most
  $$
  n^{1 + o(1)} \cdot K \cdot \big(T \cdot F(\Ueta)\big)^K.
  $$
\end{lemma}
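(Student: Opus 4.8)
The plan is to bound the expected space by summing, over all levels $l \in \{0, 1, \dots, K\}$, the expected total size of all the lists $P_v$ stored at level $l$, plus the expected number of tree nodes (the latter being dominated by the former up to an $n^{o(1)}$ factor, since each node stores at most a $d$-dimensional Gaussian vector and $d = n^{o(1)}$). The dominant contribution is the leaves at level $K$, but the same computation bounds every level and loses only the factor $K$.

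First I would fix a single data point $p \in P$ and compute the probability that $p \in P_v$ for a fixed node $v$ at level $l$. By the definition $P_v = \{p : \forall v' \in \Lc_v,\ \langle z_{v'}, p\rangle \geq \Ueta\}$, and since the $z_{v'}$ along the root-to-$v$ path are independent standard Gaussians sampled during \textsc{Build}, conditioned on the tree shape this probability is exactly $F(\Ueta)^l$. Next I would bound the number of nodes at level $l$: during preprocessing each node spawns at most $T$ children, so there are at most $T^l$ nodes at level $l$. Therefore, by linearity of expectation, the expected value of $\sum_{v \text{ at level } l} |P_v| = \sum_{v}\sum_{p \in P}\1[p \in P_v]$ is at most $n \cdot T^l \cdot F(\Ueta)^l = n \cdot (T \cdot F(\Ueta))^l$. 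One subtlety: the number of \emph{realized} children is random (a child is created only if the corresponding $P''$ is nonempty), but bounding it by $T$ deterministically is fine and only helps; the independence of the Gaussians $z_{v'}$ from the event $\{p \in P_v\}$ is clean because each $z_{v'}$ is drawn fresh and the membership of $p$ in $P_{v'}$ depends only on the $z$'s on the path, so one can expose the path Gaussians one at a time.

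Summing over $l = 0, \dots, K$ gives expected list-storage total at most $\sum_{l=0}^K n \cdot (T F(\Ueta))^l$. Since we will choose parameters so that $T \cdot F(\Ueta) \geq 1$ (this is forced by the success-probability constraint $T \geq 100/G(r,\Ueta,\Qeta)$ together with $G \leq F$), the geometric sum is dominated by its last term up to a factor of $K$, yielding $n \cdot K \cdot (T F(\Ueta))^K$. Finally, accounting for the storage of the Gaussian vectors at the internal nodes contributes at most $d \cdot \sum_{l} T^l \leq d K T^K \leq n^{o(1)} \cdot K \cdot (T F(\Ueta))^K$ as well (using $T^K \le (T F(\Ueta))^K \cdot F(\Ueta)^{-K}$ and that, in the relevant parameter regime, $F(\Ueta)^{-K} = n^{o(1)}$; alternatively one simply absorbs $d$ and any such polynomial overhead into the $n^{o(1)}$ factor). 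Collecting everything gives the claimed bound $n^{1+o(1)} \cdot K \cdot (T \cdot F(\Ueta))^K$.

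The main obstacle — really the only non-bookkeeping point — is making the independence argument for $\Pr[p \in P_v] = F(\Ueta)^l$ precise: one must be careful that the tree structure (which nodes exist) is itself random and correlated with the $z$'s, so the clean statement is obtained by coupling with the ``full'' $T$-ary tree in which every node is created regardless of emptiness, bounding the true structure's node count by $T^l$, and noting that along any fixed root-to-level-$l$ path the $l$ Gaussian vectors are i.i.d.\ and independent of $p$. Everything else is linearity of expectation and a geometric series, so I would keep the write-up short and front-load the independence/coupling remark.
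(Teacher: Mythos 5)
Your computation of the dominant term is the same as the paper's: at most $T^l$ nodes at level $l$, a fixed point belongs to a fixed node's set with probability $F(\Ueta)^l$, linearity of expectation, and $T\cdot F(\Ueta)\geq 100$ (since $G(r,\Ueta,\Qeta)\leq F(\Ueta)$) so that summing over levels costs only a factor $K$; the paper simply does this at level $K$ and is less explicit about the coupling with the full $T$-ary tree, which you handle fine. The genuine gap is in your accounting for the tree nodes themselves (the stored Gaussian vectors). You bound their number by the full $T$-ary tree, $\sum_{l\leq K} T^l = O(K\,T^K)$, and then assert $d\,K\,T^K\leq n^{o(1)}\cdot K\cdot\bigl(T F(\Ueta)\bigr)^K$ because ``$F(\Ueta)^{-K}=n^{o(1)}$ in the relevant parameter regime,'' or alternatively that such overhead can be absorbed into $n^{o(1)}$. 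This is false. In the parameter setting of Section~\ref{par_set} one has $F(\Ueta)^{K}=n^{-\sigma}$ where $\sigma$ is at least a constant: $\sigma=1$ exactly in the random-instance regime ($\alpha(cr)=0$, so $\sqrt{\sigma}=\beta(cr)=1$), and at the $\rho_q=0$ end of the trade-off $\sqrt{\sigma}=\frac{\alpha(r)\beta(cr)}{\alpha(r)-\alpha(cr)}$, which exceeds $1$ already for moderate $r$ (e.g.\ $c=2$, $r=1/2$ gives $\sigma\approx 4$) and blows up like $\frac{2c}{(c^2-1)r}$ as $r\to 0$ --- precisely the regime used in the reduction of Corollary~\ref{worst_case_corollary}. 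Hence $T^K=\bigl(TF(\Ueta)\bigr)^K\cdot n^{\sigma}$ can exceed the claimed bound $n^{1+o(1)}\bigl(TF(\Ueta)\bigr)^K$ by a polynomial (or far larger) factor, and no $n^{o(1)}$ (nor even the lemma's $n^{1+o(1)}$) slack absorbs it. Your opening remark that the node count is ``dominated by'' the list storage is the right intuition, but the reason you give ($d=n^{o(1)}$ per node) does not support it, and your detailed version contradicts it.

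The fix is exactly the information you set aside as ``only helps'': a child node is created only if its set $P''$ is non-empty. Therefore the number of stored nodes at level $l$ is at most $\sum_{v\text{ at level }l}|P_v|$, whose expectation is $n\bigl(TF(\Ueta)\bigr)^l\leq n\bigl(TF(\Ueta)\bigr)^K$, so the expected number of stored nodes over all levels is at most $(K+1)\,n\,\bigl(TF(\Ueta)\bigr)^K$; this is how the paper argues it (phrased as: the number of stored nodes is at most $K+1$ times the number of stored points). Only after this reduction is the per-node cost $d=n^{o(1)}$ for the Gaussian vector genuinely absorbed. With that one change your write-up coincides with the paper's proof; the rest (the leaf-list computation, $TF(\Ueta)\geq 1$, the geometric sum) is correct as written.
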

\begin{proof}
  We compute the expected total size of the sets $P_\ell$ for leaves $\ell$ at $K$-th level. There are at most $T^K$ such nodes,
  and for a fixed point $p \in P$ and a fixed leaf $\ell$ the probability that $p \in P_\ell$ is equal to $F(\Ueta)^K$. Thus, the expected total
  size is at most $n \cdot \big(T \cdot F(\Ueta)\big)^K$. Since we only store a node $v$ if $P_v$ is non-empty, the number of nodes stored is at most $K+1$ times the number of points stored at the leaves. The Gaussian vectors stored at each node require space $d$, which is at most $n^{o(1)}$.
\end{proof}

\paragraph{Query time} Finally, we analyze the query time.
\begin{lemma}
\label{lem:time}
  The expected query time is at most
  \begin{equation}
  \label{query_time_full_bound}
  n^{o(1)} \cdot T \cdot \left( T \cdot F(\Qeta)\right)^K
  + n^{1 + o(1)} \cdot \left(T \cdot G(cr, \Ueta, \Qeta)\right)^K.
  \end{equation}
\end{lemma}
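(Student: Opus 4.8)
The plan is to separate the query cost into a \emph{navigation} part --- the work spent reaching internal nodes and testing which children to recurse into --- and a \emph{scanning} part --- the work spent examining the points stored at the leaves that are reached. Throughout I fix the query $q$ and its near neighbor $p$ (worst case) and take expectations over the Gaussian vectors $\{z_v\}$ chosen by \textsc{Build}.

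\emph{Navigation.} A node $v$ is reached by \textsc{Query} only if $\langle z_{v'}, q\rangle \geq \Qeta$ for every $v' \in \Lc_v$. Along any fixed address in $[T]^k$ the corresponding $k$ Gaussians are mutually independent and each such event has probability exactly $F(\Qeta)$, so a fixed level-$k$ address is reached with probability at most $F(\Qeta)^k$, and the expected number of reached level-$k$ nodes is at most $(T F(\Qeta))^k$. At each reached internal node the algorithm iterates over its $\leq T$ children and computes an inner product for each, costing $O(dT) = n^{o(1)} \cdot T$. Summing over levels $0, \dots, K-1$ --- a sum dominated by its top term up to the factor $K = n^{o(1)}$ --- and adding the $O(dT) \cdot (T F(\Qeta))^K$ cost incurred at the at most $(T F(\Qeta))^K$ reached leaves yields the first term $n^{o(1)} \cdot T \cdot (T F(\Qeta))^K$.

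\emph{Scanning.} Let $D = \{p' \in P : \|p' - q\|_2 > cr\}$ be the set of ``far'' points. The key point is that \textsc{Query} returns the \emph{first} point of a leaf's list lying within distance $cr$; hence in any leaf $\ell$ it examines at most $|P_\ell \cap D| + 1$ points --- possibly all the far points of $P_\ell$, but then the first close point (the near neighbor $p$, or any other within $cr$) terminates the query. The ``$+1$'' contributions, summed over the $\leq (T F(\Qeta))^K$ reached leaves, are absorbed into the first term, so it remains to bound $\E\big[\sum_{\ell \text{ reached}} |P_\ell \cap D|\big] = \sum_{p' \in D} \sum_\ell \Pr[\, p' \in P_\ell \text{ and } \ell \text{ is reached}\,]$. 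For a fixed far point $p'$ and a fixed leaf address, the leaf exists and contains $p'$ exactly when $\langle z_j, p'\rangle \geq \Ueta$ for all $j$ on the path (this also forces the intermediate sets to be nonempty, so the nodes are created), and it is moreover reached when $\langle z_j, q\rangle \geq \Qeta$ for all $j$; by independence across levels and the definition of $G$, this probability equals $G(\|p' - q\|_2, \Ueta, \Qeta)^K$. Summing over the $T^K$ addresses and using that $G(\cdot, \Ueta, \Qeta)$ is non-increasing together with $\|p' - q\|_2 > cr$, every far point contributes at most $\big(T\, G(cr, \Ueta, \Qeta)\big)^K$; with $|D| \leq n$ and $O(d) = n^{o(1)}$ time per distance evaluation, this gives the second term $n^{1 + o(1)} \cdot \big(T\, G(cr, \Ueta, \Qeta)\big)^K$.

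The one genuinely non-routine point is the treatment of data points at distance strictly between $r$ and $cr$: there may be $\Omega(n)$ of them, and $G(s, \Ueta, \Qeta)$ for such $s$ is far larger than $G(cr, \Ueta, \Qeta)$, so they cannot be merged with the far points without ruining the bound. It is precisely the early termination of \textsc{Query} --- returning on the first point found within distance $cr$ --- that lets us charge all such points (and $p$ itself) to a single ``$+1$'' per leaf. The remaining geometric-sum and $K = n^{o(1)}$ bookkeeping is routine; in the degenerate regime $T F(\Qeta) < 1$ the query time is $n^{o(1)}$ and the statement holds trivially.
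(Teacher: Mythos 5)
Your proposal is correct and follows essentially the same route as the paper's proof: bound the navigation cost by the expected number of reached nodes times the $n^{o(1)}\cdot T$ per-node cost, and bound the scanning cost by charging each leaf at most one ``close'' point (early termination) plus its far points, whose expected count per far point and per leaf is $G(cr,\Ueta,\Qeta)^K$ by independence along the path and monotonicity of $G$. The only cosmetic difference is in handling the geometric sum over levels: the paper invokes the forthcoming setting $T\cdot F(\Qeta)\geq 100$ to absorb it into an $O(1)$ factor, while you absorb it into the $K=n^{o(1)}$ factor (and your aside about the regime $T\cdot F(\Qeta)<1$ should really read that the navigation cost is $n^{o(1)}\cdot T$ there, which is likewise covered once parameters are set as in the paper).
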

\begin{proof}
  First, we compute the expected query time spent going down the tree, without scanning the leaves. The expected number of \emph{nodes} the query procedure reaches is:
  $$
  1 + T \cdot F(\Qeta) + \left(T \cdot F(\Qeta)\right)^2 + \ldots + \left(T \cdot F(\Qeta)\right)^K = O(1) \cdot \left(T \cdot F(\Qeta)\right)^K,
  $$
  since we will set $T$ so $T \cdot F(\eta_q) \geq 100$. 
In each of node, we spend time $n^{o(1)} \cdot T$. The product of the two expressions gives the first term in (\ref{query_time_full_bound}).
  
The expected time spent scanning points in the leaves is at most $n^{o(1)}$ times the number of points scanned at the leaves reached. The number of points scanned is always at most one more than the number of \emph{far} points, i.e., lying a distance greater than $cr$ from $q$, that reached the same leaf. There are at most $n-1$ far points and $T^K$ leaves.
  For each far point $p'$ and each leaf $\ell$ the probability that both $p'$ and $q$ end up in $P_\ell$ is at most
  $G(cr, \Ueta, \Qeta)^K$. For each such pair, we spend time at most $n^{o(1)}$
  processing the corresponding $p'$. This gives the second term in (\ref{query_time_full_bound}).
\end{proof}

\subsubsection{Setting parameters}
\label{par_set}

We end the section by describing how to set parameters $T$, $K$, $\Ueta$ and $\Qeta$ to prove Theorem~\ref{main_thm_spherical}.

First, we set $K \sim \sqrt{\ln n}$. In order to satisfy the requirement
of Lemma~\ref{lem:succ-prob}, we set
\begin{equation}
\label{t_vs_g}
T = \frac{100}{G(r, \Ueta, \Qeta)}.
\end{equation}
Second, we (approximately) balance the terms in the query time~(\ref{query_time_full_bound}).
Toward this goal, we aim to have
\begin{equation}
\label{f_vs_g}
F(\Qeta)^K = n \cdot G(cr, \Ueta, \Qeta)^K.
\end{equation} 
If we manage to satisfy these conditions, then we obtain space $n^{1 + o(1)} \cdot \left(T \cdot F(\Ueta)\right)^K$ and query time\footnote{Other terms from the query time are absorbed into $n^{o(1)}$ due to our choice of $K$.}
$n^{o(1)} \cdot \left(T \cdot F(\Qeta)\right)^K$.

Let $F(\Ueta)^K = n^{-\sigma}$ and $F(\Qeta)^K = n^{-\tau}$. By Lemma~\ref{f_bound}, Lemma~\ref{g_bound} and~(\ref{f_vs_g}), we have that, up to $o(1)$ terms,
$$
\tau = \frac{\sigma + \tau - 2 \alpha(cr) \cdot \sqrt{\sigma\tau}}{\beta^2(cr)} - 1,
$$
which can be rewritten as
\begin{equation}
\label{balancing}
\big|\sqrt{\sigma} - \alpha(cr) \sqrt{\tau}\big| = \beta(cr),
\end{equation}
since $\alpha^2(cr) + \beta^2(cr) = 1$.
We have, by Lemma~\ref{f_bound}, Lemma~\ref{g_bound} and~(\ref{t_vs_g}),
$$
T^K = n^{\frac{\sigma + \tau - 2 \alpha(r) \sqrt{\sigma \tau}}{\beta^2(r)} + o(1)}.
$$
Thus, the space bound is
$$
n^{1 + o(1)} \cdot \left(T \cdot F(\Ueta)\right)^K = n^{1 + \frac{\sigma + \tau - 2 \alpha(r) \sqrt{\sigma \tau}}{\beta^2(r)} - \sigma + o(1)}
= n^{1 + \frac{\left(\alpha(r) \sqrt{\sigma} - \sqrt{\tau}\right)^2}{\beta^2(r)} + o(1)}
$$
and query time is
$$
n^{o(1)} \cdot \left(T \cdot F(\Qeta)\right)^K = n^{\frac{\sigma + \tau - 2 \alpha(r) \sqrt{\sigma \tau}}{\beta^2(r)} - \tau + o(1)}
= n^{\frac{\left(\sqrt{\sigma} - \alpha(r)\sqrt{\tau}\right)^2}{\beta^2(r)} + o(1)}.
$$
In other words,
$$
\rho_q = \frac{\big(\sqrt{\sigma} - \alpha(r) \sqrt{\tau}\big)^2}{\beta^2(r)},
$$
and
$$
\rho_u = \frac{\big(\alpha(r)\sqrt{\sigma} - \sqrt{\tau}\big)^2}{\beta^2(r)}
$$
where $\tau$ is set so that~(\ref{balancing}) is satisfied. Combining these identities, we obtain~(\ref{alpha_beta_tradeoff}).

Namely, we set $\sqrt{\sigma} = \alpha(cr) \sqrt{\tau} + \beta(cr)$ to satisfy~(\ref{balancing}).
Then, $\sqrt{\tau}$ can vary between:
$$
\frac{\alpha(r)\beta(cr)}{1 - \alpha(r) \alpha(cr)},
$$
which corresponds to $\rho_u = 0$ and
$$
\frac{\beta(cr)}{\alpha(r) - \alpha(cr)},
$$
which corresponds to $\rho_q = 0$.

This gives a relation:
$$
\sqrt{\tau} = \frac{\beta(cr) - \beta(r) \sqrt{\rho_q}}{\alpha(r) - \alpha(cr)} = \frac{\alpha(r)\beta(cr) + \beta(r) \sqrt{\rho_u}}{1 - \alpha(r) \alpha(cr)},
$$
which gives the desired trade-off~(\ref{alpha_beta_tradeoff}).

\subsubsection{An algorithm based on Locality-Sensitive Filtering (LSF)}

We remark that there is an alternative method to the algorithm described
above, using {\em Spherical Locality-Sensitive
  Filtering} introduced in~\cite{BDGL16-subRho}.
  As argued in \cite{BDGL16-subRho}, this method may naturally
extend to the $d=O(\log n)$ case with better trade-offs
between $\rho_q, \rho_u$ than in~\eqref{main_tradeoff_intro}
(indeed, such better exponents were obtained in~\cite{BDGL16-subRho}
for the ``LSH regime'' of $\rho_u=\rho_q$).

For spherical LSF, partitions are formed by first dividing $\Rbb^d$ into $K$
blocks ($\Rbb^d = \Rbb^{d/K} \times \dots \times \Rbb^{d/K}$), and
then generating a spherical code $C \subset S^{d/K - 1} \subset
\Rbb^{d/K}$ of vectors sampled uniformly at random from the
lower-dimensional unit sphere $S^{d/K - 1}$. For any vector $p \in
\Rbb^d$, we write $p^{(1)}, \dots, p^{(K)}$ for the $K$ blocks of
$d/K$ coordinates in the vector $p$. 

The tree consists of $K$ levels, and the $|C|$ children of a node $v$ at level $\ell$ are defined by the vectors $(0, \dots, 0, z_i, 0, \dots, 0)$, where only the $\ell$-th block of $d/K$ entries is potentially non-zero and is formed by one of the $|C|$ code words. The subset $P''$ of a child then corresponds to the subset $P'$ of the parent, intersected with the spherical cap corresponding to the child, where
\begin{align}
P' = \{p \in P: \langle z_{i_1}, p^{(1)} \rangle + \dots + \langle z_{i_K}, p^{(K)} \rangle \geq K \cdot \Ueta\}.
\end{align}
Decoding each of the $K$ blocks separately with threshold $\Ueta$ was shown in \cite{BDGL16-subRho} to be asymptotically equivalent to decoding the entire vector with threshold $K \cdot \Ueta$, as long as $K$ does not grow too fast as a function of $d$ and $n$. The latter joint decoding method based on the sum of the partial inner products is then used as the actual decoding method.

\begin{figure}
\centering
\includegraphics[scale=0.5]{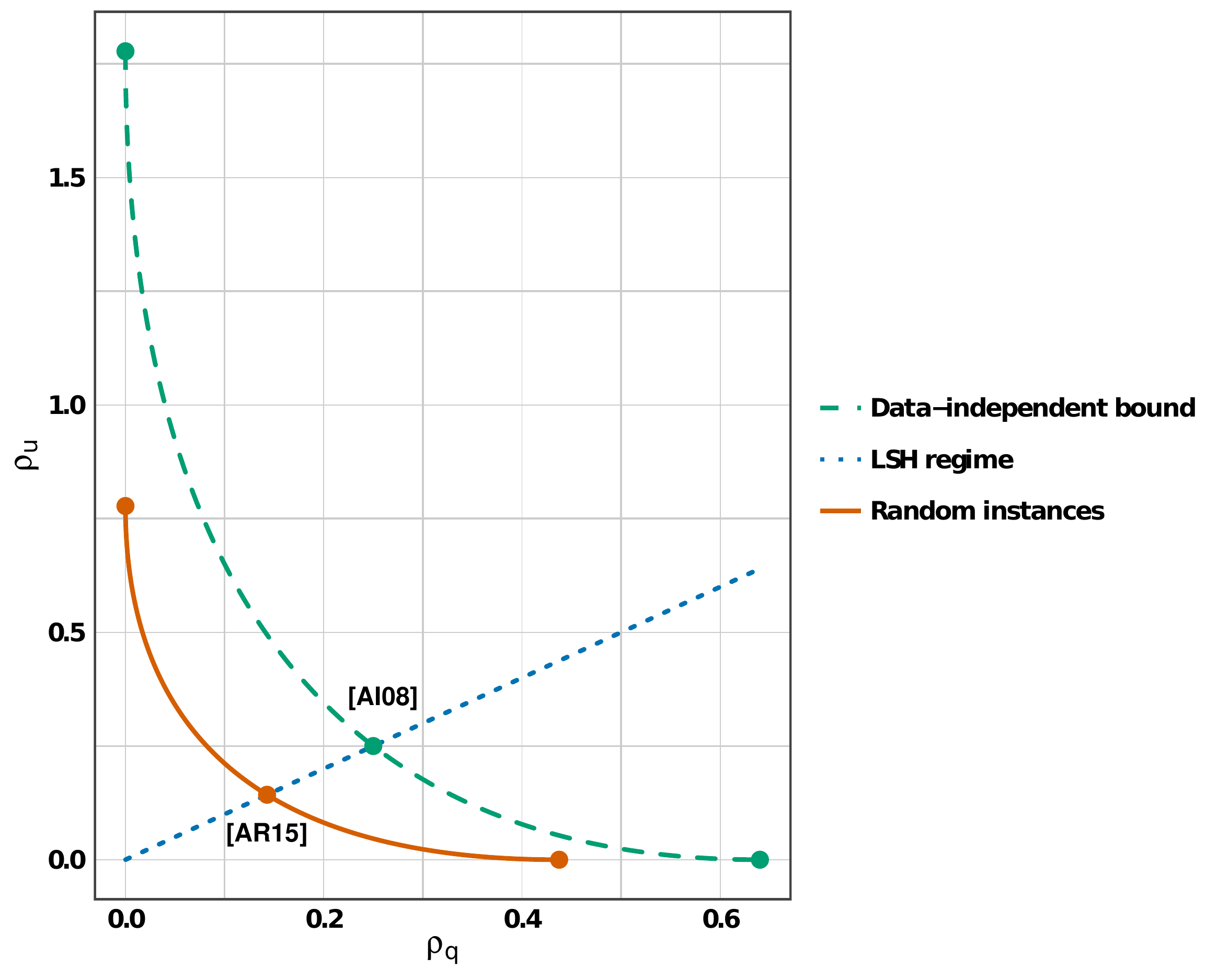}
\caption{Trade-offs between query time $n^{\rho_q + o(1)}$ and space $n^{1 + \rho_u + o(1)}$ for the Euclidean distance
and approximation $c = 2$. The green dashed line corresponds to
the simple data-independent bound for \emph{worst-case} instances from Corollary~\ref{worst_case_corollary}. The red solid line corresponds to
the bound for \emph{random} instances from Corollary~\ref{random_corollary}, which we later extend
to \emph{worst-case} instances in Section~\ref{apx:upper_general}. The blue dotted line is $\rho_q = \rho_u$, which corresponds to the ``LSH regime''. In particular, the intersection
of the dotted and the dashed lines matches the best \emph{data-independent} LSH from~\cite{AI-CACM}, while the intersection with the solid line
matches the best \emph{data-dependent} LSH from~\cite{AR-optimal}.}
\label{trade_off_plot}
\end{figure}


\section{Upper bounds: data-dependent partitions}
\label{apx:upper_general}

In this section we prove the main upper bound theorem,
Theorem~\ref{main_upper_intro}, which we restate below:

\begin{theorem}
  \label{worst_case_data_dependent}
  For every $c > 1$, $r > 0$, $\rho_q \geq 0$ and $\rho_u \geq 0$ such that
  \begin{equation}
  \label{worst_case_tradeoff_again}
  c^2 \sqrt{\rho_q} + \big(c^2 - 1\big) \sqrt{\rho_u} \geq \sqrt{2c^2 - 1},
  \end{equation}
  there exists a data structure for $(c, r)$-ANN for the \emph{whole} $\Rbb^d$ with space $n^{1 + \rho_u + o(1)} + O(dn)$
  and query time $n^{\rho_q + o(1)} + dn^{o(1)}$.
\end{theorem}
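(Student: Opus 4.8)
The plan is to bootstrap the data-independent algorithm of Section~\ref{spherical_sec} (specifically the random-instance trade-off of Corollary~\ref{random_corollary}) to worst-case inputs by inserting a \emph{data-dependent} preprocessing step in the spirit of~\cite{AINR-subLSH, AR-optimal}. First I would reduce the general Euclidean problem to the sphere: by the same cube-partition-plus-lifting argument used in the proof of Corollary~\ref{worst_case_corollary} (originating in~\cite{Valiant12}), together with Johnson--Lindenstrauss to fix $d = \Theta(\log n \cdot \log\log n)$, it suffices to solve $(c + o(1), r)$-ANN for $n$-point instances lying on a unit sphere $S^{d-1}$ with $r \to 0$. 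On the sphere with $r \to 0$ the ``gold standard'' is exactly the random-instance bound~\eqref{random_tradeoff}: if all far pairs of $P$ were at distance close to $\sqrt 2$ (so that $\alpha(\cdot)$ is $\approx 0$ on them), a single invocation of the tree of Section~\ref{spherical_sec} would already achieve~\eqref{worst_case_tradeoff_again}, since then the far-point collision term $G(cr, \Ueta, \Qeta)$ in Lemma~\ref{lem:time} behaves as for a genuinely random instance. The entire difficulty is that a worst-case $P$ may contain clusters of points much closer to each other than $\sqrt 2$, which inflate that term.

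The heart of the argument is therefore a recursive \emph{clustering} procedure. At each stage we hold a dataset contained in a ball of some radius; we re-center and rescale so that it lies on a unit sphere, and then repeatedly search for a ``dense low-diameter cluster'', i.e.\ a ball of radius $\sqrt 2\,(1-\eps)$ (for a suitably small $\eps$) containing at least $\tau n$ of the remaining points for a threshold $\tau = n^{-o(1)}$. Whenever such a cluster exists we extract it, enclose it in a ball of radius genuinely smaller than the current one, and recurse on it; points lying within a small ``margin'' of the boundary of the extracted ball are duplicated into both the cluster and the remainder, and one must argue that this duplication inflates the total space by only $n^{o(1)}$ over the whole recursion. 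Once no dense cluster remains, the dataset is \emph{pseudo-random} in the precise sense needed — every far pair collides in a random Gaussian cap with essentially the same (small) probability as for a random instance — so a single level of the Section~\ref{spherical_sec} tree (sample $T$ Gaussians, form the subsets $P_i$, recurse into each nonempty one) behaves exactly as on a random instance. We then recurse into each $P_i$, where dense clusters may reappear and are extracted again. The accounting rests on three facts: (i) each re-centering shrinks the radius by a constant factor, so after $O(\log\log n)$ re-centerings the instance is so concentrated it can be answered directly; (ii) along any root-to-leaf path the number of ``Gaussian'' levels is $O(\sqrt{\log n})$, as in Section~\ref{spherical_sec}; and (iii) the per-level exponent contributions are governed exactly by~\eqref{alpha_beta_tradeoff} evaluated at the worst scale $r \to 0$, which is precisely~\eqref{random_tradeoff}. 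Multiplying over all levels yields space $n^{1 + \rho_u + o(1)} + O(dn)$ and query time $n^{\rho_q + o(1)} + d n^{o(1)}$, the additive $O(dn)$ and $d n^{o(1)}$ accounting for storing the dataset and evaluating inner products; one checks along the way that the preprocessing time is near-linear in the space used, which is needed for the dynamization remark.

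I expect two main obstacles. The first is \emph{formalizing the clustering invariant}: the density threshold $\tau$, the shrink parameter $\eps$, and the boundary margin must be chosen so that simultaneously (a) the absence of a dense cluster certifies random-like collision probabilities for far pairs at \emph{all} relevant scales at once, (b) extraction makes genuine progress — strictly smaller radius, and a ``more isotropic'' instance after re-centering — and (c) the total number of clusters extracted across the recursion, times the boundary duplication, costs only a subpolynomial factor overall. The second obstacle is \emph{controlling the $o(1)$ slack}: the tree has depth $\log^{O(1)} n$ and the cluster recursion adds $O(\log\log n)$ more layers, so each ``$1+o(1)$'' exponent incurred per level must be taken uniformly small enough that the product over all levels is still $n^{o(1)}$. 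This is exactly the point where the present approach is cleaner than~\cite{AR-optimal}: only the two-point Gaussian quantities $F$ and $G$ of Section~\ref{spherical_sec} enter the analysis, so one never has to bound the behavior of Gaussian partitions on triples of points, which removes the most delicate part of the earlier argument.
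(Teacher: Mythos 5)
Your high-level plan is the same as the paper's: reduce to the sphere via the cube-partition/lifting reduction plus Johnson--Lindenstrauss, repeatedly extract dense clusters of $\geq \tau n$ points of radius $(\sqrt{2}-\eps)R$ and enclose them in strictly smaller balls, and treat the cluster-free remainder with one level of the Gaussian-cap tree of Section~\ref{spherical_sec}, with $K\sim\sqrt{\log n}$ cap levels and $\widetilde{O}(\log\log n)$ cluster levels. However, there are concrete gaps. First, after re-centering, a cluster lies in a \emph{ball}, not on a sphere, and you cannot simply ``rescale so that it lies on a unit sphere.'' The paper spends a substantial part of the proof on exactly this point: \textsc{ProcessBall} discretizes the ball into annuli of width $\delta r_1$, projects distance thresholds via \textsc{Project}, and then Lemma~\ref{invariants} runs a potential-function argument (on $\gamma=r_2^2/r_1^2$ and $\xi=r_2^2/R^2$) to show that the resulting $1-O(\delta)$ degradations of the gap $r_2/r_1$, the $O(1/\delta)$ duplication per ball step, and the $(1-\Omega(\eps^2))$ radius reductions are simultaneously controlled, and that only $\widetilde{O}(\log\log n)$ ball steps occur on any branch. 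Your item (i) (``constant-factor shrink, hence $O(\log\log n)$ re-centerings'') replaces this with an assertion; the actual shrink factor is only $1-\Omega(\eps^2)$ with $\eps=o(1)$, and the interaction with the annulus projections is precisely what the invariant lemma is for.

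Second, your item (iii) is wrong as stated: evaluating~\eqref{alpha_beta_tradeoff} at $r\to 0$ gives the \emph{inferior} trade-off~\eqref{worst_case_tradeoff} of Corollary~\ref{worst_case_corollary}, not~\eqref{random_tradeoff}. The point of the cluster removal is that afterwards the effective far distance is $\sqrt{2}R$ while the invariant $r_2/r_1\geq c(1-o(1))$ together with $r_2<\sqrt{2}R$ forces $r_1/R\leq \sqrt{2}/c - o(1)$; the binding case is therefore the random-instance scale $r_1/R=\sqrt2/c$, which is what yields~\eqref{worst_case_tradeoff_again} per level (this is the role of $r^*$ in the parameter-setting lemma). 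Third, your boundary treatment differs from the paper and is problematic: the paper performs no margin duplication at cluster boundaries and instead has every query recurse into \emph{all} (at most $1/\tau=n^{o(1)}$) cluster children. A duplication-based fix would need margins of width $\Omega(r_1)$ to be correct, and since $r_1$ can be $\Theta(R)$ in the relevant regime, the enlarged cluster need no longer fit in a ball of radius $(1-\Omega(\eps^2))R$, destroying the progress guarantee; so either adopt the paper's ``query every cluster'' routing or supply a genuinely different argument for this step.
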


This theorem achieves ``the best of both worlds'' in Corollary~\ref{worst_case_corollary} and
Corollary~\ref{random_corollary}. Like Corollary~\ref{worst_case_corollary}, our data structure works for worst-case datasets; however, we improve upon the trade-off between time and space complexity from Corollary~\ref{worst_case_corollary} to that of random instances in Corollary~\ref{random_corollary}. 
See Figure~\ref{trade_off_plot} for a comparison of both trade-offs for $c = 2$. We achieve the improvement by combining the result of Section~\ref{spherical_sec} with the techniques from~\cite{AR-optimal}.

As in~\cite{AR-optimal}, the resulting data structure is a decision
tree. However, there are several notable differences from
\cite{AR-optimal}:
\begin{itemize}
  \item The whole data structure is a \emph{single} decision tree, while~\cite{AR-optimal} considers a \emph{collection} of $n^{\Theta(1)}$ trees. 
  \item Instead of Spherical LSH used in~\cite{AR-optimal}, we use the partitioning procedure from Section~\ref{spherical_sec}.
  \item In~\cite{AR-optimal}, the algorithm continues partitioning the dataset until all parts contain less than $n^{o(1)}$ points. We change the stopping criterion slightly to ensure the number of ``non-cluster'' nodes on any root-leaf branch is the same (this value will be around $\sqrt{\ln n}$ to reflect the setting of $K$ in Section~\ref{spherical_sec}).
  \item Unlike~\cite{AR-optimal}, our analysis does not require the ``three-point property'', which is necessary in~\cite{AR-optimal}. This is related to the fact
  that the probability success of a single tree is constant, unlike~\cite{AR-optimal}, where it is polynomially small.
  \item In~\cite{AR-optimal}, the algorithm reduces the general case to the ``bounded ball'' case using LSH from~\cite{DIIM}. While the cost associated with this procedure is negligible in the LSH regime, the cost becomes too high in certain parts of the time--space trade-off. Instead, we use a standard trick of imposing a randomly shifted grid, which reduces an arbitrary dataset to a dataset of diameter
  $\widetilde{O}(\log n)$ (see the proof of Corollary~\ref{worst_case_tradeoff} and \cite{IM}). Then, we invoke an upper bound from Section~\ref{spherical_sec} together with a reduction from~\cite{Valiant12} which happens to be enough for this case. 
  \end{itemize}

\subsection{Overview}
\label{sec_overview}

We start with a high-level overview. Consider a dataset $P_0$ of $n$ points. We may assume $r = 1$ by rescaling. 
We may further assume the
dataset lies in the Euclidean space of dimension $d = \Theta(\log n
\cdot \log \log n)$; one can always reduce the dimension to $d$ by
applying the Johnson--Lindenstrauss lemma~\cite{JL, DG03} which reduces the dimension and distorts pairwise distances by at most $1 \pm 1 / (\log \log n)^{\Omega(1)}$ with high probability. We may also assume the entire dataset $P_0$ and a query lie on a sphere $\partial B(0, R)$ of radius $R = \widetilde{O}(\log^2 n)$ (see the proof of Corollary~\ref{worst_case_tradeoff}).

We partition $P_0$ into various components: $s$ {\em dense} components, denoted by $C_1$, $C_2$, \ldots, $C_s$, and one {\em pseudo-random} component, denoted by $\widetilde{P}$. The partition is designed to satisfy the following properties.
Each dense component $C_i$ satisfies $|C_i| \geq \tau n$ and
 can be covered by a spherical cap of radius $(\sqrt{2} -
\eps) R$ (see Figure~\ref{cap_covering_fig}). Here $\tau, \eps > 0$ are small quantities to be chosen
later. One should think of $C_i$ as clusters consisting of $n^{1 - o(1)}$ points which are closer than random points would be. The pseudo-random component $\widetilde{P}$ consists of the remaining points without any dense clusters inside.

\begin{figure}
    \begin{center}
        \includegraphics[page=2,scale=0.8]{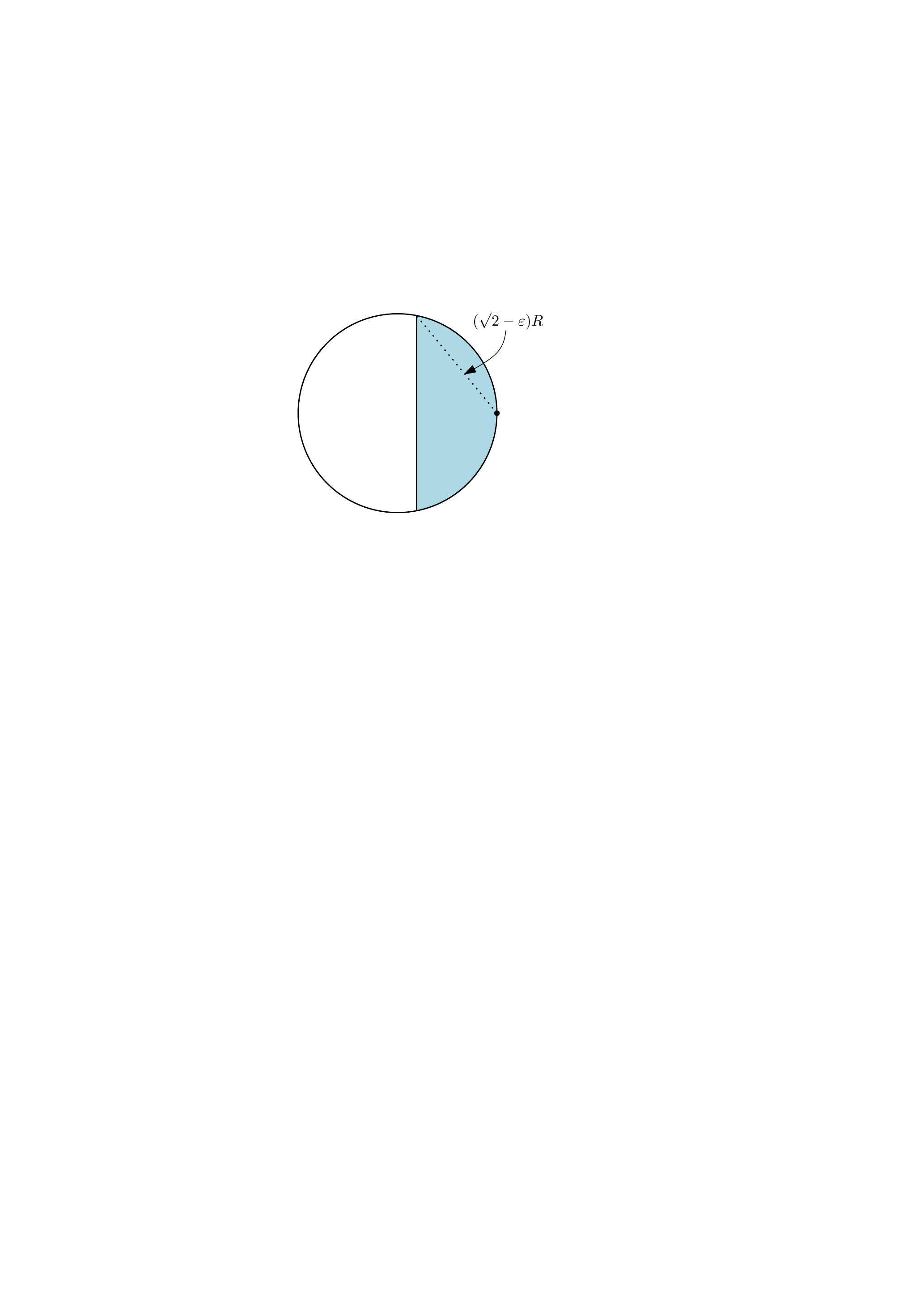}
    \end{center}
    \caption{Covering a spherical cap of radius $(\sqrt{2} - \eps) R$}
    \label{cap_covering_fig}
\end{figure}

We proceed separately for each $C_i$ and $\widetilde{P}$.
We enclose every dense component $C_i$ in a smaller ball $E_i$ of
radius $(1 - \Omega(\eps^2)) R$ (see Figure~\ref{cap_covering_fig}).
For simplicity, we first ignore the fact that $C_i$ does not
necessarily lie on the boundary $\partial E_i$.  Once we enclose each dense cluster with a smaller ball, we recurse on each resulting spherical instance of radius $(1 -
\Omega(\eps^2)) R$.  We treat the pseudo-random component $\widetilde{P}$
similarly to the random instance from Section~\ref{sec:rand_inst_sec} described in Section~\ref{spherical_sec}. Namely, we sample $T$ Gaussian vectors
$z_1, z_2, \ldots, z_T \sim N(0, 1)^d$, and form $T$ subsets of~$\widetilde{P}$:
$$
\widetilde{P}_i = \{p \in \widetilde{P} \mid \langle z_i , p \rangle \geq \Ueta R\},
$$ where $\Ueta > 0$ is a parameter to be chosen later (for each
pseudo-random remainder separately). Then, we recurse on each
$\widetilde{P}_i$. Note that after we recurse, new dense clusters may appear in some $\widetilde{P}_i$ since it becomes easier to satisfy the minimum size constraint.

During the query procedure, we recursively
query \emph{each} $C_i$ with the query point $q$. For the pseudo-random component
$\widetilde{P}$, we identify all $i$'s such that $\langle z_i, q
\rangle \geq \Qeta R$, and query all corresponding children
recursively. Here $T$, $\Ueta > 0$ and $\Qeta > 0$ are parameters
that need to be chosen carefully (for
each pseudo-random remainder separately).

Our algorithm makes progress in two ways.
For dense clusters, we reduce the radius of the enclosing sphere by a factor
of $(1 - \Omega(\eps^2))$. Ideally, we have that initially $R = \widetilde{O}(\log^2 n)$, so in $O(\log \log n / \eps^2)$ iterations of removing dense clusters, we
arrive at the case of $R \leq c /\sqrt{2}$, where Corollary~\ref{random_corollary} begins to apply. For the pseudo-random component $\widetilde{P}$, most points will lie a distance at least $(\sqrt{2} - \eps) R$ from each other. In particular, the ratio of $R$ to a typical
inter-point distance is approximately $1/\sqrt{2}$, exactly like in a random case. For this reasonm we call $\widetilde{P}$ pseudo-random.
In this setting, the data structure from Section~\ref{spherical_sec} performs well.

We now address the issue deferred in the above high-level
description: that a dense component~$C_i$ does not generally
lie on~$\partial E_i$, but rather can occupy the interior of $E_i$. In this case, we partition~$E_i$ into very thin annuli of carefully
chosen width and treat each annulus as a sphere. This
discretization of a ball adds to the complexity of the analysis,
but is not fundamental from the conceptual point of
view.

\subsection{Description}

We are now ready to describe the data structure formally. It depends
on the (small positive) parameters $\tau$, $\eps$ and $\delta$, as well as an integer parameter $K \sim \sqrt{\ln n}$.
We also need to choose parameters $T$, $\Ueta > 0$, $\Qeta > 0$ for each pseudo-random remainder separately. Figure~\ref{pseudocode} provides pseudocode for the algorithm.

\begin{figure}
\centering
    \begin{center}
        \includegraphics[width=0.4\linewidth]{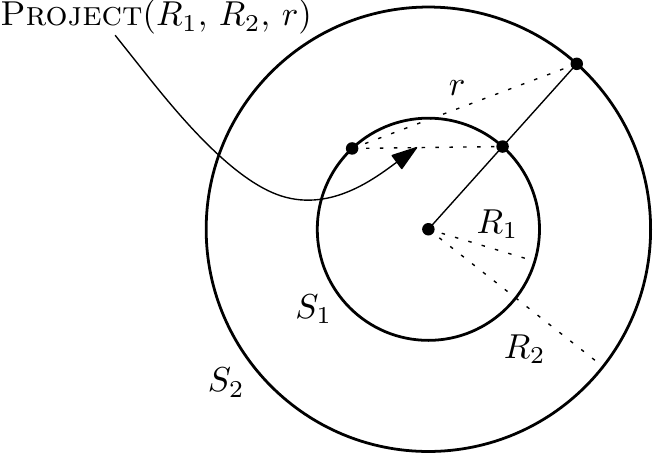}
    \end{center}
    \caption{The definition of \textsc{Project}}
    \label{project_fig}
\end{figure}

\paragraph{Preprocessing.}
Our preprocessing algorithm consists of the following functions:
\begin{itemize}
   \item \textsc{Process}($P$) does the initial preprocessing. In particular, it performs the rescaling so that $r_1 = 1$ as well as the dimension reduction to $d = \Theta(\log n \log \log n)$ with the Johnson--Lindenstrauss lemma~\cite{JL, DG03}. In addition, we partition into randomly shifted cubes, translate the points, and think of them as lying on a sphere of radius $R = \widetilde{O}(\log^2 n)$ (see the proof of Corollary~\ref{worst_case_tradeoff} for details). Then we call \textsc{ProcessSphere}.
    \item \textsc{ProcessSphere}($P$, $r_1$, $r_2$, $o$, $R$, $l$) builds the data structure for a dataset $P$ lying on a sphere $\partial B(o, R)$,
    assuming we need to solve ANN with distance thresholds $r_1$ and $r_2$. Moreover, we are guaranteed that queries
    will lie on $\partial B(o, R)$. The parameter $l$ counts the number of non-cluster nodes in the recursion stack we have encountered so far. Recall that we stop as soon as we encounter $K$ of them.
    \item \textsc{ProcessBall}($P$, $r_1$, $r_2$, $o$, $R$, $l$) builds the data structure for a dataset 
    $P$ lying inside the ball $B(o, R)$, assuming we need to solve ANN with distance thresholds $r_1$ and $r_2$.
    Unlike \textsc{ProcessSphere}, here queries can be arbitrary. The parameter $l$ has the same meaning as in \textsc{ProcessSphere}.
    \item \textsc{Project}($R_1$, $R_2$, $r$) is an auxiliary
      function allowing us to project points on a ball to very thin annuli. Suppose we have two spheres $S_1$ and $S_2$
    with a common center and radii $R_1$ and $R_2$. Suppose there are points $p_1 \in S_1$ and $p_2 \in S_2$
    with $\|p_1 - p_2\| = r$. \textsc{Project}($R_1$,~$R_2$,~$r$) returns the distance between $p_1$ and the
    point $\widetilde{p_2}$ that lies on $S_1$ and is the closest to $p_2$ (see Figure~\ref{project_fig}). This is implemented by a formula as in \cite{AR-optimal}.
\end{itemize}

We now elaborate on the above descriptions of \textsc{ProcessSphere} and \textsc{ProcessBall}, since these are the crucial components of our analysis. We will refer to line number of the pseudocode from Figure~\ref{pseudocode}.

\paragraph{\textsc{ProcessSphere}.} We consider three base cases. 
\begin{enumerate}
\item If $l = K$, we stop and store $P$ explicitly. This corresponds to having reached a leaf in the algorithm from Section~\ref{spherical_sec}. This case is handled in lines 2--4 of Figure~\ref{pseudocode}.
\item If $r_2 \geq 2R$, then we may only store one point, since any point in $P$ is a valid answer to any query made on a sphere of radius $R$ containing $P$. This trivial instance is checked in lines 5--7 of Figure~\ref{pseudocode}.
\item The last case occurs when the algorithm from Section~\ref{spherical_sec} can give the desired point on the time--space trade-off. In this case, we may simply proceed as in the algorithm from Section~\ref{spherical_sec}. We choose $\Ueta, \Qeta > 0$ and $T$ appropriately and build a single level of the tree from Section~\ref{spherical_sec} with $l$ increased by 1. We check for this last condition using~(\ref{alpha_beta_tradeoff}) in line 9 of Figure~\ref{pseudocode}, and if so, we may skip lines 10--18 of Figure~\ref{pseudocode}.
\end{enumerate}
If none of the above three cases apply, we proceed in lines 10--18 of Figure~\ref{pseudocode} by removing the dense components and then handling the pseudo-random remainder. The dense components are clusters of at least $\tau |P|$ points lying in a ball of radius $(\sqrt{2} - \eps)R$ with its center on $\partial B(o, R)$. These balls can
be enclosed by smaller balls of radius
$\widetilde{R} \leq (1 - \Omega(\eps^2)) R$. In each of these smaller balls, we invoke
\textsc{ProcessBall} with the same $l$. Finally, we build a single level of the tree in Section~\ref{spherical_sec} for the remaining pseudo-random points. We pick the appropriate $\Ueta, \Qeta > 0$ and $T$ and recurse on each part with \textsc{ProcessSphere} with $l$ increased by $1$.

\paragraph{\textsc{ProcessBall}.}
Similarly to \textsc{ProcessSphere}, if $r_1 + 2R \leq r_2$,
then any point from $B(o, R)$ is a valid answer to any query in $B(o, R + r_2)$. We handle this trivial instance in lines 25--27 of Figure~\ref{pseudocode}.

If we are not in the trivial setting above, we reduce the ball to the spherical case via a discretization of the ball $B(o, R)$ into thin annuli of width $\delta r_1$. First, we round all distances from points to $o$ to a
multiple of $\delta r_1$ in line 28 of Figure~\ref{pseudocode}. This rounding can change the distance between any pair of
points by at most $2 \delta r_1$ by the triangle inequality. 

Then, we handle each non-empty annuli separately. In particular, for a fixed annuli at distance $\delta i r_1$ from $o$, a possible query can lie at most a distance $\delta j r_1$ from $o$, where $\delta r_1 |i - j| \leq r_1 + 2\delta r_1$. For each such case, we recursively build a data structure with \textsc{ProcessSphere}. However, when projecting points, the distance thresholds of $r_1$ and $r_2$ change, and this change is computed using \textsc{Project} in lines 34 and 35 of Figure~\ref{pseudocode}.
\vspace{5mm}

Overall, the preprocessing creates a decision tree. The root corresponds to the procedure {\sc Process}, and subsequent nodes
correspond to procedures {\sc ProcessSphere} and {\sc ProcessBall}. We refer to the tree nodes correspondingly, using the
labels in the description of the query algorithm from below.

\paragraph{Query algorithm.}
Consider a query point $q \in \Rbb^d$. We run the query on the
decision tree, starting with the root which executes \textsc{Process}, and applying the following
algorithms depending on the label of the nodes:
\begin{itemize}
\item In \textsc{ProcessSphere} we first recursively query the data
  structures corresponding to the clusters.
Then, we locate $q$ in the spherical caps (with threshold $\Qeta$, like in Section~\ref{spherical_sec}),
and query data structures we built for the corresponding subsets of $P$. When we encounter a node with points stored explicitly, we simply scan the list of points for a possible near neighbor. This happens when $l = K$.
\item In \textsc{ProcessBall}, we first consider the base case, where we just return the stored point if it is
close enough. In general,
we check whether $\|q - o\|_2 \leq R + r_1$. If not, we return with no neighbor, since each dataset point lies within a ball of radius $R$ from $o$, but the query point is at least $R + r_1$ away from $o$.
If $\|q - o \|_2 \leq R + r_1$, we round $q$ so the distance from $o$ to $q$ is a multiple of $\delta r_1$ and enumerate all possible distances from $o$ to the potential near neighbor we are looking for. For each possible distance, we query the corresponding {\sc ProcessSphere} children after projecting $q$ on the sphere with a tentative near neighbor using, \textsc{Project}.
\end{itemize}

\subsection{Setting parameters}
\label{sec:params}

We complete the description of the data structure by setting the remaining parameters. Recall that the dimension is $d =
\Theta( \log n \cdot \log \log n)$.  We set $\eps, \delta, \tau$ as
follows:
\begin{itemize}
    \item $\eps = \frac{1}{\log \log \log n}$;
    \item $\delta = \exp\bigl(-(\log \log \log n)^C\bigr)$;
    \item $\tau = \exp\bigl(-\log^{2/3} n\bigr)$,
\end{itemize}
where $C$ is a sufficiently large positive constant.

Now we specify how to set $\Ueta, \Qeta > 0$ and $T$ for each pseudo-random remainder. The idea will be to try to replicate the parameter settings of Section~\ref{par_set} corresponding to the random instance. The important parameter will be $r^*$, which acts as the ``effective'' $r_2$. In the case that $r_2 \geq \sqrt{2} R$, then  we have more flexibility than in the random setting, so we let $r^* = r_2$. In the case that $r_2 < \sqrt{2} R$, then we let $r^* = \sqrt{2} R$. In particular, we let 
\[ T = \dfrac{100}{G(r_1/R, \Ueta, \Qeta)} \]
in order to achieve a constant probability of success. Then we let $\Ueta$ and $\Qeta$ such that
\begin{itemize}
\item $F(\Ueta) / G(r_1/R, \Ueta, \Qeta) \leq n^{(\rho_u + o(1)) / K}$
\item $F(\Qeta) / G(r_1/R, \Ueta, \Qeta) \leq n^{(\rho_q + o(1)) / K} $
\item $G(r^*/R, \Ueta, \Qeta) / G(r_1/R, \Ueta, \Qeta) \leq n^{(\rho_q - 1 + o(1)) / K}$
\end{itemize}
which correspond to the parameter settings achieving the tradeoff of Section~\ref{par_set}.

A crucial relation between the parameters is that $\tau$ should be much smaller than $n^{-1/K} = 2^{-\sqrt{\log n}}$. This implies that the ``large distance'' is effectively
equal to $\sqrt{2} R$, at least for the sake of a single step of the random partition.

We collect some basic facts from the data structure which will be useful for the analysis. These facts follow trivially from the pseudocode in Figure~\ref{pseudocode}.
\begin{itemize}

\item \textsc{Process} is called once at the beginning and has one child corresponding to one call to \textsc{ProcessSphere}. In the analysis, we will disregard this node. \textsc{Process} does not take up any significant space or time. Thus, we refer to the root of the tree as the first call to \textsc{ProcessSphere}. 

\item The children to \textsc{ProcessSphere} may contain at most $\frac{1}{\tau}$ many calls to \textsc{ProcessBall}, corresponding to cluster nodes, and $T$ calls to \textsc{ProcessSphere}. Each \textsc{ProcessBall} call of \textsc{ProcessSphere} handles a disjoint subset of the dataset. Points can be replicated in the pseudo-random remainder, when a point lies in the intersection of two or more caps. 

\item \textsc{ProcessBall} has many children, all of which are \textsc{ProcessSphere} which do not increment $l$. Each of these children corresponds to a call on a specific annulus of width $\delta r_1$ around the center as well as possible distance for a query. For each annulus, there are at most $\frac{2}{\delta} + 4$ notable distances; after rounding by $\delta r_1$, a valid query can be at most $r_1+2\delta r_1$ away from a particular annulus in both directions, thus, each point gets duplicated at most $\frac{2}{\delta} + 4$ many times. 

\item For each possible point $p \in P$, we may consider the subtree of nodes which process that particular point. We make the distinction between two kinds of calls to \textsc{ProcessSphere}: calls where $p$ lies in a dense cluster, and calls where $p$ lies in a pseudo-random remainder. If $p$ lies in a dense cluster, $l$ will not be incremented; if $p$ lies in the pseudo-random remainder, $l$ will be incremented. The point $p$ may be processed by various rounds of calls to \textsc{ProcessBall} and \textsc{ProcessSphere} without incrementing $l$; however, there will be a moment when $p$ is not in a dense cluster and will be part of the pseudo-random remainder. In that setting, $p$ will be processed by a call to \textsc{ProcessSphere} which increments $l$.
\end{itemize}

\begin{figure}[p!]
{\footnotesize
\begin{algorithmic}[1]
    \Function{ProcessSphere}{$P$, $r_1$, $r_2$, $o$, $R$, $l$}
        \If{$l = K$} \Comment{base case 1}
        \State store $P$ explicitly 
        \State \Return
        \EndIf
        \If{$r_2 \geq 2R$} \Comment{base case 2}
            \State store any point from $P$
            \State \Return
        \EndIf
        \State $r^* \gets r_2$
        \If{$\left(1 - \alpha\left(\frac{r_1}{R}\right)\alpha\left(\frac{r_2}{R}\right)\right) \sqrt{\rho_q} +
        \left(\alpha\left(\frac{r_1}{R}\right) - \alpha\left(\frac{r_2}{R}\right)\right) \sqrt{\rho_u}
        < 
        \beta\left(\frac{r_1}{R}\right)\beta\left(\frac{r_2}{R}\right)$}  \Comment{base case 3}
        \State $m \gets |P|$
        \State $\widehat{R} \gets (\sqrt{2} - \eps) R$
        \While{$\exists \, x \in \partial B(o, R): |B(x, \widehat{R}) \cap P| \geq \tau m$
               } \Comment{remove dense clusters}
            \State $B(\widetilde{o}, \widetilde{R}) \gets $ the {\sc seb} for
            $P \cap B(x, \widehat{R})$
            \State \Call{ProcessBall}{$P \cap B(x,
                   \widehat{R})$, $r_1$, $r_2$,
                $\widetilde{o}$,
                $\widetilde{R}$, $l$}
            \State $P \gets P \setminus B(x,
                   \widehat{R})$
        \EndWhile
        \State $r^* \gets \sqrt{2} R$
        \EndIf
        \State choose $\Ueta$ and $\Qeta$ such that: \Comment{data independent portion}
        \begin{itemize}
        \item $F(\Ueta) / G(r_1/R, \Ueta, \Qeta) \leq n^{(\rho_u+o(1)) / K }$;
        \item $F(\Qeta) / G(r_1/R, \Ueta, \Qeta) \leq n^{(\rho_q + o(1)) / K }$;
        \item $G(r^*/R, \Ueta, \Qeta) / G(r_1/R, \Ueta, \Qeta) \leq n^{(\rho_q - 1 + o(1)) / K}$.
        \end{itemize}
        \State $T \gets 100 / G(r_1/R, \Ueta, \Qeta)$
        \For{$i \gets 1 \ldots T$} 
        \State sample $z \sim N(0, 1)^d$
        \State $P' \gets \{p \in P \mid \langle z, p \rangle \geq \Ueta R\}$
        \If{$P' \ne \emptyset$}
        \State \Call{ProcessSphere}{$P'$, $r_1$, $r_2$, $o$, $R$, $l + 1$}
        \EndIf
        \EndFor
    \EndFunction
    \Function{ProcessBall}{$P$, $r_1$, $r_2$, $o$, $R$, $l$}
        \If{$r_1 + 2R \leq r_2$} \Comment{trivial instance of \textsc{ProcessBall}}
            \State store any point from $P$
            \State \Return
        \EndIf
        \State $P \gets \{o + \delta r_1 \lceil \frac{\|p - o\|}{\delta r_1}\rceil \cdot \frac{p - o}{\|p - o\|} \mid p \in P\}$
        \For{$i \gets 1 \ldots \lceil \frac{R}{\delta r_1} \rceil$}
            \State $\widetilde{P} \gets \{p \in P \colon \|p - o\| = \delta i r_1 \}$ 
            \If{$\widetilde{P} \ne \emptyset$}
                \For{$j \gets 1 \ldots \lceil
                    \frac{R + r_1 + 2 \delta r_1}{\delta r_1} \rceil$}
                    \If{$\delta |i - j| \leq r_1 + 2 \delta r_1$}
                        \State $\widetilde{r_1} \gets
                        \mbox{\Call{Project}{$\delta i r_1$, $\delta j r_1$,
                        $r_1 + 2 \delta r_1$}}$ \Comment{computing $\widetilde{r_1}$ and $\widetilde{r_2}$ for projected instance}
                        \State $\widetilde{r_2} \gets
                        \mbox{\Call{Project}{$\delta i r_1$, $\delta j r_1$,
                        $r_2 - 2 \delta r_1$}}$
                        \State
                        \Call{ProcessSphere}{$\widetilde{P}$, $\widetilde{r_1}$, $\widetilde{r_2}$, $o$, $\delta i r_1$, $l$}
                    \EndIf
                \EndFor
            \EndIf
        \EndFor
    \EndFunction
    \Function{Project}{$R_1$, $R_2$, $r$}
        \State \Return $\sqrt{R_1 (r^2 - (R_1 - R_2)^2) / R_2}$
    \EndFunction
\end{algorithmic}
}
\caption{Pseudocode of the data structure (\textsc{seb} stands for \emph{smallest enclosing ball})}
\label{pseudocode}
\end{figure}

\subsection{Analysis}

\begin{lemma}
\label{invariants}
The following invariants hold.
\begin{itemize}
\item At any moment one has $\frac{r_2}{r_1} \geq c \cdot (1 - o(1))$ and $r_2 \leq c \cdot ( 1 + o(1) )$.
\item At any moment the number of calls to \textsc{ProcessBall} in the recursion stack is at most
$\widetilde{O}(\log \log n)$.
\end{itemize}
\end{lemma}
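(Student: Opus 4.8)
The plan is to prove both invariants simultaneously by induction on the depth of a node in the recursion tree, carrying along the parameters $(r_1,r_2,o,R,l)$ and the number of \textsc{ProcessBall} calls currently on the stack. The base case is the root --- the first call to \textsc{ProcessSphere} issued by \textsc{Process} --- where, after rescaling so $r=1$, the Johnson--Lindenstrauss step, and the shifted-grid-plus-lifting reduction from the proof of Corollary~\ref{worst_case_corollary} (all distorting distances by $1\pm o(1)$), one has $r_1=1\pm o(1)$, $r_2=c(1\pm o(1))$ and an empty stack. For the inductive step, the transitions $\textsc{ProcessSphere}\to\textsc{ProcessSphere}$ and $\textsc{ProcessSphere}\to\textsc{ProcessBall}$ pass $(r_1,r_2)$ unchanged, keep $R$ in the former and, in the latter, replace it by the radius $\widetilde{R}$ of the smallest enclosing ball of a cluster contained in a cap $B(x,(\sqrt2-\eps)R)$ with $x\in\partial B(o,R)$, so $\widetilde{R}\le(1-\Omega(\eps^2))R$ (Figure~\ref{cap_covering_fig}); the \textsc{ProcessBall} counter increases only in the latter. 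Inside \textsc{ProcessBall}, each non-empty annulus spawns \textsc{ProcessSphere} children on radius $R_1=\delta i r_1\le\widetilde{R}$ without touching the counter. Hence along any root--leaf path $R$ is non-increasing and is multiplied by a factor $\le 1-\Omega(\eps^2)$ whenever the path goes through a \textsc{ProcessBall} node.

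The only place $(r_1,r_2)$ changes is $\textsc{ProcessBall}\to\textsc{ProcessSphere}$: $r_1\mapsto\widetilde{r_1}=\textsc{Project}(R_1,R_2,r_1+2\delta r_1)$, $r_2\mapsto\widetilde{r_2}=\textsc{Project}(R_1,R_2,r_2-2\delta r_1)$ with $R_1=\delta i r_1$, $R_2=\delta j r_1$, $|R_1-R_2|\le r_1+2\delta r_1$. The closed form of \textsc{Project} gives
\[
\Big(\frac{\widetilde{r_2}}{\widetilde{r_1}}\Big)^{2}
=\frac{(r_2-2\delta r_1)^2-(R_1-R_2)^2}{(r_1+2\delta r_1)^2-(R_1-R_2)^2},
\]
which, since $r_2>r_1$, is increasing in $(R_1-R_2)^2$; evaluating at $R_1=R_2$ yields $\widetilde{r_2}/\widetilde{r_1}\ge(r_2-2\delta r_1)/(r_1+2\delta r_1)\ge(r_2/r_1)(1-O(\delta))$, where the inductive bound $r_2/r_1\ge c(1-o(1))$ absorbs $2\delta r_1/r_2=O(\delta)$. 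So each such step shrinks $r_2/r_1$ by at most a $(1-O(\delta))$ factor, and since by the (inductively available) second invariant there are $\le\widetilde{O}(\log\log n)$ of them on the stack and $\delta=\exp(-(\log\log\log n)^C)$ is chosen so that $\delta\cdot\widetilde{O}(\log\log n)=o(1)$, the cumulative loss is $1-o(1)$, proving $r_2/r_1\ge c(1-o(1))$. For $r_2\le c(1+o(1))$, the same formula shows $\widetilde{r_2}\ge 2R_1$ exactly when $r_2\ge R_1+R_2+2\delta r_1$, so an ``unbalanced'' annulus pair yields a child in base case~2 of \textsc{ProcessSphere} (store one point, return), which does no further work; for the ``balanced'' pairs $\widetilde{r_2}<2R_1$, and with $|R_1-R_2|\le r_1(1+o(1))$ plus a short split on $\operatorname{sign}(R_1-R_2)$ this forces $\widetilde{r_2}\le r_2(1+o(1))$.

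For the second invariant, recall $R$ begins at $\widetilde{O}(\log^2 n)$, is non-increasing along a path, and drops by a $(1-\Omega(\eps^2))$ factor at each \textsc{ProcessBall} node; moreover \textsc{ProcessBall} is invoked only from a \textsc{ProcessSphere} where the spherical trade-off~\eqref{alpha_beta_tradeoff} \emph{fails} for $r_1/R,\ r_2/R$, which --- since that trade-off is hardest as $r_1/R\to 0$ and holds once $r_1/R$ passes a constant $\gamma_0(c,\rho_q,\rho_u)$, and since $r_1$ stays $\Theta(1)$ for as long as the path keeps passing through \textsc{ProcessBall} nodes (a consequence of the \textsc{Project} analysis above and of non-triviality $\widetilde{R}>(r_2-r_1)/2=\Omega_c(r_1)$) --- confines every cluster-extracting \textsc{ProcessBall} to the window where $R$ lies between an absolute constant and $\widetilde{O}(\log^2 n)$. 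Hence the number of \textsc{ProcessBall} calls on the stack, equal to the number of such $(1-\Omega(\eps^2))$-drops, is at most $\log_{1/(1-\Omega(\eps^2))}\!\big(\widetilde{O}(\log^2 n)\big)=O(\log\log n/\eps^2)=O\!\big(\log\log n\cdot(\log\log\log n)^2\big)=\widetilde{O}(\log\log n)$. The one genuinely delicate point is exactly this $\textsc{ProcessBall}\to\textsc{ProcessSphere}$ bookkeeping: \textsc{Project} behaves well only on balanced annulus pairs, and one must verify that the unbalanced ones always produce trivial base-case-2 instances, so that over the $\widetilde{O}(\log\log n)$ discretization levels $(r_1,r_2)$ is never distorted by more than a $1\pm o(1)$ factor in ratio nor driven out of $[\Theta(1),\,c(1+o(1))]$; the rest --- the geometric decay of $R$, the inequalities relating $\eps,\delta,\tau$, and the final counting --- is routine.
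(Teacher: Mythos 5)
Your proof of the first invariant's ratio bound is fine as far as it goes (the monotonicity of $\bigl(\widetilde{r_2}/\widetilde{r_1}\bigr)^2$ in $(R_1-R_2)^2$ gives the per-step loss of at most a $1-O(\delta)$ factor, which is consistent with the paper's case analysis), but it is conditional on the second invariant, and your proof of the second invariant has a genuine gap. You count \textsc{ProcessBall} calls by arguing that $R$ decays geometrically by $(1-\Omega(\eps^2))$ while remaining confined to a window between an absolute constant and $\widetilde{O}(\log^2 n)$, and the lower end of that window rests on the claim that ``$r_1$ stays $\Theta(1)$ for as long as the path keeps passing through \textsc{ProcessBall} nodes,'' which you attribute to your \textsc{Project} analysis. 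But that analysis only controls the \emph{ratio} $\widetilde{r_2}/\widetilde{r_1}$, not the absolute scale: when the annulus offset $\Delta R=|R_1-R_2|$ is close to $r_1+2\delta r_1$ (e.g.\ the query annulus lies radially almost $r_1$ away from the data annulus), one gets $\widetilde{r_1}=O\bigl(\sqrt{\delta}\,r_1\sqrt{R_1/R_2}\bigr)$, so $r_1$ and $r_2$ can collapse by a large factor in a single \textsc{ProcessBall}$\to$\textsc{ProcessSphere} step while $R$ stays essentially the same. After such a step the base-case-2 floor $R>r_2/2$ has receded, the trade-off test on line~9 is again hard to satisfy (it is worst as $r_1/R\to 0$), and nothing in your argument prevents the path from accumulating far more than $O(\log\log n/\eps^2)$ cluster extractions. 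In short, the quantity $r_2/R$ alone is not a usable progress measure, because it can be destroyed by the projection step.

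This is exactly the difficulty the paper's proof is built to handle: it tracks $\gamma=r_2^2/r_1^2$ and $\xi=r_2^2/R^2$ jointly through the potential $\Phi=\gamma^M\xi$ with $M=\Theta(1/\sqrt{\lambda\delta})$, shows (Claims on the change of $\gamma$ and $\xi$) that whenever $\Delta R$ is large enough to hurt $\xi$, the ratio $\gamma$ increases enough that $\Phi$ still grows by a factor $1+\Omega(\eps^2)$ per \textsc{ProcessBall} round, and then caps $\Phi$ using $\xi\le 4$ (line~5) and $\gamma\le\log n$ (once $r_2/r_1\ge\sqrt{\log n}$ the line-9 test always passes and no further clusters are extracted). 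The $\widetilde{O}(\log\log n)$ bound on the stack depth, and hence the $c(1\mp o(1))$ bounds on $r_2/r_1$ and $r_2$, come out of that potential argument; without it (or some equivalent mechanism coupling the shrinkage of the scale to the growth of the ratio), your counting step does not go through, so the proposal as written is incomplete.
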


\begin{proof}
Our proof will proceed by keeping track of two quantities, $\gamma$ and $\xi$ as the algorithm proceeds down the tree. We will be able to analyze how these values change as the algorithm executes the subroutines \textsc{ProcessSphere} and \textsc{ProcessBall}. We will then combine these two measures to get a potential function which always increases by a multiplicative factor of $(1+\Omega(\eps^2))$. By giving overall bounds on $\gamma$ and $\xi$, we will deduce an upper bound on the depth of the tree.
For any particular node of the tree $v$ (where $v$ may correspond to a call to \textsc{ProcessSphere} or \textsc{ProcessBall}), we let
\[ \gamma_v = \dfrac{r_2^2}{r_1^2} \qquad \text{and} \qquad \xi_v = \dfrac{r_2^2}{R^2} \]
where the values of $r_1, r_2$, and $R$ are given by the procedure call at $v$. We will often refer to $\gamma_v$ and $\xi_v$ as $\gamma$ and $\xi$, respectively, when there is no confusion. Additionally, we will often refer to how $\gamma$ and $\xi$ changes; in particular, if $\tilde{v}$ is a child of $v$, then we let $\tilde{\gamma}$ and $\tilde{\xi}$ be the values of $\gamma_{\tilde{v}}$ and $\xi_{\tilde{v}}$. 

\begin{claim}
\label{cl:etachange}
Initially, $\gamma = c^2$, and it only changes when \textsc{ProcessBall} calls \textsc{ProcessSphere}. Letting $\Delta R = \delta r_1 |i - j|$, there are two cases:
\begin{itemize}
\item If $ 0\leq \dfrac{\Delta R^2}{r_1^2} \leq \frac{12\delta}{\lambda}$, then $\dfrac{\tilde{\gamma}}{\gamma} \geq 1 - 24 \delta$.
\item If $\dfrac{\Delta R^2}{r_2^2} \geq \frac{12\delta}{\lambda}$, then $\dfrac{\tilde{\gamma}}{\gamma} \geq 1 + \dfrac{\Delta R^2}{r_1^2} \cdot \frac{\lambda}{2} - 6 \delta$. 
\end{itemize}
for $\lambda = 1 - (\frac{2}{c+1})^2 > 0$.
\end{claim}

Note that initially, $\gamma = c^2$ since $r_1 = 1$ and $r_2 = c$. Now, $\tilde{\gamma}$ changes when \textsc{ProcessBall}$(P, r_1 ,r_2, o, R, l)$ calls \textsc{ProcessSphere}$(\widetilde{P}, \widetilde{r_1}, \widetilde{r_2}, o, \delta i r_1, l)$ in line 36 of Figure~\ref{pseudocode}. When this occurs: 
\begin{align*}
\dfrac{\tilde{\gamma}}{\gamma} &= \dfrac{\textsc{Project}(\delta i r_1, \delta j r_1, r_2 - 2\delta r_1)^2 / \textsc{Project}(\delta i r_1, \delta jr_1, r_1 + 2\delta r_1)^2}{r_2^2 / r_1^2} \\
                                &= \dfrac{(r_2 - 2\delta r_1)^2 - \Delta R^2}{r_2^2} \cdot \dfrac{r_1^2}{(r_1 + 2\delta r_1)^2 - \Delta R^2} \\
                                &= \dfrac{(1 - \frac{2\delta r_1}{r_2})^2 - \frac{\Delta R^2}{r_2^2}}{(1 + 2\delta)^2 - \frac{\Delta R^2}{r_1^2}} \geq 1 + \dfrac{\Delta R^2 (\frac{1}{r_1^2} - \frac{1}{r_2^2}) -12 \delta }{(1+2\delta)^2 - \frac{\Delta R^2}{r_1^2}} \\
                                &= 1 + \dfrac{\frac{\Delta R^2}{r_1^2} \cdot \lambda - 12\delta}{(1 + 2\delta)^2 - \frac{\Delta R^2}{r_1^2}}
\end{align*}
assuming that $r_1 (\frac{c+1}{2}) \leq r_2$ (we will actually show the much tighter bound of $r_1 \cdot c \cdot (1 - o(1)) \leq r_2$ toward the end of the proof) and setting $\lambda = 1 - (\frac{2}{c+1})^2$, where $\lambda \in (0, 1)$. Note that the denominator in the expression above is non-negative since $\frac{\Delta R^2}{r_1^2} \leq (1 + 2\delta)^2$. Now, consider two cases:
\begin{itemize}
\item Case 1: $0 \leq \frac{\Delta R^2}{r_1^2} \leq \frac{12\delta}{\lambda}$. In this case, we have:
\begin{align*}
\dfrac{\tilde{\gamma}}{\gamma} &\geq 1 + \dfrac{\frac{\Delta R^2}{r_1^2} \cdot \lambda - 12 \delta}{(1+2\delta)^2 - \frac{\Delta R^2}{r_1^2}} \\
                                                 &\geq 1 - \dfrac{12 \delta}{(1 + 2\delta)^2 - \frac{\Delta R^2}{r_1^2}} \geq 1 - \dfrac{12 \delta}{(1/2)}.
\end{align*} 
Thus, the multiplicative decrease is at most $(1 - 24 \delta)$ since $\delta = o(1)$. 
\item Case 2: $\frac{\Delta R^2}{r_1^2} \geq \frac{12 \delta}{\lambda}$. In this case: 
\begin{align*}
\dfrac{\tilde{\gamma}}{\gamma} &\geq 1 + \dfrac{\frac{\Delta R^2}{r_1^2} \cdot \lambda - 12 \delta}{(1+2\delta)^2 - \frac{\Delta R^2}{r_1^2}} \\
                                                &\geq 1 + \dfrac{\frac{\Delta R^2}{r_1^2} \cdot \lambda - 12 \delta}{2} \\
                                                &= 1 + \dfrac{\Delta R^2}{r_1^2} \cdot \frac{\lambda}{2} - 6\delta.
\end{align*}
\end{itemize}

\begin{claim}
\label{cl:xichange}
Initially, $\xi \geq \widetilde{\Omega}\left(\dfrac{c^2}{\log^4 n}\right)$. $\xi$ changes only when \textsc{ProcessBall} calls \textsc{ProcressSphere}, or vice-versa. When \textsc{ProcessBall} calls \textsc{ProcessSphere}, and some later \textsc{ProcessSphere} calls \textsc{ProcessBall}, letting $\Delta R = \delta r_1 |i - j|$: 
\[ \dfrac{\tilde{\xi}}{\xi} \geq \left(1 + \Omega(\eps^2)\right) \left( (1 - 2\delta)^2 - \frac{\Delta R^2}{r_1^2}(1 - \lambda) \right) \left( \dfrac{1}{1 + \frac{\Delta R}{R}}\right),\]
for $\lambda = 1 - (\frac{2}{c+1})^2 > 0$.
\end{claim}

The relevant procedure calls in Claim~\ref{cl:xichange} are: 
\begin{enumerate}
\item \textsc{ProcessBall}$(P, r_1, r_2, o, R, l)$ calls \textsc{ProcessSphere}$(\widetilde{P},\widetilde{r_1}, \widetilde{r_2}, o, \delta i r_1, l )$ in line 36 of Figure~\ref{pseudocode}.
\item After possibly some string of calls to \textsc{ProcessSphere}, \textsc{ProcessSphere}$(P', \widetilde{r_1}, \widetilde{r_2}, o, \delta i r_1, l')$ calls \textsc{ProcessBall}$(P' \cap B(\widetilde{o}, \widetilde{R}), \widetilde{r_1}, \widetilde{r_2}, \widetilde{o}, l')$ in line 14 of Figure~\ref{pseudocode}.
\end{enumerate}
Since both calls to \textsc{ProcessBall} identified above have no \textsc{ProcessBall} calls in their path in the tree, we have the following relationships between the parameters:
\begin{itemize}
\item $\widetilde{r_1} = \textsc{Process}(\delta i r_1, \delta j r_1, r_1 + 2\delta)$,
\item $\widetilde{r_2} = \textsc{Process}(\delta i r_1, \delta j r_1, r_2 - 2\delta)$,
\item $\widetilde{R} \leq (1 - \Omega(\eps^2)) \cdot  \delta i r_1$,
\end{itemize}

Using these setting of parameters, 
\begin{align*}
\dfrac{\tilde{\xi}}{\xi} &= \left(1 + \Omega(\eps^2)\right)^2 \cdot \dfrac{\textsc{Process}(\delta i r_1, \delta j r_1, r_2 - 2\delta r_1)^2 / \delta^2 i^2 r_1^2}{r_2^2 / R^2} \\
                                  &= \left( 1 + \Omega(\eps^2)\right)^2 \left( \left(1 - \frac{2\delta r_1}{r_2}\right)^2 - \frac{\Delta R^2}{r_2^2} \right) \cdot \dfrac{R^2}{\delta^2 i j r_1^2} \\
                                  &\geq \left( 1 + \Omega(\eps^2)\right) \left( (1 - 2\delta)^2 - \frac{\Delta R^2}{r_1^2} \cdot (1 - \lambda) \right) \left( \dfrac{1}{1 + \frac{\Delta R}{R}}\right),
\end{align*}
where in the last step, we used the fact that $r_1 (\frac{c+1}{2}) \leq r_2$, and that $\delta i r_1 \leq R$ and $\delta j r_1 \leq R + \Delta R$. Note that the lower bound is always positive since $\frac{\Delta R^2}{r_1^2} \leq 1 + 2\delta$, $\delta = o(1)$, and $\lambda \in (0, 1)$ is some constant. 

We consider the following potential function:
\[ \Phi = \gamma^M \cdot \xi, \]
and we set $M = \frac{800}{\sqrt{ 24 \cdot \lambda \cdot \delta}}$. 
\begin{claim}
In every iteration of \textsc{ProcessBall} calling \textsc{ProcessSphere} which at some point calls \textsc{ProcessBall} again, $\Phi$ increases by a multiplicative factor of $1 + \Omega(\eps^2)$. 
\end{claim}
We simply compute the multiplicative change in $\Phi$ by using Claim~\ref{cl:etachange} and Claim~\ref{cl:xichange}. We will first apply the first case of Claim~\ref{cl:etachange}, where $0 < \frac{\Delta R^2}{r_1^2} \leq \frac{24\delta}{\lambda}$. 
\begin{align*}
\dfrac{\widetilde{\Phi}}{\Phi} &\geq \left(1 - 24 \delta\right)^M \cdot \left( 1 + \Omega(\eps^2) \right) \cdot \left( (1 - 2\delta)^2 - \frac{\Delta R^2}{r_1^2} (1 - \lambda)  \right) \cdot \left( \dfrac{1}{1 + \frac{\Delta R}{R}} \right) \\
                                   &\geq \left(1 + \Omega(\eps^2) \right) \cdot \left(1 - 24 \delta M - 4\delta - \frac{\Delta R^2}{r_1^2} - \frac{\Delta R}{R}\right) \\
                                   &\geq \left(1 + \Omega(\eps^2) \right) \cdot \left(1 - 24 \delta M - 4\delta - \frac{24 \delta}{\lambda} - \frac{\sqrt{96 \cdot \delta}}{\sqrt{\lambda}} \right),
\end{align*}
where the third inequality, we used $\frac{\Delta R^2}{r_1^2} \leq \frac{24 \delta}{\lambda}$ and $r_1 \leq r_2 \leq 2R$ by line 5 of Figure~\ref{pseudocode}. Finally, we note that $\eps^2 \gg 24\delta M - 4\delta - \frac{24\delta}{\lambda} - \frac{\sqrt{96 \delta}}{\sqrt{\lambda}}$, so in this case,
\[ \dfrac{\widetilde{\Phi}}{\Phi} \geq \left(1 + \Omega(\eps^2) \right). \]

We now proceed to the second case, when $\frac{\Delta R^2}{r_2^2} > \frac{24\delta}{\lambda}$. Using case 2 of Claim~\ref{cl:etachange}, we have
\begin{align}
\dfrac{\tilde{\Phi}}{\Phi} &\geq \left(1 + \dfrac{\Delta R^2}{r_1^2} \cdot \frac{\lambda}{4} \right)^M \cdot \left(1 + \Omega(\eps^2) \right) \cdot \left( (1 - 2\delta)^2 - \frac{\Delta R^2}{r_1^2} (1 - \lambda)\right) \cdot \left( \dfrac{1}{1 + \frac{\Delta R}{R}}\right).
\end{align}
We claim the above expression is at least $1 + \Omega(\eps^2)$. This follows from three observations:
\begin{itemize}
\item $\frac{\Delta R^2}{r_1^2} \geq \frac{24 \delta}{\lambda}$ implies that $\frac{\sqrt{\lambda}}{\sqrt{24 \cdot \delta}} \geq \frac{r_1}{\Delta R}$. 
\item Since $r_1 \leq r_2 \leq 2R$ (by line 5 of Figure~\ref{pseudocode}), $\frac{2}{r_1} \geq \frac{1}{R}$, so $\frac{\Delta R^2}{r_1^2} \cdot \frac{2\sqrt{\lambda}}{\sqrt{24 \cdot \delta}} \geq \frac{2\Delta R}{r_1} \geq \frac{\Delta R}{R}$.
\item Thus, if $M = \frac{800}{\sqrt{ 24 \cdot \lambda \cdot \delta}}$,
\[ \dfrac{\Delta R^2}{r_1^2} \cdot \frac{\lambda}{4} \cdot M \geq 100 \cdot \dfrac{\Delta R^2}{r_1^2} \cdot \dfrac{2 \cdot \sqrt{\lambda}}{\sqrt{24 \cdot \delta}} \geq 100 \cdot \dfrac{\Delta R}{R}.\]
\end{itemize}
Furthermore, $\frac{\Delta R^2}{r_1^2} \cdot \frac{\lambda}{4} \cdot M \gg 4\delta + \frac{\Delta R^2}{r_1^2}$, which means that in this case,
\[ \dfrac{\widetilde{\Phi}}{\Phi} \geq \left(1 + \Omega(\eps^2) \right). \]

Having lower bounded the multiplicative increase in $\Phi$, we note that initially, 
\[ \Phi_0 = \widetilde{\Omega}\left(\dfrac{c^{2M + 2}}{\log^4 n}\right). \]
\begin{claim}
At all moments in the algorithm $\gamma \leq \log n$.
\end{claim}
Note that before reaching $\frac{r_2^2}{r_1^2} \geq \log n$, line 9 of Figure~\ref{pseudocode} will always evaluate to false, and the algorithm will continue to proceed in a data-independent fashion without further changes to $r_1$, $r_2$ or $R$. Another way to see this is that when $\frac{r_2}{r_1} \geq \sqrt{\log n}$, then the curve in Figure~\ref{trade_off_plot} corresponding to \cite{AI-CACM} will give a data structure with runtime $n^{o(1)}$ and space $n^{1 + o(1)}$. 

Additionally, line 5 of Figure~\ref{pseudocode} enforces that all moments in the algorithm, $\xi \leq 4$. Thus, at all moments, 
\[ \Phi \leq O(\log^{M} n). \]
Thus, the number of times that \textsc{ProcessBall} appears in the stack is $\widetilde{O}(\log \log n)$. We will now show the final part of the proof which we stated earlier:
\begin{claim}
For the first $\widetilde{O}\left( \log \log n \right)$ many iterations, 
\[ r_1 \cdot c \cdot \left(1 - o(1) \right) \leq r_2. \]
\end{claim} 
Note that showing this will imply $\eta \geq \left(\frac{c+1}{2}\right)^2$. From Claim~\ref{cl:etachange},
$\eta \geq c^2 \left( 1 - 24\delta \right)^{N}$, where $N = \widetilde{O}\left(\log \log n\right)$, which is in fact, always at most $c^2 (1 - o(1))$. In order to show that $r_2 \leq c \cdot (1 + o(1))$, note that $r_2$ only increases by a multiplicative factor of $(1 + 2\delta)$ in each call of \textsc{ProcessBall}. This finishes the proof of all invariants. 
\end{proof}

\begin{lemma}
During the algorithm we can always be able to choose $\Ueta$ and $\Qeta$ such that:
\begin{itemize}
\item $F(\Ueta) / G(r_1/R, \Ueta, \Qeta) \leq n^{(\rho_u+o(1)) / K}$;
\item $F(\Qeta) / G(r_1/R, \Ueta, \Qeta) \leq n^{(\rho_q +o(1))/ K}$;
\item $G(r^*/R, \Ueta, \Qeta) / G(r_1/R, \Ueta, \Qeta) \leq n^{(\rho_q - 1 + o(1)) / K}$.
\end{itemize}
\end{lemma}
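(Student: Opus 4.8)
The plan is to convert the three displayed requirements into two inequalities in a pair of free scalars, and then recognize the resulting feasibility problem as the one already solved in Section~\ref{par_set}.

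First I would parametrize the thresholds by $F(\Ueta)=n^{-\sigma/K}$ and $F(\Qeta)=n^{-\tau/K}$, so that by Lemma~\ref{f_bound} we have $\Ueta,\Qeta\to\infty$ with $\Ueta^{2}=(1+o(1))\tfrac{2\sigma\ln n}{K}$ and $\Qeta^{2}=(1+o(1))\tfrac{2\tau\ln n}{K}$. Applying Lemma~\ref{g_bound} at $s=r_1/R$ and at $s=r^{*}/R$, and writing $a_1=\alpha(r_1/R)$, $b_1=\beta(r_1/R)$, $a_{*}=\alpha(r^{*}/R)$, $b_{*}=\beta(r^{*}/R)$, each of the three conditions turns into a quadratic inequality in $\sqrt\sigma,\sqrt\tau$. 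Using the identities $\sigma+\tau-2\alpha(s)\sqrt{\sigma\tau}=(\sqrt\sigma-\alpha(s)\sqrt\tau)^{2}+\beta^{2}(s)\tau=(\alpha(s)\sqrt\sigma-\sqrt\tau)^{2}+\beta^{2}(s)\sigma$, the conditions become, up to $o(1)$,
\[
\frac{(a_1\sqrt\sigma-\sqrt\tau)^{2}}{b_1^{2}}\le\rho_u,\qquad
\frac{(\sqrt\sigma-a_1\sqrt\tau)^{2}}{b_1^{2}}\le\rho_q,\qquad
\frac{(\sqrt\sigma-a_1\sqrt\tau)^{2}}{b_1^{2}}-\frac{(\sqrt\sigma-a_{*}\sqrt\tau)^{2}}{b_{*}^{2}}\le\rho_q-1.
\]

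Next I would make the ``balancing'' choice $\sqrt\sigma=a_{*}\sqrt\tau+b_{*}$ --- the exact analogue of the choice in Section~\ref{par_set} --- which makes $(\sqrt\sigma-a_{*}\sqrt\tau)^{2}/b_{*}^{2}=1$, so the third condition collapses onto the second. What is left is to pick $\sqrt\tau\ge0$ subject to two inequalities; letting $\sqrt\tau$ sweep the admissible range (whose endpoints are the value making $\rho_u=0$ tight and the value making $\rho_q=0$ tight) and eliminating $\sqrt\tau$ --- this is verbatim the computation at the end of Section~\ref{par_set} with $r$ replaced by $r_1/R$ and $cr$ replaced by $r^{*}/R$ --- shows that such a $\sqrt\tau$ exists precisely when
\[
\big(1-\alpha(r_1/R)\,\alpha(r^{*}/R)\big)\sqrt{\rho_q}+\big(\alpha(r_1/R)-\alpha(r^{*}/R)\big)\sqrt{\rho_u}\ \ge\ \beta(r_1/R)\,\beta(r^{*}/R).
\]
Everything thus reduces to verifying this last relation for the value of $r^{*}$ that the algorithm actually uses.

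Finally, the case split. When $r^{*}=r_2$ (line~9 of Figure~\ref{pseudocode} evaluates to false), the relation above is literally the negation of the test on line~9, so it holds on this branch by construction. When $r^{*}=\sqrt2\,R$, we have $\alpha(r^{*}/R)=\alpha(\sqrt2)=0$ and $\beta(r^{*}/R)=1$, so the relation reads $\sqrt{\rho_q}+\alpha(r_1/R)\sqrt{\rho_u}\ge\beta(r_1/R)$, i.e.\ the random-instance trade-off of Corollary~\ref{random_corollary} for a sub-instance of ``effective approximation'' $\hat c:=\sqrt2\,R/r_1$. Since this trade-off only weakens as the near distance $r_1/R$ shrinks (then $\alpha(r_1/R)\to1$ and $\beta(r_1/R)\to0$), it follows from its $\hat c=c$ instance --- which is exactly the hypothesis $c^{2}\sqrt{\rho_q}+(c^{2}-1)\sqrt{\rho_u}\ge\sqrt{2c^{2}-1}$ of Theorem~\ref{worst_case_data_dependent} --- provided $\hat c\ge c(1-o(1))$, i.e.\ $r_1/R\le\tfrac{\sqrt2}{c}(1+o(1))$. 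I expect this last point to be the crux: it must be extracted from the invariants of Lemma~\ref{invariants} (in particular $r_1\le1+o(1)$ and $r_2\ge c(1-o(1))\,r_1$, with $r_2<2R$ from line~5) together with the observation --- already used in the proof of Lemma~\ref{invariants} --- that once $R$ has shrunk far enough for $\hat c<c$ to be in danger, the normalized distances are already large enough that the data-independent scheme of Section~\ref{spherical_sec} meets the trade-off, so line~9 is false and the branch $r^{*}=\sqrt2\,R$ is never entered. A secondary nuisance is that while $R$ is still as large as $\widetilde{O}(\log^{2}n)$ the quantity $\beta(r_1/R)$ is tiny, so the $o(1)$ error terms of Lemmas~\ref{f_bound} and~\ref{g_bound} have to be controlled uniformly; this is handled just as in the proof of Corollary~\ref{worst_case_corollary}, by Taylor-expanding $\alpha$ and $\beta$ near $0$.
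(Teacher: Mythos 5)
Your proposal is correct and follows essentially the same route as the paper's proof: reduce the three conditions to the feasibility analysis of Section~\ref{par_set} with $r=r_1/R$ and $cr=r^*/R$, then split on the two values of $r^*$, using the negation of the line-9 test in the first case and, in the second, the invariant $r_2\geq (c-o(1))r_1$ together with the fact that line~9 being true forces $r_2<\sqrt{2}R$, so the effective approximation is $c-o(1)$. Your contrapositive phrasing of that last step (``if $\hat c<c$ were in danger, line~9 would be false'') is exactly the paper's direct deduction, and your explicit reduction to the quadratic inequalities in $\sqrt{\sigma},\sqrt{\tau}$ merely spells out what the paper delegates to Section~\ref{par_set}.
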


\begin{proof}
We will focus on the the part of \textsc{ProcessSphere} where we find settings for $\Ueta$ and $\Qeta$. There are two important cases:
\begin{itemize}
\item $r^* = r_2$. This happens when the third ``if'' statement evaluates to false. In other words, we have that
\begin{align}
\left(1 - \alpha\left(\frac{r_1}{R}\right) \alpha\left(\frac{r_2}{R}\right)\right) \sqrt{\rho_q} + \left(\alpha\left(\frac{r_1}{R}\right) - \alpha\left(\frac{r_2}{R}\right)\right) \sqrt{\rho_u} \geq \beta\left(\frac{r_1}{R}\right) \beta\left(\frac{r_2}{R}\right). \label{eq:tradeoff_in_sphere}
\end{align}
Since in a call to \textsc{ProcessSphere}, all points are on the surface of a sphere of radius $R$, the expression corresponds to the expression from Theorem~\ref{main_thm_spherical}. Thus, as described in Section~\ref{par_set}, we can set $\Ueta$ and $\Qeta$ to satisfy the three conditions. 
\item $r^* = \sqrt{2}R$. This happens when the third ``if'' statement evaluates to true. We have by Lemma~\ref{invariants} $\frac{r_2}{r_1}\geq c - o(1)$. Since (\ref{eq:tradeoff_in_sphere}) does not hold, thus $r_2 < \sqrt{2} R$. Hence, $r_1 \leq \frac{\sqrt{2}R}{c} - o(1)$. If this is the case since $r_1 \leq r^* / c - o(1)$, we are instantiating parameters as in Subsection~\ref{par_set} where $r=\frac{r_1}{R}$ and $cr = \frac{r^*}{R}$.
\end{itemize}
\end{proof}

\begin{lemma}
The probability of success of the data structure is at least $0.9$.
\end{lemma}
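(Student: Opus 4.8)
The plan is to mimic the induction of Lemma~\ref{lem:succ-prob}, but over the decision tree of Figure~\ref{pseudocode}. Fix a query $q$ with near neighbor $p \in P_0$ at distance at most $r$. First I would dispose of the reduction done by \textsc{Process}: the Johnson--Lindenstrauss step distorts all relevant pairwise distances by $1 \pm o(1)$ except with probability $n^{-\Omega(1)}$, and the randomly shifted grid (with the cube side of the proof of Corollary~\ref{worst_case_corollary} enlarged by a suitable constant, or by running $O(1)$ independent copies and absorbing the cost into $n^{o(1)}$) fails to place $p$ and $q$ in a common cube with probability at most a tiny constant, say $1/100$. Conditioned on this good event, the lifted instance on $\partial B(0,R)$ with $R = \widetilde{O}(\log^2 n)$ still has $p$ as a near neighbor of $q$, so it suffices to prove that the root \textsc{ProcessSphere} call succeeds with probability at least, say, $0.99$, over the internal Gaussian randomness.

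\textbf{Tracking the near neighbor.} Next I would check that $p$ is never ``lost'' along the branch processing it. At a \textsc{ProcessBall} node the radial separation of $p$ and $q$ is at most their full separation, so after rounding to multiples of $\delta r_1$ their annuli $i,j$ satisfy $\delta|i-j| \le r_1 + 2\delta r_1$; hence the query reaches the child \textsc{ProcessSphere}$(\widetilde P_i,\dots)$ that holds $p$, and the updated thresholds $\widetilde{r_1},\widetilde{r_2}$ produced by \textsc{Project} are designed exactly so that the two distance bounds (``$p$ within $r_1$ of $q$'' and ``any point within $r_2$ of $q$ is a valid answer'') propagate to the sub-instance and back — this is the bookkeeping inherited from~\cite{AR-optimal}. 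At a \textsc{ProcessSphere} node, if $p$ lands in a removed dense cluster then the query recurses into \emph{every} cluster, in particular the \textsc{ProcessBall} child holding $p$, and $q$ lies within $\widetilde R + r_1$ of that cluster's \textsc{seb}-center; if $p$ lands in the pseudo-random remainder it stays there. The two trivial base cases ($r_1 + 2R \le r_2$ for \textsc{ProcessBall}, and $r_2 \ge 2R$ for \textsc{ProcessSphere}) return a point which is automatically valid since the whole current dataset fits in a ball, resp.\ sphere, of the relevant radius. By Lemma~\ref{invariants} there are only $\widetilde{O}(\log\log n)$ \textsc{ProcessBall} calls between consecutive increments of $l$, so the cumulative multiplicative error from rounding is $(1 \pm O(\delta))^{\widetilde{O}(\log\log n)} = 1 \pm o(1)$ (this is why $\delta = \exp(-(\log\log\log n)^C)$), and since $r_2 \le c(1+o(1))$ throughout (again Lemma~\ref{invariants}) a returned point is within $c(1+o(1))\,r$ of the original query, which is absorbed into the $o(1)$ slack.

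\textbf{Bottom-up induction.} With these invariants, I would prove, by induction from the leaves upward, that for every node $v$ of the decision tree, conditioned on the dataset $P_v$ reaching $v$ and assuming the preserved copy of $p$ lies in $P_v$ within the current threshold of $q$, the call \textsc{Query}$(q,v)$ returns a valid point with probability at least $0.99$ over the randomness strictly inside the subtree at $v$. At a leaf ($l=K$) the scan discovers $p$ itself, so the probability is $1$; at the trivial base cases and at every deterministic branching (a \textsc{ProcessBall} call, or a \textsc{ProcessSphere} call where $p$ sits in a removed cluster) the bound passes to the relevant child unchanged. At a \textsc{ProcessSphere} node where $p$ is in the pseudo-random remainder, the $T = 100/G(r_1/R,\Ueta,\Qeta)$ Gaussians are independent, and since $\|p-q\|\le r_1$, monotonicity of $G$ gives that each one satisfies $\langle z_i,p\rangle \ge \Ueta R$ and $\langle z_i,q\rangle \ge \Qeta R$ simultaneously with probability at least $G(r_1/R,\Ueta,\Qeta)=100/T$; writing $\theta \ge 0.99$ for the children's success probability,
\[
\Pr\bigl[\textsc{Query}(q,v)\text{ fails}\bigr] \le \bigl(1 - 100\theta/T\bigr)^{T} \le e^{-100\theta} < 0.01,
\]
exactly as in the computation in Lemma~\ref{lem:succ-prob} (the children being independent across $i$ and independent of the $z_i$'s). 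Since $1 - e^{-100\theta} \ge 0.99$, the induction closes, so the root \textsc{ProcessSphere} call succeeds with probability at least $0.99$, and combining with the preprocessing event yields overall success probability at least $0.99 \cdot (1 - 1/100 - o(1)) > 0.9$.

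\textbf{Main obstacle.} The probabilistic core is essentially a verbatim copy of Lemma~\ref{lem:succ-prob}; the delicate part is the second step — verifying that iterated rounding to $\delta r_1$-annuli and the \textsc{Project} re-centering, applied $\widetilde{O}(\log\log n)$ times, keep $p$ a near neighbor of the (projected) query with only $1+o(1)$ loss, and that ``being a valid answer'' is preserved in \emph{both} directions across each projection. This is precisely where the parameter choices $\delta = \exp(-(\log\log\log n)^C)$, the \textsc{ProcessBall}-depth bound of Lemma~\ref{invariants}, and the monotonicity/geometry properties of \textsc{Project} from~\cite{AR-optimal} are needed.
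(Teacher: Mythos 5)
Your proposal is correct and follows essentially the same route as the paper: the paper's proof simply observes that everything outside the pseudo-random-remainder steps is deterministic and then invokes the induction of Lemma~\ref{lem:succ-prob} with $T = 100/G(r_1/R,\Ueta,\Qeta)$, inducting over the number of pseudo-random-remainder nodes, which is exactly your probabilistic core. The additional bookkeeping you supply (JL/grid reduction, annulus rounding, \textsc{Project} distortion, cluster recursion) is consistent with what the paper handles in the data-structure description, Corollary~\ref{worst_case_corollary}, and Lemma~\ref{invariants}.
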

\begin{proof}
In all the cases, except for the handling of the pseudo-random remainder, the data structure is deterministic. Therefore, the proof follows in exactly the same way as Lemma~\ref{lem:succ-prob}. In this case, we also have at each step that $T = \frac{100}{G(r_1/R, \Ueta, \Qeta)}$, and the induction is over the number of times we handle the pseudo-random remainder.
\end{proof}

\begin{lemma}
The total space the data structure occupies is at most $n^{1 + \rho_u + o(1)}$ in expectation.
\end{lemma}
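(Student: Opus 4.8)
The plan is to charge all the space to point--node incidences and then, for each data point, bound the expected number of tree nodes whose associated subset contains it. Every node of the decision tree stores a nonempty subset of $P$ (a child is created only when its subset is nonempty), and the cost of a node---its Gaussian vector, its pointers, and, at a leaf, its stored list---is $n^{o(1)}$ words per point it holds, since $d=\Theta(\log n\log\log n)=n^{o(1)}$ and the number of children created at a node never exceeds the number of points there. Hence the total space is at most $n^{o(1)}\cdot\sum_v|P_v|$, the sum ranging over all nodes $v$; by linearity, $\mathbb{E}\big[\sum_v|P_v|\big]=\sum_{p\in P}\mathbb{E}[N_p]$ with $N_p=|\{v:p\in P_v\}|$, so it suffices to prove $\mathbb{E}[N_p]\le n^{\rho_u+o(1)}$ for each $p$.

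Fix $p$ and let $T_p$ be the subtree of nodes whose subset contains $p$. In any rooted tree the number of nodes is at most $(\mathrm{depth}+1)$ times the number of leaves (map each node to the lexicographically first leaf below it; the preimage of a leaf is a chain of its ancestors), and here the depth is deterministically $O(\sqrt{\ln n})$: within a fixed value of the counter $l$ the recursion strictly alternates \textsc{ProcessSphere}/\textsc{ProcessBall}, and by Lemma~\ref{invariants} any recursion stack has at most $\widetilde{O}(\log\log n)$ \textsc{ProcessBall} calls, so any root--leaf path has at most $K+\widetilde{O}(\log\log n)=O(\sqrt{\ln n})$ nodes. Thus $\mathbb{E}[N_p]\le O(\sqrt{\ln n})\cdot\mathbb{E}[\#\mathrm{leaves}(T_p)]$, and we bound $\mathbb{E}[\#\mathrm{leaves}(T_p)]$ by a top-down branching argument, revealing the Gaussians node by node. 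Along $p$'s path, $p$ descends into: at most one child at each \textsc{ProcessSphere} where $p$ lies in a dense cluster (clusters are disjoint and deleted from $P$); at most $\tfrac{2}{\delta}+O(1)$ children at each \textsc{ProcessBall} (each rounded point lies in a single annulus and is copied once per admissible query distance), and there are at most $\widetilde{O}(\log\log n)$ such nodes on the path; and, at each \textsc{ProcessSphere} where $p$ lies in the pseudo-random remainder, a number of children whose expectation conditioned on the history is exactly $T\,F(\Ueta)$, since the $T$ fresh Gaussians of that node are independent of the past and $p$ lies on the sphere in question. Chaining these expectations over the $K$ remainder-levels and the $\widetilde{O}(\log\log n)$ \textsc{ProcessBall} nodes via the tower rule gives
\[
\mathbb{E}[\#\mathrm{leaves}(T_p)]\ \le\ \Big(\tfrac{2}{\delta}+O(1)\Big)^{\widetilde{O}(\log\log n)}\cdot\prod_{j=1}^{K}\big(T^{(j)}F(\Ueta^{(j)})\big),
\]
where $T^{(j)},\Ueta^{(j)}$ are the parameters used at the $j$-th remainder-level.

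To finish, the parameter settings of Section~\ref{sec:params} give $T^{(j)}=100/G(r_1/R,\Ueta^{(j)},\Qeta^{(j)})$ and $F(\Ueta^{(j)})/G(r_1/R,\Ueta^{(j)},\Qeta^{(j)})\le n^{(\rho_u+o(1))/K}$, so $T^{(j)}F(\Ueta^{(j)})\le100\,n^{(\rho_u+o(1))/K}$ and their product over $j$ is $100^K\,n^{\rho_u+o(1)}=n^{\rho_u+o(1)}$ because $K\sim\sqrt{\ln n}$. Since $\delta=\exp\!\big(-(\log\log\log n)^C\big)$, the prefactor $\big(\tfrac{2}{\delta}+O(1)\big)^{\widetilde{O}(\log\log n)}=\exp\!\big((\log\log n)^{1+o(1)}\big)=n^{o(1)}$, and the depth factor $O(\sqrt{\ln n})$ is $n^{o(1)}$; therefore $\mathbb{E}[N_p]\le n^{\rho_u+o(1)}$, and summing over the $n$ points gives total expected space $n^{1+\rho_u+o(1)}$.

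I expect the main obstacle to be making the top-down branching step rigorous despite the tree itself---and, in particular, the dense-cluster decomposition at each node---being random, depending on the Gaussians drawn at higher nodes. One has to expose the randomness node by node and, at each pseudo-random step, invoke that the newly sampled Gaussians are independent of everything seen so far, so that the conditional expected out-degree at $p$ equals $T\,F(\Ueta)$ exactly, and then combine these identities with the tower rule while correctly absorbing the interspersed cluster/\textsc{ProcessBall} detours via Lemma~\ref{invariants}. The rest is routine accounting: checking that each accumulated lower-order blow-up---$100^K$ from the constant in $T$, $(2/\delta)^{\widetilde{O}(\log\log n)}$ from the annulus discretization, and the $O(\sqrt{\ln n})$ depth factor---is individually $n^{o(1)}$ under the chosen values of $K$ and $\delta$.
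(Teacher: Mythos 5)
Your proposal is correct and follows essentially the same route as the paper's proof: charge space per point, absorb the cluster/\textsc{ProcessBall} detours as a $(O(1/\delta))^{\widetilde{O}(\log\log n)}=n^{o(1)}$ duplication factor, and bound the expected replication at each of the $K$ pseudo-random levels by $T\cdot F(\Ueta)\leq n^{(\rho_u+o(1))/K}$ using the parameter constraints, with only a cosmetic difference (your tower-rule chaining and depth-times-leaves accounting versus the paper's induction on $l$ and its factor $K+\widetilde{O}(\log\log n)$).
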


\begin{proof}
We will prove that the total number of explicitly stored points (when $l = K$) is at most $n^{1 + \rho_u + o(1)}$. We will count the contribution from each point separately, and use linearity of expectation to sum up the contributions. In particular, for a point $p \in P_0$, we want to count the number of lists where $p$ appears in the data structure. Each root to leaf path of the tree has at most $K$ calls to \textsc{ProcessSphere} which increment $l$, and at most $\widetilde{O}\left( \log \log n \right)$ calls to \textsc{ProcessBall}, and thus $\widetilde{O}\left( \log \log n\right)$ calls to \textsc{ProcessSphere} which do not increment $l$. Thus, once we count the number of lists, we may multiply by $K + \widetilde{O}\left( \log \log n\right) = n^{o(1)}$ to count the size of the whole tree. 

For each point, we will consider the subtree of the data structure where the point was processed. In particular, we may consider the tree corresponding to calls to \textsc{ProcessSphere} and \textsc{ProcessBall} which process $p$. As discussed briefly in Section~\ref{sec:params}, we distinguish between calls to \textsc{ProcessSphere} which contain $p$ in a dense cluster, and calls to \textsc{ProcessSphere} which contain $p$ in the pseudo-random remainder. We increment $l$ only when $p$ lies in the pseudo-random remainder.

\begin{claim}
It suffices to consider the data structure where each node is a function call to \textsc{ProcessSphere} which increments $l$, i.e., when $p$ lies in the pseudo-random remainder, since the total amount of duplication of points corresponding to other nodes is $n^{o(1)}$.
\end{claim}
We will account for the duplication of points in calls to \textsc{ProcessBall} and calls to \textsc{ProcessSphere} which do not increment $l$. Consider the first node $v$ in a path from the root which does not increment $l$, this corresponds to a call to \textsc{ProcessSphere} which had $p$ in some dense cluster. Consider the subtree consisting of descendants of $v$ where the leaves correspond to the first occurrence of \textsc{ProcessSphere} which increments $l$. We claim that every internal node of the tree corresponds to alternating calls to \textsc{ProcessBall} and \textsc{ProcessSphere} which do not increment $l$. Note that calls to \textsc{ProcessSphere} which do not increment $l$ never replicate $p$. Each call to \textsc{ProcessBall} replicates $p$ in $b := \frac{2r_1(1 + 2\delta)}{\delta}$ many times. Since $r_1 \leq r_2 \leq c + o(1)$ by Lemma~\ref{invariants}, $b = O(\delta^{-1})$. We may consider contracting the tree and at edge, multiplying by the number of times we encounter \textsc{ProcessBall}. 

Note that $p$ lies in a dense cluster if and only if it does not lie in the pseudo-random remainder. Thus, our contracted tree looks like a tree of $K$ levels, each corresponding to a call to \textsc{ProcessSphere} which contained $p$ in the pseudo-random remainder. 

The number of children of some nodes may be different; however, the number of times \textsc{ProcessBall} is called in each branch of computation is $U := \tilde{O}\left(\log \log n\right)$, the total amount of duplication of points due to \textsc{ProcessBall} is at most $b^U = n^{o(1)}$.  Now, the subtree of nodes processing $p$ contains $K$ levels with each $T$ children, exactly like the data structure for Section~\ref{spherical_sec}.

\begin{claim}
A node $v$ corresponding to \textsc{ProcessSphere}$(P, r_1,r_2, o, R, l)$ has in expectation, $p$ appearing in $n^{((K - l) \rho_u+o(1))/K}$ many lists in the subtree of $v$.
\end{claim}
The proof is an induction over the value of $l$ in a particular node. For our base case, consider some node $v$ corresponding to a function call of \textsc{ProcessSphere} which is a leaf, so $l = K$, in this case, each point is only stored at most once, so the claim holds. 

Suppose for the inductive assumption the claim holds for some $l$, then for a particular node at level $l-1$, consider the point when $p$ was part of the pseudo-random remainder. In this case, $p$ is duplicated in 
\[ T \cdot F(\Ueta) = \frac{100 \cdot F(\Ueta)}{G(r_1/R, \Ueta, \Qeta)} \leq n^{(\rho_u + o(1))/ K}\]
many children, and in each child, the point appears $n^{((K - l) \rho_u +o(1))/ K }$ many times. Therefore, in a node $v$, $p$ appears in $n^{((K-l + 1) \rho_u + o(1)) / K}$ many list in its subtree. Letting $l = 0$ for the root gives the desired outcome. 
\end{proof}

\begin{lemma}
The expected query time is at most $n^{\rho_q + o(1)}$.
\end{lemma}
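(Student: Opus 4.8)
The plan is to mimic the proof of Lemma~\ref{lem:time}, transported to the decision tree of Figure~\ref{pseudocode} and using the structural invariants of Lemma~\ref{invariants}. Write the query time as the \emph{traversal cost} (time spent descending the tree, not counting the explicit leaves) plus the \emph{leaf-scan cost}, each multiplied by the per-operation overhead $n^{o(1)}$, which absorbs the ambient dimension $d=\Theta(\log n\log\log n)$, the maximum out-degree $T$ of a pseudo-random node, and the number $1/\tau$ of cluster children of a \textsc{ProcessSphere} node (all of which are $n^{o(1)}$ for the chosen $\tau,\delta,\eps$).

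\emph{Traversal.} First I would bound the expected number of tree nodes the query reaches. On a \textsc{ProcessSphere} node the query recurses unconditionally into all $\le 1/\tau$ \textsc{ProcessBall} children and into every pseudo-random child $i$ with $\langle z_i,q\rangle\ge\Qeta R$; on a \textsc{ProcessBall} node it recurses into the $O(1/\delta)$ \textsc{ProcessSphere} children compatible with $\|q-o\|$ after rounding. As in Lemma~\ref{lem:space}, I would contract the tree so that one super-edge corresponds to incrementing the counter $l$. Across a super-edge the expected number of reached nodes multiplies by at most $T\cdot F(\Qeta)=100\,F(\Qeta)/G(r_1/R,\Ueta,\Qeta)\le 100\,n^{(\rho_q+o(1))/K}$ by the parameter choice of Section~\ref{sec:params}; since there are $K\sim\sqrt{\ln n}$ super-edges this gives $(T F(\Qeta))^K\le n^{\rho_q+o(1)}$. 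Between two consecutive super-edges the path alternates \textsc{ProcessBall} / \textsc{ProcessSphere}-without-increment, and by the second invariant of Lemma~\ref{invariants} there are only $\widetilde{O}(\log\log n)$ \textsc{ProcessBall} calls in any recursion stack, each of branching $(1/\tau)\cdot O(1/\delta)=n^{o(1)}$, so the cluster overhead stays $n^{o(1)}$. Hence the query reaches $n^{\rho_q+o(1)}$ nodes (in particular $n^{\rho_q+o(1)}$ explicit leaves), and the traversal cost is $n^{\rho_q+o(1)}$.

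\emph{Leaf scan.} At a reached leaf storing $P_\ell$ the query halts at the first point within the current threshold $r_2$, so it inspects at most one more point than the number of \emph{far} points of $P_\ell$ (those at distance $>r_2$ from $q$) that also reach $\ell$; summing the ``$+1$''s over the $n^{\rho_q+o(1)}$ reached leaves is negligible, so it remains to bound $\sum_{p'\ \mathrm{far}}\E\bigl[\#\{\ell:p'\in P_\ell,\ q\text{ reaches }\ell\}\bigr]$. Fix a far $p'$ and walk down the subtree of nodes reached by both $p'$ and $q$: a \textsc{ProcessBall} node contributes at most one common child ($p'$ lies in a single annulus), and a \textsc{ProcessSphere} node contributes its unique common cluster plus, in expectation, $T\cdot G(\|p'-q\|/R,\Ueta,\Qeta)$ common caps; so, up to an $n^{o(1)}$ factor accounting for the $1/\tau$-fold and $O(1/\delta)$-fold duplication of $p'$, the quantity above is at most the product over the $K$ pseudo-random steps of $T\cdot G(\|p'-q\|/R,\Ueta,\Qeta)$. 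If at every step $\|p'-q\|\ge r^*$ (automatic when $r^*=r_2$; when $r^*=\sqrt2 R$ it means $\|p'-q\|$ exceeds a typical pairwise distance), monotonicity of $G$ and the third parameter condition give a per-step factor $100\,G(r^*/R,\Ueta,\Qeta)/G(r_1/R,\Ueta,\Qeta)\le 100\,n^{(\rho_q-1+o(1))/K}$, so $p'$ contributes $n^{\rho_q-1+o(1)}$ and the $\le n$ far points contribute $n^{\rho_q+o(1)}$. The only remaining case — possible just when $r^*=\sqrt2 R$ — is a \textsc{ProcessSphere} step with $r_2\le\|p'-q\|<(\sqrt2-\eps)R$, where one can only use $G(\|p'-q\|/R,\Ueta,\Qeta)\le G(r_1/R,\Ueta,\Qeta)$, i.e.\ a factor $T\cdot G(r_1/R,\Ueta,\Qeta)=100$; but on that node a ball of radius $(\sqrt2-\eps)R$ about the projected query contains fewer than $\tau|P_v|$ pseudo-random points (else a dense cluster would have been extracted), so only a $\tau$-fraction of the survivors can be ``medium'' at that scale, and since $\tau=\exp(-\log^{2/3}n)\ll n^{-1/K}=2^{-\sqrt{\log n}}$, grouping far points by the first scale at which they become medium shows these rare events inflate the count by only a subpolynomial factor. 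Hence the medium points also contribute $n^{\rho_q+o(1)}$, the leaf-scan cost is $n^{\rho_q+o(1)}$, and the lemma follows.

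\emph{Main obstacle.} The delicate step is the medium-point accounting: making precise that the slack $\tau\ll n^{-1/K}$ exactly compensates for the weaker collision bound $G(r_1/R,\cdot,\cdot)$ incurred at the few scales where $p'$ is closer to $q$ than a typical pair, while simultaneously tracking the $n^{o(1)}$ duplication of every point across disjoint clusters and across the $O(1/\delta)$ annuli of each \textsc{ProcessBall}. This is the query-time counterpart of the remark (after the parameter settings of Section~\ref{sec:params}) that the ``large distance'' is effectively $\sqrt2 R$, and it is exactly where the magnitudes of $\tau$, $\delta$ and $\eps$ are used; everything else is a mechanical rerun of the counting in Lemmas~\ref{lem:space} and~\ref{lem:time} on the contracted tree, with Lemma~\ref{invariants} keeping the \textsc{ProcessBall}/cluster overhead subpolynomial.
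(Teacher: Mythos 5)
Your traversal bound and your treatment of the genuinely far points coincide with the paper's argument: the paper formalizes exactly these two counts as recurrences, $A(u,l)\leq \exp(\log^{2/3+o(1)}n)A(u+1,l)+n^{(\rho_q+o(1))/K}A(u,l+1)$ for the visited nodes and an analogous recurrence for the scanned points, with the cluster/annulus overhead absorbed via Lemma~\ref{invariants} just as you describe. The gap is in your ``medium-point'' accounting, and it is a genuine one. You group far points by the first scale at which they become medium and claim that $\tau\ll n^{-1/K}$ makes their total contribution subpolynomial. But the only facts you invoke are (i) at any single \textsc{ProcessSphere} node at most a $\tau$-fraction of that node's points lie within $(\sqrt2-\eps)R$ of the query, and (ii) $\tau$ is tiny. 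These do not bound the across-level accumulation: once a point is medium it stays medium along a chain of pseudo-random steps (the sphere, $R$, and its distance to $q$ do not change until a \textsc{ProcessBall} intervenes), and per step its expected number of common caps with the query is only bounded by $T\cdot G(r_1/R,\Ueta,\Qeta)=100$; indeed at the root, where $R=\widetilde{O}(\log^2 n)\gg r_2$, a point at distance $\Theta(c)$ from $q$ really does collide with the query in $\approx 100$ caps per level. So a single medium point can contribute $100^{K}=n^{o(1)}$ expected leaf collisions, and your facts allow up to $\tau n=n^{1-\Theta((\log n)^{-1/3})}$ such points, giving a bound of order $n^{1-o(1)}$ --- far above $n^{\rho_q+o(1)}$ whenever $\rho_q<1$. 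The comparison $\tau\ll n^{-1/K}$ (more precisely $\tau\ll G(\sqrt2,\Ueta,\Qeta)$) is used in the paper only to absorb the medium $\tau$-fraction into the \emph{same-level} far-point term, not to control the accumulation over levels.

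What actually rescues the bound, and what your per-point bookkeeping cannot see, is that the $\tau$-sparsity near the query is re-imposed at \emph{every} pseudo-random node \emph{relative to the current node's size}, which shrinks by a factor $\approx F(\Ueta)=n^{-\Theta(1)/K}$ per level: medium points that keep surviving soon exceed a $\tau$-fraction of the shrinking remainder, form a dense cluster centered at a sphere point near $q$, and are extracted into the \textsc{ProcessBall} branch (where they are charged to the $(1/\tau)^{U}=n^{o(1)}$ cluster factor and the $\widetilde{O}(\log\log n)$ ball-depth bound of Lemma~\ref{invariants}). The paper encodes this self-correction by running the recurrence on the expected tested \emph{fraction} $B(u,l)$ of the current node's dataset, with the per-level estimate $T F(\Qeta)\bigl(\tau+G(\sqrt2-\eps,\Ueta,\Qeta)\bigr)\leq n^{(\rho_q-1+o(1))/K}$ and the crude $\frac1\tau B(u+1,l)$ term for clusters, so the $\tau$-fraction bound is applied afresh at each level instead of once per point. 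To repair your sketch you would have to either switch to this fraction-based recursion or explicitly argue that persistent medium points are swept into extracted clusters; as written, the ``grouping by first medium scale'' step does not yield the claimed $n^{\rho_q+o(1)}$. (A minor additional slip: your dichotomy leaves out points with $( \sqrt2-\eps)R\leq\|p'-q\|<\sqrt2 R$ when $r^*=\sqrt2R$; the paper treats all points beyond $(\sqrt2-\eps)R$ as far and uses $G(\sqrt2-\eps,\Ueta,\Qeta)\leq G(\sqrt2,\Ueta,\Qeta)\,n^{o(1)/K}$.)
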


\begin{proof}
We need to bound the expected number of nodes we traverse
as well as the number of points we enumerate for nodes with $l = K$.

We first bound the number of nodes we traverse. Let $A(u, l)$ be an upper bound on the expected number of visited nodes when we start
in a \textsc{ProcessSphere} node such that there are $u$ \textsc{ProcessBall} nodes in the stack and $l$ non-cluster nodes.
By Lemma~\ref{invariants},
$$u \leq U := \widetilde{O}\left(\log \log n\right),$$ 
and from the description of the algorithm,
we have $l \leq K$. We will prove $A(0, 0) \leq n^{\rho_q + o(1)}$, which corresponds to the expected number of nodes we touch starting from the root.

We claim
\begin{equation}
\label{recurrence_a}
A(u, l) \leq \exp(\log^{2/3 + o(1)} n) \cdot A(u + 1, l) + n^{(\rho_q +o(1)) / K } \cdot A(u, l + 1).
\end{equation}

There are at most $1 / \tau = \exp(\log^{2/3} n)$ cluster nodes, and in each node, we recurse on $\frac{2r_1(1+2\delta)}{\delta} = \exp(\log^{o(1)} n)$ possible annuli with calls to \textsc{ProcessSphere} nodes where $u$ increased by $1$ and $l$ remains the same. On the other hand, there are
$$
T \cdot F(\Qeta) = \frac{100 \cdot F(\Qeta)}{G(r_1/R, \Ueta, \Qeta)} \leq n^{(\rho_q + o(1))/ K}
$$
caps, where the query falls, in expectation. Each calls \textsc{ProcessSphere} where $u$ remains the same and $l$ increased by $1$.

Solving~(\ref{recurrence_a}):
$$
A(0, 0) \leq \binom{U + K}{K} \exp(U \cdot \log^{2/3 + o(1)} n) \cdot n^{\rho_q + o(1)} \leq n^{\rho_q + o(1)}.
$$

We now give an upper bound on the number of points the query algorithm will test at level $K$. Let $B(u, l)$ be an upper bound on the expected fraction of the dataset in the current node that the query algorithm will eventually test at level $K$ (where we count multiplicities). $u$ and $l$ have the same meaning as discussed above.

We claim
\[ B(u, l) \leq \frac{1}{\tau} \cdot B(u+1, l) + n^{(\rho_q - 1 + o(1))/K} \cdot B(u, l+1) \]
The first term comes from recursing down dense clusters. The second term is a bit more subtle. In particular, suppose $r_2 = r^*$, then the expected fraction of points is
\begin{align*}
T \cdot G(r_2/R, \Ueta, \Qeta) \cdot B(u, l+1) &= \dfrac{100 \cdot G(r_2/R, \Ueta, \Qeta) \cdot B(u, l+1)}{G(r_1/R, \Ueta, \Qeta)}\\
                                                            &\leq n^{(\rho_q - 1 +o(1))/K} \cdot B(u, l+1)
\end{align*}
by the setting of $\Ueta$ and $\Qeta$. On the other hand, there is the other case when $r^* = \sqrt{2} R$, which occurs after having removed some clusters. In that case, consider a particular cap containing the points $\widetilde{P}_i$. For points with distance to the query at most $(\sqrt{2} - \eps) R$, there are at most a $\tau n$ of them. For the far points, $\widetilde{P}_i$ a $G(\sqrt{2} - \eps, \Ueta, \Qeta)$ fraction of the points in expectation. 
\begin{align*}
T \cdot F(\Qeta) \cdot \left( \tau + G(\sqrt{2} - \eps, \Ueta, \Qeta) \right) \cdot B(u, l+1) &= \dfrac{100 \cdot F(\Qeta) \cdot \left( \tau + G(\sqrt{2} - \eps, \Ueta, \Qeta) \right) \cdot B(u, l + 1)}{G(r_1/R, \Ueta, \Qeta)} \\
\\      &\leq \dfrac{100 \cdot F(\Qeta) \cdot G(\sqrt{2}, \Ueta, \Qeta) \cdot B(u, l+1)}{G(r_1/R, \Ueta, \Qeta)} \\
        &\leq n^{(\rho_q - 1 +o(1))/K} \cdot B(u, l+1)
\end{align*}
Where we used that $\tau \ll G(\sqrt{2}, \Ueta, \Qeta)$ and $G(\sqrt{2} - \eps, \Ueta, \Qeta) \leq G(\sqrt{2}, \Ueta, \Qeta) \cdot n^{o(1)/K}$ (since $\eps = o(1)$), and that $r^* = \sqrt{2}R$. Unraveling the recursion, we note that $u \leq U$ and $l \leq K \sim \sqrt{\ln n}$. Additionally, we have that $B(u, K) \leq 1$, since we do not store duplicates in the last level. Therefore,
\[ B(0, 0) \leq \dbinom{U + K}{U} \left(\frac{1}{\tau}\right)^{U} \cdot \left(n^{(\rho_q - 1 +o(1))/K}\right)^K = n^{\rho_q - 1 + o(1)}.\]
\end{proof}


\section{Lower bounds: preliminaries}

We introduce a few techniques and concepts to be used primarily for
our lower bounds. We start by defining the approximate nearest
neighbor search problem.

\begin{definition}
The goal of the $(c,r)$-approximate nearest neighbor problem with failure
probability $\delta$ is to construct a data structure over a set of
points $P\subset \{-1,1\}^d$ supporting the following query: given any
point $q$ such that there exists some $p \in P$ with $\|q - p\|_1 \leq
r$, report some $p' \in P$ where $\|q - p'\|_1 \leq c r$ with
probability at least $1- \delta$.
\end{definition}

\subsection{Graphical Neighbor Search and robust expansion}

We introduce a few definitions from \cite{PTW10} to setup the lower bounds for 
the ANN.

\begin{definition}[\cite{PTW10}]
\label{def:gns}
In the \emph{Graphical Neighbor Search problem} (GNS), we are given a
bipartite graph $G = (U, V, E)$ where the dataset comes from $U$ and
the queries come from $V$. The dataset consists of pairs $P = \{ (p_i,
x_i) \mid p_i \in U, x_i \in \{0, 1\}, i \in [n] \}$. On query $q\in
V$, if there exists a unique $p_i$ with $(p_i, q) \in E$, then we want
to return $x_i$.
\end{definition}

We will sometimes use the GNS problem to prove lower bounds on
$(c,r)$-ANN as follows: we build a GNS graph $G$ by taking $U = V =
\{-1,1\}^d$, and connecting two points $u\in U, v\in V$ iff their Hamming distance
 most $r$ (see details in \cite{PTW10}). We will also
ensure $q$ is not closer than $cr$ to
other points apart from the near neighbor.

\cite{PTW10} showed lower bounds for ANN are intimately tied to the following 
quantity of a metric space.

\begin{definition}[Robust Expansion \cite{PTW10}]
Consider a GNS graph $G=(U,V,E)$, and fix a distribution $e$ on $E\subset
U\times V$, where $\mu$ is the marginal distribution on $U$ and $\eta$ is the
marginal distribution on $V$.  For $\delta, \gamma\in(0,1]$, the \emph{robust expansion}
$\Phi_r(\delta, \gamma)$ is:
$$
\Phi_r(\delta, \gamma)=\min_{A\subset V : \eta(A)\le \delta} \min_{B\subset U :
    \frac{e(A\times B)}{e(A\times V)}\ge \gamma} \frac{\mu(B)}{\eta(A)}.
$$
\end{definition}

\subsection{Locally-decodable codes (LDC)}

Our 2-probe lower bounds uses results on Locally-Decodable
Codes (LDCs). We present the standard definitions and results on LDCs
below, although in Section~\ref{sec:twoProbes}, we will use a weaker definition of LDCs for our 2-query lower bound.

\begin{definition}
\label{def:ldc}
A $(t, \delta, \eps)$ locally-decodable code (LDC) encodes $n$-bit
strings $x\in\{0,1\}^n$ into $m$-bit codewords $C(x)\in\{0,1\}^m$ such
that, for each $i\in[n]$, the bit $x_i$ can be recovered with
probability $\frac{1}{2} + \eps$ while making only $t$ queries into
$C(x)$, even if the codeword is arbitrarily modified (corrupted) in
$\delta m$ bits.
\end{definition}

We will use the following lower bound on the size of the LDCs.

\begin{theorem}[Theorem 4 from \cite{KW2004}]
If $C:\{0, 1\}^n \to \{0, 1\}^m$ is a $(2, \delta, \eps)$-LDC, then
\begin{align*}
m &\geq 2^{\Omega(\delta \eps^2 n)}.
\end{align*}
\end{theorem}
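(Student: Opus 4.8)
The plan is to follow the quantum argument of Kerenidis--de Wolf underlying this bound, which has three ingredients: a combinatorial reduction turning a $2$-query LDC into a ``matching'' structure, a quantum encoding of the message into $O(\log m)$ qubits, and Nayak's lower bound for quantum random access codes. (The excerpt also mentions the later \emph{classical} proof of Ben-Aroya--Regev--de Wolf; since the statement is attributed to \cite{KW2004} I would present the original quantum route, which also has the better noise dependence needed in Section~\ref{sec:twoProbes}.)

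\textbf{Step 1: combinatorial reduction.} First I would normalize the decoder. By Yao's principle one fixes the decoder's coins against the uniform input distribution; a standard trick makes a $2$-query decoder non-adaptive at the cost of a constant factor in the advantage (branch on the value read by the first query); and, crucially, using the $\delta$-robustness one argues the decoder may be assumed to read two positions $j,k$ and output $C(x)_j \oplus C(x)_k \oplus b_{jk}$ for a fixed bit $b_{jk}$ --- the ``dictator'' and nonlinear (\textsc{and}/\textsc{or}-type) decoders are discarded or absorbed. Then, for each coordinate $i\in[n]$, I would extract a matching $M_i$ of $\Omega(\delta m)$ pairwise disjoint pairs, each with its sign bit, such that over uniform $x$ and a uniform pair of $M_i$ the signed XOR predicts $x_i$ with advantage $\Omega(\eps)$. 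This is exactly where $\delta$-robustness enters: were the queried pairs for some $i$ not to contain a large matching, a K\"{o}nig-type bound yields a small vertex cover, and corrupting those few coordinates lets an adversary flip the decoder's answer --- contradicting robustness. (Equivalently one invokes the Katz--Trevisan smoothing lemma: each coordinate is queried with probability $O(1/(\delta m))$, after which a matching is peeled off greedily.)

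\textbf{Step 2: quantum encoding and recovery.} Encode $x$ by the fingerprint $|\phi_x\rangle = \frac{1}{\sqrt{m}}\sum_{j=1}^m (-1)^{C(x)_j}|j\rangle$, a pure state on $\lceil \log_2 m\rceil$ qubits. Given $|\phi_x\rangle$ and a target $i$, perform a two-stage measurement: first the projective measurement with the mutually orthogonal $2$-dimensional projectors $P_{\{j,k\}} = |j\rangle\langle j| + |k\rangle\langle k|$ for $\{j,k\}\in M_i$ and the projector onto the orthogonal complement; conditioned on the outcome $\{j,k\}$ --- which has total probability $2|M_i|/m$ --- the state collapses to $\frac{1}{\sqrt{2}}\big((-1)^{C(x)_j}|j\rangle + (-1)^{C(x)_k}|k\rangle\big)$, and measuring it in the $\{|{+}\rangle,|{-}\rangle\}$ basis reveals $C(x)_j\oplus C(x)_k$ with certainty; one XORs with $b_{jk}$ to guess $x_i$, and outputs a fair coin if the first stage lands in the complement. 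A short calculation then gives recovery probability $\tfrac12 + \eps'$ with $\eps' = \mathrm{poly}(\delta,\eps)$.

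\textbf{Step 3: invoke Nayak's bound.} We have obtained a quantum random access code storing $n$ bits in $\lceil \log_2 m\rceil$ qubits with per-bit recovery probability $\tfrac12+\eps'$, so Nayak's bound gives $\lceil\log_2 m\rceil \geq (1-H(\tfrac12+\eps'))\,n$, with $H$ the binary entropy, which is $\Omega(\eps'^2 n)$; unwinding the relation between $\eps'$ and $(\delta,\eps)$ with the careful accounting of \cite{KW2004} yields $m \geq 2^{\Omega(\delta\eps^2 n)}$. The quantum steps are short; the main obstacle is Step~1 --- extracting a large matching while preserving the $\Omega(\eps)$ advantage, handling adaptivity and non-XOR decoders, and tracking the precise dependence on $\delta$ and $\eps$ --- and it is precisely because this reduction is delicate that, as noted in the excerpt, one cannot use \cite{KW2004} verbatim but must re-run its proof ``on average''.
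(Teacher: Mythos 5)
The theorem you are proving is quoted by the paper as an external result (Theorem 4 of \cite{KW2004}); the paper never reproves it, but Section~\ref{sec:twoProbes} reruns the Kerenidis--de Wolf argument in an average-case form, so the relevant comparison is with that argument. Your Step 1 is the place where your route genuinely diverges, and it is also where it breaks: for a general (possibly nonlinear) code there is no known reduction of an arbitrary two-query decoder to a signed-XOR decoder along a large matching. The decoder's advantage may live entirely in the ``dictator'' part of $f_q$ (correlation of $x_i$ with a \emph{single} codeword bit), and it is not clear how to ``discard or absorb'' that part while retaining both an $\Omega(\eps)$ advantage and an $\Omega(\delta m)$ matching; Katz--Trevisan smoothing gives smoothness of the query distribution, not XOR structure, and the K\"onig/vertex-cover step only bounds matching size, not the shape of the predicate. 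This is precisely why \cite{KW2004} (and \cite{BARW08}) avoid any such reduction: their Lemma~1 simulates an \emph{arbitrary} Boolean function of the two probed bits by a single quantum query with advantage $4\eps/7$ --- this is what the paper reuses as Lemma~\ref{lem:q-simul} --- and the noise parameter $\delta$ enters later, by truncating the large-amplitude components of the query state, not through a matching extraction.

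There is also a quantitative gap even if Step 1 were granted. You encode $x$ into a \emph{single} state on $\lceil \log_2 m\rceil$ qubits, and your two-stage measurement lands in the matching $M_i$ with probability only $2|M_i|/m = O(\delta)$, so the unconditional per-bit recovery advantage is $\eps' = O(\delta\eps)$; Nayak's bound \cite{N1999} then yields $\log_2 m \geq \Omega(\eps'^2 n) = \Omega(\delta^2\eps^2 n)$, i.e.\ $m \geq 2^{\Omega(\delta^2\eps^2 n)}$, which is weaker than the claimed $2^{\Omega(\delta\eps^2 n)}$. The linear dependence on $\delta$ --- which is exactly the feature the paper needs from the quantum proof, as opposed to \cite{BARW08} --- comes from taking $r = \Theta(1/(\delta a^2))$ copies of the state, so that the ``good'' measurement outcome occurs with constant probability and the advantage stays $\Omega(\eps)$, and then applying Nayak to the $r(\log_2 m + 1)$-qubit encoding; compare Lemma~\ref{lem:quantum-alg} and the proof of Lemma~\ref{lem:2-query-1-bit}. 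Without the multiple-copies step (or some substitute), the argument as written does not reach the stated bound.
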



\section{Lower bounds: one-probe data structures} 

\subsection{Robust expansion of the Hamming space}

The goal of this section is to compute tight bounds for the robust
expansion $\Phi_r(\delta, \gamma)$ in the Hamming space of dimension
$d$, as defined in the preliminaries. We use these bounds for all of
our lower bounds in the subsequent sections.

We use the following model for generating dataset points and queries
corresponding to the random instance of Section~\ref{sec:rand_inst_sec}.

\begin{definition}
For any $x \in \{-1, 1\}^n$, $N_\sigma(x)$ is a probability
distribution over $\{-1, 1\}^n$ representing the neighborhood of
$x$. We sample $y \sim N_{\sigma}(x)$ by choosing $y_i \in \{-1, 1\}$ for each
coordinate $i \in [d]$ independently; with probability $\sigma$, we set $y_i = x_i$, and with probability $1- \sigma$, $y_i$ is set uniformly at random.

Given any Boolean function $f:\{-1, 1\}^n \to \R$, the function $T_\sigma f:\{-1, 1\}^n \to \R$ is
\begin{align}
T_\sigma f(x) &= \mathop{\E}_{y \sim N_\sigma(x)} [f(y)]
\end{align}
\end{definition}

In the remainder of this section, will work solely on the Hamming space $V = \{-1, 1\}^d$. We let 
\[ \sigma = 1 - \frac{1}{c} \qquad  \qquad d = \omega(\log n) \] 
and $\mu$ will refer to the uniform distribution over $V$. 

A query is generated as follows: we sample a dataset point $x$ uniformly at random and then generate the query $y$ by sampling $y \sim N_{\sigma}(x)$. From the choice of $\sigma$ and $d$, $d(x, y) \leq \frac{d}{2c}(1 + o(1))$ with high probability. In addition, for every other point in the dataset $x' \neq x$, the pair $(x', y)$ is distributed as two uniformly random points (even though $y \sim N_{\sigma}(x)$, because $x$ is randomly distributed). Therefore, by taking a union-bound over all dataset points, we can conclude that with high probability, $d(x', y) \geq \frac{d}{2}(1 - o(1))$ for each $x' \neq x$. 

Given a query $y$ generated as described above, we know there exists a dataset point $x$ whose distance to the query is $d(x, y) \leq \frac{d}{2c}(1 + o(1))$. Every other dataset point lies at a distance $d(x', y) \geq \frac{d}{2}(1 - o(1))$. Therefore, the two distances are a factor of $c - o(1)$ away. 

The following lemma is the main result of this section, and we will reference this lemma in subsequent sections.

\begin{lemma}[Robust expansion]
\label{lem:robustexpansion}
Consider the Hamming space equipped with the Hamming norm. For any $p, q \in [1, \infty)$ where $(q-1)(p-1) = \sigma^2$, any $\gamma \in [0, 1]$, and $m \geq 1$,
\begin{align*}
\Phi_r\left(\frac{1}{m}, \gamma\right) &\geq \gamma^q m^{1 + \frac{q}{p} - q}.
\end{align*}
\end{lemma}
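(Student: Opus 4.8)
The plan is to bound the robust expansion $\Phi_r(1/m,\gamma)$ from below by exhibiting, for the minimizing sets $A\subseteq V$ and $B\subseteq U$, an inequality that relates $\mu(B)$, $\eta(A)$, and the transfer fraction $\gamma = e(A\times B)/e(A\times V)$. Since $U=V=\{-1,1\}^d$ with $\mu=\eta=$ uniform and the edge distribution $e$ is exactly the coupling $(x,y)$ with $x\sim\mu$, $y\sim N_\sigma(x)$, the quantity $e(A\times B)$ is nothing but $\langle \mathbf{1}_B, T_\sigma \mathbf{1}_A\rangle_{L^2(\mu)}$ (up to the obvious normalization), and $e(A\times V) = \mu(A) = \eta(A)$. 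So the constraint $\frac{e(A\times B)}{e(A\times V)}\ge \gamma$ becomes $\langle \mathbf{1}_B, T_\sigma\mathbf{1}_A\rangle \ge \gamma\,\mu(A)$, and we want to lower-bound $\mu(B)$.

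First I would apply Hölder's inequality to split $\langle \mathbf{1}_B, T_\sigma\mathbf{1}_A\rangle$: writing $1/p' + 1/p = 1$ for the conjugate exponent, we get
\[
\gamma\,\mu(A) \le \langle \mathbf{1}_B, T_\sigma\mathbf{1}_A\rangle \le \|\mathbf{1}_B\|_{p'}\,\|T_\sigma\mathbf{1}_A\|_{p} = \mu(B)^{1/p'}\,\|T_\sigma\mathbf{1}_A\|_p.
\]
Then I would invoke hypercontractivity of the noise operator on the discrete cube: since $(p-1)(q-1)=\sigma^2$ with $q\le p$, the operator $T_\sigma$ maps $L^q$ to $L^p$ with norm $1$, so $\|T_\sigma\mathbf{1}_A\|_p \le \|\mathbf{1}_A\|_q = \mu(A)^{1/q}$. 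Combining, $\gamma\,\mu(A) \le \mu(B)^{1/p'}\,\mu(A)^{1/q}$, i.e. $\mu(B)^{1/p'} \ge \gamma\,\mu(A)^{1 - 1/q}$. Now use $\eta(A)=\mu(A)\le 1/m$, and since $1-1/q \ge 0$ the right side is smallest when $\mu(A)$ is as small as possible — but we actually want the bound in terms of $\delta = 1/m$, so I need to be careful about monotonicity; the worst case for the adversary is $\mu(A)=1/m$, giving $\mu(B)^{1/p'} \ge \gamma\,m^{1/q - 1}$. Raising to the power $p' = p/(p-1)$ and then rewriting the exponent: $\mu(B) \ge \gamma^{p'} m^{(1/q-1)p'}$. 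Finally I would massage the exponents using $1/p' = 1 - 1/p$ and the relation $(p-1)(q-1)=\sigma^2$ to bring this into the stated form $\gamma^q m^{1 + q/p - q}$; since $\Phi_r = \mu(B)/\eta(A) = \mu(B)\cdot m$, there is one extra factor of $m$ to track, and the bookkeeping should produce exactly the claimed exponent $1 + q/p - q$.

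The main obstacle I anticipate is purely the exponent algebra — making sure that the Hölder exponent I use is $p$ (not $q$), that hypercontractivity is applied with the correct direction $T_\sigma: L^q\to L^p$ under the hypothesis $(p-1)(q-1)=\sigma^2$, and that after converting $\mu(B)$ to $\Phi_r = m\,\mu(B)$ the exponent of $m$ collapses to $1+q/p-q$ and the exponent of $\gamma$ to $q$. A secondary subtlety is the direction of monotonicity in $\mu(A)$: the definition of $\Phi_r$ minimizes over $A$ with $\eta(A)\le 1/m$, so I should verify that the bound I derive is genuinely decreasing (or at worst achieves its minimum at $\eta(A)=1/m$) in $\mu(A)$, so that substituting $\mu(A)=1/m$ is legitimate for a lower bound on the min. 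Both issues are routine once set up; no deep input beyond hypercontractivity and Hölder is needed, exactly as the paper hints when it mentions the Generalized Small-Set Expansion theorem as an alternative route.
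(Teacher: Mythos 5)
There is a genuine gap, and it is in the hypercontractive step, not just bookkeeping. You pair H\"older as $\langle \mathbf{1}_B, T_\sigma\mathbf{1}_A\rangle \le \|\mathbf{1}_B\|_{p'}\|T_\sigma\mathbf{1}_A\|_p$ and then claim $\|T_\sigma\mathbf{1}_A\|_p \le \|\mathbf{1}_A\|_q$ because $(p-1)(q-1)=\sigma^2$. But Bonami--Beckner gives $\|T_\sigma f\|_p\le\|f\|_q$ (nontrivially) only when $q\le p$ and $\sigma^2\le \frac{q-1}{p-1}$; under the lemma's constraint $(p-1)(q-1)=\sigma^2$ this forces $(p-1)^2\le 1$, i.e.\ $q\le p\le 2$. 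The lemma, however, is applied precisely in the opposite regime (e.g.\ $p=1+\frac{\log\log n}{\log n}$, $q=1+\sigma^2\frac{\log n}{\log\log n}$ in the one-probe bound), where your inequality is either false or holds only for the trivial reason $\|T_\sigma f\|_p\le\|f\|_p\le\|f\|_q$ --- and in that case the exponent of $\mu(A)$ in your bound, $p'(1-\tfrac1q)-1$, is positive, so substituting $\mu(A)=1/m$ goes the wrong way (your own ``secondary subtlety'' in fact kills the argument). Moreover, even where your steps are valid, the bound you get is $\Phi_r\ge \gamma^{p/(p-1)}m^{\,1-\frac{p}{p-1}(1-1/q)}$: the exponent of $\gamma$ is the conjugate of the norm you put on $\mathbf{1}_B$, so it comes out as $p/(p-1)$, not $q$, and under $(p-1)(q-1)=\sigma^2$ this does not massage into the claimed $\gamma^q m^{1+q/p-q}$ except in degenerate cases.

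The fix is to change the pairing. To get $\gamma^q$ you must measure $\mathbf{1}_B$ in $L^q$, hence H\"older should pair $q$ with its conjugate $q^*=\frac{q}{q-1}$ on $T_\sigma\mathbf{1}_A$, and then you need $\|T_\sigma\mathbf{1}_A\|_{q^*}\le\|\mathbf{1}_A\|_p$ --- which is exactly the critical Bonami--Beckner inequality, since $(p-1)/(q^*-1)=(p-1)(q-1)=\sigma^2$ and $p\le q^*$ follows from $\sigma^2\le1$. This yields the two-function inequality $\langle T_\sigma\chi_A,\chi_B\rangle\le\|\chi_A\|_p\|\chi_B\|_q$, which is what the paper isolates as a separate lemma (proved there symmetrically, by writing $\langle T_\sigma f,g\rangle=\langle T_{\sqrt\sigma}f,T_{\sqrt\sigma}g\rangle$, applying H\"older with $s=\frac{p-1}{\sigma}+1$, $t=\frac{s}{s-1}$, and Bonami--Beckner to each factor). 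From $\gamma\le \mu(A)^{\frac1p-1}\mu(B)^{\frac1q}$ the rest of your outline (using $\mu(A)\le 1/m$ and checking that the exponent $q-\frac qp-1\le 0$, which holds since $\sigma^2\le1$) goes through as in the paper.
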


The robust expansion comes from a straight forward application from small-set expansion. In fact, one can easily prove tight bounds on robust expansion via the following lemma:
\begin{theorem}[Generalized Small-Set Expansion Theorem, \cite{AOBF}]
Let $0 \leq \sigma \leq 1$. Let $A, B \subset \{-1, 1\}^n$ have volumes $\exp(-\frac{a^2}{2})$ and $\exp(-\frac{b^2}{2})$ and assume $0 \leq \sigma a \leq b \leq a$. Then
\[ \Pr_{^{\ \ \ (x, y)}_{ \sigma-\text{correlated}}}[x \in A, y \in B] \leq \exp\left( - \frac{1}{2} \frac{a^2 - 2\sigma a b + b^2}{1 - \sigma^2}\right).\]
\end{theorem}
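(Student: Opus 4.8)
The plan is to deduce the Generalized Small-Set Expansion Theorem from the two-function hypercontractivity inequality for the noise operator $T_\rho$ on $\{-1,1\}^n$. Write $\mathrm{vol}(A) = \exp(-a^2/2)$ and $\mathrm{vol}(B) = \exp(-b^2/2)$, and let $\mathbf{1}_A, \mathbf{1}_B$ denote the indicator functions. The quantity we must bound is
\[
\Pr_{(x,y)\ \sigma\text{-correlated}}[x\in A,\ y\in B] \;=\; \langle \mathbf{1}_A,\ T_\sigma \mathbf{1}_B\rangle,
\]
where the inner product and $T_\sigma$ are taken with respect to the uniform measure. First I would recall the general two-function hypercontractive inequality: for $1 \le r \le 2 \le s$ with $(r-1)(s-1) \ge \rho^2$ (equivalently $\rho \le \sqrt{(r-1)(s-1)}$), one has $\langle f, T_\rho g\rangle \le \|f\|_r \|g\|_s$ for all $f,g$. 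Applying this with $f = \mathbf{1}_A$, $g = \mathbf{1}_B$, $\rho = \sigma$ gives, for any admissible pair $(r,s)$,
\[
\langle \mathbf{1}_A, T_\sigma \mathbf{1}_B\rangle \;\le\; \|\mathbf{1}_A\|_r \|\mathbf{1}_B\|_s \;=\; \mathrm{vol}(A)^{1/r}\,\mathrm{vol}(B)^{1/s} \;=\; \exp\!\Big(-\tfrac{a^2}{2r} - \tfrac{b^2}{2s}\Big).
\]

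The core of the argument is then a pure optimization: choose $r, s$ (subject to $1\le r\le 2\le s$ and $(r-1)(s-1)\ge\sigma^2$) to minimize $\frac{a^2}{2r} + \frac{b^2}{2s}$, and verify that under the hypothesis $\sigma a \le b \le a$ the optimal choice saturates $(r-1)(s-1)=\sigma^2$ and yields exactly
\[
\frac{a^2}{2r} + \frac{b^2}{2s} \;=\; \frac{1}{2}\cdot\frac{a^2 - 2\sigma a b + b^2}{1-\sigma^2}.
\]
Concretely I would parametrize $r - 1 = t$ and $s - 1 = \sigma^2/t$ for $t \in (0,1]$, so the exponent becomes $g(t) = \frac{a^2}{2(1+t)} + \frac{b^2}{2(1+\sigma^2/t)} = \frac{a^2}{2(1+t)} + \frac{b^2 t}{2(t+\sigma^2)}$, differentiate, and solve $g'(t)=0$. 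One finds the critical point $t^* = \frac{\sigma(a-\sigma b)}{b - \sigma a}$ (this is where the condition $\sigma a \le b \le a$ enters: it guarantees $t^* \in (0,1]$ so the critical point is feasible and corresponds to $r^* \in [1,2]$, $s^* \ge 2$), and substituting $t^*$ back into $g$ gives the claimed value $\frac{a^2 - 2\sigma ab + b^2}{2(1-\sigma^2)}$ after simplification.

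The main obstacle is the algebra of the optimization — confirming that $t^*$ lands in the feasible range precisely under $\sigma a \le b \le a$, and pushing through the (somewhat tedious) substitution to recognize $g(t^*)$ as the stated quadratic form; this is the step most prone to sign errors, and care is needed at the boundary cases $b = \sigma a$ (where $t^* = 0$, degenerating to $r=1$, $s=\infty$, i.e.\ $\|\mathbf 1_A\|_1\|\mathbf 1_B\|_\infty = \mathrm{vol}(A)$) and $b = a$ (where $t^* = \sigma$, giving $r = s = 1+\sigma$, the ``balanced'' hypercontractive case). A secondary, more minor point is to make sure the two-function hypercontractivity statement I invoke is exactly the one available for the discrete cube; if only the single-function form $\|T_{\sqrt{\rho}}f\|_2 \le \|f\|_{1+\rho}$ is taken as given, I would first derive the two-function version from it by a standard application of H\"older's inequality, writing $\langle f, T_\sigma g\rangle = \langle T_{\sqrt\sigma} f, T_{\sqrt\sigma} g\rangle$ and bounding each factor.
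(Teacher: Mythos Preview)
The paper does not actually prove this theorem: it is quoted verbatim from O'Donnell's book \cite{AOBF} as a known result, and the paper immediately pivots to a direct Bonami--Beckner + H\"older computation (Lemma~\ref{lem:robustisoperimetry}) to bound robust expansion, bypassing the Small-Set Expansion statement entirely. So there is no ``paper's own proof'' to compare against.

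That said, your approach is the standard one (and is essentially how the result is proved in \cite{AOBF}): apply two-function hypercontractivity $\langle \mathbf{1}_A, T_\sigma \mathbf{1}_B\rangle \le \|\mathbf{1}_A\|_r\|\mathbf{1}_B\|_s$ for $(r-1)(s-1)=\sigma^2$, then optimize the exponent over the free parameter. Amusingly, the paper's Lemma~\ref{lem:robustisoperimetry} proves exactly this two-function hypercontractivity inequality by precisely the route you sketch at the end (split $T_\sigma = T_{\sqrt\sigma}T_{\sqrt\sigma}$, apply H\"older, then single-function hypercontractivity to each factor) --- so your ``secondary, minor point'' is in fact the one step the paper does carry out.

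Two small corrections to your writeup. First, your formula for the critical point is transposed: solving $g'(t)=0$ gives $t^* = \dfrac{\sigma(b-\sigma a)}{a-\sigma b}$, not $\dfrac{\sigma(a-\sigma b)}{b-\sigma a}$; your own boundary checks ($t^*=0$ at $b=\sigma a$, $t^*=\sigma$ at $b=a$) confirm the corrected formula, not the one you wrote. Second, the constraint ``$1\le r\le 2\le s$'' is neither needed nor always satisfied at the optimum (e.g.\ at $b=a$ one gets $r=s=1+\sigma$, which can be below $2$); the only constraint required for the two-function inequality is $r,s\ge 1$ with $(r-1)(s-1)\ge\sigma^2$, and $t^*\in(0,\sigma]$ under the hypothesis $\sigma a\le b\le a$ suffices for feasibility.
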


We compute the robust expansion via an application of the Bonami-Beckner Inequality and H\"{o}lder's inequality. This computation gives us more flexibility with respect to parameters which will become useful in subsequent sections. We now recall the necessary tools. 

\begin{theorem}[Bonami-Beckner Inequality \cite{AOBF}]
Fix $1 \leq p \leq q$ and $0 \leq \sigma \leq \sqrt{(p-1)/(q-1)}$. Any Boolean function $f:\{-1, 1\}^n \rightarrow \R$ satisfies
\begin{align*}
\|T_\sigma f\|_q &\leq \|f\|_p.
\end{align*}
\end{theorem}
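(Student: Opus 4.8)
The plan is the classical two-step proof: tensorize down to a single coordinate, then settle the one-variable (``two-point'') inequality by elementary calculus. First I would record two reductions. Since the noise operator satisfies $T_\sigma T_\tau = T_{\sigma\tau}$ and $T_\tau$ is an $L^q \to L^q$ contraction for every $\tau \in [0,1]$ (immediate from Jensen applied to the averaging formula $T_\tau f(x) = \mathbb{E}_{y\sim N_\tau(x)}[f(y)]$, together with the fact that the $y$-marginal is uniform), it suffices to prove the inequality at the threshold $\sigma_0 = \sqrt{(p-1)/(q-1)}$: for smaller $\sigma$ one writes $T_\sigma = T_{\sigma/\sigma_0}T_{\sigma_0}$ and applies the contraction. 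Also the case $p = 1$ is immediate ($T_0 f = \mathbb{E} f$ and $|\mathbb{E} f| \le \|f\|_1$), so I may assume $1 < p \le q$ and $\sigma = \sigma_0$.

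Next I would carry out the tensorization by induction on $n$. For the inductive step write $f(x', x_n) = g(x') + x_n h(x')$ with $x' \in \{-1,1\}^{n-1}$; by linearity $T_\sigma f(x', x_n) = (T'_\sigma g)(x') + \sigma x_n (T'_\sigma h)(x')$, where $T'_\sigma$ is the $(n-1)$-variable operator. Integrating over $x_n$ first and applying the one-variable inequality pointwise in $x'$ gives
\[
\|T_\sigma f\|_q^q \;\le\; \mathbb{E}_{x'}\!\left[\Big(\tfrac12\big|T'_\sigma g + T'_\sigma h\big|^p + \tfrac12\big|T'_\sigma g - T'_\sigma h\big|^p\Big)^{q/p}\right].
\]
Since $q/p \ge 1$, Minkowski's inequality in $L^{q/p}(x')$ bounds the right-hand side by $\big(\tfrac12\|T'_\sigma(g+h)\|_q^p + \tfrac12\|T'_\sigma(g-h)\|_q^p\big)^{q/p}$, and the inductive hypothesis (applied to $g \pm h$ on $n-1$ bits) bounds $\|T'_\sigma(g\pm h)\|_q$ by $\|g\pm h\|_p$. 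One then checks $\tfrac12\|g+h\|_p^p + \tfrac12\|g-h\|_p^p = \mathbb{E}_{x'}\mathbb{E}_{x_n}|g + x_n h|^p = \|f\|_p^p$, which closes the induction. Everything thus reduces to the \emph{two-point inequality}: for $a, b \in \mathbb{R}$ and $x$ uniform on $\{-1,1\}$, $\|a + \sigma_0 b\, x\|_q \le \|a + b x\|_p$.

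For the two-point inequality I would first normalize: $a = 0$ is immediate from $\sigma_0 \le 1$ and $p \le q$, and otherwise homogeneity lets me take $a = 1$ while replacing $x$ by $-x$ lets me take $b \ge 0$; a short separate argument disposes of $b > 1$ (where $1 - b < 0$), so the crux is the scalar inequality
\[
\left(\tfrac12\big((1+\sigma_0 b)^q + (1-\sigma_0 b)^q\big)\right)^{1/q} \;\le\; \left(\tfrac12\big((1+b)^p + (1-b)^p\big)\right)^{1/p}, \qquad 0 \le b \le 1.
\]
I would prove this by setting $K_s(t) = \tfrac12((1+t)^s + (1-t)^s)$ and $F(b) = \tfrac1q \log K_q(\sigma_0 b) - \tfrac1p \log K_p(b)$, and showing $F(0) = 0$, $F'(0) = 0$, and $F'(b) \le 0$ throughout $[0,1]$. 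Using $K_s'(t) = \tfrac{s}{2}((1+t)^{s-1} - (1-t)^{s-1})$, a Taylor expansion near $0$ gives $F'(b) \approx \big((q-1)\sigma_0^2 - (p-1)\big)b$, and it is exactly the hypothesis $(q-1)\sigma_0^2 \le p - 1$ (i.e.\ $\sigma_0 = \sqrt{(p-1)/(q-1)}$) that makes this nonpositive — this is where the constant is forced. Extending $F' \le 0$ to all of $[0,1]$ is the one genuinely computational step; equivalently one compares the two sides through their even Taylor series $\sum_k \binom{s}{2k} t^{2k}$ and verifies the term-by-term domination after raising to the appropriate power.

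The main obstacle is precisely this last scalar inequality together with pinning the optimal constant: the tensorization (Jensen contraction, Minkowski in $L^{q/p}$, the $g \pm h$ split) is routine bookkeeping, but promoting ``the two sides agree to second order at $t = 0$'' into a global inequality valid for \emph{non-integer} exponents $p, q$ requires a careful monotonicity/convexity argument, and it is there that the constraint $\sigma \le \sqrt{(p-1)/(q-1)}$ does all the work. A secondary nuisance is the normalization handling $b > 1$ and the absolute values, which needs a small case analysis but no new ideas.
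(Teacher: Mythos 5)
The paper does not actually prove this statement: it is quoted verbatim from the reference \cite{AOBF} (hypercontractivity for uniform $\pm 1$ bits) and used as a black box, so there is no in-paper proof to compare against. Judged on its own merits, your outline is the standard textbook route and its structural steps are sound: the semigroup reduction to the critical $\sigma_0=\sqrt{(p-1)/(q-1)}$ (using that $T_\tau$ is an $L^q$-contraction), the induction on $n$ via the decomposition $f=g+x_nh$, Minkowski's inequality in $L^{q/p}$, and the identity $\tfrac12\|g+h\|_p^p+\tfrac12\|g-h\|_p^p=\|f\|_p^p$ are all correct, as is the normalization of the two-point inequality (including the observation that $b>1$ reduces to $b<1$ by a termwise comparison with $\|b+\sigma_0 x\|_q$ and the identity $\|1+bx\|_p=b\|1+b^{-1}x\|_p$).

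The genuine gap is the scalar two-point inequality itself for general non-integer $1<p\le q$, which is the entire content of the theorem, and the two hints you give for it do not close it as stated. First, at the critical $\sigma_0$ the coefficient $(q-1)\sigma_0^2-(p-1)$ is identically zero, so the expansion $F'(b)\approx\bigl((q-1)\sigma_0^2-(p-1)\bigr)b$ shows only that the two sides agree to second order; it does not establish $F'\le 0$ even locally, let alone on all of $[0,1]$. Second, ``term-by-term domination'' of the series $\sum_k\binom{s}{2k}t^{2k}$ is not available in general: for non-integer $s>2$ the coefficients $\binom{s}{2k}$ can be negative (e.g.\ $\binom{s}{4}<0$ for $2<s<3$), and raising a power series to the non-integer power $q/p$ does not preserve coefficientwise inequalities. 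The classical way to finish is to first reduce, using self-adjointness of $T_\sigma$ (duality $(p,q)\leftrightarrow(q',p')$ preserves the ratio $(p-1)/(q-1)$) and composition through $L^2$ when $p\le 2\le q$, to the range $1\le p\le q\le 2$, where all $\binom{s}{2k}\ge 0$; then bound $\bigl(\sum_k\binom{p}{2k}t^{2k}\bigr)^{q/p}\ge 1+\tfrac qp\sum_{k\ge1}\binom{p}{2k}t^{2k}$ via Bernoulli's inequality and verify $\binom{q}{2k}\sigma_0^{2k}\le\tfrac qp\binom{p}{2k}$ for every $k$ (with equality at $k=1$, which is exactly where the constant is forced). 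Without some argument of this kind, your proposal proves the tensorization machinery but not the theorem.
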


\begin{theorem}[H\"{o}lder's Inequality]
Let $f:\{-1, 1\}^n \to \R$ and $g:\{-1, 1\}^n \to \R$ be arbitrary Boolean functions. Fix $s, t \in [1, \infty)$ where $\frac{1}{s} + \frac{1}{t} = 1$. Then
\begin{align*}
\langle f, g \rangle &\leq \|f\|_s \|g\|_t .
\end{align*}
\end{theorem}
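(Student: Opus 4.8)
The plan is to derive Hölder's inequality on the finite probability space $\bigl(\{-1,1\}^n,\mu\bigr)$, where $\mu$ is the uniform measure, from the pointwise Young inequality followed by a normalization argument. Throughout, $\|f\|_s=\bigl(\E_{x\sim\mu}|f(x)|^s\bigr)^{1/s}$ and $\langle f,g\rangle=\E_{x\sim\mu}[f(x)g(x)]$; since the domain is finite, every such norm is finite and there are no integrability subtleties. Note also that the constraint $\tfrac1s+\tfrac1t=1$ with $s,t\in[1,\infty)$ forces $s,t\in(1,\infty)$ (taking $s=1$ would require $t=\infty$), so in what follows $s,t>1$.

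First I would establish \emph{Young's inequality}: for all $a,b\ge 0$ and conjugate exponents $s,t$ one has $ab\le \tfrac{a^s}{s}+\tfrac{b^t}{t}$. The cases $a=0$ or $b=0$ are immediate, so assume $a,b>0$. Since $\log$ is concave on $(0,\infty)$ and the weights $\tfrac1s,\tfrac1t$ sum to $1$, Jensen's inequality (equivalently, weighted AM--GM) gives
\[
\log\!\Bigl(\tfrac1s a^s+\tfrac1t b^t\Bigr)\;\ge\;\tfrac1s\log a^s+\tfrac1t\log b^t\;=\;\log a+\log b\;=\;\log(ab),
\]
and exponentiating yields the claim. (If one prefers to avoid Jensen, the same inequality follows by elementary calculus: minimizing $\phi(a)=\tfrac{a^s}{s}+\tfrac{b^t}{t}-ab$ over $a\ge 0$ gives a critical point at $a=b^{t/s}$, where $\phi'(a)=a^{s-1}-b=0$ using $t/s=1/(s-1)$, and one checks $\phi$ attains value $0$ there.)

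Next I would normalize. If $\|f\|_s=0$ then $f\equiv 0$ and both sides of the asserted inequality vanish, and likewise if $\|g\|_t=0$; so assume both norms are strictly positive. Set $F(x)=|f(x)|/\|f\|_s$ and $G(x)=|g(x)|/\|g\|_t$, so that $\E_\mu[F^s]=\E_\mu[G^t]=1$ by homogeneity of the norms. Applying Young's inequality pointwise with $a=F(x)$, $b=G(x)$ and taking expectations over $x\sim\mu$,
\[
\E_\mu[FG]\;\le\;\frac{\E_\mu[F^s]}{s}+\frac{\E_\mu[G^t]}{t}\;=\;\frac1s+\frac1t\;=\;1.
\]
Multiplying through by $\|f\|_s\|g\|_t$ gives $\E_\mu[\,|fg|\,]\le\|f\|_s\|g\|_t$, and since $\langle f,g\rangle=\E_\mu[fg]\le\E_\mu[\,|fg|\,]$, the theorem follows.

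There is no genuine obstacle here: the argument is entirely standard, and the only point requiring care is bookkeeping — making sure the normalization is justified (handled by the $\|f\|_s=0$ and $\|g\|_t=0$ cases) and recording that the conjugacy relation $\tfrac1s+\tfrac1t=1$ is used in exactly two places, the weights in the AM--GM step and the final identity $\tfrac1s+\tfrac1t=1$. One could also state the companion case $t=\infty$ (with $\|g\|_\infty:=\max_x|g(x)|$), which holds trivially from $|f(x)g(x)|\le|f(x)|\,\|g\|_\infty$ and monotonicity of expectation, but this falls outside the range $[1,\infty)$ in the statement and may be omitted.
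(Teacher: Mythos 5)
Your proof is correct: the standard Young-inequality-plus-normalization argument, with the degenerate cases ($\|f\|_s=0$ or $\|g\|_t=0$) and the sign issue ($\langle f,g\rangle \le \E[|fg|]$) handled properly, and your use of expectation norms over the uniform measure matches the convention the paper uses alongside the Bonami--Beckner inequality. The paper itself states H\"older's inequality as a classical fact without proof, so there is no in-paper argument to compare against; your write-up simply supplies the standard proof.
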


We will let $f$ and $g$ be indicator functions for two sets $A$ and $B$, and use a combination of the Bonami-Beckner Inequality and H\"{o}lder's Inequality to lower bound the robust expansion. The operator $T_{\sigma}$ applied to $f$ will measure the neighborhood of set $A$. We compute an upper bound on the correlation of the neighborhood of $A$ and $B$ (referred to as $\gamma$) with respect to the volumes of $A$ and $B$, and the expression will give a lower bound on robust expansion. 

We also need the following lemma.

\begin{lemma}
\label{lem:robustisoperimetry}
Let $p, q \in [1, \infty)$, where $(p-1)(q-1) = \sigma^2$ and $f, g: \{-1, 1\}^d \to \R$ be two Boolean functions. Then
\[ \langle T_\sigma f, g \rangle \leq \|f\|_p \|g\|_q. \]
\end{lemma}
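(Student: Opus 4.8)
The plan is to combine H\"older's inequality with the Bonami--Beckner inequality, choosing the intermediate exponent so that the hypothesis $(p-1)(q-1)=\sigma^2$ is precisely what legalizes the application of Bonami--Beckner (with equality in the noise-rate condition).

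First I would apply H\"older's inequality to $T_\sigma f$ and $g$ with conjugate exponents $q' := \tfrac{q}{q-1}$ and $q$ (so that $\tfrac{1}{q'} + \tfrac{1}{q} = 1$), obtaining
\[
\langle T_\sigma f, g\rangle \;\le\; \|T_\sigma f\|_{q'}\,\|g\|_{q}.
\]
Next I would bound $\|T_\sigma f\|_{q'}$ by the Bonami--Beckner inequality applied with the exponent pair $(p, q')$, which gives $\|T_\sigma f\|_{q'} \le \|f\|_p$ provided $1 \le p \le q'$ and $0 \le \sigma \le \sqrt{(p-1)/(q'-1)}$. Chaining the two bounds yields $\langle T_\sigma f, g\rangle \le \|f\|_p\,\|g\|_q$, which is the claim.

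The only real content is checking those two side conditions from $(p-1)(q-1)=\sigma^2$. Since $\sigma \in [0,1]$ we have $(p-1)(q-1) \le 1$, which rearranges to $p(q-1) \le q$, i.e.\ $p \le \tfrac{q}{q-1} = q'$; together with the given $p \ge 1$ this is the first condition. For the second, note $q' - 1 = \tfrac{1}{q-1}$, hence $\sqrt{(p-1)/(q'-1)} = \sqrt{(p-1)(q-1)} = \sqrt{\sigma^2} = \sigma$, so the required inequality $\sigma \le \sqrt{(p-1)/(q'-1)}$ holds with equality. (In the degenerate case $q=1$ the hypothesis forces $\sigma = 0$; then $T_0 f \equiv \E f$ and the bound is immediate from $\langle T_0 f, g\rangle = \E f \cdot \E g \le \|f\|_1\|g\|_1 \le \|f\|_p\|g\|_q$.)

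There is no substantial obstacle; the point to be careful about is that the H\"older split must be matched to Bonami--Beckner so that the noise condition is met exactly --- any coarser choice of exponents would either violate it or weaken the bound. I would also note that this lemma is exactly the ingredient behind Lemma~\ref{lem:robustexpansion}: taking $f = \mathbf{1}_A$ and $g = \mathbf{1}_B$ turns $\langle T_\sigma f, g\rangle$ into the mass of $\sigma$-correlated edges between $A$ and $B$, while $\|f\|_p = \mu(A)^{1/p}$ and $\|g\|_q = \mu(B)^{1/q}$, so rearranging the inequality and optimizing over the admissible $(p,q)$ produces the stated lower bound on $\Phi_r(1/m,\gamma)$.
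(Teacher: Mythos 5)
Your proof is correct, and it is a genuine (if mild) variant of the paper's argument rather than a reproduction of it. Both proofs combine H\"older with the Bonami--Beckner inequality, but the decompositions differ: the paper uses the semigroup identity $T_\sigma = T_{\sqrt{\sigma}}T_{\sqrt{\sigma}}$ and self-adjointness to write $\langle T_\sigma f, g\rangle = \langle T_{\sqrt{\sigma}}f, T_{\sqrt{\sigma}}g\rangle$, applies H\"older with an intermediate conjugate pair $s = \frac{p-1}{\sigma}+1$, $t = \frac{s}{s-1}$, and then invokes hypercontractivity \emph{twice}, once for each factor, checking $\sqrt{(p-1)/(s-1)} = \sqrt{(q-1)/(t-1)} = \sqrt{\sigma}$. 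You instead apply H\"older directly with the conjugate pair $\bigl(\frac{q}{q-1}, q\bigr)$ and invoke hypercontractivity only once, on $\|T_\sigma f\|_{q/(q-1)}$; your verification that $\sqrt{(p-1)/(q'-1)} = \sqrt{(p-1)(q-1)} = \sigma$ is exactly the point where the hypothesis enters, matching the role it plays in the paper. Your route is slightly more economical (no splitting of the noise operator, no appeal to self-adjointness of $T_{\sqrt{\sigma}}$, one application of Bonami--Beckner instead of two), and you also dispose of the degenerate case $q=1$ cleanly; the paper's symmetric splitting treats $f$ and $g$ on an equal footing, which is the more standard presentation of the two-function hypercontractive inequality but buys nothing extra here. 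One small caveat: both arguments implicitly use $\sigma \in [0,1]$ (you to get $p \le q'$, the paper to get $p \le s$), which is not stated in the lemma but is guaranteed by the setting $\sigma = 1 - \frac{1}{c}$ and by the definition of $T_\sigma$.
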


\begin{proof}
We first apply H\"{o}lder's Inequality to split the inner-product into two parts, apply the Bonami-Beckner Inequality to each part.
\[ \langle T_{\sigma} f, f \rangle = \langle T_{\sqrt{\sigma}} f, T_{\sqrt{\sigma}} g \rangle \leq \|T_{\sqrt{\sigma}}f\|_s \|T_{\sqrt{\sigma}}g\|_t. \]
We pick the parameters $s = \dfrac{p - 1}{\sigma} + 1$ and $t = \dfrac{s}{s - 1}$, so $\frac{1}{s} + \frac{1}{t} = 1$. Note that $p \leq s$ because $\sigma < 1$ and $p \geq 1$ because $(p-1)(q-1) = \sigma^2 \leq \sigma$. We have 
\begin{align*} 
q \leq \dfrac{\sigma}{p-1} + 1 = t.
\end{align*}
In addition, 
\begin{align*}
\sqrt{\dfrac{p-1}{s - 1}} &= \sqrt{\sigma}  \qquad  &\sqrt{\dfrac{q - 1}{t - 1}} &= \sqrt{(q-1)(s-1)} = \sqrt{\frac{(q-1)(p - 1)}{\sigma}} = \sqrt{\sigma}.
\end{align*}
We finally apply the Bonami-Beckner Inequality to both norms to obtain
\begin{align*}
\|T_{\sqrt{\sigma}} f\|_s \| T_{\sqrt{\sigma}}g\|_t &\leq \|f\|_p \|g\|_q.
\end{align*}
\end{proof}

We are now ready to prove Lemma~\ref{lem:robustexpansion}.

\begin{proof}[Proof of Lemma~\ref{lem:robustexpansion}]
We use Lemma~\ref{lem:robustisoperimetry} and the definition of robust expansion. For any two sets $A, B \subset V$, let $a = \frac{1}{2^d}|A|$ and $b = \frac{1}{2^d}|B|$ be the measure of set $A$ and $B$ with respect to the uniform distribution. We refer to $\chi_A:\{-1, 1\}^d \to \{0, 1\}$ and $\chi_B:\{-1, 1\}^d \to \{0, 1\}$ as the indicator functions for $A$ and $B$. Then,
\begin{equation} \gamma = \Pr_{x\sim\mu, y \sim N_\sigma(x)} [x \in B \mid y \in A] = \frac{1}{a}\langle T_{\sigma}\chi_A, \chi_B\rangle \leq a^{\frac{1}{p}-1} b^{\frac{1}{q}}. \label{eq:min-gamma}
\end{equation}
Therefore, $\gamma^qa^{q-\frac{q}{p}} \leq b$. Let $A$ and $B$ be the minimizers of $\frac{b}{a}$ satisfying (\ref{eq:min-gamma}) and $a \leq \frac{1}{m}$.
\[ \Phi_r\left(\frac{1}{m}, \gamma\right) = \frac{b}{a} \geq \gamma^qa^{q - \frac{q}{p} - 1} \geq \gamma^q m^{1 + \frac{q}{p} - q}. \]
\end{proof}


\subsection{Lower bounds for one-probe data structures}
\label{sec:oneProbe}

In this section, we prove Theorem~\ref{one_probe_thm}.
Our proof relies on the main result of \cite{PTW10} for the GNS
problem:

\begin{theorem}[Theorem 1.5 \cite{PTW10}]
\label{thm:ptwbound}
There exists an absolute constant $\gamma$ such that the following
holds. Any randomized cell-probe data structure making $t$ probes and using $m$
cells of $w$ bits for a weakly independent instance of
GNS which is correct with probability greater than $\frac{1}{2}$ must
satisfy
\begin{align*}
\dfrac{m^tw}{n} &\geq \Phi_r\left(\frac{1}{m^t}, \frac{\gamma}{t}\right).
\end{align*}
\end{theorem}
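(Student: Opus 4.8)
The plan is to prove this by the encoding/counting method underlying the framework of Panigrahy--Talwar--Wieder: turn a too-efficient data structure into an impossibly short description of the label vector $(x_1,\dots,x_n)$. Since the instance is \emph{weakly independent}, these labels behave like $n$ mutually independent unbiased bits, so any recovery of a constant fraction of them — even with constant advantage — certifies that $\Omega(n)$ bits of the data structure were ``used'', and this will contradict $m^t w = o\!\left(n\cdot\Phi_r(1/m^t,\gamma/t)\right)$. First I would fix the query algorithm's random coins by an averaging (Yao) argument, keeping success probability above $\frac{1}{2}$ by a constant margin. On query $q\in V$ the algorithm follows an adaptive path reading cells $c_1(q),\dots,c_t(q)$, and the transcript (which cell at each step, together with its $w$-bit contents) determines the output. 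There are at most $m^t$ possible cell-paths, so I group queries by their path $j\in[m^t]$: let $A_j\subseteq V$ be the queries that take path $j$ and are answered correctly, so $\sum_j \eta(A_j) > \frac{1}{2}$. The contents read along a path amount to $tw$ bits, and the only wrinkle from adaptivity — that a path is not a literal product of fixed cells — is handled by committing to the transcript, which costs only $O(t\log m)$ extra bits of bookkeeping.

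Next I would extract a single heavy super-cell. Averaging over $j$, and using that the algorithm succeeds with a constant advantage over $\frac{1}{2}$ (this is where the absolute constant $\gamma$ in the statement originates), one obtains a path $j^\ast$ with $\eta(A_{j^\ast})\geq 1/m^t$ such that, under the edge distribution $e$ conditioned on the query lying in $A_{j^\ast}$, at least a $\gamma/t$ fraction of the near-neighbor mass lands in a set $B_{j^\ast}\subseteq U$ whose points' labels are each recoverable (with constant advantage) from the $tw$ bits read along $j^\ast$ together with $q$ itself; the $1/t$ loss comes from charging a uniformly random probe level rather than the whole path. By the definition of robust expansion and its monotonicity in the first argument, $\mu(B_{j^\ast}) \geq \eta(A_{j^\ast})\cdot \Phi_r(\eta(A_{j^\ast}),\gamma/t) \geq \frac{1}{m^t}\,\Phi_r(1/m^t,\gamma/t)$, so in expectation over the random dataset about $n\cdot\mu(B_{j^\ast})$ of the points $p_i$ lie in $B_{j^\ast}$.

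Then comes the encoding step. The encoder writes down $j^\ast$, the $tw$ bits read along $j^\ast$, and the identities of the points falling in $B_{j^\ast}$; the decoder, for each such $p_i$, simulates the query algorithm on a query witnessing $p_i\in B_{j^\ast}$ and outputs the guessed label $x_i$. By weak independence these are $\approx n\mu(B_{j^\ast})$ essentially-independent unbiased bits, each recovered with constant advantage, so a Fano/entropy computation forces the encoding length to be $\Omega(n\mu(B_{j^\ast}))$. Hence $tw + O(t\log m) + O(n\mu(B_{j^\ast})\log n) \gtrsim n\mu(B_{j^\ast}) \geq \frac{n}{m^t}\,\Phi_r(1/m^t,\gamma/t)$ — one has to be slightly careful that the ``identities'' term does not swamp the gain, which is arranged by only encoding the membership set of $B_{j^\ast}$ rather than per-point data. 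Rearranging and folding the factor $t$ and the lower-order terms into the absolute constant yields $m^t w / n \geq \Phi_r\!\left(1/m^t,\gamma/t\right)$.

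The genuinely delicate part is the middle reduction: converting ``correct with probability $\frac{1}{2}+\Omega(1)$ while reading an adaptive $t$-tuple of $w$-bit cells'' into ``there is a single transcript covering a $\geq 1/m^t$ fraction of queries whose near-neighbor labels are \emph{robustly} determined by $O(tw)$ bits, with a clean $\gamma/t$ expansion parameter.'' This is where adaptivity must be tamed (via transcripts), where the \emph{robust} version of expansion is essential — advantage $\frac{1}{2}+\epsilon$ rather than certainty means only a $\gamma$-fraction of near-neighbor mass can ever be pinned down, which is exactly what $\Phi_r(\cdot,\gamma)$ tolerates — and where the $1/t$ bookkeeping from a random probe level enters. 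Everything after that is just the definition of $\Phi_r$ together with the observation that a $w$-bit word cannot reveal more than $O(w)$ independent unbiased bits.
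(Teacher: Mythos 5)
A preliminary remark: this paper does not actually prove the statement you were given — it is imported verbatim from \cite{PTW10} (Theorem~1.5 there) and used as a black box in the one-probe bound — so the relevant comparison is with PTW10's own argument, which does not proceed by compressing along a single heavy transcript but by converting robust expansion into a \emph{shattering} statement (the same Shattering lemma this paper quotes in its two-probe section) and then doing an information count over all cells at once. Your sketch, as written, has a genuine gap at its core, namely the encoding step. Your decoder must know which data points lie in $B_{j^\ast}$ and must produce, for each such $p_i$, a query witnessing $p_i\in B_{j^\ast}$; but $A_{j^\ast}$ (the queries following transcript $j^\ast$ and answered correctly) and hence $B_{j^\ast}$ are defined in terms of the full table $D$, which depends on the dataset \emph{and on the very string $x$ being compressed}. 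The decoder therefore cannot reconstruct these sets from $j^\ast$ and the $tw$ read bits, so the Fano/entropy step is not being applied to a recovery rule that is a function of the encoding alone. The patch you propose — ``only encoding the membership set of $B_{j^\ast}$'' — does not rescue the accounting either: naming a subset of $\approx n\mu(B_{j^\ast})$ indices out of $n$ costs $\approx n\mu(B_{j^\ast})\log\bigl(1/\mu(B_{j^\ast})\bigr)$ bits, which already exceeds the $\approx n\mu(B_{j^\ast})$ bits you hope to gain. Supplying a mechanism by which the recovered coordinates are identified for free (or avoiding naming them altogether, as the shattering-plus-counting route does, or the corruption/Fano route in this paper's two-probe section) is precisely the missing idea.

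A second problem is the way robust expansion is invoked. $\Phi_r$ lower-bounds the \emph{measure} $\mu(B)$ of any subset of the space capturing a $\gamma/t$ fraction of the edge mass leaving $A_{j^\ast}$; it says nothing directly about how many of the $n$ dataset points land in such a $B$, and your step ``in expectation about $n\mu(B_{j^\ast})$ of the points lie in $B_{j^\ast}$'' is unjustified because $B_{j^\ast}$ is built from the data structure on that same random dataset, so the two are not independent. Two smaller slips point the same way: since $\eta(A_{j^\ast})\geq 1/m^t$ you cannot appeal to monotonicity of $\Phi_r$ in its first argument (the definition quantifies over sets of measure \emph{at most} $\delta$; you must first pass to a sub-part of $A_{j^\ast}$ of measure exactly $1/m^t$ by an averaging argument), and the extra multiplicative factor of $t$ in your final inequality cannot be ``folded into the absolute constant,'' since $t$ is a parameter of the theorem. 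What the actual proof needs — and what your outline replaces with the single-transcript extraction — is the statement that a typical data point's query-neighborhood is weakly shattered across cells, so that no small collection of cell contents can account for the algorithm's advantage, with the counting then carried out over all $m^t$ transcripts simultaneously.
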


\begin{proof}[Proof of Theorem~\ref{one_probe_thm}]
The lower bound follows from a direct application of Lemma~\ref{lem:robustexpansion} to Theorem~\ref{thm:ptwbound}. Setting $t = 1$ in Theorem~\ref{thm:ptwbound}, we obtain
\[ mw \geq n \cdot \Phi_{r}\left(\frac{1}{m}, \gamma\right) \geq n \gamma^q m^{1 + \frac{q}{p} - q} \]
for some $p, q \in [1, \infty)$ and $(p-1)(q-1) = \sigma^2$. 
Rearranging the inequality and letting $p = 1 + \frac{\log \log n}{\log n}$, and $q = 1 + \sigma^2 \frac{\log n}{\log \log n}$, we obtain
\[ m \geq \dfrac{\gamma^{\frac{p}{p-1}} n^{\frac{p}{pq-q}}}{w^{\frac{p}{pq-q}}} \geq n^{\frac{1}{\sigma^2} - o(1)}. \]
Since $\sigma = 1 - \frac{1}{c}$ and $w = n^{o(1)}$, we obtain the desired result. 
\end{proof}

\begin{corollary}
Any 1 cell probe data structure with cell size $n^{o(1)}$ for $c$-approximate nearest neighbors on the sphere in $\ell_2$ needs $n^{1 + \frac{2c^2 - 1}{(c^2 -1)^2}-o(1)}$ many cells. 
\end{corollary}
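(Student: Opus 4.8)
The plan is to derive this from Theorem~\ref{one_probe_thm} by the standard isometric embedding of the Hamming cube into Euclidean space: this embedding turns a Hamming instance of approximation $c'$ into a spherical $\ell_2$ instance of approximation $\sqrt{c'}$, so running it with $c' = c^2$ is exactly what is needed. First I would recall that $\{-1,1\}^d$ lies on the sphere of radius $\sqrt d$ and that for $x,y \in \{-1,1\}^d$ one has $\|x - y\|_2 = 2\sqrt{d_H(x,y)}$, where $d_H$ denotes Hamming distance.

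Next I would take the Hamming random instance of Section~\ref{sec:rand_inst_sec} with the flip probability set to $\frac{1}{2c^2}$ (i.e.\ with Hamming approximation parameter $c^2$). With high probability the planted near neighbor $p$ satisfies $d_H(p,q) \leq \frac{d}{2c^2}(1 + o(1))$, while every other data point $p'$ has $d_H(p',q) \geq \frac{d}{2}(1 - o(1))$. Dividing all points by $\sqrt d$ to land on $S^{d-1}$, the Euclidean distances become $\|p - q\|_2 \leq \frac{\sqrt 2}{c}(1 + o(1))$ and $\|p' - q\|_2 \geq \sqrt 2\,(1 - o(1))$; in particular this is (a hard instance of) $\bigl(c, \tfrac{\sqrt 2}{c}\bigr)$-ANN on $S^{d-1}$ with $d = \omega(\log n)$, and recovering $p$ from $q$ is precisely what such a data structure must do with probability $\geq 2/3$. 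Hence a single-cell-probe data structure for $c$-approximate nearest neighbors on the $\ell_2$ sphere using cells of size $n^{o(1)}$ immediately yields a single-cell-probe data structure with the same resource guarantees for the Hamming random instance of approximation $c^2$.

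Finally I would invoke Theorem~\ref{one_probe_thm} with $c$ replaced by $c^2$, which forces the number of memory cells to be at least $n^{(c^2/(c^2-1))^2 - o(1)}$, and then verify the algebraic identity
\[
\left(\frac{c^2}{c^2-1}\right)^2 = \frac{c^4}{(c^2-1)^2} = 1 + \frac{2c^2-1}{(c^2-1)^2},
\]
which follows from $(c^2-1)^2 = c^4 - 2c^2 + 1$, yielding the claimed bound $n^{1 + \frac{2c^2-1}{(c^2-1)^2} - o(1)}$. There is no substantive obstacle here — the entire content is the distance bookkeeping of the embedding — but the one point to handle carefully is that the planted instance only has a multiplicative gap of $c(1 - o(1))$ between the near and far distances, so the reduction should be phrased (exactly as Theorem~\ref{one_probe_thm} is) in terms of solving the \emph{random instance} with constant success probability rather than a worst-case $(c,r)$-promise; this costs nothing, since the spherical random instance is literally the normalized image of the Hamming one.
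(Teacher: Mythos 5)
Your proposal is correct and follows essentially the same route as the paper: the paper's proof is exactly this embedding argument (scale $\{-1,1\}^d$ by $1/\sqrt{d}$ onto the unit sphere, so Hamming distance $r$ becomes $\ell_2$ distance $2\sqrt{r}$ and a $c$-approximate spherical data structure handles the Hamming random instance with approximation $c^2$, after which Theorem~\ref{one_probe_thm} with $c$ replaced by $c^2$ gives the bound). You merely spell out the random-instance distance bookkeeping and the identity $\left(\tfrac{c^2}{c^2-1}\right)^2 = 1 + \tfrac{2c^2-1}{(c^2-1)^2}$ more explicitly than the paper does.
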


\begin{proof}
Each point in the Hamming space $\{-1, 1\}^d$ (after scaling by $\frac{1}{\sqrt{d}}$) can be thought of as lying on the unit sphere. If two points are a distance $r$ apart in the Hamming space, then they are $2\sqrt{r}$ apart on the sphere with $\ell_2$ norm. Therefore a data structure for a $c^2$-approximation on the sphere gives a data structure for a $c$-approximation in the Hamming space.
\end{proof}


\section{Lower bounds: list-of-points data structures}
\label{sec:noCoding}

In this section, we prove Theorem~\ref{thm:noCoding}, i.e., a tight
lower bound against data structure that fall inside the
``list-of-points'' model, as defined in Def.~\ref{def:lip}.

\begin{theorem}[Restatement of Theorem~\ref{thm:noCoding}]
Let $D$ be a list-of-points data structure which solves $(c,r)$-ANN for $n$ points in the $d$-dimensional Hamming space with $d = \omega(\log n)$. Suppose $D$ is specified by a sequence of $m$ sets $\{ A_i \}_{i=1}^m$ and a procedure for outputting a subset $I(q) \subset [m]$ using expected space $s = n^{1+\rho_u}$, and expected query time $n^{\rho_q - o(1)}$ with success probability $\frac{2}{3}$. Then
\[ c \sqrt{\rho_q} + (c-1) \sqrt{\rho_u} \geq \sqrt{2c - 1}. \]
\end{theorem}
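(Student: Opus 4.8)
The plan is to run the argument on the random Hamming instance of Section~\ref{sec:rand_inst_sec}, with $\sigma = 1 - \tfrac1c$ and $d = \omega(\log n)$. First I would fix the internal randomness of the list-of-points data structure: by Markov's inequality and an averaging argument, there is a realization of this randomness for which the structure still answers correctly on the random instance with probability $\Omega(1)$, while its expected space and expected query time (now only over the instance) blow up by at most a constant factor. With the randomness fixed, each $A_i \subseteq \{-1,1\}^d$ is deterministic; write $\mu_i = |A_i|/2^d$ and $\eta_i = \Pr_q[i \in I(q)]$ (the query marginal of the random instance is exactly uniform). The resource bounds then translate into four ``budget'' inequalities: from $|I(q)|$ we get $\sum_i \eta_i \le n^{\rho_q + o(1)}$; from the cost of scanning far points in the lists visited we get $\sum_i \eta_i\mu_i \le n^{\rho_q - 1 + o(1)}$; and from the space we get $\sum_i \mu_i \le n^{\rho_u + o(1)}$ together with $m \le n^{1+\rho_u+o(1)}$. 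For correctness, using that with high probability the planted near point $p$ is the \emph{only} dataset point within distance $cr$ of $q$, a successful query must scan some list $L_i$ with $i \in I(q)$ containing $p$; hence a union bound gives $\Omega(1) \le \sum_i w_i$, where $w_i := \Pr_{(x,y)}[\,x \in A_i,\ y \in Q_i\,]$, $Q_i := \{q : i \in I(q)\}$ has measure $\eta_i$, and $(x,y)$ is a $\sigma$-correlated pair.

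\textbf{A per-list robust-expansion inequality.}
Fix $i$ and set $\gamma_i = \Pr[x \in A_i \mid y \in Q_i]$, so that $w_i = \eta_i \gamma_i$. By the definition of robust expansion applied with the query set $Q_i$, $\mu_i \ge \eta_i\,\Phi_r(\eta_i,\gamma_i)$. Plugging in Lemma~\ref{lem:robustexpansion} with $m = 1/\eta_i$ and any H\"older exponents $p,q \in [1,\infty)$ with $(p-1)(q-1) = \sigma^2$ yields $\mu_i \ge \gamma_i^{\,q}\,\eta_i^{\,q(1-1/p)}$; substituting $\gamma_i = w_i/\eta_i$ and rearranging gives the clean bound
\[ w_i \;\le\; \mu_i^{1/q}\,\eta_i^{1/p}, \]
which holds for every $i$ simultaneously with the \emph{same} pair $(p,q)$ (the strongest pair will be chosen at the end).

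\textbf{Aggregating with generalized H\"older.}
Next I would write $\mu_i^{1/q}\eta_i^{1/p} = \eta_i^{\theta_1}(\eta_i\mu_i)^{\theta_2}\mu_i^{\theta_3}\cdot 1^{\theta_4}$ with $\theta_1+\theta_2 = 1/p$, $\theta_2+\theta_3 = 1/q$, all $\theta_j \ge 0$, and $\sum_j \theta_j = 1$; such $\theta_j$ always exist since the admissible interval for $\theta_2$, namely $[\max(0,\tfrac1p+\tfrac1q-1),\min(\tfrac1p,\tfrac1q)]$, is nonempty. Generalized H\"older over $i$, fed with the four budget sums, gives
\[ \Omega(1) \;\le\; \sum_i w_i \;\le\; n^{\,E + o(1)}, \qquad E := \rho_q\theta_1 + (\rho_q-1)\theta_2 + \rho_u\theta_3 + (1+\rho_u)\theta_4. \]
A short computation shows the $\theta_2$ terms cancel, so $E = \tfrac{\rho_q - 1 - \rho_u}{p} - \tfrac1q + (1+\rho_u)$, independent of the H\"older split. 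Therefore $E \ge -o(1)$ for \emph{every} valid $(p,q)$, i.e.
\[ \frac{1+\rho_u-\rho_q}{p} + \frac1q \;\le\; 1+\rho_u + o(1). \]

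\textbf{Optimizing over the H\"older exponents.}
Finally I would maximize the left-hand side over $(p-1)(q-1)=\sigma^2$. Writing $A = 1+\rho_u-\rho_q$: if $A < \sigma^2$ then $\rho_q \ge 1-\sigma^2$, and if $A > 1/\sigma^2$ then $\rho_u \ge (1-\sigma^2)/\sigma^2$, so in both cases the target trade-off holds trivially; hence assume $\sigma^2 \le A \le 1/\sigma^2$. A one-variable calculus computation (critical point at $p-1 = \tfrac{\sigma(1-\sigma\sqrt A)}{\sqrt A - \sigma}$) gives the maximum $\tfrac{A - 2\sigma\sqrt A + 1}{1-\sigma^2}$, so the constraint becomes $\sqrt{1+\rho_u-\rho_q} \ge \tfrac{1-\rho_q + \sigma^2(1+\rho_u)}{2\sigma}$; squaring twice turns this into $\sqrt{\rho_q} + \sigma\sqrt{\rho_u} \ge \sqrt{1-\sigma^2}$, and substituting $\sigma = 1-\tfrac1c$ gives exactly $c\sqrt{\rho_q} + (c-1)\sqrt{\rho_u} \ge \sqrt{2c-1}$.

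\textbf{Where the difficulty lies.}
The conceptual crux is the passage from the PTW robust-expansion quantity to the product inequality $w_i \le \mu_i^{1/q}\eta_i^{1/p}$, and recognizing the correct four-way H\"older split (including the number of lists $m$ as the fourth ``budget''), after which the cancellation of $\theta_2$ makes the final optimization a clean one-variable problem matching the upper bound. The most delicate routine part is the probabilistic bookkeeping at the start: fixing the internal randomness simultaneously so that correctness, space, \emph{and} query time are all controlled, and handling the ``far point'' contributions to the list sizes via exchangeability. The endpoint cases $A \notin (\sigma^2, 1/\sigma^2)$ of the optimization must also be dispatched, but they do not affect the final curve.
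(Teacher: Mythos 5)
Your proposal is correct, and it reaches the trade-off by a genuinely different aggregation than the paper's proof in Section~\ref{sec:noCoding}, although both arguments share the same analytic core: the hypercontractive bound on the random Hamming instance (Lemma~\ref{lem:robustexpansion}, or equivalently Lemma~\ref{lem:robustisoperimetry}, which gives your per-list inequality $w_i \le \mu_i^{1/q}\eta_i^{1/p}$ in one line) applied list-by-list, together with the success, space and time budgets. The paper works on the time side: it lower-bounds $\E[T] \ge \sum_i \Phi_r(s_i,\gamma_i)s_i + \sum_i s_i\gamma_i + (n-1)\sum_i \Phi_r(s_i,\gamma_i)s_i^2$ subject to $\sum_i s_i\gamma_i \ge \Omega(1)$ and $n\sum_i s_i \le O(s)$, then explicitly minimizes over the list sizes $s_i$ with H\"older exponents chosen as functions of $\rho_u$ (via $\beta=\sqrt{(1-\sigma^2)/\rho_u}$), splitting into three cases ($0<\rho_u\le\tfrac{1}{2c-1}$, $\rho_u>\tfrac{1}{2c-1}$, and $\rho_u=0$ with $n$-dependent exponents). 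You instead bound the success probability $\sum_i w_i$ from above by a four-way generalized H\"older against the budgets $\sum_i\eta_i$, $\sum_i\eta_i\mu_i$, $\sum_i\mu_i$ and $m$, note that the exponent is independent of the split (the $\theta_2$ cancellation is correct), and only at the end optimize over $(p,q)$ with $(p-1)(q-1)=\sigma^2$; I verified the critical point, the maximum $\frac{A-2\sigma\sqrt{A}+1}{1-\sigma^2}$, and the squaring step, which indeed yields $\rho_q \ge \bigl(\sqrt{1-\sigma^2}-\sigma\sqrt{\rho_u}\bigr)^2$ and hence $c\sqrt{\rho_q}+(c-1)\sqrt{\rho_u}\ge\sqrt{2c-1}$. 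What your route buys is a cleaner endgame: no case analysis over $\rho_u$ beyond the trivial endpoints $A\notin[\sigma^2,1/\sigma^2]$, no $n$-dependent exponents at $\rho_u=0$, and a transparent accounting of which resource each H\"older factor is charged to; what the paper's route buys is an explicit extremal list-size profile for each $\rho_u$, which mirrors the structure of the matching upper bound. Your preliminary bookkeeping (fixing the internal randomness by Markov plus averaging, uniqueness of the near neighbor w.h.p., and the $(n-1)\sum_i\mu_i\eta_i$ treatment of far points) coincides with the paper's and is sound.
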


We will prove the lower bound by giving a lower bound on list-of-points data structures which solve the random instance for the Hamming space defined in Section~\ref{sec:rand_inst_sec}. The dataset consists of $n$ points $\{ u_i \}_{i=1}^n$ where each  $u_i \sim V$ drawn uniformly at random, and a query $v$ is drawn from the neighborhood of a random dataset point. Thus, we may assume $D$ is a deterministic data structure.

Fix a data structure $D$, where $A_i \subset V$ specifies which dataset points are
placed in $L_i$. Additionally, we may define $B_i \subset V$ which specifies which query points scan $L_i$, i.e., $B_i = \{ v \in V \mid i \in
I(v)\}$. Suppose we sample a random dataset point $u \sim V$ and then
a random query point $v$ from the neighborhood of $u$. Let
\begin{align*}
\gamma_i &= \Pr[v \in B_i \mid u \in A_i]
\end{align*}
represent the probability that query $v$ scans the list $L_i$, conditioned on $u$ being in $L_i$. Additionally, we write $s_i = \mu(A_i)$ as the normalized size of $A$. 
The query time for $D$ is given by the following expression:
\begin{align}
T &= \sum_{i=1}^m \chi_{B_i}(v) \left( 1 + \sum_{j=1}^n \chi_{A_i}(u_j)\right) \nonumber \\
\E[T] &= \sum_{i=1}^m \mu(B_i) + \sum_{i=1}^m \gamma_i \mu(A_i) + (n-1) \sum_{i=1}^m \mu(B_i) \mu(A_i) \nonumber \\
	&\geq \sum_{i=1}^m \Phi_r(s_i, \gamma_i) s_i + \sum_{i=1}^m s_i \gamma_i + (n-1) \sum_{i=1}^m \Phi_r(s_i, \gamma_i) s_i^2. \label{eq:hehe}
\end{align}
Since the data structure succeeds with probability $\gamma$,
\begin{align}
\sum_{i=1}^m s_i \gamma_i &\geq \gamma = \Pr_{j \sim [n], v \sim N(u_j)}[ \exists i \in [m] : v \in B_i , u_j \in A_i] . \label{eq:hehe2}
\end{align}
Additionally, since $D$ uses at most $s$ space, 
\begin{align}
n \sum_{i=1}^m s_i &\leq O(s). \label{eq:hehe3}
\end{align}
Using the two constraints in (\ref{eq:hehe2}) and (\ref{eq:hehe3}), we will use the estimates of robust expansion in order to find a lower bound for (\ref{eq:hehe}). From Lemma~\ref{lem:robustexpansion}, for any $p, q \in [1, \infty)$ where $(p-1)(q-1) = \sigma^2$ where $\sigma = 1 - \frac{1}{c}$,
\begin{align*}
\E[T] &\geq \sum_{i=1}^m s_i^{q - \frac{q}{p}} \gamma_i^q + (n-1) \sum_{i=1}^m s_i^{q - \frac{q}{p} + 1} \gamma_i^q + \gamma \\
\gamma &\leq \sum_{i=1}^m s_i \gamma_i \\
O\left(\frac{s}{n}\right) &\geq \sum_{i=1}^m s_i.
\end{align*}
We set $S = \{i \in [m] : s_i \neq 0 \}$ and for $i \in S$, we write $v_i = s_i \gamma_i$. Then
\begin{equation}
\E[T] \geq \sum_{i \in S} v_i^q \left(s_i^{-\frac{q}{p}} + (n-1)s_i^{-\frac{q}{p} + 1} \right) \geq \sum_{i \in S} \left(\dfrac{\gamma}{|S|} \right)^q \left(s_i^{-\frac{q}{p}} + (n-1)s_i^{-\frac{q}{p} + 1} \right) \label{eq:intermediate}
\end{equation}
where we used the fact $q \geq 1$. Consider
\begin{equation}
\label{eq:F}
F = \sum_{i \in S} \left(s_i^{-\frac{q}{p}}  + (n-1)s_i^{-\frac{q}{p} + 1} \right).
\end{equation}

We analyze three cases separately:
\begin{itemize}
\item $0 < \rho_u \leq \frac{1}{2c - 1}$
\item $\frac{1}{2c - 1} < \rho_u \leq \dfrac{2c-1}{(c-1)^2}$
\item $\rho_u = 0$.  
\end{itemize}
For the first two cases, we let 
\begin{equation}
\label{eq:p-and-q}
q = 1 - \sigma^2 + \sigma \beta \qquad p = \dfrac{\beta}{\beta - \sigma} \qquad \beta = \sqrt{\dfrac{1 - \sigma^2}{\rho_u}} 
\end{equation}
Since $0 < \rho_u \leq \dfrac{2c-1}{(c-1)^2}$, one can verify $\beta > \sigma$ and both $p$ and $q$ are at least $1$. 

\begin{lemma}
\label{lem:q-bigger-than-p}
When $\rho_u \leq \frac{1}{2c - 1}$, and $s = n^{1 + \rho_u}$,
\[ \E[T] \geq \Omega(n^{\rho_q}) \]
where $\rho_q$ and $\rho_u$ satisfy Equation~\ref{eq:power-relation}.
\end{lemma}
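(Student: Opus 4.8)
The plan is to lower-bound the quantity $F$ in~\eqref{eq:F} subject to the space constraint $n\sum_{i\in S}s_i \le O(s) = O(n^{2+\rho_u})$, and then feed this into~\eqref{eq:intermediate}. First I would observe that, by the choice~\eqref{eq:p-and-q} of $p$ and $q$, we are in the regime where the exponent $q/p$ is small enough that the dominant term in each summand of $F$ is $(n-1)s_i^{-q/p+1}$ when $s_i$ is not too tiny, but the ``$+1$''-free term $s_i^{-q/p}$ dominates when $s_i$ is very small; the worst case for the adversary is to balance these. So the first real step is to minimize $F$ over all choices of $\{s_i\}_{i\in S}$ with $s_i\in(0,1]$ and $\sum_i s_i \le C n^{1+\rho_u}$, treating $|S|$ as a free parameter as well. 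This is a convex optimization (the function $x\mapsto x^{-q/p}+(n-1)x^{-q/p+1}$ is convex on $(0,1]$), so by convexity/Lagrange multipliers the minimizer is attained when all the nonzero $s_i$ are equal, say $s_i = s_\star$, with $|S|\cdot s_\star = \Theta(n^{1+\rho_u})$; one should also check the endpoint $s_\star$ achieving the crossover between the two terms, i.e.\ $s_\star \approx 1/n$, since making $s_\star$ smaller only helps the adversary up to that point.

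With all $s_i$ equal to $s_\star$ and $|S| = \Theta(n^{1+\rho_u}/s_\star)$, the bound~\eqref{eq:intermediate} becomes, up to $n^{o(1)}$ factors,
\[
\E[T] \;\ge\; \left(\frac{\gamma}{|S|}\right)^q \cdot |S|\cdot\Bigl(s_\star^{-q/p} + n\, s_\star^{-q/p+1}\Bigr)
\;=\; \gamma^q\, |S|^{1-q}\, s_\star^{-q/p}\Bigl(1 + n s_\star\Bigr).
\]
Now substitute $|S| = \Theta(n^{1+\rho_u} s_\star^{-1})$ and optimize the resulting expression over $s_\star\in[n^{-1},1]$ (the adversary picks $s_\star$ to minimize $\E[T]$, we want a lower bound valid for every $s_\star$). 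This reduces to a one-variable calculation: write $s_\star = n^{-a}$ for $a\in[0,1]$, collect the exponent of $n$ as an affine function of $a$, and note that an affine function on an interval is minimized at an endpoint. Checking $a=0$ and $a=1$ (corresponding to $s_\star = 1$, i.e.\ one giant list, versus $s_\star = 1/n$, i.e.\ lists of size $O(1)$) should give that the exponent is at least $\rho_q$, where $\rho_q$ is defined by plugging~\eqref{eq:p-and-q} into the relation $(p-1)(q-1)=\sigma^2$ and the target identity~\eqref{eq:power-relation}. Concretely, one verifies algebraically that with $\beta = \sqrt{(1-\sigma^2)/\rho_u}$ and $\sigma = 1-1/c$, the identity $c\sqrt{\rho_q} + (c-1)\sqrt{\rho_u} = \sqrt{2c-1}$ is exactly what makes both endpoint exponents equal $\rho_q$; this is the ``miraculous'' parameter choice and it should be presented as a direct substitution rather than derived.

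The main obstacle I anticipate is twofold. First, handling the constraint $s_i \le 1$ correctly: the unconstrained Lagrangian optimum might want $s_\star > 1$, in which case the true minimizer sits at the boundary $s_\star = 1$ and one must argue separately that this still yields the claimed bound — this is presumably where the hypothesis $\rho_u \le \frac{1}{2c-1}$ gets used (it is exactly the threshold below which the ``balanced'' regime, rather than the $s_\star=1$ regime, is binding, which is why the complementary range $\frac{1}{2c-1} < \rho_u \le \frac{2c-1}{(c-1)^2}$ is treated as a separate case in the paper). Second, the bookkeeping of $n^{o(1)}$ factors: $|S|^{1-q}$ with $q$ close to $1$ contributes a subpolynomial factor only if $|S| = n^{O(1)}$, which is automatic here, and the success probability $\gamma$ is a constant $\ge 2/3$ so $\gamma^q = \Theta(1)$; these need to be tracked but should not cause trouble. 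Everything else — the convexity argument, the affine-exponent-on-an-interval argument, and the final algebraic identity — is routine once set up, so the writeup should spend its care on the boundary-case analysis and on stating the parameter substitution cleanly.
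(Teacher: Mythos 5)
Your overall strategy is the same as the paper's: minimize $F$ from \eqref{eq:F} under the space constraint by convexity/Jensen (equal split of the budget), substitute into \eqref{eq:intermediate}, and close with the algebraic identity coming from the parameter choice \eqref{eq:p-and-q}; the paper does exactly this, using $|S| \le s$ directly rather than your parametrization by $s_\star$, which is equivalent. Your guess about where the hypothesis $\rho_u \le \frac{1}{2c-1}$ enters is also essentially right: it is precisely the condition $q \ge p$, hence $q/p \ge 1$, which makes both terms of each summand of $F$ convex and non-increasing in $s_i$ (so the equal split is the minimizer) and makes the step $|S|^{-(q-q/p)} \ge s^{-(q-q/p)}$ legitimate; when $q<p$ the interior critical point takes over, which is the paper's second case.

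There is, however, a concrete numerical slip that breaks the final verification as written. The constraint \eqref{eq:hehe3} is $n\sum_i s_i \le O(s)$ with $s=n^{1+\rho_u}$, hence $\sum_{i\in S} s_i \le O(n^{\rho_u})$; you wrote $O(s)=O(n^{2+\rho_u})$ and used $|S|\cdot s_\star = \Theta(n^{1+\rho_u})$, a factor of $n$ too generous. Carrying your computation through with a budget $B$ for $\sum_i s_i$, the bound is $\gamma^q B^{1-q} s_\star^{\,q-1-q/p}(1+n s_\star)$, and the binding endpoint is $s_\star = 1/n$ (which, with the correct budget, is exactly the constraint $|S|\le m\le s$, not the ``crossover of the two terms'' as you suggest). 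There the exponent is $(\log_n B)(1-q)+1-q+q/p$: with $B=n^{\rho_u}$ this equals $(1+\rho_u)(1-q)+q/p=\rho_q$ as required, but with your $B=n^{1+\rho_u}$ it equals $\rho_q-(q-1)<\rho_q$, so the claimed endpoint identity would not check out. The fix is trivial (use the correct budget), but note also that you defer the one step the paper actually carries out, namely verifying $(1+\rho_u)(1-q)+q/p=\big(\sqrt{1-\sigma^2}-\sigma\sqrt{\rho_u}\big)^2$, which is what yields \eqref{eq:power-relation}.
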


\begin{proof}
In this setting, $p$ and $q$ are constants, and $q \geq p$. Therefore, $\frac{q}{p} \geq 1$. $F$ can be viewed as consisting of the contributions of each $s_i$'s in Equation~\ref{eq:F}, constrained by (\ref{eq:hehe3}). One can easily verify that $F$ minimized when $s_i = O(\frac{s}{n|S|})$, so substituting in (\ref{eq:intermediate}),
\[ \E[T] \geq \Omega\left(\dfrac{\gamma^qs^{-q/p + 1}n^{q/p}}{|S|^{q - q/p}} \right) \geq \Omega(\gamma^q s^{1 - q} n^{q/p}) \label{eq:hehe4}\]
since $q - q/p > 0$ and $|S| \leq s$.
In addition, $p$, $q$ and $\gamma$ are constants, and note the fact $s = n^{1+\rho_u}$, and (\ref{eq:p-and-q}), we let $n^{\rho_q}$ be the best query time we can achieve. Combining these facts, along with the lower bound for $\rho_q$ in (\ref{eq:hehe4}), we obtain the following relationship between $\rho_q$ and $\rho_u$:
\begin{align*}
\rho_q &= (1 + \rho_u)(1 - q) + \frac{q}{p} \\
	 &= (1 + \rho_u)(\sigma^2 - \sigma \beta) + \dfrac{(1 - \sigma^2 + \sigma \beta)(\beta - \sigma)}{\beta} \\
	 &= \left( \sqrt{1 - \sigma^2} - \sqrt{\rho_u} \sigma \right)^2 \\
	 &= \left( \dfrac{\sqrt{2c-1}}{c} - \sqrt{\rho_u} \cdot \dfrac{(c-1)}{c} \right)^2.
\end{align*}
\end{proof}

\begin{lemma}
When $\rho_u > \frac{1}{2c - 1}$,
\[ \E[T] \geq \Omega(n^{\rho_q}) \]
where $\rho_q$ and $\rho_u$ satisfy Equation~\ref{eq:power-relation}.
\end{lemma}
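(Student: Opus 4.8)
The plan is to reuse the template of the proof of Lemma~\ref{lem:q-bigger-than-p} essentially verbatim, changing only the one step that exploited $q/p\ge 1$. First I would dispose of a trivial subcase: if $\rho_u\ge \frac{2c-1}{(c-1)^2}$ then $(c-1)\sqrt{\rho_u}\ge\sqrt{2c-1}$, so $c\sqrt{\rho_q}+(c-1)\sqrt{\rho_u}\ge\sqrt{2c-1}$ holds for every $\rho_q\ge 0$ and there is nothing to prove. So assume $\frac{1}{2c-1}<\rho_u<\frac{2c-1}{(c-1)^2}$ and keep the parameter choice~\eqref{eq:p-and-q}, i.e. $\beta=\sqrt{(1-\sigma^2)/\rho_u}$, $q=1-\sigma^2+\sigma\beta$, $p=\frac{\beta}{\beta-\sigma}$, with $\sigma=1-\frac1c$. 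A short check shows that on this interval one still has $\beta>\sigma$, hence $q>1$ and $p,q$ are (fixed) constants, but now $\rho_u>\frac{1}{2c-1}$ forces $q<p$ (equality $q=p=\frac{2c-1}{c}$ occurs exactly at $\rho_u=\frac{1}{2c-1}$, since there $\beta-\sigma=1$); in particular $\theta:=q/p\in(0,1)$.

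Next I would start from inequality~\eqref{eq:intermediate}, which holds word for word — it uses only $q\ge 1$, not $q\ge p$ — giving $\E[T]\ge(\gamma/|S|)^q F$ with $F=\sum_{i\in S}\bigl(s_i^{-q/p}+(n-1)s_i^{-q/p+1}\bigr)=\sum_{i\in S}s_i^{-\theta}\bigl(1+(n-1)s_i\bigr)$ as in~\eqref{eq:F}. This is the point where the argument diverges from Lemma~\ref{lem:q-bigger-than-p}: there $\theta\ge 1$ made each summand monotone decreasing in $s_i$, so $F$ was minimized (subject to $\sum_i s_i=O(s/n)$) by equalizing the $s_i$; for $\theta<1$ the summand $s_i^{-\theta}(1+(n-1)s_i)$ is no longer monotone, so that route fails. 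Instead I would minimize it \emph{pointwise}: elementary calculus gives $\min_{x\in(0,1]}x^{-\theta}(1+(n-1)x)=\Omega(n^{\theta})$, with implied constant depending on $\theta$ (hence on $c,\rho_u$) — the unconstrained minimizer is $x^{\star}=\frac{\theta}{(n-1)(1-\theta)}=\Theta(1/n)$, which lies in $(0,1]$ for large $n$ since $\theta$ is a fixed constant in $(0,1)$, and substituting gives value $\Theta(n^{\theta})$. Summing over $i\in S$ — with no appeal to the space budget — yields $F\ge|S|\cdot\Omega(n^{q/p})$.

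Then I would combine these: $\E[T]\ge(\gamma/|S|)^q\cdot|S|\cdot\Omega(n^{q/p})=\Omega\bigl(\gamma^q|S|^{1-q}n^{q/p}\bigr)$, which, since $q>1$, is decreasing in $|S|$; as $|S|\le m\le s=n^{1+\rho_u}$ and $\gamma=\Omega(1)$ with $p,q$ constant, this gives $\E[T]\ge\Omega\bigl(\gamma^q s^{1-q}n^{q/p}\bigr)=\Omega\bigl(n^{(1+\rho_u)(1-q)+q/p}\bigr)$. This is precisely the bound reached at the end of the proof of Lemma~\ref{lem:q-bigger-than-p}, so the remaining algebra is identical: substituting~\eqref{eq:p-and-q} and $1-\sigma^2=\frac{2c-1}{c^2}$ turns $(1+\rho_u)(1-q)+q/p$ into $\bigl(\sqrt{1-\sigma^2}-\sigma\sqrt{\rho_u}\bigr)^2=\bigl(\tfrac{\sqrt{2c-1}}{c}-\tfrac{c-1}{c}\sqrt{\rho_u}\bigr)^2$, and reading off $\rho_q\ge$ this value and taking square roots (legitimate since $\rho_u<\frac{2c-1}{(c-1)^2}$ keeps the bracket nonnegative) gives $c\sqrt{\rho_q}+(c-1)\sqrt{\rho_u}\ge\sqrt{2c-1}$, i.e.~\eqref{eq:power-relation}.

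I expect the only real obstacle to be the middle step — identifying the convexity-plus-budget argument of Lemma~\ref{lem:q-bigger-than-p} as exactly the part that breaks when $q<p$, and noticing that the pointwise minimization of $x^{-\theta}(1+(n-1)x)$ recovers the \emph{same} exponent $n^{q/p}$; everything else is bookkeeping already carried out in Lemma~\ref{lem:q-bigger-than-p}.
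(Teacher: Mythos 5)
Your proposal is correct and follows essentially the same route as the paper: the paper likewise keeps the parameter choice~\eqref{eq:p-and-q}, starts from~\eqref{eq:intermediate}, and minimizes each summand of $F$ at the stationary point $s_i = \tfrac{q}{(p-q)(n-1)} = \Theta(1/n)$ (your $x^{\star}$), then uses $q\ge 1$ and $|S|\le s$ to land on the same bound $\E[T]\ge\Omega(\gamma^q s^{1-q} n^{q/p})$ and the same closing algebra. The only cosmetic differences are that you drop the space-budget check by passing to the unconstrained pointwise minimum (the paper instead verifies the minimizer is feasible) and that you explicitly dispose of the subcase $\rho_u\ge\frac{2c-1}{(c-1)^2}$, neither of which changes the substance.
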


\begin{proof}
We follow a similar pattern to Lemma~\ref{lem:q-bigger-than-p}. 
\begin{align*}
\frac{\partial F}{\partial s_i} &= \left(-\frac{q}{p}\right) s_i^{-\frac{q}{p} - 1} + \left(-\frac{q}{p} + 1 \right)(n-1) s_i^{-\frac{q}{p}}.
\end{align*}
Consider the case when each $\frac{\partial F}{\partial s_i}(s_i) = 0$, by setting $s_i = \dfrac{q}{(p - q)(n-1)}$. Since $q < p$, this value is positive and $\sum_{i \in S} s_i \leq O\left(\frac{m}{n}\right)$ for large enough $n$. Thus, $F$ is minimized at this point, and $\E[T] \geq \left(\frac{\gamma}{|S|}\right)^{q} |S| \left( \frac{q}{(p - q)(n-1)}\right)^{-\frac{q}{p}}$. Since $q \geq 1$ and $|S| \leq s$,
\begin{align*}
\E[T] &\geq \left( \frac{\gamma}{s} \right)^{q} s \left(\frac{q}{(p - q)(n-1)}\right)^{-\frac{q}{p}}.
\end{align*}
Since $p$, $q$ and $\gamma$ are constants, $\E[T] \geq \Omega(n^{\rho_q})$,
\[ \rho_q = (1 + \rho_u)(1 - q) + \frac{q}{p} \]
which is the same expression for $\rho_q$ as in Lemma~\ref{lem:q-bigger-than-p}.
\end{proof}

\begin{lemma}
When $\rho_u = 0$ (so $s = O(n)$), 
\[ \E[T] \geq n^{\rho_q - o(1)} \]
where $\rho_q = \dfrac{2c-1}{c^2} = 1 - \sigma^2$.
\end{lemma}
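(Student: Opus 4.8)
The plan is to start from the query-time lower bound already derived in the excerpt, namely
\[
\E[T] \;\geq\; \sum_{i\in S} \gamma_i^{\,q}\, s_i^{\,q-q/p}\bigl(1+(n-1)s_i\bigr) \;+\; \gamma ,
\]
which holds for every pair $p,q\in[1,\infty)$ with $(p-1)(q-1)=\sigma^2$, where $\sigma=1-\tfrac1c$. In the two previous cases $p,q$ were constants; here, since $\rho_u=0$, extracting the exponent $\rho_q=1-\sigma^2$ requires letting $p\to 1$ and $q\to\infty$ along the constraint curve. Concretely I would take $q=1+\sigma^2\tfrac{\log n}{(\log\log n)^2}$ and $p=1+\tfrac{(\log\log n)^2}{\log n}$, so that $(p-1)(q-1)=\sigma^2$ and $q=o(\log n)$. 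Since $q(p-1)=(q-1)(p-1)+(p-1)=\sigma^2+o(1)$ and $\sigma^2<1$, a short computation gives $\alpha:=q-q/p=q(p-1)/p=\sigma^2+o(1)$ with $\sigma^2<\alpha<1$; in particular $n^{1-\alpha}=n^{\,1-\sigma^2-o(1)}$. One should note that this choice is admissible for Lemma~\ref{lem:robustexpansion}: that lemma (and Lemma~\ref{lem:robustisoperimetry} behind it, which picks its own H\"older exponents) only needs $p,q\ge 1$ with $(p-1)(q-1)=\sigma^2$, and has no $q\le 2$-type restriction.

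The obstacle introduced by $q\to\infty$ is that the naive step in~(\ref{eq:intermediate}) — bounding by $(\gamma/|S|)^q\sum_i(\cdots)$, equivalently letting the $\gamma_i^{\,q}$ contributions of low-selectivity lists survive — now loses a super-polynomial factor. I would circumvent this by discarding the low-selectivity lists up front. Because $\rho_u=0$ forces $\sum_i s_i\le C_0$ for an absolute constant $C_0$ (the expected total list length is $n\sum_i s_i\le O(s)=O(n)$), the lists with $\gamma_i<\tfrac{\gamma}{2C_0}$ contribute at most $\tfrac{\gamma}{2C_0}\sum_i s_i\le\tfrac{\gamma}{2}$ to $\sum_i s_i\gamma_i$; combined with the correctness constraint~(\ref{eq:hehe2}), the set $S':=\{\,i:\gamma_i\ge\tfrac{\gamma}{2C_0}\,\}$ satisfies $\sum_{i\in S'}s_i\gamma_i\ge\tfrac{\gamma}{2}$, hence $\sum_{i\in S'}s_i\ge\tfrac{\gamma}{2}=\Omega(1)$. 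For $i\in S'$ we have $\gamma_i^{\,q}\ge\bigl(\tfrac{\gamma}{2C_0}\bigr)^{q}=n^{-o(1)}$ since $\tfrac{\gamma}{2C_0}$ is a constant and $q=o(\log n)$, so
\[
\E[T]\;\ge\; n^{-o(1)}\sum_{i\in S'}s_i^{\,\alpha}\bigl(1+(n-1)s_i\bigr).
\]

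Finally I would close with a two-way split of $S'$ according to whether $s_i\le 1/n$ or $s_i>1/n$; one of the two parts carries at least $\tfrac{\gamma}{4}$ of the mass $\sum_{i\in S'}s_i$. On the ``small'' part, $s_i^{\,\alpha}\bigl(1+(n-1)s_i\bigr)\ge s_i^{\,\alpha}=s_i\cdot s_i^{\,\alpha-1}\ge n^{1-\alpha}s_i$ (using $\alpha-1<0$ and $s_i\le 1/n$), so the contribution is at least $n^{1-\alpha}\cdot\tfrac{\gamma}{4}=n^{\,1-\sigma^2-o(1)}$; on the ``big'' part, $s_i^{\,\alpha}\bigl(1+(n-1)s_i\bigr)\ge(n-1)s_i^{\,1+\alpha}=(n-1)s_i\cdot s_i^{\,\alpha}\ge(n-1)n^{-\alpha}s_i$ (using $\alpha>0$ and $s_i>1/n$), so the contribution is again $\ge(n-1)n^{-\alpha}\cdot\tfrac{\gamma}{4}=n^{\,1-\sigma^2-o(1)}$. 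Combining the pieces, $\E[T]\ge n^{-o(1)}\cdot n^{\,1-\sigma^2-o(1)}=n^{\rho_q-o(1)}$ with $\rho_q=1-\sigma^2=\tfrac{2c-1}{c^2}$. The step I expect to be the crux is the first one — arguing that throwing away the low-selectivity lists costs nothing — since it is precisely what makes the $q\to\infty$ regime usable and replaces the Jensen argument of the earlier cases.
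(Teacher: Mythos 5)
Your proof is correct, and it reaches the bound by a genuinely different route than the paper's. The shared core is the choice of H\"older/hypercontractivity exponents: both you and the paper push $p\to 1$ and $q\to\infty$ along $(p-1)(q-1)=\sigma^2$ with $q=o(\log n)$, so that $q-q/p=\sigma^2+o(1)$ while constants raised to the power $q$ stay $n^{-o(1)}$ (the paper takes $p=1+\tfrac{\log\log n}{\log n}$, $q=1+\sigma^2\tfrac{\log n}{\log\log n}$; your $(\log\log n)^2$ variant is immaterial). Where you diverge is in extracting the final estimate from the per-list inequality: the paper simply re-invokes the formula $\E[T]=\Omega(\gamma^q s^{1-q}n^{q/p})$ from Lemma~\ref{lem:q-bigger-than-p}, which rests on the averaging step \eqref{eq:intermediate} and on minimizing $F$ at the uniform point $s_i=O(s/(n|S|))$; you instead use the constraints \eqref{eq:hehe2} and \eqref{eq:hehe3} directly, discard the lists with $\gamma_i<\gamma/(2C_0)$ (which can cost at most half the correctness mass once $\sum_i s_i=O(1)$), and then split the surviving lists according to $s_i\lessgtr 1/n$, getting $n^{1-\alpha-o(1)}$ from either the $s_i^{\alpha}$ term or the $(n-1)s_i^{1+\alpha}$ term. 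Your route is more self-contained and arguably more robust in exactly the regime at issue: as you note, the uniform relaxation behind \eqref{eq:intermediate} is delicate when $q$ is super-constant (and, as stated, that step is an averaging bound that needs care even for constant $q$), whereas your thresholding argument only ever raises fixed constants to the power $q$. What the paper's approach buys is brevity—one line, reusing the already-derived formula—at the cost of leaning on that earlier machinery; your argument is longer but stands on its own and doubles as a cleaner justification of the same exponent $\rho_q=1-\sigma^2=\tfrac{2c-1}{c^2}$.
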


\begin{proof}
In this case, we let
\[ q = 1 + \sigma^2 \cdot \dfrac{\log n}{\log \log n} \qquad p = 1 + \dfrac{\log \log n}{\log n}. \]
Since $q > p$, we have
\[ \E[T] = \Omega(\gamma^q s^{1 - q}n^{\frac{q}{p}}) = n^{1 - \sigma^2 - o(1)}, \]
which is the desired expression. 
\end{proof}


\newcommand{\Cbb}{\mathbb{C}}

\section{Lower bounds: two-probe data structures}
\label{sec:twoProbes}

In this section, we prove the cell probe lower bound for $t=2$ cell probes
 stated in Theorem~\ref{two_probe_thm}.

We follow the framework in \cite{PTW10} and prove lower bounds for GNS when $U = V$ with measure $\mu$ (see Def.~\ref{def:gns}). We assume there is an underlying graph $G$
with vertex set $V$. For any point $p \in V$, we write $p$'s \emph{neighborhood}, $N(p)$, as the set of points with an edge incident on $p$ in $G$.

In the 2-probe GNS problem, we are given a dataset $P = \{ p_i \}_{i=1}^n
\subset V$ of $n$ points as well as a bit-string $x \in \{0,
1\}^n$. The goal is to build a data structure supporting the following types of queries: given a point $q \in V$, if there exists a unique neighbor $p_i \in
N(q) \cap P$, return $x_i$ with probability at least $\frac{2}{3}$ after making two cell-probes.

We let $D$ denote a data structure with $m$ cells of $w$ bits
each. $D$ will depend on the dataset $P$ as well as the bit-string $x$. We will prove the following theorem.

\begin{theorem}
\label{thm:2-query}
There exists a constant $\gamma > 0$ such that any non-adaptive GNS data structure holding a dataset of $n \geq 1$ points which succeeds with probability $\frac{2}{3}$ using two cell probes and $m$ cells of $w$ bits satisfies
\[ \dfrac{m \log m \cdot 2^{O(w)}}{n} \geq \Omega\left(\Phi_r\left(\frac{1}{m}, \gamma\right)\right).  \]
\end{theorem}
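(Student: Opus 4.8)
The plan is to derive Theorem~\ref{thm:2-query} from a lower bound on \emph{two-query locally-decodable codes}, via the quantum argument of Katz and Wigderson~\cite{KW2004}. Throughout I would specialize to the random (hard) instance of GNS from Section~\ref{sec:rand_inst_sec}: the dataset is $P=\{p_1,\dots,p_n\}$ with $p_i$ i.i.d.\ uniform in $V=\{-1,1\}^d$, the bit-string $x\in\{0,1\}^n$ is uniform, and a query is $q\sim N_\sigma(p_i)$ for a uniform $i$. Fixing a typical dataset $P$, one regards the memory map $x\mapsto D(P,x)\in(\{0,1\}^w)^m$ as an encoding of an $n$-bit message into an $m$-cell codeword over the alphabet $\{0,1\}^w$. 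Since $D$ is non-adaptive and succeeds with probability $\tfrac23$, for each $i$ there is a randomized procedure that reads two cells — at locations depending only on $q$ — and outputs $x_i$ with probability $\geq\tfrac23$ on average over $q\sim N_\sigma(p_i)$ and over the other coordinates of $x$. This is not a locally-decodable code in the usual worst-case, noise-robust sense; it is what one might call an \emph{LDC on average}, and the bulk of the work is to show that even such a weak object forces $m$ to be large, with exactly the dependence stated.

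The first step is to extract the combinatorial skeleton. For a cell $a\in[m]$, let $B_a\subseteq V$ be the set of queries whose (non-adaptive) probe set contains $a$, and $A_a\subseteq V$ the set of data-point locations about which cell $a$ carries information used on near-neighbor queries; the measures $\mu(A_a)$ are constrained because there are only $m$ cells of $w$ bits. Using the tight bound on the robust expansion of the Hamming space, Lemma~\ref{lem:robustexpansion}, I would argue: (i) for a typical index $i$, the query mass of $N_\sigma(p_i)$ cannot be concentrated on probe-pairs sharing a common cell — robust expansion lower-bounds $\mu(B_a)$ in terms of $\mu(A_a)$, which prevents any one cell from ``explaining'' too much of the neighborhood of $p_i$ — so a greedy extraction yields a matching $M_i$ in the complete graph $K_m$ of size $\Omega\big(\Phi_r(1/m,\gamma)\big)$ whose every edge, when read, predicts $x_i$ with constant advantage; and (ii) the family $\{M_i\}_i$ is \emph{spread}, i.e.\ no single cell lies in more than a negligible fraction of the $M_i$, again from the space bound combined with robust expansion. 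A Markov step lets us pass to a constant fraction of ``good'' indices $i$ for which the per-edge advantage is a genuine constant $\eps$.

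With the spread family of large matchings $\{M_i\}$ in hand, I would run the quantum argument of~\cite{KW2004}, adapted to the average case, to upper bound the common matching size $\mu$. In the language of codes, the $M_i$ say that the length-$m$ codeword is decodable at $\Omega(n)$ coordinates, each from a matching of relative size $\delta:=\mu/m$ with advantage $\eps$, robustly to corruption (the ``noise'' being absorbed into the fraction of cell-pairs discarded by the Markov step). The Katz–Wigderson bound — in its quantum incarnation, with the $2^{O(w)}$ factor accounting for the size-$2^w$ alphabet and the $2w$ bits read per decoding — then forces $m$ to be super-polynomial in $n$ unless $\delta=O\!\big(\tfrac{\log m\cdot 2^{O(w)}}{\eps^{2}n}\big)$, i.e.\ $\mu=O\!\big(\tfrac{m\log m\cdot 2^{O(w)}}{\eps^{2}n}\big)$. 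Combining with the lower bound $\mu=\Omega(\Phi_r(1/m,\gamma))$ from the previous step and $\eps=\Omega(1)$ yields precisely
\[
\frac{m\log m\cdot 2^{O(w)}}{n}\ \geq\ \Omega\big(\Phi_r(1/m,\gamma)\big),
\]
the statement of the theorem. The reason to invoke the original quantum proof of~\cite{KW2004} rather than the later purely classical proof of~\cite{BARW08} is that the former degrades more gracefully as the effective noise rate $\delta$ shrinks, which is exactly what keeps the bound tight here.

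The main obstacle is this last step. The Katz–Wigderson bound is stated for codes that decode correctly for \emph{every} message and \emph{every} coordinate under worst-case corruption of a constant fraction of symbols, whereas our decoder is correct only in expectation over the query distribution $N_\sigma(p_i)$ and over $x$, and only for most $i$. Re-deriving their quantum argument under these relaxed hypotheses — in particular checking that the quantum random-access/fingerprinting step survives when each edge's predictor has advantage only over a sub-population of inputs, that ``on average'' decoding can stand in for noise-robust decoding, and that no more than a $2^{O(w)}$ factor is lost to the alphabet — is the delicate part. Everything else (the robust-expansion estimates, the greedy extraction of matchings, and the Markov averaging) should be routine once Lemma~\ref{lem:robustexpansion} is available.
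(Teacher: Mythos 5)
Your proposal follows essentially the same route as the paper: on the random instance, a non-adaptive two-probe data structure is treated as an average-case two-query LDC, the robust expansion bound of Lemma~\ref{lem:robustexpansion} supplies the effective noise/corruption parameter $\Omega\left(\Phi_r\left(\frac{1}{m},\gamma\right)\right)$ via the observation that no single cell can account for too much of the query mass of $N(p_i)$, and the quantum argument of \cite{KW2004}, adapted to average-case decoding and to $w$-bit words at a $2^{O(w)}$ loss, upper-bounds that parameter by $O\!\left(\frac{m\log m\cdot 2^{O(w)}}{n}\right)$. The paper implements the smoothness step with a cell-splitting (low-contention) reduction plus the weak-shattering lemma of \cite{PTW10} rather than your matching extraction (note that $\mu(A_{t,j})\leq \frac{1}{m}$ is not automatic and requires that duplication trick), and it carries out in full the delicate average-case adaptation of \cite{KW2004} that you defer, but these are the same key ideas.
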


Theorem~\ref{two_probe_thm} will follow from Theorem~\ref{thm:2-query}
together with the robust expansion bound from
Lemma~\ref{lem:robustexpansion} for the special case of \emph{non-adaptive} probes. We will later show how to reduce adaptive algorithms losing a sub-polynomial factor in the space for $w = o(\log n)$ in Section~\ref{sec:adaptivity}. We now proceed to proving Theorem~\ref{thm:2-query}.

At a high-level, we show that a ``too-good-to-be-true'', 2-probe data structure implies a weaker notion of 2-query locally-decodable code (LDC) with small noise rate using the same amount of space\footnote{A 2-query LDC corresponds to LDCs which make two probes to their memory contents. Even though there is a slight ambiguity with the data structure notion of query, we say ``2-query LDCs'' in order to be consistent with the LDC literature.}. Even though our notion of LDC is weaker than Def.~\ref{def:ldc}, we adapt the tools for showing 2-query LDC lower bounds from \cite{KW2004}. These arguments, using quantum information theory, are very robust and work well with the weaker 2-query LDC we construct. 

We note that \cite{PTW08} was the first to suggest the connection between ANN and LDCs. This work represents the first concrete connection which gives rise to better lower bounds. 

\paragraph{Proof structure.} 
The proof of Theorem~\ref{thm:2-query} proceeds in six steps.

\begin{enumerate}
\item First we use Yao's principle to focus on deterministic non-adaptive data structures for GNS with two cell-probes. We provide distributions over $n$-point datasets $P$, as well as bit-strings $x$ and a query $q$, and assume the existence of a deterministic data structure succeeding with probability at least $\frac{2}{3}$.
\item We simplify the deterministic data structure in order to get ``low-contention'' data structures. These are data structures which do not rely on any single cell too much (similar to Def. 6.1 in \cite{PTW10}).
\item We use ideas from \cite{PTW10} to understand how queries neighboring particular dataset points probe various cells of the data structure. We fix an $n$-point dataset $P$ with a constant fraction of the points satisfying the following condition: many possible queries in the neighborhood of these points probe disjoint pairs of cells.
\item For the fixed dataset $P$, we show that we can recover a constant fraction of bits of $x$ with significant probability even if we corrupt the contents of some cells. 
\item We reduce to data structures with $1$-bit words in order to apply the LDC arguments from \cite{KW2004}.
\item Finally, we design an LDC with weaker guarantees and use the arguments in \cite{KW2004} to prove lower bounds on the space of the weak LDC.
\end{enumerate}

\subsection{Deterministic data structures}
\label{sec:det-ds}

\begin{definition}
\label{def:rand-gns-ds}
A non-adaptive randomized algorithm $R$ for the GNS problem making two cell-probes is an algorithm specified by the following two components:
\begin{enumerate} 
\item A procedure which preprocess a dataset $P = \{ p_i \}_{i=1}^n$ of $n$ points, as well as a bit-string $x \in \{0, 1\}^n$ in order to output a data structure $D \in \left(\{0, 1\}^w\right)^m$.
\item An algorithm $R$ that given a query $q$, chooses two indices $(i, j) \in [m]^2$ and specifies a function $f_q \colon \{0, 1\}^w \times \{0, 1\}^w \to \{0, 1\}$.
\end{enumerate}
We require the data structure $D$ and the algorithm $R$ satisfy
\[ \Pr_{R, D}[f_q(D_j, D_k) = x_i] \geq \frac{2}{3} \]
whenever $q \in N(p_i)$ and $p_i$ is the unique such neighbor.
\end{definition}

Note that the procedure which outputs the data structure does not depend on the query $q$, and that the algorithm $R$ does not depend on the dataset $P$ or bit-string $x$.

\begin{definition}
We define the following distributions:
\begin{itemize}
\item Let $\Pd$ be the uniform distribution supported on $n$-point datasets from $V$.
\item Let $\X$ be the uniform distribution over $\{0, 1\}^n$.
\item Let $\Q(P)$ be the distribution over queries given by first drawing a dataset point $p \in P$ uniformly at random and then drawing $q \in N(p)$ uniformly at random.
\end{itemize}
\end{definition}

\begin{lemma}
\label{lem:det-alg}
Assume $R$ is a non-adaptive randomized algorithm for GNS using two cell-probes. Then, there exists a non-adaptive deterministic algorithm $A$ for GNS using two cell-probes succeeding with probability at least $\frac{2}{3}$ when the dataset $P \sim \Pd$, the bit-string $x \sim \X$, and $q \sim \Q(P)$.
\end{lemma}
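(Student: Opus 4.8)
The plan is a routine application of the averaging principle (Yao's principle in the ``easy'' direction): a randomized procedure that works well on average against a fixed input distribution can be derandomized by hardwiring a good choice of its coins. Everything structural about the algorithm — that it is non-adaptive, that it makes exactly two probes, and the form of the decoding functions $f_q$ — is preserved under this operation, so the conclusion follows with no loss in the parameters $m, w$.

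First I would record that the guarantee assumed for $R$ implies success with probability at least $2/3$ over the \emph{joint} experiment in which $P \sim \Pd$, $x \sim \X$, $q \sim \Q(P)$, and the internal coins $\rho$ of $R$ (those used both by the preprocessing procedure that produces $D$ and by the query procedure that picks $(i,j)$ and $f_q$) are all sampled independently. For a random instance, the point $p \in P$ chosen when drawing $q \sim \Q(P)$ is, with probability $1-o(1)$, the \emph{unique} neighbor of $q$ in $P$: all other dataset points are distributed as independent uniform points and hence lie at Hamming distance $\tfrac{d}{2}(1-o(1)) \gg r$ from $q$ with high probability, exactly as in the discussion preceding Lemma~\ref{lem:robustexpansion}. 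On the $1-o(1)$ fraction of triples $(P,x,q)$ that form valid GNS instances, Definition~\ref{def:rand-gns-ds} applies verbatim; absorbing the $o(1)$ loss into the constant — it is harmless, since every subsequent step tolerates success probability $2/3 - o(1)$ just as well — we may take the joint success probability to be at least $2/3$.

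Next I would apply Fubini's theorem to write this joint probability as $\E_{\rho}\big[\Pr_{P,x,q}[\,R\text{ succeeds}\mid \rho\,]\big] \ge 2/3$, and conclude by the probabilistic method that there is a fixed coin string $\rho^\star$ with $\Pr_{P,x,q}[\,R\text{ succeeds}\mid \rho = \rho^\star\,] \ge 2/3$. Define $A$ to be $R$ with its randomness hardwired to $\rho^\star$: its preprocessing is now a deterministic map $(P,x)\mapsto D \in (\{0,1\}^w)^m$, and on a query $q$ it deterministically outputs a pair $(i,j)\in[m]^2$ together with a decoding function $f_q$. Since only the coins were fixed, $A$ still makes exactly two non-adaptive cell probes, uses the same $m$ cells of $w$ bits, and by construction succeeds with probability at least $2/3$ when $P\sim\Pd$, $x\sim\X$, $q\sim\Q(P)$.

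I do not expect a genuine obstacle here; the only point requiring any care is the one flagged above — that $\Pd\times\X\times\Q(P)$ is not entirely supported on valid GNS instances — and it is dispatched by the standard concentration of Hamming distances between independent uniform points. This lemma is purely a setup step, isolating a single deterministic data structure so that the later arguments (the low-contention reduction, the reduction to one-bit words, the construction of the weak locally-decodable code, and the quantum argument of \cite{KW2004}) may each be carried out without having to reason about internal randomness.
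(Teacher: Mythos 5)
Your proposal is correct and follows essentially the same route as the paper: an averaging (easy direction of Yao's principle) argument that fixes a good coin string for $R$ against the fixed distributions $\Pd$, $\X$, $\Q(P)$, preserving non-adaptivity, the two probes, and the cell parameters. Your extra care about the $o(1)$ fraction of triples where the near neighbor is not unique is a point the paper glosses over, and it is consistent with the $\frac{2}{3}-o(1)$ bounds used in the subsequent lemmas.
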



\begin{proof}
We apply Yao's principle to the success probability of the algorithm. By assumption, there exists a distribution over algorithms which can achieve probability of success at least $\frac{2}{3}$ for any single query. Therefore, for the fixed distributions $\Pd, \X,$ and $\Q$, there exists a deterministic algorithm achieving at least the same success probability. 
\end{proof}

In order to simplify notation, we let $A^D(q)$
denote output of the algorithm $A$. We assume that
$A(q)$ outputs a pair of indices $(j, k)$ as well as the function
$f_q \colon \{0, 1\}^w \times \{0, 1\}^w \to \{0, 1\}$, and thus, we use $A^{D}(q)$
as the output of $f_q(D_j, D_k)$. For any fixed dataset $P = \{ p_i \}_{i=1}^n$
and bit-string $x \in \{0, 1\}^n$, 
\[ \Pr_{q \sim N(p_i)}[A^D(q) = x_i] = \Pr_{q \sim N(p_i)}[f_q(D_j, D_k) = x_i]. \]
This notation allows us to succinctly state the probability of correctness when the query is a neighbor of $p_i$.

For the remainder of the section, we let $A$ denote a non-adaptive
deterministic algorithm succeeding with probability at least
$\frac{2}{3}$ using $m$ cells of width $w$. The success probability is
taken over the random choice of the dataset $P \sim \Pd$, $x \sim \X$ and $q \sim \Q(P)$. 

\subsection{Making low-contention data structures}

For any $t \in \{1, 2\}$ and $j \in [m]$, let $A_{t, j}$ be the set of queries which probe cell $j$ at the $t$-th probe of algorithm $A$. 
Since $A$ is deterministic, the indices $(i, j) \in [m]^2$ which $A$ 
outputs are completely determined by two collections $\A_1 = \{ A_{1, j} \}_{j\in [m]}$ and $\A_2 = \{ A_{2, j} \}_{j \in [m]}$ which independently
 partition the query space $V$. On query $q$, if $q \in A_{1, i}$ and $q \in A_{2, j}$,
 algorithm $A$ outputs the indices $(i,j)$. 
 
 We now define the notion of low-contention data structures, which
 requires the data structure to not rely on any one
particular cell too much by ensuring no $A_{t, j}$ is too large.

\begin{definition}
A deterministic non-adaptive algorithm $A$ using $m$ cells has {\em low contention} if every set $\mu(A_{t, j}) \leq \frac{1}{m}$ for $t \in \{1, 2\}$ and $j \in [m]$.   
\end{definition}

We now use the following lemma to argue that up to a small increase in
space, a data structure can be made low-contention. 

\begin{lemma}
\label{lem:heavycells}
Let $A$ be a deterministic non-adaptive algorithm for GNS making two cell-probes using $m$ cells. There exists a deterministic non-adaptive algorithm $A'$ for GNS making two cell-probes using $3m$ cells which has low contention and succeeds with the same probability.
\end{lemma}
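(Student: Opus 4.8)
The plan is to show that any deterministic non-adaptive two-probe algorithm $A$ can be converted into one with low contention at the cost of tripling the number of cells. First I would identify the ``heavy'' cells at each of the two probe positions: for $t\in\{1,2\}$, call a cell $j$ heavy at position $t$ if $\mu(A_{t,j})>\frac{1}{m}$. Since $\sum_j \mu(A_{t,j})=1$ for each $t$, there are at most $m$ heavy cells at each position, so at most $2m$ heavy cells in total. The idea is to leave the non-heavy cells as they are and replace each heavy cell with a small bank of duplicate cells so that the query load on each physical cell drops below $\frac{1}{m}$ (measured against the new total of at most $3m$ cells, so the threshold is even $\frac{1}{3m}\le\frac1m$, which is fine — I only need $\le \frac1m$).

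The key steps, in order: (1) Bound the count of heavy cells as above and allocate a pool of $m$ fresh cells to be distributed among them; a heavy cell $j$ at position $t$ with $\mu(A_{t,j})=\alpha_j$ gets roughly $\lceil \alpha_j m\rceil$ copies, and since $\sum_j \alpha_j \le 1$ over heavy $j$, the total number of fresh cells used is at most $m + (\text{number of heavy cells}) \le 3m$ including the original $m$. Here one must be slightly careful that a single cell can be heavy at position $1$ for some queries and at position $2$ for others; one can simply treat positions $1$ and $2$ separately, giving two disjoint pools, and the accounting still yields $\le 3m$ total. (2) Define the new partition: for a non-heavy cell keep $A_{t,j}$ intact; for a heavy cell $j$, partition $A_{t,j}$ arbitrarily into its $\lceil \alpha_j m\rceil$ copies, each receiving measure at most $\frac{1}{m}$ (possible since each piece can be made to have measure exactly $\alpha_j/\lceil\alpha_j m\rceil \le \frac1m$). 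This gives the new collections $\A'_1,\A'_2$, each a partition of $V$ into at most $3m$ parts with every part of measure $\le\frac1m$, i.e.\ low contention. (3) Specify the preprocessing for $A'$: the content of each copy of a heavy cell is just a verbatim copy of the content of the original cell $j$ in $A$'s data structure; non-heavy cells are unchanged. (4) Specify the query behavior: on query $q$, $A'$ determines which copy of which cell $q$ falls into at each position (by the refined partitions), probes those two cells, and applies the \emph{same} decision function $f_q$ that $A$ would have applied — this is well-defined because each copy holds exactly the bits that $A$ would have read. (5) Conclude that for every $(P,x,q)$, $A'$ reads the identical pair of $w$-bit strings that $A$ reads and outputs the identical bit, so the success probability is exactly preserved.

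The main obstacle — really the only subtlety — is the bookkeeping in step (1): making sure the duplication of heavy cells at \emph{both} probe positions simultaneously does not blow the cell budget past $3m$, and handling the corner case where $\alpha_j m$ is not an integer (the $\lceil\cdot\rceil$ rounding) without the total $\sum_j \lceil \alpha_j m\rceil$ exceeding $m$ plus the number of heavy cells. Using two separate pools for the two probe positions keeps this clean: each pool costs at most $m$ fresh cells (since $\sum \lceil \alpha_j m\rceil \le \sum(\alpha_j m + 1) \le m + (\#\text{heavy at that position}) \le 2m$, but one can be more careful and note we only need enough copies so each has measure $\le\frac1m$, which takes $\lceil \alpha_j m\rceil$ copies and the original already counts as one), and a short calculation shows the grand total stays at $3m$. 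Everything else — the content copying, the reuse of $f_q$, the exact preservation of the output — is essentially tautological once the new partitions are in place.
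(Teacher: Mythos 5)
Your proposal is correct and follows essentially the same route as the paper: split each heavy set $A_{t,j}$ into pieces of measure at most $\frac{1}{m}$, give each piece a verbatim copy of cell $j$'s contents (handling the two probe positions separately), reuse the same decision function $f_q$, and note the cell count stays within $3m$ while the output on every $(P,x,q)$ is unchanged. The only difference is cosmetic bookkeeping (your $\lceil \alpha_j m\rceil$-copies count versus the paper's ``at most $m$ full-measure pieces per partition plus the original $m$ cells''), and both share the same mild convention that ``low contention'' is measured against the original $m$ rather than $3m$.
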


\begin{proof}
Suppose $\mu(A_{t, j}) \geq \frac{1}{m}$ for some $j \in [m]$. We partition $A_{t, j}$ into enough sets $\{A^{(j)}_{t, k} \}_k$ of measure $\frac{1}{m}$ and at most one set with measure between $0$ and $\frac{1}{m}$. For each of set $A^{(j)}_{t, k}$, we make a new cell $j_k$ with the same contents as cell $j$. When a query lies inside $A^{(j)}_{t, k}$ the $t$-th probe is made to the new cell $j_k$ instead of cell $j$. 

We apply the above transformation on all sets with $\mu(A_{t, j}) \geq \frac{1}{m}$. In the resulting data structure, in each partition $\A_1$ and $\A_2$, there can be at most $m$ cells of measure $\frac{1}{m}$ and at most $m$ sets with measure less than $\frac{1}{m}$. Therefore, the transformed data structure has at most $3m$ cells. Since the contents remain the same, the data structure succeeds with the same probability.
\end{proof}

Given Lemma~\ref{lem:heavycells}, we assume that $A$ is a deterministic non-adaptive algorithm for GNS with two cell-probes using $m$ cells which has low contention. The extra factor of $3$ in the number of cells is absorbed in the asymptotic notation.

\subsection{Datasets which shatter}

We fix some $\gamma > 0$ to be a sufficiently small constant.

\begin{definition}[Weak-shattering \cite{PTW10}]
We say a partition $A_1, \dots, A_m$ of $V$ $(K, \gamma)$-\emph{weakly shatters a point} $p$ if
\[ \sum_{i \in [m]} \left( \mu(A_i \cap N(p)) - \frac{1}{K}\right)^+ \leq \gamma, \]
where the operator $( \cdot )^+ \colon \R \to \R^+$ is the identity on positive real numbers and zero otherwise.
\end{definition}

\begin{lemma}[Shattering \cite{PTW10}]
\label{lem:shattering}
Let $A_1, \dots, A_k$ collection of disjoint subsets of measure at most $\frac{1}{m}$. Then 
\[ \Pr_{p \sim \mu}[\text{$p$ is $(K, \gamma)$-weakly shattered}] \geq 1 - \gamma \]
for $K = \Phi_r\left(\frac{1}{m}, \frac{\gamma^2}{4}\right) \cdot \frac{\gamma^3}{16}$.
\end{lemma}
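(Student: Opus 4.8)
The plan is to prove Lemma~\ref{lem:shattering} by bounding the probability that a random point $p \sim \mu$ fails to be $(K,\gamma)$-weakly shattered, using the robust expansion directly. First I would fix the collection $A_1,\dots,A_k$ of disjoint sets, each of measure at most $\frac{1}{m}$, and for a point $p$ define its "overflow" as $\mathrm{ov}(p) = \sum_{i} \left(\mu(A_i \cap N(p)) - \frac{1}{K}\right)^+$. The goal is to show $\Pr_p[\mathrm{ov}(p) > \gamma] \leq \gamma$; by Markov it suffices to show $\E_p[\mathrm{ov}(p)]$ is small, say at most $\gamma^2$. So the heart of the argument is an upper bound on $\E_{p \sim \mu}\left[\sum_i \left(\mu(A_i \cap N(p)) - \frac{1}{K}\right)^+\right] = \sum_i \E_{p}\left[\left(\mu(A_i \cap N(p)) - \frac{1}{K}\right)^+\right]$.

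For a single index $i$, the key step is to relate $\E_p\left[(\mu(A_i \cap N(p)) - 1/K)^+\right]$ to $\mu(A_i)$ via robust expansion. The idea: let $B_i = \{p : \mu(A_i \cap N(p)) \geq \frac{1}{K}\}$ be the set of points that "see" a $\geq 1/K$ fraction of $A_i$ in their neighborhood. On one hand, the contribution to the overflow from index $i$ is at most $\sum_i \mu(A_i \cap N(p))$ restricted to $p \in B_i$, and I would bound $\E_p\left[(\mu(A_i\cap N(p)) - 1/K)^+\right] \le \E_p[\mathbf{1}[p \in B_i]\,\mu(A_i \cap N(p))]$. The quantity $\E_{p}[\mathbf{1}[p\in B_i]\,\mu(A_i\cap N(p))]$ is exactly $e(B_i \times A_i)$ up to normalization in the GNS edge distribution where both marginals are $\mu$ and edges connect $p$ to $N(p)$. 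Meanwhile, by the definition of $B_i$, every $p \in B_i$ has at least a $\frac{1}{K} / \mu(A_i) \cdot \mu(A_i) = \frac{1}{K}$-mass of its conditional neighborhood landing in $A_i$; more precisely $\frac{e(B_i \times A_i)}{e(B_i \times V)} \geq \gamma'$ for an appropriate $\gamma'$ tied to $\frac{1/K}{\mu(A_i)}$. Then robust expansion gives $\mu(A_i) \geq \frac{\mu(A_i)}{\mu(B_i)} \cdot \mu(B_i) \geq \Phi_r(\mu(B_i), \gamma') \cdot \mu(B_i)$, which I invert to bound $\mu(B_i)$ — and hence $e(B_i \times A_i) \le \mu(B_i)$ — in terms of $\mu(A_i)$. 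Summing over $i$ and using $\sum_i \mu(A_i) \le 1$ (disjointness) together with the monotonicity of $\Phi_r$ in its first argument and the specific choice $K = \Phi_r\!\left(\frac{1}{m}, \frac{\gamma^2}{4}\right)\cdot\frac{\gamma^3}{16}$, I expect the total to telescope down to $O(\gamma^2)$, giving the Markov bound.

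The main obstacle I anticipate is getting the quantifiers and the normalization constants exactly right in the robust-expansion invocation — in particular, choosing the correct value of $\gamma'$ (the fraction parameter fed to $\Phi_r$) so that the "weak" threshold $\frac{1}{K}$ in the shattering definition matches up with the threshold implicit in the definition of $\Phi_r(\delta,\gamma')$, and handling the fact that $\Phi_r$'s first argument $\delta$ must be an upper bound on $\eta(A)$ (here $\mu(B_i)$) rather than known exactly. This is why one wants to bucket the $B_i$ by the scale of $\mu(B_i)$ (dyadically), apply robust expansion within each bucket with $\delta$ equal to the bucket's upper endpoint, and then pay only a $\log m$ or $\log(1/\gamma)$ factor that gets absorbed into constants — or, more cleanly, to observe that $\Phi_r(\delta,\gamma')\cdot\delta$ is what controls things and this product behaves monotonically enough to avoid the bucketing. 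I would follow the structure of the corresponding argument in~\cite{PTW10} (their Lemma on weak shattering), adapting constants; the choice $K = \Phi_r\left(\frac{1}{m},\frac{\gamma^2}{4}\right)\cdot\frac{\gamma^3}{16}$ is exactly tuned so that, after these manipulations, both the probability bound and the overflow bound come out to be at most $\gamma$.
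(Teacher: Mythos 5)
You are proving a lemma that this paper itself does not prove: it is imported directly from \cite{PTW10}, so the comparison is with the standard argument there. Your outer skeleton does match that argument: introduce the heavy sets $B_i=\{p:\mu(A_i\cap N(p))\ge 1/K\}$, note that the overflow satisfies $\mathrm{ov}(p)\le\sum_{i:\,p\in B_i}\mu(A_i\cap N(p))$, so that $\E_{p\sim\mu}[\mathrm{ov}(p)]\le\sum_i e(A_i\times B_i)$ (with $e$ the GNS edge distribution, whose two marginals coincide with $\mu$ for the symmetric random instance), and finish with Markov. The gap is in how you bound $\sum_i e(A_i\times B_i)$, which is the entire content of the lemma.

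Your plan bounds $e(A_i\times B_i)\le\mu(B_i)$ and then tries to control $\mu(B_i)$ by invoking robust expansion with the roles reversed: the $\delta$-small set is taken to be $B_i$ (first argument $\mu(B_i)$, which is unknown), and the edge fraction is measured from $B_i$'s side, where it is only $\ge 1/K$ --- a vanishingly small fraction that has nothing to do with the $\gamma^2/4$ appearing in the statement. This route cannot be completed: first, $e(A_i\times B_i)\le\mu(B_i)$ is too lossy, since $\mu(B_i)$ can genuinely be as large as $K\,\mu(A_i)\gg\mu(A_i)$ (each heavy point need only absorb a $1/K$ fraction of its neighborhood), so no bound of the form $\mu(B_i)=O(\gamma^2)\,\mu(A_i)$ --- which is what your summation over $i$ would require --- can hold; second, the dyadic bucketing over the unknown scales of $\mu(B_i)$ would introduce losses absent from the clean constant $K=\Phi_r(1/m,\gamma^2/4)\cdot\gamma^3/16$. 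The missing idea is the elementary counting bound $\mu(B_i)\le K\,\mu(A_i)$: every $p\in B_i$ sends at least a $1/K$ fraction of its edge mass into $A_i$, while $A_i$'s total incoming edge mass is $\mu(A_i)$. Since $K<\Phi_r\left(\frac{1}{m},\frac{\gamma^2}{4}\right)$, the pair $(A_i,B_i)$ --- with $A_i$, of measure at most $1/m$, playing the role of the $\delta$-small set exactly as in the definition of $\Phi_r$ --- cannot satisfy $e(A_i\times B_i)/e(A_i\times V)\ge\gamma^2/4$; by this contrapositive use of robust expansion, $e(A_i\times B_i)<\frac{\gamma^2}{4}\,\mu(A_i)$. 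Summing over the disjoint $A_i$ gives $\E_p[\mathrm{ov}(p)]<\gamma^2/4$, and Markov yields failure probability at most $\gamma/4\le\gamma$. So: right skeleton, but the decisive step --- the $\mu(B_i)\le K\mu(A_i)$ counting bound combined with applying $\Phi_r$ in its stated orientation and in contrapositive --- is absent, and the substitute you describe would fail.
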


For the remainder of the section, we let
\[ K = \Phi_{r}\left( \frac{1}{m}, \frac{\gamma^2}{4} \right) \cdot \frac{\gamma^3}{16}. \]

We are interested in dataset points which are shattered with respect to the collections $\A_1$ and $\A_2$. Intuitively, queries which are near-neighbors of these dataset points probe various disjoint cells in the data structure, so their corresponding bit is stored in many cells. 

\begin{definition}
\label{def:slack}
Let $p \in V$ be a dataset point which is $(K, \gamma)$-weakly shattered by $\A_1$ and $\A_2$. Let $\beta_1, \beta_2 \subset N(p)$ be arbitrary subsets where each $j \in [m]$ satisfies
\[ \mu(A_{1, j} \cap N(p) \setminus \beta_1) \leq \frac{1}{K} \]
and
\[ \mu(A_{2, j} \cap N(p) \setminus \beta_2) \leq \frac{1}{K} \]
Since $p$ is $(K, \gamma)$-weakly shattered, we can pick $\beta_1$ and $\beta_2$ with measure at most $\gamma$ each. We will refer to $\beta(p) = \beta_1 \cup \beta_2$.
\end{definition}

For a fixed dataset point $p \in P$, we refer to $\beta(p)$ as the set holding the \emph{slack} in the shattering of measure at most $2\gamma$. For a given collection $\A$, let $S(\A, p)$ be
 the event that the
collection $\A$ $(K, \gamma)$-weakly shatters $p$. Note that
Lemma~\ref{lem:shattering} implies that $\Pr_{p \sim \mu}[S(\A, p)] \geq 1 - \gamma$. 

\begin{lemma}
\label{lem:gooddbshattered}
With high probability over the choice of $n$-point dataset, at most $4\gamma n$ points do not satisfy $S(\A_1, p)$ and $S(\A_2, p)$. 
\end{lemma}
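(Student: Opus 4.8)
The plan is to establish Lemma~\ref{lem:gooddbshattered} by a routine union bound combined with a concentration argument over the random choice of dataset. First I would recall that Lemma~\ref{lem:shattering}, applied separately to the collections $\A_1$ and $\A_2$ (each of which consists of disjoint subsets of measure at most $\frac{1}{m}$), gives $\Pr_{p \sim \mu}[S(\A_1, p)] \geq 1 - \gamma$ and $\Pr_{p \sim \mu}[S(\A_2, p)] \geq 1 - \gamma$. By a union bound over the two failure events, for a single point $p \sim \mu$ we have $\Pr_{p \sim \mu}[\lnot S(\A_1, p) \text{ or } \lnot S(\A_2, p)] \leq 2\gamma$. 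Call such a $p$ \emph{bad}.

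Next I would pass from a single random point to the whole dataset $P = \{p_1, \dots, p_n\} \sim \Pd$, whose points are drawn i.i.d.\ from $\mu$. Let $Z = \sum_{i=1}^n \mathbf{1}[p_i \text{ is bad}]$ be the number of bad dataset points. Then $\E[Z] \leq 2\gamma n$, and since $Z$ is a sum of $n$ i.i.d.\ indicator variables, a standard Chernoff/Hoeffding bound gives $\Pr[Z \geq 4\gamma n] \leq \Pr[Z \geq 2\E[Z]] \leq \exp(-\Omega(\gamma n))$, which is $o(1)$ for any constant $\gamma > 0$ as $n \to \infty$. This yields exactly the claimed statement: with high probability over the choice of the $n$-point dataset, at most $4\gamma n$ points fail to satisfy both $S(\A_1, p)$ and $S(\A_2, p)$.

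One subtlety worth flagging is that the collections $\A_1, \A_2$ are fixed (they are determined by the deterministic algorithm $A$, which does not depend on the dataset), so the ``bad'' event for each $p_i$ depends only on $p_i$ and not on the other dataset points; this independence is what licenses the Chernoff bound. Had the partitions depended on $P$ this step would break, but in our setup $A$ is fixed before $P$ is drawn, so there is no circularity. The main obstacle here is essentially bookkeeping rather than mathematics: making sure the constant $2\gamma$ from the two-fold union bound is comfortably below the target $4\gamma n$ so that the Chernoff slack is a genuine constant factor, and confirming that $\gamma$ being a small absolute constant (as fixed at the start of the subsection) makes $\exp(-\Omega(\gamma n)) = o(1)$. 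No finer argument is needed.
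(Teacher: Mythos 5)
Your proof is correct and follows essentially the same route as the paper: bound the per-point failure probability by $2\gamma$ via Lemma~\ref{lem:shattering} and a union bound over $\A_1$ and $\A_2$, conclude $\E[Z]\leq 2\gamma n$, and apply a Chernoff bound to get failure probability $\exp(-\Omega(\gamma n))$. Your remark about $\A_1,\A_2$ being fixed independently of the dataset is a correct (and worthwhile) explicit justification of the independence the paper uses implicitly.
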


\begin{proof}
The expected number of points $p$ which do not satisfy $S(\A_1, p)$ and $S(\A_2, p)$ is at most $2 \gamma n$. Therefore via a Chernoff bound, the probability that more than $4\gamma n$ points do not satisfy $S(\A_1, p)$ and $S(\A_2, p)$ is at most $\exp\left(-\frac{2\gamma n}{3}\right)$.
\end{proof}

We call a dataset \emph{good} if there are at most $4\gamma n$ dataset points which are not $(K, \gamma)$-weakly shattered by $\A_1$ and $\A_2$.

\begin{lemma}
\label{lem:good-db}
There exists a good dataset $P = \{ p_i \}_{i=1}^n$ where
\[ \Pr_{x \sim \X, q \sim \Q(P)}[A^D(q) = x_i] \geq \frac{2}{3} - o(1) \]
\end{lemma}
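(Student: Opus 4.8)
The plan is a two-step averaging argument. For a fixed dataset $P = \{p_i\}_{i=1}^n$ I will write $\mathrm{succ}(P) = \Pr_{x \sim \X,\ q \sim \Q(P)}[A^D(q) = x_i]$, where $i$ denotes the index of the (with high probability unique) dataset point with $q \in N(p_i)$ and $D$ is the data structure that $A$ builds on $(P,x)$. Lemma~\ref{lem:det-alg}, combined with the contention reduction of Lemma~\ref{lem:heavycells} (which does not change the success probability), guarantees $\E_{P \sim \Pd}[\mathrm{succ}(P)] \geq \tfrac23 - o(1)$; the $o(1)$ here absorbs the exponentially small probability that a query drawn from the random instance has a non-unique near neighbor among the $n$ random points, so that the GNS success probability and $\mathrm{succ}(P)$ differ by only $o(1)$ on average.

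Next I will use the fact that almost all datasets are good. Let $\mathcal G$ be the collection of good datasets, i.e.\ those in which at most $4\gamma n$ points fail to be $(K,\gamma)$-weakly shattered by both $\A_1$ and $\A_2$. Lemma~\ref{lem:gooddbshattered} gives $\Pr_{P \sim \Pd}[P \notin \mathcal G] \leq \exp(-2\gamma n/3) = o(1)$, using that $\gamma$ is an absolute constant. Splitting the expectation over $\mathcal G$ and its complement and bounding $\mathrm{succ}(P) \leq 1$ trivially,
\begin{align*}
\tfrac23 - o(1) \ &\leq\ \E_{P}[\mathrm{succ}(P)] \\
&\leq\ \Bigl(\max_{P \in \mathcal G} \mathrm{succ}(P)\Bigr)\Pr[P \in \mathcal G] + \Pr[P \notin \mathcal G] \\
&\leq\ \max_{P \in \mathcal G}\mathrm{succ}(P) + o(1).
\end{align*}
Since the ambient space $V$ is finite this maximum is attained, so there is a good dataset $P$ with $\mathrm{succ}(P) \geq \tfrac23 - o(1)$, which is exactly the statement. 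Equivalently, one could argue by contradiction: if every good $P$ had $\mathrm{succ}(P) < \tfrac23 - \epsilon$ for a constant $\epsilon > 0$, the two displayed inequalities would force $\E_P[\mathrm{succ}(P)] < \tfrac23 - o(1)$, contradicting the guarantee from Lemma~\ref{lem:det-alg}.

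I do not expect a genuine obstacle in this step — it is essentially bookkeeping. The only point requiring care is that every $o(1)$ above is independent of the data-structure parameters $m$ and $w$: this needs $\gamma$ to be a fixed absolute constant so that the Chernoff bound $\exp(-\Omega(\gamma n))$ is negligible, and it needs the loss from non-uniqueness of the near neighbor in the random instance (already controlled in the robust-expansion section) to be $o(1)$. Both are in place, so combining them yields the lemma.
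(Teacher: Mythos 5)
Your proposal is correct and follows essentially the same argument as the paper: an averaging over $P \sim \Pd$ of the success probability, combined with the fact from Lemma~\ref{lem:gooddbshattered} that all but an $o(1)$ fraction of datasets are good, to conclude some good dataset attains success probability at least $\tfrac{2}{3} - o(1)$. The only cosmetic difference is that the paper phrases the split via conditional expectations while you bound the bad event's contribution by $1$ and take a maximum over good datasets; the extra care you take about non-uniqueness of the near neighbor and the $w$-independence of the $o(1)$ terms is consistent with the paper's setup.
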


\begin{proof}
 For any fixed dataset $P = \{ p_i \}_{i=1}^n$, let
  $$
  \textbf{P} = \Pr_{x \sim \X, q \sim Q(p)}[A^D(q) = x_i].
  $$ 
  Then,
\begin{align*}
\frac{2}{3} &\leq \mathop{\E}_{P \sim \Pd}[ \textbf{P} ] \\
		&= (1 - o(1)) \cdot \mathop{\E}_{P \sim \Pd} [ \textbf{P} \mid \text{ $P$ is good}] + o(1) \cdot \mathop{\E}_{P\sim \Pd}[\textbf{P} \mid \text{ $P$ is not good}] \\
\frac{2}{3} - o(1) &\leq (1 - o(1)) \cdot \mathop{\E}_{P \sim \Pd}[ \textbf{P} \mid \text{$P$ is good}].
\end{align*}
Therefore, there exists a dataset which is not shattered by at most $4\gamma n$ and $\Pr_{x \sim \X, q \sim \Q(P)}[A^D(y) = x_i] \geq \frac{2}{3} - o(1)$. 
\end{proof}

\subsection{Corrupting some cell contents of shattered points}

In the rest of the proof, we fix the dataset $P = \{ p_i \}_{i=1}^n$ satisfying the
conditions of Lemma~\ref{lem:good-db}, i.e., $P$ satisfies
\begin{align}
\label{eq:condP}
 \Pr_{x \sim \X, q \sim \Q(P)}[ A^D(q) = x_i ] \geq \frac{2}{3} - o(1). 
\end{align}

We now introduce the notion of \emph{corruption} of the data structure cells
$D$, which parallels the notion of noise in locally-decodable codes.
Remember that, after fixing some bit-string $x$, the algorithm $A$
produces some data structure $D \in \left(\{0, 1\}^w\right)^m$. 

\begin{definition}
We call $D' \in \left( \{0, 1\}^w \right)^m$ a \emph{corrupted} version of $D$ at $k$ cells if
$D$ and $D'$ differ on at most $k$ cells, i.e., if $|\{ i
\in [m] : D_i \neq D'_i\}| \leq k$. 
\end{definition}
  
\begin{definition}
For a fixed $x \in \{0,1\}^n$, let 
\begin{align}
\label{eq:c-and-prob} 
c_x(i) = \Pr_{q \sim N(p_i)}[A^D(q) = x_i] 
\end{align}
denote the \emph{recovery probability of bit $i$}.
Note that from the definitions of $\Q(P)$, $\E[c_x(i)] \geq \frac{2}{3} - o(1)$, where the expectation is taken over $x \sim \X$ and $i \in [n]$.
\end{definition}

In this section, we show there exist a subset $S \subset [n]$ of size 
$\Omega(n)$ where each $i \in S$ has constant recovery probability averaged over $x \sim \X$, 
even if the algorithm probes a corrupted version of data structure. We let $\eps > 0$
be a sufficiently small constant.

\begin{lemma}
\label{lem:corrupted}
Fix a vector $x \in \{0, 1\}^n$, and let $D \in \left( \{0, 1\}^w\right)^m$
be the data structure that algorithm $A$ produces on dataset $P$ and
bit-string $x$. Let $D'$ be a corruption of $D$ at $\eps K$ cells. For
every $i\in[n]$ where events $S(\A_1, p_i)$ and $S(\A_2, p_i)$
occur, 
\[ \Pr_{q \sim N(p_i)}[ A^{D'}(q) = x_i] \geq c_x(i) - 2\gamma - 2\eps. \]
\end{lemma}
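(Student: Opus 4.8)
The plan is to bound, for each dataset point $p_i$ whose neighborhood is weakly shattered by both $\A_1$ and $\A_2$, the amount by which the corruption of $\eps K$ cells can damage the recovery probability. The key observation is that a query $q\sim N(p_i)$ only fails (relative to the uncorrupted case) if at least one of the two cells it probes is among the $\eps K$ corrupted ones, \emph{or} if $q$ falls into the ``slack'' set $\beta(p_i)$ where the shattering guarantee does not hold. So I would write
\[
\Pr_{q\sim N(p_i)}[A^{D'}(q)=x_i] \;\ge\; \Pr_{q\sim N(p_i)}[A^{D}(q)=x_i] \;-\; \Pr_{q\sim N(p_i)}[\text{probed cell corrupted}] \;-\; \Pr_{q\sim N(p_i)}[q\in\beta(p_i)],
\]
since outside $\beta(p_i)$, if neither probed cell is corrupted, then $A^{D'}(q)=A^{D}(q)$, and when $A^{D}(q)=x_i$ we also get $A^{D'}(q)=x_i$. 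The middle term is the one to control via the shattering property.

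First I would invoke Definition~\ref{def:slack}: since $S(\A_1,p_i)$ and $S(\A_2,p_i)$ both occur, there are sets $\beta_1,\beta_2\subseteq N(p_i)$ of measure at most $\gamma$ each such that for every cell $j$, the measure of $q\in N(p_i)\setminus\beta_1$ probing $j$ at the first probe is at most $1/K$, and similarly for the second probe and $\beta_2$. Hence $\Pr_{q\sim N(p_i)}[q\in\beta(p_i)]\le \mu(\beta_1)+\mu(\beta_2)\big/\mu(N(p_i))$; after normalizing by the (uniform) measure of $N(p_i)$ this contributes the $2\gamma$ term. For the corruption term: restricting to queries $q\in N(p_i)\setminus\beta(p_i)$, the probability (over $q$ drawn uniformly from $N(p_i)$) that the first probe lands in any \emph{fixed} set of $\eps K$ cells is at most $\eps K\cdot \tfrac{1}{K}=\eps$, because each such cell is hit with probability at most $1/K$ (in the normalized neighborhood measure); the same holds for the second probe, so a union bound over the two probes gives at most $2\eps$. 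Adding the two error terms gives the claimed $c_x(i)-2\gamma-2\eps$.

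The one technical subtlety — and the step I expect to require the most care — is bookkeeping the normalization. The shattering definition is phrased in terms of the ambient measure $\mu$ on $V$, whereas $c_x(i)$ and the quantity we want to bound are probabilities over $q$ drawn uniformly from $N(p_i)$. So I would be careful to state the shattering bounds already in the conditional measure $\mu(\cdot\mid N(p_i))$ (equivalently absorb the factor $1/\mu(N(p_i))$ into the constant $K$, or note that $N(p_i)$ has the same measure for all $p_i$ in this vertex-transitive instance so the normalization is a harmless global constant). Once the measures are matched up, the argument is the clean union bound sketched above, and no further estimates are needed; the corruption at $\eps K$ cells simply cannot reach more than a $2\eps$ fraction of the queries neighboring a shattered point, which is exactly the robustness property an LDC-style argument will later exploit.
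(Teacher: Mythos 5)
Your proposal is correct and follows essentially the same route as the paper: remove the slack set $\beta(p_i)$ (losing at most $2\gamma$), then use the shattering bound $\mu(A_{t,j}\cap N(p_i)\setminus\beta_t)\le \tfrac{1}{K}$ to show the $\eps K$ corrupted cells can affect at most a $2\eps$ fraction of the remaining queries, on which $A^{D'}(q)=A^{D}(q)=x_i$. The normalization point you flag is handled the same way in the paper (the shattering/slack bounds are applied directly in the neighborhood measure), so no further work is needed.
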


\begin{proof}
Note that $c_x(i)$ represents the probability that algorithm $A$ run on a uniformly chosen query
from the neighborhood of $p_i$ returns the correct answer, i.e. $x_i$. We denote 
the subset $C_{1} \subset N(p)$ of queries that when run on $A$ return $x_i$; so, $\mu(C_{1}) = c_x(i)$ by definition.

By assumption, $p_i$ is $(K, \gamma)$-weakly shattered by
$\A_1$ and $\A_2$, so by Def.~\ref{def:slack}, we specify some $\beta(p) \subset N(p)$ where $\mu(C_{1} \cap \beta(p)) \leq \mu(\beta(p)) \leq 2\gamma$. 
Let $C_{2} = C_{1} \setminus \beta(p)$, where $\mu(C_2) \geq c_i(x) - 2\gamma$. 
Again, by assumption that $p_i$ is $(K, \gamma)$-weakly shattered, 
each $j \in [m]$ and $t \in \{1, 2\}$ satisfy 
$\mu(C_2 \cap A_{t, j}) \leq \frac{1}{K}$. Let $\Delta \subset [m]$ 
be the set of $\eps K$ cells where $D$ and $D'$ differ, and let
$C_3 \subset C_2$ be given by
\[ C_3 = C_2 \setminus \left(\bigcup_{j \in \Delta} (A_{1, j} \cup A_{2, j}) \right). \]
Thus,
\[ \mu(C_3) \geq \mu(C_2) - \sum_{j \in \Delta} \left(\mu(C_2 \cap A_{1, j}) + \mu(C_2 \cap A_{2, j})\right) \geq c_i(x) - 2\gamma - 2\eps. \]
If $q \in C_3$, then on query $q$, algorithm $A$ probes cells outside of $\Delta$, so $A^{D'}(q) = A^{D}(q) = x_i$. 
\end{proof}

\begin{lemma}
\label{lem:prob-c-x}
There exists a set $S \subset [n]$ of size $\Omega(n)$ with the following property. If $i \in S$, then events $S(\A_1, p_i)$ and $S(\A_2, p_i)$ occur, and
\[ \mathop{\E}_{x \sim \X}[c_x(i)] \geq \frac{1}{2} + \nu, \]
where $\nu$ is a constant. \footnote{One can think of $\nu$ as around $\frac{1}{10}$.}
\end{lemma}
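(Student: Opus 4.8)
The plan is a one-line averaging argument layered on top of the dataset $P$ fixed in Lemma~\ref{lem:good-db}. For that $P$, write $g(i) = \E_{x \sim \X}[c_x(i)] \in [0,1]$ for $i \in [n]$. Unpacking the definition of $\Q(P)$ — pick an index $i$ uniform in $[n]$, then draw $q \sim N(p_i)$ — the conclusion of Lemma~\ref{lem:good-db} says precisely that $\E_{i \sim [n]}[g(i)] \geq \frac{2}{3} - o(1)$.

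First I would apply Markov's inequality to the nonnegative quantity $1 - g(i)$: for the threshold $\alpha = \frac{1}{2} - \nu$ with $\nu = \frac{1}{10}$ (so $\alpha = \frac{2}{5}$), one gets $\Pr_{i}[\, g(i) < \tfrac{1}{2} + \nu \,] = \Pr_i[\, 1 - g(i) > \alpha \,] \leq (\tfrac{1}{3} + o(1))/\alpha = \tfrac{5}{6} + o(1) < 1$. Hence $S_0 := \{ i \in [n] : g(i) \geq \tfrac{1}{2} + \nu \}$ has size at least $(\tfrac{1}{6} - o(1))\, n$.

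Next I would intersect $S_0$ with the shattered indices. Since $P$ is \emph{good} — which is exactly the property guaranteed in Lemma~\ref{lem:good-db}, via Lemma~\ref{lem:gooddbshattered} — at most $4\gamma n$ dataset points fail $S(\A_1, p_i)$ or $S(\A_2, p_i)$; let $S_1$ denote the complementary set, so $|S_1| \geq (1 - 4\gamma)n$. Then $S := S_0 \cap S_1$ obeys $|S| \geq |S_0| - 4\gamma n \geq (\tfrac{1}{6} - 4\gamma - o(1))\, n$, which is $\Omega(n)$ provided the absolute constant $\gamma$ is small enough (e.g. $\gamma < \tfrac{1}{48}$). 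Every $i \in S$ then satisfies both $S(\A_1, p_i)$ and $S(\A_2, p_i)$ and has $\E_{x \sim \X}[c_x(i)] = g(i) \geq \tfrac{1}{2} + \nu$, which is the claim.

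There is no genuinely hard step here. The two things to be careful about are (i) rewriting the probability in Lemma~\ref{lem:good-db} as the double expectation $\E_i \E_x[c_x(i)]$ using the definition of $\Q(P)$, and (ii) pinning down the constant hierarchy so that $\nu$ and $\gamma$ are mutually compatible: the Markov step leaves a slack of $\tfrac{1}{6}$ in the fraction of good indices, and $\gamma$ must be chosen small relative to that slack — consistent with the global ``sufficiently small constant $\gamma$'' convention already in force in this section, and with the downstream LDC argument, which wants $\gamma$ small regardless.
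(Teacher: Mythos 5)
Your proof is correct and is essentially the paper's argument: the paper performs the same averaging in one step, splitting the expectation $\E_{x,i}[c_x(i)] \geq \tfrac{2}{3}-o(1)$ into the $\leq 4\gamma n$ non-shattered indices, the indices in $S$, and the rest, yielding $\tfrac{2}{3}-o(1) \leq 4\gamma + \Pr_i[E_i] + (\tfrac12+\nu)(1-\Pr_i[E_i])$, whereas you apply Markov to $1-g(i)$ and then subtract the non-shattered indices — the same computation, only reorganized. The constants you obtain are compatible with the paper's constant hierarchy, so no gap.
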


\begin{proof}
For $i \in [n]$, let $E_i$ be the event that $S(\A_1, p_i)$ and $S(\A_2, p_i)$ occur and $\E_{x \sim \X}[c_x(i)] \geq \frac{1}{2} + \nu$. Additionally, let
\[ \textbf{P} = \Pr_{i \in [n]}\left[ E_i \right]. \]
We set $S = \left\{ i \in [n] \mid E_i \right\}$, so it remains to show that $\textbf{P} = \Omega(1)$. 
To this end, 
\begin{align*}
\frac{2}{3} - o(1) &\leq \mathop{\E}_{x \sim \X, i \in [n]}[c_x(i)]  & \text{(by Equations~\ref{eq:condP} and \ref{eq:c-and-prob})}\\
			 &\leq 4\gamma  + \textbf{P} + \left(\frac{1}{2} + \nu\right) \cdot (1 - \textbf{P}) & \text{(since $P$ is good)}\\
\frac{1}{6} - o(1) - 4\gamma - \nu &\leq \textbf{P} \cdot \left( \frac{1}{2} - \nu \right).
\end{align*}
\end{proof}

Fix the set $S \subset [n]$ satisfying the conditions of Lemma~\ref{lem:prob-c-x}.
We combine Lemma~\ref{lem:corrupted} and Lemma~\ref{lem:prob-c-x} to obtain the following condition on the dataset. 

\begin{lemma}
\label{lem:prob-corrupt}
Whenever $i \in S$, 
\[ \mathop{\E}_{x \sim \X}\left[ \Pr_{q \sim N(p_i)}[A^{D'}(q) = x_i] \right] \geq \frac{1}{2} + \eta \]
where $\eta = \nu - 2\gamma - 2\eps$ and $D'$ differs from $D$ in $\eps K$ cells. 
\end{lemma}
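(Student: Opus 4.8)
The plan is to obtain Lemma~\ref{lem:prob-corrupt} as an immediate averaging consequence of the two preceding lemmas, with essentially no new work. First I would fix an index $i \in S$. By the definition of $S$ furnished by Lemma~\ref{lem:prob-c-x}, this membership buys two things simultaneously: the shattering events $S(\A_1, p_i)$ and $S(\A_2, p_i)$ both occur, and the averaged recovery probability satisfies $\mathop{\E}_{x \sim \X}[c_x(i)] \geq \frac12 + \nu$. These are exactly the hypotheses under which Lemma~\ref{lem:corrupted} applies.

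Next I would invoke Lemma~\ref{lem:corrupted} \emph{pointwise in} $x$. For each fixed $x \in \{0,1\}^n$, writing $D = D(x)$ for the data structure $A$ produces on $P$ and $x$, and letting $D'$ be any corruption of $D$ at $\eps K$ cells, the shattering of $p_i$ established above gives
\[
\Pr_{q \sim N(p_i)}[A^{D'}(q) = x_i] \;\geq\; c_x(i) - 2\gamma - 2\eps .
\]
The point to emphasize is that this inequality is uniform in the bit-string $x$ and in the choice of admissible corruption $D'$, so it survives taking expectations; in particular $D'$ may depend on $x$. Averaging over $x \sim \X$ and using linearity of expectation together with Lemma~\ref{lem:prob-c-x},
\[
\mathop{\E}_{x \sim \X}\!\left[\Pr_{q \sim N(p_i)}[A^{D'}(q) = x_i]\right]
\;\geq\; \mathop{\E}_{x \sim \X}[c_x(i)] - 2\gamma - 2\eps
\;\geq\; \frac12 + \nu - 2\gamma - 2\eps
\;=\; \frac12 + \eta ,
\]
which is the claimed bound.

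There is no genuine obstacle at this step: the real content was already extracted in Lemma~\ref{lem:corrupted} (corruption-robustness coming from weak shattering, via the passage $C_1 \supseteq C_2 \supseteq C_3$ that separately accounts for the $2\gamma$ slack in $\beta(p_i)$ and the $2\eps$ slack from the $\eps K$ corrupted cells) and in Lemma~\ref{lem:prob-c-x} (existence of a linear-sized set of well-recovered, shattered indices); the present statement merely packages these two facts. The only care needed is bookkeeping — checking that the $2\gamma$ and $2\eps$ losses are added and not double-counted, and that the expectation over $x$ can be pushed through because the corruption bound holds for every $x$ — all of which is routine.
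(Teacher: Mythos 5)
Your proposal is correct and follows the paper's own proof exactly: fix $i \in S$, use Lemma~\ref{lem:prob-c-x} to get the shattering events and $\mathop{\E}_{x \sim \X}[c_x(i)] \geq \frac{1}{2} + \nu$, apply Lemma~\ref{lem:corrupted} pointwise in $x$, and average. The added remark that the corruption bound is uniform in $x$ (so $D'$ may depend on $x$) is a fair clarification of the same argument, not a different route.
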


\begin{proof}
Whenever $i \in S$, $p_i$ is $(K, \gamma)$-weakly shattered. By Lemma~\ref{lem:prob-c-x}, $A$ outputs $x_i$ with probability $\frac{1}{2} + \nu$ on average when probing the data structure $D$ on input $q \sim N(p_i)$, i.e
\[ \mathop{\E}_{x \sim \X} \left[ \Pr_{q \sim N(p_i)} [A^D(q) = x_i] \right] \geq \frac{1}{2} + \nu. \]
Therefore, from Lemma~\ref{lem:corrupted}, if $A$ probes $D'$ which is a corruption of $D$ in any $\eps K$ cells, $A$ will recover $x_i$ with probability at least $\frac{1}{2} + \nu - 2\gamma - 2\eps$ averaged over all $x \sim \X$ where $q \sim N(p_i)$. In other words,
\[ \mathop{\E}_{x \sim \X}\left[ \Pr_{q \sim N(p_i)}[A^{D'}(q) = x_i] \right] \geq \frac{1}{2} + \nu - 2\gamma - 2\eps. \]
\end{proof}

Summarizing the results of the section, we conclude with the following theorem.

\begin{theorem}
\label{thm:nns-ds}
There exists a two-probe algorithm and a subset $S \subseteq [n]$ of size
$\Omega(n)$, satisfying the following property. When $i \in S$, we can recover $x_i$ with probability at least $\frac{1}{2} + \eta$ over a random choice of $x \sim \X$, even if we probe a corrupted version of the data structure at $\eps K$ cells.
\end{theorem}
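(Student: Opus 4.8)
The plan is to simply assemble the lemmas established earlier in this section; Theorem~\ref{thm:nns-ds} is a packaging statement rather than a new argument. We begin with an arbitrary non-adaptive randomized two-probe data structure for GNS using $m$ cells of $w$ bits that is correct with probability $2/3$ on every query. Applying Yao's principle (Lemma~\ref{lem:det-alg}), I fix a deterministic non-adaptive two-probe algorithm $A$ that succeeds with probability at least $2/3$ over the random choice of $P \sim \Pd$, $x \sim \X$, and $q \sim \Q(P)$; by Lemma~\ref{lem:heavycells} we may moreover assume $A$ is low-contention, at the cost of replacing $m$ by $3m$ (absorbed into the asymptotics). Lemma~\ref{lem:good-db} then supplies a single \emph{good} dataset $P = \{p_i\}_{i=1}^n$ — one in which all but $4\gamma n$ points are $(K,\gamma)$-weakly shattered by both $\A_1$ and $\A_2$ — for which $\Pr_{x \sim \X,\, q \sim \Q(P)}[A^D(q) = x_i] \geq \tfrac{2}{3} - o(1)$, where $D$ is the data structure $A$ produces on $(P,x)$.

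Next I would invoke Lemma~\ref{lem:prob-c-x} on this fixed dataset to obtain the set $S \subseteq [n]$ with $|S| = \Omega(n)$ such that for every $i \in S$ the events $S(\A_1, p_i)$ and $S(\A_2, p_i)$ both occur and $\E_{x \sim \X}[c_x(i)] \geq \tfrac{1}{2} + \nu$. The claimed two-probe recovery algorithm is the natural one: given an index $i \in S$ together with (possibly corrupted) access to the cells, sample a uniformly random query $q \sim N(p_i)$, let $(j,k)$ and $f_q$ be the pair of cell indices and the decision function that $A$ outputs on query $q$, read those two cells of the stored data structure $D'$, and return $f_q$ of their contents. This is by construction a two-probe procedure, and its randomness is precisely the choice of $q$ (together with the random $x$ referred to in the statement).

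The recovery guarantee is then exactly Lemma~\ref{lem:prob-corrupt}: for $i \in S$ and any $D'$ obtained from $D$ by corrupting at most $\eps K$ cells, $\E_{x \sim \X}\big[\Pr_{q \sim N(p_i)}[A^{D'}(q) = x_i]\big] \geq \tfrac{1}{2} + \eta$ with $\eta = \nu - 2\gamma - 2\eps$, which follows by combining the robustness-under-corruption bound of Lemma~\ref{lem:corrupted} (valid because $p_i$ is weakly shattered by $\A_1$ and $\A_2$) with the averaged recovery bound $\E_x[c_x(i)] \geq \tfrac{1}{2} + \nu$. Taking the shattering slack $\gamma$ (already fixed small in Lemma~\ref{lem:shattering}) and the corruption fraction $\eps$ sufficiently small relative to $\nu$ makes $\eta$ a positive constant, which is all that is asserted.

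There is no real obstacle here: the only thing to watch is parameter discipline. One must keep the corruption budget tied to $\eps K$ with $K = \Phi_r\!\left(\tfrac{1}{m}, \tfrac{\gamma^2}{4}\right)\cdot \tfrac{\gamma^3}{16}$, since this is the quantity that will be played against the size $m$ of the data structure when Theorem~\ref{thm:nns-ds} is converted into an ``LDC on average'' and fed into the argument of \cite{KW2004}; and one must verify that $\gamma, \eps \ll \nu$ keeps $\eta > 0$ while $|S| = \Omega(n)$ survives all the constant-factor losses accumulated through Lemmas~\ref{lem:good-db} and~\ref{lem:prob-c-x}.
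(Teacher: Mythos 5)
Your proposal is correct and follows essentially the same route as the paper: the recovery procedure is exactly the one used there (sample $q \sim N(p_i)$ and probe the two cells specified by $A$), and the guarantee is read off from Lemma~\ref{lem:prob-corrupt}, which already packages Lemmas~\ref{lem:corrupted} and~\ref{lem:prob-c-x} on the good dataset fixed via Lemma~\ref{lem:good-db}. Your added remarks on keeping $\gamma,\eps \ll \nu$ so that $\eta = \nu - 2\gamma - 2\eps$ stays a positive constant match the paper's parameter choices.
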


\begin{proof}
We describe how one can recover bit $x_i$ from a data structure generated by algorithm $A$.
In order to recover $x_i$, we generate a random query $q \sim N(p_i)$ and probe the data structure at the cells specified by $A$. From Lemma~\ref{lem:prob-corrupt}, there exists a set $S \subset [n]$ of size $\Omega(n)$ for which the described algorithm recovers $x_i$ with probability at least $\frac{1}{2} + \eta$, where the probability is taken on average over all possible $x \in \{0, 1\}^n$. 
\end{proof}

Since we fixed the dataset $P = \{ p_i \}_{i=1}^n$ satisfying the conditions of
Lemma~\ref{lem:good-db}, we will abuse a bit of notation, and refer to algorithm $A$
as the algorithm which recovers bits of $x$ described in Theorem~\ref{thm:nns-ds}.
We say that $x \in \{0, 1\}^n$ is an input to
algorithm $A$ in order to initialize the data structure with dataset $P = \{p_i \}_{i=1}^n$ and $x_i$ is the bit associated with
$p_i$.

\subsection{Decreasing the word size}

In order to apply the lower bounds of 2-query locally-decodable codes, we reduce to the case when the word size $w$ is one bit.

\begin{lemma}
\label{lem:decrease-w}
There exists a deterministic non-adaptive algorithm $A'$ which on input $x \in \{0, 1\}^n$ builds a data structure $D'$ using $m\cdot 2^w$ cells of $1$ bit. For any $i \in S$ as well as any corruption $C$ which differs from $D'$ in at most $\eps K$ cells satisfies
\[ \mathop{\E}_{x \in \{0, 1\}^n}\left[ \Pr_{q \sim N(p_i)} [A'^{C}(q) = x_i] \right] \geq \frac{1}{2} + \frac{\eta}{2^{2w}}. \]
\end{lemma}

\begin{proof}
Given algorithm $A$ which constructs the data structure $D \in \left(\{0, 1\}^w\right)^m$ on input $x \in \{0, 1\}^n$, construct the following data structure $D' \in \left( \{0, 1\} \right)^{m \cdot 2^w}$. For each cell $D_j \in \{0, 1\}^w$, make $2^w$ cells containing all parities of the $w$ bits in $D_j$. This procedure increases the size of the data structure by a factor of $2^w$. 

Fix $i \in S$ and $q \in N(p_i)$ be a query. If the algorithm $A$ produces a function $f_q :\{0, 1\}^w \times \{0, 1\}^w \to \{0, 1\}$ which succeeds with probability at least $\frac{1}{2} + \zeta$ over $x \in \{0, 1\}^n$, then there exists a signed parity on some input bits which equals $f_q$ in at least $\frac{1}{2} + \frac{\zeta}{2^{2w}}$ inputs $x \in \{0, 1\}^n$. Let $S_j$ be the parity of the bits of cell $j$ and $S_k$ be the parity of the bits of cell $k$. Let $f_q': \{0, 1\} \times \{0, 1\} \to \{0, 1\}$ denote the parity or the negation of the parity which equals $f_q$ on $\frac{1}{2} + \frac{\zeta}{2^{2w}}$ possible input strings $x \in \{0, 1\}^n$. 

Algorithm $A'$ will evaluate $f_{q}'$ at the cell containing the parity of the $S_j$ bits in cell $j$ and the parity of $S_k$ bits in cell $k$. Let $I_{S_j}, I_{S_k} \in [m \cdot 2^w]$ be the indices of these cells. If $C'$ is a sequence of $m \cdot 2^w$ cells which differ in $\eps K$ many cells from $D'$, then 
\[ \mathop{\E}_{x \in \{0, 1\}^n} \left[ \Pr_{q \sim N(p_i)} [f_q'(C_{I_{S_j}}, C_{I_{S_k}}) = x_i] \right] \geq \frac{1}{2} + \frac{\eta}{2^{2w}} \]
whenever $i \in S$. 
\end{proof}

For the remainder of the section, we will prove a version of
Theorem~\ref{thm:2-query} for algorithms with $1$-bit words. Given
Lemma~\ref{lem:decrease-w}, we will modify the space to $m \cdot
2^{w}$ and the probability to $\frac{1}{2} + \frac{\eta}{2^{2w}}$ to
obtain the answer. So for the remainder of the section, assume
algorithm $A$ has $1$ bit words.

\subsection{Connection to locally-decodable codes}
\label{sec:conn-ldc}

To complete the proof of Theorem~\ref{thm:2-query}, it remains to prove the
following lemma.

\begin{lemma}
\label{lem:2-query-1-bit}
Let $A$ be a non-adaptive deterministic algorithm which makes $2$ cell
probes to a data structure $D$ of $m$ cells of $1$ bit and recover $x_i$ with probability
$\frac{1}{2} + \eta$ on random input $x \in \{0, 1\}^n$ even after $\eps K$ cells are
corrupted whenever $i\in S$ 
for some fixed $S$ of size $\Omega(n)$. Then the following must hold:
\[ \dfrac{m \log m}{n} \geq \Omega\left(\eps K\eta^2 \right).  \]
\end{lemma}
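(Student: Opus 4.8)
The plan is to recognize the object produced by Theorem~\ref{thm:nns-ds} as a weak, \emph{average-case} two-query locally-decodable code, and then push through an adaptation of the Kerenidis--de Wolf lower bound \cite{KW2004}. Concretely, set $\delta := \eps K / m$ and regard the data structure $D \in \{0,1\}^m$ as a codeword encoding the message $x \in \{0,1\}^n$. For every $i \in S$, algorithm $A$ reads two cells non-adaptively, applies a $2$-bit predicate, and outputs $x_i$ with probability at least $\tfrac12 + \eta$ over $x \sim \X$ and $q \sim N(p_i)$, and this holds even if an adversary flips $\eps K = \delta m$ of the cells. Thus $(D,A)$ behaves like a $(2,\delta,\eta)$-LDC, except that (i) decoding is only required for the $|S| = \Omega(n)$ coordinates in $S$, (ii) correctness is only guaranteed on average over the message $x$, and (iii) the predicate is an arbitrary function of two bits rather than a fixed parity. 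The goal then reduces to showing $m \ge 2^{\Omega(\delta \eta^2 n)}$, since this gives $\log m = \Omega(\delta\eta^2 n) = \Omega(\eps K \eta^2 n / m)$, which rearranges to the claimed $\tfrac{m\log m}{n} \ge \Omega(\eps K \eta^2)$.

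The first step is to put the decoder into \emph{matching normal form}, following the matching lemma for two-query codes in \cite{KW2004}. Fix $i\in S$. Since each cell absorbs at most a $\tfrac1K$-fraction of $N(p_i)$ once the slack $\beta(p_i)$ is discarded (by the weak-shattering Lemmas~\ref{lem:shattering}--\ref{lem:prob-corrupt}), no set of $o(K)$ cells can carry a constant fraction of the correct-decoding mass; combined with the robustness of Theorem~\ref{thm:nns-ds} to $\eps K$ corrupted cells, this yields a \emph{matching} $M_i$ of $\Omega(\eps K)$ pairwise-disjoint pairs of cells such that, for a $\tfrac12 + \Omega(\eta)$ fraction of pairs $(j,k)\in M_i$, reading cells $j$ and $k$ correctly determines $x_i$, on average over $x\sim\X$. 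Moreover, as in \cite{KW2004}, the $2$-bit predicate attached to each such pair may be replaced by a signed parity of $D_j\oplus D_k$ at the cost of only a constant factor in the advantage; pairs whose predicate depends on a single cell would yield an even stronger one-query bound and can be set aside.

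The second step is the quantum argument of \cite{KW2004}, adapted to the average case. Encode the message as the $\lceil\log m\rceil$-qubit state $|\phi_x\rangle = \tfrac{1}{\sqrt m}\sum_{j\in[m]}(-1)^{D_j}|j\rangle$. To predict $x_i$, measure in an orthonormal basis built from $\{\tfrac{1}{\sqrt2}(|j\rangle \pm |k\rangle) : (j,k)\in M_i\}$ (completed arbitrarily): the outcome lands inside $M_i$ with probability $\Theta(|M_i|/m)=\Theta(\delta)$, and conditioned on that it reveals $D_j\oplus D_k$ for a uniformly random pair of $M_i$, hence equals $x_i$ with probability $\tfrac12+\Omega(\eta)$; guessing otherwise gives overall bias $\Omega(\delta\eta)$ --- matching the KW analysis with their accuracy parameter replaced by $\eta$, and, exactly as in \cite{KW2004}, a more careful measurement recovers the dependence that is linear (rather than quadratic) in $\delta$. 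The final information-theoretic step, Nayak's quantum random access code bound, already holds in expectation over the encoded string: if the $\lceil\log m\rceil$-qubit state allows predicting each of the $\Omega(n)$ bits $x_i$, $i\in S$, with the stated average bias, then by Fano together with Holevo's bound the mutual information between the random message and the state is $\Omega(|S|\,\delta\eta^2)$, so $\lceil\log m\rceil = \Omega(\delta\eta^2 n)$, i.e.\ $m \ge 2^{\Omega(\delta\eta^2 n)}$, as needed.

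The step I expect to be the main obstacle is precisely the average-case adaptation. A textbook LDC decodes \emph{every} message, and \cite{KW2004} fixes $x$ before arguing, whereas here correctness is only on average over $x\sim\X$ and the $\eps K$ corruptions may be chosen as a function of $x$. Making this rigorous requires: (a) performing the matching extraction uniformly enough that $M_i$ survives the worst-case corruption pattern simultaneously for all $x$; (b) verifying that the success probability of the quantum measurement equals the expectation over $x$ of the per-message success probability, so that the $\tfrac12+\Omega(\delta\eta)$ bias is legitimately an average; and (c) invoking the expectation form of Nayak's bound rather than its worst-case version. This is exactly the adaptation of \cite{KW2004} that we carry out below, and it is also why the original \emph{quantum} argument of \cite{KW2004} (with its better noise-rate dependence) is needed rather than the later classical proof of \cite{BARW08}.
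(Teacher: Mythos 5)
Your overall strategy is the same as the paper's: treat the object from Theorem~\ref{thm:nns-ds} as a weak, average-case two-query LDC and adapt the quantum argument of \cite{KW2004} (one-quantum-query simulation, noise handling, quantum random access code, Nayak/Fano in expectation). However, as written your final accounting has a genuine gap. You encode $x$ into a \emph{single} $\lceil\log m\rceil$-qubit state and obtain a predictor whose advantage is only $\Omega(\delta\eta)$ (since the measurement ``catches'' the matching $M_i$ with probability $\Theta(\delta)$), and then you claim per-bit mutual information $\Omega(\delta\eta^2)$. Fano applied to a bias-$\Omega(\delta\eta)$ predictor gives only $1-H(\tfrac12+\Omega(\delta\eta))=\Omega(\delta^2\eta^2)$ per bit, i.e.\ $\log m=\Omega(n\delta^2\eta^2)$, which is polynomially weaker than the claimed bound. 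The linear dependence on $\delta=\eps K/m$ in \cite{KW2004} (and in the paper) does not come from ``a more careful measurement'' on one copy: it comes from taking $r=\Theta(1/\delta)$ copies of the state, so that the good observation is made with constant probability and each bit is decoded with bias $\Omega(\eta)$; Nayak's bound is then applied to the $r(\log m+1)$-qubit encoding, giving $\tfrac{\log m}{\delta}\gtrsim \eta^2 n$ and hence $\tfrac{m\log m}{n}\geq\Omega(\eps K\eta^2)$. This is exactly Lemma~\ref{lem:quantum-alg} with $r=\tfrac{2}{\delta a^2}$, and without this repetition step your conclusion does not follow from the argument you give.

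Secondarily, your intermediate machinery differs from the paper's, and the part you yourself flag as the main obstacle is left unresolved. You propose a Katz--Trevisan-style matching normal form plus a reduction of the arbitrary two-bit predicates $f_q$ to signed parities, and you worry (correctly) about extracting a matching whose guarantees survive corruptions that may depend on $x$, with correctness only on average over $x\sim\X$. The paper avoids both issues: the two classical probes with an \emph{arbitrary} predicate are simulated by one quantum query directly (Lemma~\ref{lem:q-simul}, via Lemma~1 of \cite{KW2004}), and the noise tolerance is exploited not through a matching but by pruning the at most $\eps K$ heavy amplitudes of $\ket{Q_i}$ --- a set depending only on $i$, not on $x$ --- and averaging the two success inequalities obtained by corrupting those cells in the two possible ways (Lemma~\ref{lem:prob-gap}). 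Your matching route can plausibly be made to work, but you would need to carry out the average-case matching extraction and the parity/dictator case split rigorously, whereas the paper's pruning argument sidesteps precisely these difficulties; and in either case the repetition step above is indispensable for the stated bound.
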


The proof of the lemma uses \cite{KW2004} and relies heavily on
notions from quantum computing. In particular, quantum information
theory applied to LDC lower bounds.

\subsubsection{Crash course in quantum computing}

We introduce a few concepts from quantum computing that are necessary
in our subsequent arguments. The quantum state of a \emph{qubit} is described by a unit-length vector in
$\Cbb^2$. We write the quantum state as a linear combination of the basis states
$(^1_0) = \ket{0}$ and $(^0_1) = \ket{1}$. The quantum state $\alpha =
(^{\alpha_1}_{\alpha_2})$ can be written
\[ \ket{\alpha} = \alpha_1 \ket{0} + \alpha_2\ket{1}\]
where we refer to $\alpha_1$ and $\alpha_2$ as \emph{amplitudes} and $|\alpha_1|^2 + |\alpha_2|^2 = 1$. The quantum state of an $m$-\emph{qubit system} is a unit vector in the tensor product $\Cbb^2 \otimes \dots \otimes \Cbb^2$ of dimension $2^{m}$. The basis states correspond to all $2^m$ bit-strings of length $m$. For $j \in [2^m]$, we write $\ket{j}$ as the basis state $\ket{j_1} \otimes \ket{j_2} \otimes \dots \otimes \ket{j_m}$ where $j = j_1j_2\dots j_m$ is the binary representation of $j$. We will write the $m$-qubit \emph{quantum state} $\ket{\phi}$ as unit-vector given by linear combination over all $2^m$ basis states. So $\ket{\phi} = \sum_{j \in [2^m]} \phi_j \ket{j}$. As a shorthand, $\bra{\phi}$ corresponds to the conjugate transpose of a quantum state.

A \emph{mixed state} $\{ p_i, \ket{\phi_i} \}$ is a probability distribution over quantum states. In this case, we the quantum system is in state $\ket{\phi_i}$ with probability $p_i$. We represent mixed states by a density matrix $\sum p_i \ket{\phi_i} \bra{\phi_i}$. 

A measurement is given by a family of Hermitian positive semi-definite operators
which sum to the identity operator. Given a quantum state $\ket{\phi}$
and a measurement corresponding to the family of operators $\{M_i^{*}
M_i\}_{i}$, the measurement yields outcome $i$ with probability $\|
M_i \ket{\phi}\|^2$ and results in state $\frac{M_i
  \ket{\phi}}{\|M_i\ket{\phi}\|}$, where the norm $\| \cdot \|$ is
the $\ell_2$ norm. We say the measurement makes the \emph{observation}
$M_i$.

Finally, a quantum algorithm makes a query to some bit-string $y \in
\{0, 1\}^m$ by starting with the state $\ket{c}\ket{j}$ and returning
$(-1)^{c \cdot y_j} \ket{c}\ket{j}$. One can think of $c$ as the control qubit taking values $0$ or $1$; if $c = 0$, the state remains unchanged by the query, and if $c = 1$ the state receives a $(-1)^{y_j}$ in its amplitude. The queries may be made in
superposition to a state, so the state $\sum_{c \in \{0, 1\}, j \in
  [m]} \alpha_{cj} \ket{c}\ket{j}$ becomes $\sum_{c \in \{0, 1\}, j
  \in [m]} (-1)^{c \cdot y_j}\alpha_{cj} \ket{c}\ket{j}$.

\subsubsection{Weak quantum random access codes from GNS algorithms}

\begin{definition}
$C:\{0, 1\}^n \to \{0, 1\}^m$ is a $(2, \delta, \eta)$-LDC if there exists a randomized decoding algorithm making at most $2$ queries to an $m$-bit string $y$ non-adaptively, and for all $x \in \{0, 1\}^n$, $i \in [n]$, and $y \in \{0, 1\}^m$ where $d(y, C(x)) \leq \delta m$, the algorithm can recover $x_i$ from the two queries to $y$ with probability at least $\frac{1}{2} + \eta$. 
\end{definition}

In their paper, \cite{KW2004} prove the following result about 2-query LDCs. 

\begin{theorem}[Theorem 4 in \cite{KW2004}]
\label{thm:qldc}
If $C : \{0, 1\}^n \to \{0, 1\}^m$ is a $(2, \delta, \eta)$-LDC, then $m \geq 2^{\Omega(\delta \eta^2 n)}$. 
\end{theorem}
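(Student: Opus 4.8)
The plan is to reconstruct the Kerenidis--de Wolf ``quantum argument'', whose three moving parts are: a combinatorial reduction of the LDC to a structured form, a conversion of the $2$-query classical decoder into a one-query \emph{quantum} decoder packaged as a quantum random access code on $\lceil\log m\rceil$ qubits, and Nayak's information-theoretic lower bound for such codes.

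First I would run the Katz--Trevisan smoothing reduction. The point is that $\delta$-robustness prevents the decoder from concentrating on a few coordinates: for every $i\in[n]$ one extracts a matching $M_i$ of $\Omega(\delta m)$ pairwise-disjoint pairs $\{j,k\}\subseteq[m]$, each with a reconstruction function $g_{ijk}\colon\{0,1\}^2\to\{0,1\}$ such that, averaging over uniform $x$ and uniform $\{j,k\}\in M_i$, $\Pr[g_{ijk}(C(x)_j,C(x)_k)=x_i]\ge\tfrac12+\Omega(\eta)$. Next I would observe that the $g_{ijk}$ may be taken of XOR/XNOR type, i.e.\ $g_{ijk}(a,b)=c_{ijk}\oplus a\oplus b$: pairs on which $g_{ijk}$ is constant carry no advantage and are discarded, while if a constant fraction of the surviving pairs had $g_{ijk}$ depending on a single coordinate, restricting to those coordinates would yield a one-query \emph{classical} LDC, which by Katz--Trevisan already forces $m=2^{\Omega(n)}$; hence we may assume the XOR-type pairs still form a matching of size $\Omega(\delta m)$ for each $i$.

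Second I would set up the quantum random access code. Encode $x$ as the $\lceil\log m\rceil$-qubit state $\ket{\psi_x}=\tfrac1{\sqrt m}\sum_{j=1}^m(-1)^{C(x)_j}\ket j$. To recover $x_i$, apply the projective measurement $\{\Pi_{\{j,k\}}:\{j,k\}\in M_i\}\cup\{I-\sum_{\{j,k\}}\Pi_{\{j,k\}}\}$, where $\Pi_{\{j,k\}}$ projects onto $\operatorname{span}(\ket j,\ket k)$; because the pairs are disjoint, with probability $2|M_i|/m=\Omega(\delta)$ the outcome is some $\{j,k\}\in M_i$, leaving the state $\tfrac1{\sqrt2}\bigl((-1)^{C(x)_j}\ket j+(-1)^{C(x)_k}\ket k\bigr)$, so a further measurement in the basis $\{\tfrac{\ket j+\ket k}{\sqrt2},\tfrac{\ket j-\ket k}{\sqrt2}\}$ reveals $C(x)_j\oplus C(x)_k$ exactly --- this is the ``quantum'' saving, a single measurement reading a parity of two coordinates --- and outputting $g_{ijk}$ gives $x_i$ with the matching's advantage; on the complementary outcome one outputs a fair coin. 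This recovers each $x_i$ with a bias $\eta_i$ that the Kerenidis--de Wolf accounting controls, and the last ingredient is Nayak's bound: if $n$ uniform bits are encoded into $\ell$ qubits so that bit $i$ is recoverable with probability $\tfrac12+\eta_i$, then $\ell\ge\sum_{i}\bigl(1-H(\tfrac12+\eta_i)\bigr)$, where $H$ is the binary entropy; since $1-H(\tfrac12+x)=\Theta(x^2)$ this is $\Omega\bigl(\sum_i\eta_i^2\bigr)$. With $\ell=\lceil\log m\rceil$ this yields $\log m=\Omega(\delta\eta^2 n)$, i.e.\ $m\ge 2^{\Omega(\delta\eta^2 n)}$.

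The main obstacle is pushing the exponent all the way to $\delta\eta^2 n$, in particular making the dependence on the noise rate $\delta$ \emph{linear}. A crude version of the above --- project onto the $M_i$-pairs and guess on failure --- only yields a random access code with per-bit bias $\Theta(\delta\eta^2)$, which through Nayak gives the much weaker $m\ge 2^{\Omega(\delta^2\eta^4 n)}$; extracting the stated bound requires the refined Katz--Trevisan step and the careful way Kerenidis--de Wolf handle the measurement mass that ``misses'' $M_i$, and this is precisely where the quantum proof outperforms the later classical proof of \cite{BARW08}. A secondary, routine subtlety is the case analysis on the reconstruction functions (disposing of constant and dictator pairs without destroying the matching), and --- in the use made of this theorem in the excerpt --- the fact that one has first passed to one-bit words, so that the alphabet-size factor $2^{O(w)}$ in Kerenidis--de Wolf is only a constant.
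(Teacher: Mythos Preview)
Your outline of the Kerenidis--de Wolf argument is accurate in its broad strokes --- Katz--Trevisan smoothing to disjoint matchings, the phase-encoded state $\ket{\psi_x}=m^{-1/2}\sum_j(-1)^{C(x)_j}\ket j$, projection onto matching pairs followed by a Hadamard-type measurement to read the parity, and Nayak's bound --- and this is exactly the structure the paper sketches after stating the theorem. But you have not actually closed the gap you yourself flag as the ``main obstacle'', and your proposed encoding on $\ell=\lceil\log m\rceil$ qubits cannot close it.

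With a single copy of $\ket{\psi_x}$, the decoding procedure you describe hits the matching $M_i$ with probability $2|M_i|/m=\Theta(\delta)$ and, conditional on that, has advantage $\Theta(\eta)$; on a miss you output a fair coin. Hence the per-bit bias is $\eta_i=\Theta(\delta\eta)$ (not $\Theta(\delta\eta^2)$ as you wrote), and Nayak gives only $\log m\ge\Omega(n\,\delta^2\eta^2)$, one factor of $\delta$ short. No amount of ``refined Katz--Trevisan'' repairs this: the loss is in the quantum step, namely the $1-\Theta(\delta)$ probability that the projective measurement lands outside $M_i$.

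The fix, which the paper actually carries out in detail when it adapts the argument to its own average-case setting (see Lemma~\ref{lem:quantum-alg} and its proof), is to encode $x$ not by a single copy of $\ket{\psi_x}$ but by $r=\Theta(1/\delta)$ independent copies, an $r(\lceil\log m\rceil+1)$-qubit state. One applies the matching-projection to each copy in turn; with $r=\Theta(1/\delta)$ copies the probability of \emph{ever} landing inside $M_i$ is $\Omega(1)$, after which the parity is read exactly and the advantage is $\Omega(\eta)$. Nayak then yields
\[
\frac{\log m}{\delta}\;\ge\;\Omega(\eta^2 n),
\]
i.e.\ $\log m\ge\Omega(\delta\eta^2 n)$ as claimed. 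So the missing idea is simply ``use $\Theta(1/\delta)$ copies'' and move the $1/\delta$ from the bias into the qubit count; everything else in your write-up is fine.
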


The proof of Theorem~\ref{thm:qldc} proceeds as follows.
They show how to construct a $1$-query quantum-LDC from a classical
$2$-query LDC. From a $1$-query quantum-LDC, \cite{KW2004} constructs
a quantum random access code which encodes $n$-bit strings in $O(\log
m)$ qubits. Then they apply a quantum information theory lower bound
due to Nayak \cite{N1999}:

\begin{theorem}[Theorem 2 stated in \cite{KW2004} from Nayak \cite{N1999}]
\label{thm:nayak}
For any encoding $x \to \rho_x$ of $n$-bit strings into $m$-qubit
states, such that a quantum algorithm, given query access to $\rho_x$,
can decode any fixed $x_i$ with probability at least $1/2+\eta$, it must
hold that $m \geq (1 - H(1/2+\eta)) n$.
\end{theorem}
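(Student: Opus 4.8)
The plan is to prove the standard quantum random access code lower bound via quantum information theory; the attribution is to Nayak but the information-theoretic packaging below is by now textbook. Fix the hypothesised encoding $x \mapsto \rho_x$ on $m$ qubits, and let $X = X_1\cdots X_n$ be uniform on $\{0,1\}^n$. Consider the classical--quantum state $\rho_{XQ} = 2^{-n}\sum_x \ket{x}\bra{x}\otimes\rho_x$, with $Q$ the $m$-qubit register, and write $\rho = 2^{-n}\sum_x \rho_x$ for its reduction on $Q$. The first ingredient is that $Q$ carries only $m$ qubits, so $S(Q) = S(\rho) \le m$; since for a cq-state $S(XQ) = H(X) + \E_x S(\rho_x)$, the quantum mutual information is the Holevo quantity $I(X:Q) = S(\rho) - \E_x S(\rho_x)$, whence $I(X:Q) \le S(\rho) \le m$.

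Next I would lower-bound $I(X:Q)$ by $n\bigl(1 - H(\tfrac12+\eta)\bigr)$, using independence of the bits together with decodability. By the chain rule, $I(X:Q) = \sum_{i=1}^n I(X_i : Q \mid X_{<i})$, and since $X_i$ is independent of $X_{<i}$ we have $I(X_i : Q\mid X_{<i}) = I(X_i : X_{<i}Q) \ge I(X_i : Q)$, the inequality being monotonicity of mutual information under discarding the $X_{<i}$ register (a form of strong subadditivity); so $I(X:Q) \ge \sum_i I(X_i:Q)$. For each $i$ the decoder for bit $i$ is a measurement of $Q$ producing a guess $A_i$, and the data-processing inequality gives $I(X_i:Q) \ge I(X_i:A_i)$. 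Since that decoder is correct with probability at least $\tfrac12+\eta$ on a uniform input (which follows from correctness on every fixed $x$), the binary Fano inequality yields $H(X_i\mid A_i) \le H(\tfrac12-\eta) = H(\tfrac12+\eta)$, using symmetry of binary entropy about $\tfrac12$ and its monotonicity on $[0,\tfrac12]$; hence $I(X_i:A_i) = 1 - H(X_i\mid A_i) \ge 1 - H(\tfrac12+\eta)$. Chaining everything, $m \ge S(\rho) \ge I(X:Q) \ge \sum_i I(X_i:Q) \ge \sum_i I(X_i:A_i) \ge n\bigl(1 - H(\tfrac12+\eta)\bigr)$, which is exactly the claim.

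An essentially equivalent route, closer to Nayak's original induction, avoids the chain rule: writing $\rho^b = \E_{x:x_1=b}\rho_x$, recoverability of $x_1$ plus the Holevo bound give $S(\tfrac{\rho^0+\rho^1}{2}) - \tfrac12 S(\rho^0) - \tfrac12 S(\rho^1) \ge 1 - H(\tfrac12+\eta)$, and each $\rho^b$ is itself an $m$-qubit encoding of the remaining $n-1$ bits on which the same per-bit decoders still work, so induction gives $m - \E_x S(\rho_x) \ge n\bigl(1-H(\tfrac12+\eta)\bigr)$, and non-negativity of von Neumann entropy finishes it. There is no real obstacle here beyond invoking the quantum-information toolkit correctly --- Holevo/subadditivity for $I(X:Q)\le m$, strong subadditivity/data processing for the superadditivity $I(X:Q)\ge\sum_i I(X_i:Q)$, and quantum Fano --- and the one point that deserves care is granularity of the hypothesis: one needs, for each $i$ separately, the per-bit decoder to succeed with probability $\ge\tfrac12+\eta$ on a uniformly random input (equivalently on every input, which is how ``decode any fixed $x_i$'' should be read), so that Fano applies to the pair $(X_i,A_i)$ with $X_i$ uniform, and in the recursive version so that correctness survives conditioning on a prefix of $x$.
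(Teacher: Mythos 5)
The paper does not prove this statement: it is imported as a black box from Nayak (via Kerenidis--de Wolf), where it is established by essentially the argument you give. Your write-up is correct --- the Holevo bound $I(X:Q)\le S(\rho)\le m$, superadditivity of mutual information over the independent bits via the chain rule, data processing for the decoding measurement, and binary Fano --- and the inductive variant you sketch is Nayak's original formulation; the only point needing care is the granularity caveat you already flag, namely that the per-bit success probability must hold for every fixed $x$ (hence survives averaging and conditioning on a prefix), which is exactly how the hypothesis is meant to be read.
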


Our proof will follow a pattern similar to the proof of Theorem
\ref{thm:qldc}. We assume the existence of a GNS algorithm $A$ which
builds a data structure $D \in \{0, 1\}^m$. 

Our algorithm $A$ from Theorem~\ref{thm:nns-ds} does not satisfy the
strong properties of an LDC, preventing us from applying
\ref{thm:qldc} directly. However, it does have some LDC-\emph{ish}
guarantees. In particular, we can recover bits in the presence of $\eps K$ corruptions to
$D$. In the LDC language, this means that we can tolerate a noise
rate of $\delta = \frac{\eps K}{m}$. Additionally, we
cannot necessarily recover \emph{every} coordinate $x_i$, but we can
recover $x_i$ for $i \in S$, where $|S| = \Omega(n)$. Also, our
success probability is $\frac{1}{2} + \eta$ over the random choice of
$i \in S$ and the random choice of the bit-string $x \in \{0,
1\}^n$. Our proof follows by adapting the arguments of \cite{KW2004}
to this weaker setting.

\begin{lemma}
\label{lem:quantum-alg}
Let $r = \frac{2}{\delta a^2}$ where $\delta = \dfrac{\eps K}{m}$ and
$a\le1$ is a constant. Let $D$ be the data structure from
above (i.e., satisfying the hypothesis of Lemma~\ref{lem:2-query-1-bit}). Then there exists a quantum algorithm that, starting from the
$r(\log m + 1)$-qubit state with $r$ copies of $\ket{U(x)}$, where
\[ \ket{U(x)} = \frac{1}{\sqrt{2m}}\sum_{c \in \{0, 1\}, j \in [m]} (-1)^{c \cdot D_j} \ket{c}\ket{j} \]
can recover $x_i$ for any $i \in S$ with probability $\frac{1}{2} +
\Omega(\eta)$ (over a random choice of $x$).  
\end{lemma}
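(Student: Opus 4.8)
The plan is to adapt the reduction of Kerenidis--de Wolf~\cite{KW2004} from a classical two-query decodable code to a one-query \emph{quantum} decoder, but working throughout with the average-case and ``smooth'' (shattering) guarantees we already have rather than with a genuine LDC. Fix an index $i \in S$. By Theorem~\ref{thm:nns-ds} the classical algorithm $A$, on a uniformly random query $q \sim N(p_i)$, probes an ordered pair of cells $(j_q, k_q) \in [m]^2$ and outputs $f_q(D_{j_q}, D_{k_q})$ for some Boolean function $f_q \colon \{0,1\}^2 \to \{0,1\}$, and this equals $x_i$ with probability $\tfrac12 + \eta$ on average over $x$, even after an adversary corrupts $\eps K = \delta m$ cells (this robustness is exactly Lemma~\ref{lem:prob-corrupt}). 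We must reproduce this given only $r = 2/(\delta a^2)$ independent copies of $\ket{U(x)}$.

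First I would classify $f_q$: every Boolean function on two bits is either constant, a dictator, or the parity (possibly negated) of both inputs. Partition $N(p_i)$ accordingly. For the constant part no query to $\ket{U(x)}$ is needed; the dictator and parity parts are handled by the same measurement mechanism described below, restricted to one cell or to two cells respectively. One checks that restricting attention to whichever part carries a constant fraction of the bias costs only a constant factor, and this is where the constant $a$ in the statement gets absorbed.

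The core step is ``quantum query by measurement''. The key identity is that, projected onto $\mathrm{span}\{\ket{1}\ket{j_q},\ket{1}\ket{k_q}\}$, the state $\ket{U(x)}$ is exactly $\tfrac{1}{\sqrt{m}}$ times $\tfrac1{\sqrt2}\bigl((-1)^{D_{j_q}}\ket{1}\ket{j_q} + (-1)^{D_{k_q}}\ket{1}\ket{k_q}\bigr)$, so a measurement in the Fourier basis $\{\tfrac1{\sqrt2}(\ket1\ket{j_q}\pm\ket1\ket{k_q})\}$ of that plane reveals $D_{j_q}\oplus D_{k_q}$ whenever the projection lands there, which has probability $1/m$ (and an analogous one-cell ``$j=j_q$'' block, proportional to $\ket{0}\ket{j_q} + (-1)^{D_{j_q}}\ket{1}\ket{j_q}$, yields $D_{j_q}$ for the dictator case). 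To beat the trivial $1/m$, I would bundle all relevant queries $q\in N(p_i)$ into a \emph{single} POVM whose elements are these rank-one projectors, weighted by the query probabilities and rescaled by $\Theta(K)$: since $p_i$ is $(K,\gamma)$-weakly shattered by both $\A_1$ and $\A_2$, no cell carries more than a $1/K$ fraction of the query mass (up to the slack sets $\beta(p_i)$), so the rescaled collection is sub-normalized and hence a legal incomplete POVM. Applied to one copy of $\ket{U(x)}$, it returns an informative outcome ``$(q,\mathrm{sign})$'' with probability $\Omega(K/m)=\Omega(\delta)$, with $q$ distributed close to $N(p_i)$; running it independently on each of the $r = 2/(\delta a^2)$ copies, with constant probability some copy is informative, and we then output $f_q$ of the recovered value. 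The $\delta m = \eps K$ corrupted cells meet only an $O(\eps)$ fraction of $N(p_i)$ by shattering, so after averaging over $x$ the correctness degrades only to $\tfrac12+\Omega(\eta)$ via Lemma~\ref{lem:prob-corrupt}, which is the claim.

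The main obstacle is making the bundled-POVM step fully rigorous: verifying sub-normalization using the two separate shattering partitions together with the slack sets $\beta(p_i)$, bounding the total-variation distance between the induced distribution on $q$ and the true $N(p_i)$ (so that the average-case correctness of $A$ genuinely transfers), and ensuring that the ``first informative outcome among $r$ copies'' amplification does not wash out the constant bias $\eta$ --- e.g.\ by a first-informative/majority rule with a union bound. A secondary nuisance is the bookkeeping over the three function types so that $a$ can be pinned down independently of $n$. None of these is conceptually deep, but together they are where the work lies.
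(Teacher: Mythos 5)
Your proposal is sound in outline and reaches the same quantitative conclusion, but it takes a genuinely different route through the KW-style machinery than the paper does. The paper's proof is modular: it first converts the classical two-probe algorithm into a one-query quantum decoder with query state $\ket{Q_i}$ (Lemma~\ref{lem:q-simul}, black-boxing Lemma~1 of \cite{KW2004}), then truncates the amplitudes exceeding $1/\sqrt{\eps K}$ and uses the \emph{corruption tolerance} --- there are at most $\eps K$ heavy components, so flipping their cells is an allowed corruption --- to show the measurement bias survives truncation (Lemma~\ref{lem:prob-gap}); it recovers $x_i$ from the truncated state $\tfrac1a\ket{A_i(x)}$ (Lemma~\ref{lem:q-alg-ai}), and finally extracts that state from each copy of $\ket{U(x)}$ by the measurement $M_i$ with per-copy probability $\delta a^2/2$. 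You instead collapse the simulation and extraction into a single shattering-weighted POVM acting directly on $\ket{U(x)}$, which samples a pair (query, parity/dictator bit) with per-copy probability $\Theta(K/m)=\Theta(\delta/\eps)$ and then finishes classically by evaluating $f_q$. This is legitimate: the projection identity you state is correct, the sub-normalization follows from the two shattering bounds (with the slack sets $\beta(p_i)$ excluded and a constant rescaling), the conditional query distribution given an informative outcome is uniform across the dictator/parity classes, the constant class has zero bias over random $x$, and the informative-probability-times-conditional-bias product gives $\Omega(\eta)$ exactly as in the paper's final amplification calculation. What the paper's route buys is that it uses \emph{only} the corruption-tolerance interface, which is precisely the hypothesis of Lemma~\ref{lem:2-query-1-bit} and is what survives the word-size reduction of Lemma~\ref{lem:decrease-w}; your route needs the low-contention/shattering structure of the \emph{transformed} (one-bit-cell) data structure explicitly, which does hold (splitting a cell only decreases each term $(\mu(A_{t,j}\cap N(p))-1/K)^+$) but is not part of the stated hypothesis and must be carried through. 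Two small corrections: your appeal to Lemma~\ref{lem:prob-corrupt} is a misattribution --- no cells are corrupted in your argument, and what you actually need is the uncorrupted average bias of Lemma~\ref{lem:prob-c-x} together with your slack/class accounting; and the constant $a$ of the statement plays no intrinsic role in your version (you only need per-copy success probability $\Omega(\delta a^2)$, which your $\Theta(\delta)$ bound dominates since $a\le 1$), so there is nothing for the function-type bookkeeping to ``absorb.''
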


Assuming Lemma~\ref{lem:quantum-alg}, we can complete the proof of
Lemma~\ref{lem:2-query-1-bit}.

\begin{proof}[Proof of Lemma~\ref{lem:2-query-1-bit}]
The proof is similar to the proof of Theorem 2 of \cite{KW2004}.
Let $\rho_x$ represent the $s$-qubit system consisting of the $r$
copies of the state $\ket{U(x)}$, where $s = r(\log m + 1)$; $\rho_x$
is an encoding of $x$.  Using Lemma~\ref{lem:quantum-alg}, we can
assume we have a quantum algorithm that, given
$\rho_x$, can recover $x_i$ for any $i \in S$ with
probability $\alpha = \frac{1}{2} + \Omega(\eta)$ over the random
choice of $x\in \{0,1\}^n$.

We will let $H(A)$ be the Von Neumann entropy of $A$, and $H(A|B)$ be
the conditional entropy and $H(A:B)$ the mutual information.

Let $XM$ be the $(n + s)$-qubit system
\[ \frac{1}{2^n} \sum_{x \in \{0, 1\}^n} \ket{x}\bra{x} \otimes \rho_x. \]
The system corresponds to the uniform superposition of all $2^n$ strings concatenated with their encoding $\rho_x$.
Let $X$ be the first subsystem corresponding to the first $n$ qubits and $M$ be the second subsystem corresponding to the $s$ qubits. We have
\begin{align*} 
H(XM) &= n + \frac{1}{2^n} \sum_{x \in \{0, 1\}^n} H(\rho_x) \geq n = H(X) \\
H(M) &\leq s,
\end{align*}
since $M$ has $s$ qubits. Therefore, the mutual information $H(X : M) = H(X) + H(M) - H(XM) \leq s$. Note that $H(X | M) \leq \sum_{i=1}^n H(X_i | M)$. By Fano's inequality, if $i \in S$, 
\[ H(X_i | M) \leq H(\alpha) \]
where we are using the fact that Fano's inequality works even if we can recover $x_i$ with probability $\alpha$ averaged over all $x$'s.
Additionally, if $i \notin S$, $H(X_i | M) \leq 1$. Therefore,
\begin{align*}
s \geq H(X : M) &= H(X) - H(X|M) \\
  			 &\geq H(X) - \sum_{i=1}^n H(X_i | M) \\
			 &\geq n - |S| H(\alpha) - (n - |S|) \\
			 &= |S| (1 - H(\alpha)).
\end{align*}
Furthermore, $1 - H(\alpha) \geq \Omega(\eta^2)$ since, and $|S| = \Omega(n)$, we have
\begin{align*}
\frac{2m}{a^2\eps K} (\log m + 1) &\geq \Omega\left(n \eta^2\right) \\
\dfrac{m \log m}{n} &\geq \Omega\left(\eps K\eta^2\right).
\end{align*}
\end{proof}

It remains to prove Lemma~\ref{lem:quantum-alg}, which we proceed to
do in the rest of the section.
We first show that we can simulate our GNS algorithm with a 1-query
quantum algorithm.

\begin{lemma}
\label{lem:q-simul}
Fix an $x \in \{0, 1\}^n$ and $i \in [n]$. Let $D \in \{0, 1\}^m$
be the data structure produced by algorithm $A$ on input $x$. Suppose
$\Pr_{q \sim N(p_i)}[A^D(q) = x_i] = \frac{1}{2} + b$ for $b>0$. Then
there exists a quantum algorithm which makes one quantum query (to
$D$) and succeeds with probability $\frac{1}{2} + \frac{4b}{7}$ to
output $x_i$.
\end{lemma}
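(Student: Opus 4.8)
The plan is to adapt the classical-to-quantum query-simulation technique of Kerenidis--de Wolf~\cite{KW2004}. Recall that on a query $q$ the deterministic $1$-bit-word algorithm $A$ probes two cells $j = j(q)$, $k = k(q)$ and outputs $f_q(D_j, D_k)$ for a Boolean function $f_q \colon \{0,1\}^2 \to \{0,1\}$ that depends only on $q$. First I would expand the $\pm 1$-valued function $(-1)^{f_q}$ in its Fourier basis over the two input bits, writing $(-1)^{f_q(a,b)} = \widehat{f_q}(\emptyset) + \widehat{f_q}(\{1\})(-1)^a + \widehat{f_q}(\{2\})(-1)^b + \widehat{f_q}(\{1,2\})(-1)^{a \oplus b}$ with $\sum_S \widehat{f_q}(S)^2 = 1$; equivalently, classifying $f_q$ up to the ``free'' symmetries (negating an input, swapping the inputs, negating the output) into the five canonical forms: constant, dictator, parity, AND, OR.

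Next I would construct the quantum algorithm: it samples $q \sim N(p_i)$, inspects $f_q$ (which it knows, as $A$ is fixed), and then randomly runs one of four sub-routines with probabilities proportional to $|\widehat{f_q}(\emptyset)|, |\widehat{f_q}(\{1\})|, |\widehat{f_q}(\{2\})|, |\widehat{f_q}(\{1,2\})|$: (i) make no query and output the majority value of $f_q$; (ii) classically read $D_j$ (one quantum query suffices) and output a bit tracking $(-1)^{D_j}$; (iii) likewise for $D_k$; (iv) make one genuine quantum query --- prepare $\ket{1} \otimes \tfrac{1}{\sqrt 2}(\ket{j} - \ket{k})$, apply the query to obtain, up to a global phase, $\ket{1} \otimes \tfrac{1}{\sqrt 2}(\ket{j} - (-1)^{D_j \oplus D_k}\ket{k})$, and measure in the orthonormal basis $\{\tfrac{1}{\sqrt 2}(\ket{j} \pm \ket{k})\}$, which reveals $D_j \oplus D_k$ exactly --- then output a bit tracking $(-1)^{D_j \oplus D_k}$. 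A direct computation shows that, conditioned on $q$, this mixture outputs a bit whose bias toward the classical answer $f_q(D_j, D_k)$ equals $1/\lVert \widehat{f_q}\rVert_1$, which by Cauchy--Schwarz ($\lVert \widehat{f_q}\rVert_1 \le \sqrt{4} = 2$) is at least $\tfrac12$ and has the correct sign. To prevent this per-query bias from cancelling --- the magnitude $1/\lVert \widehat{f_q}\rVert_1$ could in principle be small exactly on the $q$'s where $A$ is right --- I would throttle every query down to a single common bias (emitting a uniform random bit with the leftover probability), so that the conditional output bias is a fixed constant $\beta$ toward $f_q(D_j, D_k)$ for every $q$. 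Averaging over $q \sim N(p_i)$ and using the hypothesis $\mathbb{E}_{q}[(-1)^{f_q(D_j, D_k) + x_i}] \ge 2b$ then gives success probability $\tfrac12 + \beta b$.

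The crude version above yields $\beta = \tfrac12$; obtaining the sharper $\beta = \tfrac47$ requires replacing the naive Fourier-weighted handling of the AND and OR canonical forms --- which is exactly where a single quantum query cannot faithfully simulate two classical queries, since the four relevant post-query states degenerate into two antipodal (hence indistinguishable) pairs while AND and OR separate the members of one pair --- by the optimal constant-bias single-query simulation of AND/OR from~\cite{KW2004}, after which the easy forms (constant, dictator, parity) are throttled to match. This AND/OR case, i.e.\ extracting a conditional bias of $\tfrac47$ uniformly over the confusable inputs from one quantum query while losing nothing on the affine cases, is the main obstacle; everything else is the Fourier/averaging bookkeeping sketched above together with routine one-quantum-query state manipulations.
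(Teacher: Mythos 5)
Your proposal is correct and takes essentially the same route as the paper: the paper's entire proof of this lemma is a black-box appeal to Lemma~1 of \cite{KW2004} (simulating the two classical probes by one quantum query with bias factor $4/7$), and your argument is a reconstruction of that lemma's proof---Fourier/case decomposition of $f_q$, exact one-query parity, the \cite{KW2004} AND/OR gadget with success $11/14$, and averaging over $q \sim N(p_i)$---still citing \cite{KW2004} for the crucial AND/OR simulation. Your explicit throttling of the easy cases (constant, dictator, parity) down to a common bias $\beta=4/7$ is a correct and worthwhile refinement, since with the success probability only guaranteed on average over $q$, a per-query agreement bound of the form ``at least $11/14$'' alone would not suffice; this point is implicit in the appeal to \cite{KW2004}.
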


\begin{proof}
We use the procedure in Lemma 1 of \cite{KW2004} to determine the
output algorithm $A$ on input $x$ at index $i$. The procedure
simulates two classical queries with one quantum query.
\end{proof}

Without loss of generality, all quantum algorithms which make 1-query to $D$ can be specified in
the following manner: there is a quantum state $\ket{Q_i}$, where
\[ \ket{Q_i} = \sum_{c\in\{0, 1\}, j \in [m]} \alpha_{cj}\ket{c}\ket{j} \]
which queries $D$. After querying $D$, the resulting quantum state is
$\ket{Q_i(x)}$, where
\[ \ket{Q_i(x)} = \sum_{c\in \{0, 1\}, j \in [m]} (-1)^{c \cdot D_j} \alpha_{cj}\ket{c}\ket{j}.\]
There is also a quantum measurement $\{ R, I - R \}$ such that, after the
algorithm obtains the state $\ket{Q_i(x)}$, it performs the
measurement $\{ R, I - R\}$. If the algorithm observes $R$, it outputs
$1$ and if the algorithm observes $I - R$, it outputs 0.

From Lemma~\ref{lem:q-simul}, we know there exist a state $\ket{Q_i}$ and a measurement $\{R, I - R\}$ where if algorithm $A$ succeeds with probability $\frac{1}{2} + \eta$ on random $x \sim \{0, 1\}^n$, then the quantum algorithm succeeds with probability $\frac{1}{2} + \frac{4\eta}{7}$ on random $x \sim \{0, 1\}^n$.

In order to simplify notation, we write $p(\phi)$ as the probability of making observation $R$ from state $\ket{\phi}$. Since $R$ is a positive semi-definite matrix, $R = M^{*}M$ and so $p(\phi) = \| M\ket{\phi} \|^2$. 

In exactly the same way as \cite{KW2004}, we can remove parts of the quantum state $\ket{Q_i(x)}$ where $\alpha_{cj} > \frac{1}{\sqrt{\delta m}} = \frac{1}{\sqrt{\eps K}}$. If we let $L = \{ (c, j) \mid \alpha_{cj} \leq \frac{1}{\sqrt{\eps K}} \}$, after keeping only the amplitudes in $L$, we obtain the quantum state $\frac{1}{a}\ket{A_i(x)}$, where
\[ \ket{A_i(x)} = \sum_{(c,j) \in L} (-1)^{c \cdot D_j}\alpha_{cj}\ket{c} \ket{j}, \qquad a = \sqrt{\sum_{(c, j) \in L} \alpha_{cj}^2}. \]

\begin{lemma}
\label{lem:prob-gap}
Fix $i \in S$. The quantum state $\ket{A_i(x)}$ satisfies
\[ \mathop{\E}_{x \in \{0, 1\}^n}\left[ p\left(\frac{1}{a}A_i(x) \right) \mid x_i = 1\right] - \mathop{\E}_{x \in \{0, 1\}^n}\left[ p\left(\frac{1}{a}A_i(x)\right) \mid x_i = 0\right] \geq \frac{8\eta}{7a^2}. \] 
\end{lemma}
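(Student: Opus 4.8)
The plan is to carefully track the effect of the quantum query and the amplitude truncation on the acceptance probability $p(\cdot)$, working backwards from the guarantee of Lemma~\ref{lem:q-simul}. First I would record what the one-query quantum algorithm gives us: by Lemma~\ref{lem:q-simul} combined with the corruption-robustness from Theorem~\ref{thm:nns-ds} (in the $1$-bit-word form, after Lemma~\ref{lem:decrease-w}), for each $i \in S$ the state $\ket{Q_i(x)}$ and measurement $\{R, I-R\}$ satisfy
\[
\mathop{\E}_{x}\bigl[p(Q_i(x)) \mid x_i = 1\bigr] - \mathop{\E}_{x}\bigl[p(Q_i(x)) \mid x_i = 0\bigr] \;\geq\; \frac{8\eta}{7},
\]
since the success probability $\tfrac12 + \tfrac{4\eta}{7}$ averaged over $x$ unpacks exactly into this bias between the two conditional acceptance probabilities (the decoder outputs $1$ iff it observes $R$). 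This is the ``clean'' version of the inequality; the rest of the proof is about showing that replacing $\ket{Q_i(x)}$ by the truncated (unnormalized) vector $\ket{A_i(x)}$, and dividing by the fixed normalization $a$, costs essentially nothing in this gap.

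Next I would argue that the truncation step preserves the gap, following the corresponding argument in~\cite{KW2004}. The key point is that the set $L = \{(c,j) : \alpha_{cj} \le 1/\sqrt{\eps K}\}$ of ``small-amplitude'' coordinates is \emph{independent of $x$} — it depends only on the fixed state $\ket{Q_i}$ — so the operation $\ket{Q_i(x)} \mapsto \ket{A_i(x)}$ is a fixed linear projection $P_L$ applied coordinatewise, and the sign pattern $(-1)^{c\cdot D_j}$ that encodes $x$ sits entirely on the amplitudes, commuting with $P_L$. Writing $\ket{Q_i(x)} = \ket{A_i(x)} + \ket{B_i(x)}$ with $\ket{B_i(x)}$ supported on the complement of $L$, bilinearity of $p(\phi) = \|M\phi\|^2$ gives $p(Q_i(x)) = p(A_i(x)) + p(B_i(x)) + 2\,\mathrm{Re}\langle MA_i(x), MB_i(x)\rangle$. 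The crucial observation is that the discarded mass $\|\ket{B_i(x)}\|^2 = 1 - a^2$ is a constant (independent of $x$), and moreover that the number of large-amplitude coordinates is at most $\delta m = \eps K$ (since their squared amplitudes each exceed $1/(\eps K)$ and sum to at most $1$). This is exactly the hook for the corruption-robustness: corrupting those $\le \eps K$ cells of $D$ flips the corresponding signs in $\ket{B_i(x)}$ but leaves $\ket{A_i(x)}$ untouched, and by Theorem~\ref{thm:nns-ds} the decoder's bias survives such corruptions, which forces the cross term and the $p(B_i(x))$ contribution to the \emph{bias} (not the probability itself) to be controllable. Concretely, averaging over a random corruption of the large coordinates shows $\mathop{\E}_x[p(B_i(x)) \mid x_i{=}1] - \mathop{\E}_x[p(B_i(x)) \mid x_i{=}0]$ and the averaged cross term are negligible, so the full gap $\tfrac{8\eta}{7}$ is inherited by $\mathop{\E}_x[p(A_i(x))\mid x_i{=}1] - \mathop{\E}_x[p(A_i(x))\mid x_i{=}0]$, and dividing by $a^2$ yields the claimed $\tfrac{8\eta}{7a^2}$.

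I expect the main obstacle to be the \textbf{handling of the cross term} $2\,\mathrm{Re}\langle MA_i(x), MB_i(x)\rangle$ and of $p(B_i(x))$: these are not individually small, so one cannot simply drop them. The resolution — which is the genuinely nontrivial part and where I would spend the care — is to exploit that the decoder tolerates arbitrary corruption of the $\le \eps K$ large cells, so we may \emph{average} the bias over all $2^{|L^c|}$ sign patterns on $L^c$; under this averaging $\ket{B_i(x)}$ behaves like an independent random vector, killing its correlation with both $\ket{A_i(x)}$ and with $x_i$, while $\ket{A_i(x)}$ is unaffected. This is precisely the ``average-case LDC'' adaptation of~\cite{KW2004} that the paper's introduction flags as the reason they cannot cite~\cite{KW2004} as a black box. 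Once the gap for $\ket{A_i(x)}$ is established, the remaining bookkeeping (the factors $4/7$ from Lemma~\ref{lem:q-simul}, the normalization $a \le 1$) is routine.
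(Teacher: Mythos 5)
Your proposal is correct in outline and follows the same skeleton as the paper's proof: start from the bias $\E\bigl[p(Q_i(x))\mid x_i=1\bigr]-\E\bigl[p(Q_i(x))\mid x_i=0\bigr]\ge \tfrac{8\eta}{7}$ supplied by Lemma~\ref{lem:q-simul}, decompose $\ket{Q_i(x)}=\ket{A_i(x)}+\ket{B_i(x)}$, invoke the tolerance to $\eps K$ corrupted cells (Lemma~\ref{lem:prob-corrupt} / Theorem~\ref{thm:nns-ds}) to neutralize the heavy part, and finally rescale by $a^2$. The real difference is how the averaging over corruptions is performed. The paper uses exactly two corruptions --- no flips, and all heavy cells flipped --- together with the identity $p(\phi\pm\psi)=p(\phi)+p(\psi)\pm(\bra{\phi}R\ket{\psi}+\bra{\psi}R\ket{\phi})$, so that averaging the two bias inequalities cancels the cross term; you instead average over \emph{all} sign patterns on the heavy coordinates, which cancels the cross term in expectation \emph{and} turns the $p(\tilde B)$ contribution into an $x$-independent constant that drops out of the difference of conditional expectations. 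That extra cancellation is genuinely useful: the paper's two-point average still leaves the bias of $p(B_i(x))$ sitting inside the averaged inequality (this is what its ``one can verify'' quietly absorbs), whereas your full randomization disposes of it explicitly, so your route is, if anything, a cleaner execution of the same idea. One caveat applies to you and to the paper alike: corrupting cell $j$ only flips the phase of the $(1,j)$ coordinate, so sign patterns on the $(0,j)$ components of $\ket{B_i(x)}$ are not realizable by corruptions (and flipping a heavy cell also touches a light $(1,j)$ coordinate that lives in $\ket{A_i(x)}$); this bookkeeping, inherited from the adaptation of \cite{KW2004}, is glossed over in both write-ups, so it is not a gap specific to your argument.
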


\begin{proof}
Note that since $\ket{Q_i(x)}$ and $\{ R , I - R \}$ simulate $A$ and
succeed with probability at least $\frac{1}{2} + \frac{4\eta}{7}$ on a
random $x \in \{0, 1\}^n$, we have that
\begin{align*}
\frac{1}{2}\mathop{\E}_{x \in \{0, 1\}^n}\left[ p\left( Q_i(x) \right) \mid x_i = 1\right] + \frac{1}{2} \mathop{\E}_{x \in \{0, 1\}^n}\left[ 1 - p\left(Q_i(x)\right) \mid x_i = 0 \right] &\geq \frac{1}{2} + \frac{4\eta}{7},
\end{align*} 
which we can simplify to say
\begin{align*}
\mathop{\E}_{x \in \{0, 1\}^n}\left[ p\left( Q_i(x) \right) \mid x_i = 1\right] + \mathop{\E}_{x \in \{0, 1\}^n} \left[ p\left(Q_i(x)\right) \mid x_i = 0 \right] &\geq \frac{8\eta}{7}. 
\end{align*}

Since $\ket{Q_i(x)} = \ket{A_i(x)} + \ket{B_i(x)}$ and $\ket{B_i(x)}$
contains at most $\eps K$ parts, if all probes to $D$ in $\ket{B_i(x)}$
had corrupted values, the algorithm should still succeed with the same
probability on random inputs $x$. Therefore, the following two
inequalities hold:
\begin{align}
\mathop{\E}_{x \in \{0, 1\}^n}\left[ p\left( A_i(x) + B(x) \right) \mid x_i = 1\right] + \mathop{\E}_{x \in \{0, 1\}^n} \left[ p\left(A_i(x) + B(x) \right) \mid x_i = 0 \right] &\geq \frac{8\eta}{7} \label{eq:ineq-1}\\
\mathop{\E}_{x \in \{0, 1\}^n}\left[ p\left( A_i(x) - B(x) \right) \mid x_i = 1\right] + \mathop{\E}_{x \in \{0, 1\}^n} \left[ p\left(A_i(x) - B(x)\right) \mid x_i = 0 \right] &\geq \frac{8\eta}{7} \label{eq:ineq-2}
\end{align}
Note that $p(\phi \pm \psi) = p(\phi) + p(\psi) \pm \left(  \bra{\phi} R \ket{\psi} + \bra{\psi} D \ket{\phi}\right)$ and $p(\frac{1}{c} \phi) = \frac{p(\phi)}{c^2}$. One can verify by averaging the two inequalities (\ref{eq:ineq-1}) and (\ref{eq:ineq-2}) that we get the desired expression. 
\end{proof}

\begin{lemma}
\label{lem:q-alg-ai}
Fix $i \in S$. There exists a quantum algorithm that starting from the quantum state $\frac{1}{a}\ket{A_i(x)}$, can recover the value of $x_i$ with probability $\frac{1}{2} + \frac{2\eta}{7a^2}$ over random $x \in \{0, 1\}^n$. 
\end{lemma}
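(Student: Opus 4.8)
The plan is to convert the average-case acceptance gap established in Lemma~\ref{lem:prob-gap} into an average-case quantum decoder in the most direct possible way: simply run the two-outcome measurement $\{R, I-R\}$ that we already have in hand — the one associated, via Lemma~\ref{lem:q-simul}, to the fixed state $\ket{Q_i}$, with $R = M^{*}M$ — and read a guess for $x_i$ off the outcome. First I would note that the truncation set $L$, the amplitudes $\alpha_{cj}$, the normalization constant $a = \sqrt{\sum_{(c,j)\in L}\alpha_{cj}^{2}}$, and the measurement $\{R, I-R\}$ are all determined by $\ket{Q_i}$ alone, hence depend only on $i$ and not on $x$; so the following is a legitimate quantum algorithm that does not secretly know $x$. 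Given the $(\log m + 1)$-qubit unit state $\frac{1}{a}\ket{A_i(x)}$, apply $\{R, I-R\}$; output $1$ if the outcome is $R$ and $0$ otherwise.

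For the analysis, write $p_b = \E_{x : x_i = b}\big[\,p(\tfrac{1}{a}A_i(x))\,\big]$ for $b \in \{0,1\}$, where $p(\phi) = \|M\ket{\phi}\|^{2} \in [0,1]$ because $\frac{1}{a}\ket{A_i(x)}$ is a unit vector and $0 \preceq R \preceq I$. Since $x$ is uniform, the events $\{x_i = 0\}$ and $\{x_i = 1\}$ each have probability $\tfrac{1}{2}$, so the probability (over the random $x$ and the measurement's internal randomness) that the procedure outputs $x_i$ equals $\tfrac{1}{2} p_1 + \tfrac{1}{2}(1 - p_0) = \tfrac{1}{2} + \tfrac{1}{2}(p_1 - p_0)$. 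Lemma~\ref{lem:prob-gap} gives $p_1 - p_0 \geq \frac{8\eta}{7a^{2}}$ (and in particular the sign tells us that guessing $1$ on outcome $R$ is the correct choice), so the success probability is at least $\tfrac{1}{2} + \frac{4\eta}{7a^{2}} \geq \tfrac{1}{2} + \frac{2\eta}{7a^{2}}$, which is what is claimed. If one insists on the headline constant $\tfrac{2}{7}$ exactly, one can equivalently toss a fair coin and with probability $\tfrac{1}{2}$ output a uniformly random bit instead of the measurement outcome, which costs precisely a factor of $2$.

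The only point that deserves care is that this is purely an \emph{average}-case guarantee: for a worst-case $x$ the quantity $p(\tfrac{1}{a}A_i(x))$ could be anything in $[0,1]$, and only its conditional averages over uniform $x$ are controlled. This is, however, exactly the form that Lemma~\ref{lem:prob-gap} delivers (it is obtained by averaging the two corrupted-run inequalities \eqref{eq:ineq-1} and \eqref{eq:ineq-2} against the uncorrupted run), and it is also exactly the form consumed downstream: in the proof of Lemma~\ref{lem:2-query-1-bit} the decoder is fed into Fano's inequality, which tolerates a decoder whose success is only averaged over the input distribution. Consequently I do not expect a genuine obstacle here; all the substantive work — exploiting the bounded corruption budget $\eps K$ through the parallelogram identity for $p(\cdot)$, and discarding the heavy amplitudes so that $a$ is a fixed constant — has already been carried out in the lemmas leading up to this statement, and this lemma is essentially the bookkeeping step that repackages Lemma~\ref{lem:prob-gap} as a one-query quantum decoder for $x_i$.
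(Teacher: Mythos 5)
Your proof is correct, and it takes a genuinely more direct route than the paper's. You simply perform the fixed measurement $\{R, I-R\}$ on $\frac{1}{a}\ket{A_i(x)}$ and output $1$ on observing $R$; since $x$ is uniform, the average success probability is exactly $\tfrac{1}{2}q_1 + \tfrac{1}{2}(1-q_0) = \tfrac{1}{2} + \tfrac{1}{2}(q_1 - q_0) \geq \tfrac{1}{2} + \tfrac{4\eta}{7a^2}$ by Lemma~\ref{lem:prob-gap}, which is even stronger than the stated $\tfrac{1}{2} + \tfrac{2\eta}{7a^2}$ (and your observation that $R$, $L$, $a$ depend only on $i$, not on $x$, is the right legitimacy check). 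The paper instead transcribes the structure of Theorem~3 of~\cite{KW2004}: it splits into cases ($q_1 \geq \tfrac12 + b$ versus $q_0 \leq \tfrac12 - b$, and $q_1 + q_0 \geq 1$ versus not) and, in the main case, biases the decoder by outputting $0$ with probability $1 - \tfrac{1}{q_1+q_0}$ before measuring, which yields success $\tfrac{q_1}{q_1+q_0} \geq \tfrac12 + \tfrac{b}{2} = \tfrac12 + \tfrac{2\eta}{7a^2}$. That rebalancing is what \cite{KW2004} needs in the \emph{worst-case} LDC setting, where one only has per-$x$ threshold bounds on $p(\tfrac{1}{a}A_i(x))$ that need not straddle $\tfrac12$ symmetrically when $a^2$ is bounded away from $1$; in the present \emph{average-case} formulation, where Lemma~\ref{lem:prob-gap} controls the gap of conditional expectations directly, your unbiased decoder suffices in all cases and gives a better constant. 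Both arguments consume exactly Lemma~\ref{lem:prob-gap} and both plug equally well into Lemma~\ref{lem:quantum-alg} (which only needs success $\tfrac12 + \Omega(\eta/a^2)$), so the difference is one of economy: your version is shorter and sharper, the paper's stays closer to the cited LDC argument.
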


\begin{proof}
The algorithm and argument are almost identical to Theorem 3 in \cite{KW2004}, we just check that it works under the weaker assumptions. Let 
\[ q_1 = \mathop{\E}_{x \in \{0, 1\}^n}\left[p\left(\frac{1}{a}A_i(x)\right) \mid x_i = 1\right] \qquad q_0 = \mathop{\E}_{x \in \{0, 1\}^n}\left[ p\left(\frac{1}{a}A_i(x)\right) \mid x_i = 0 \right]. \]
From Lemma~\ref{lem:prob-gap}, we know $q_1 - q_0 \geq \frac{8\eta}{7a^2}$. In order to simplify notation, let $b = \frac{4\eta}{7a^2}$. So we want a quantum algorithm which starting from state $\frac{1}{a} \ket{A_i(x)}$ can recover $x_i$ with probability $\frac{1}{2} + \frac{b}{2}$ on random $x \in \{0, 1\}^n$. Assume $q_1 \geq \frac{1}{2} + b$, since otherwise $q_0 \leq \frac{1}{2} - b$ and the same argument will work for $0$ and $1$ flipped. Also, assume $q_1 + q_0 \geq 1$, since otherwise simply outputting $1$ on observation $R$ and $0$ on observation $I - R$ will work. 

The algorithm works in the following way: it outputs $0$ with probability $1 - \frac{1}{q_1 + q_0}$ and otherwise makes the measurement $\{R, I - R\}$ on state $\frac{1}{a} \ket{A_i(x)}$. If the observation made is $R$, then the algorithm outputs $1$, otherwise, it outputs $0$. The probability of success over random input $x \in \{0, 1\}^n$ is 
\begin{multline}
\mathop{\E}_{x \in \{0, 1\}^n} \left[ \Pr[\text{returns correctly}] \right] \\= \frac{1}{2}  \mathop{\E}_{x \in \{0, 1\}^n} \left[ \Pr[\text{returns 1}] \mid x_i = 1 \right] + \frac{1}{2}  \mathop{\E}_{x \in \{0, 1\}^n}\left[ \Pr[\text{returns 0}] \mid x_i = 0 \right]. \label{eq:correct}
\end{multline}
When $x_i = 1$, the probability the algorithm returns correctly is $(1 - q) p\left(\frac{1}{a} A_i(x)\right)$ and when $x_i = 0$, the probability the algorithm returns correctly is $q + (1 - q)(1 - p(\frac{1}{a}A_i(x)))$. So simplifying (\ref{eq:correct}),
\begin{align*}
\mathop{\E}_{x \in \{0, 1\}^n}\left[\Pr[\text{returns correctly}] \right] &= \frac{1}{2}(1 - q) q_1 + \frac{1}{2}(q + (1 - q)(1 - q_0)) \geq \frac{1}{2} + \frac{b}{2}.
\end{align*}
\end{proof}

Now we can finally complete the proof of Lemma~\ref{lem:quantum-alg}.

\begin{proof}[Proof of Lemma~\ref{lem:quantum-alg}]
Again, the proof is exactly the same as the finishing arguments of
Theorem 3 in \cite{KW2004}, and we simply check the weaker conditions
give the desired outcome.  On input $i \in [n]$ and access to $r$
copies of the state $\ket{U(x)}$, the algorithm applies the
measurement $\{ M_i^{*} M_i, I - M_{i}^* M_i \}$ where
\[ M_{i} = \sqrt{\eps K} \sum_{(c ,j) \in L} \alpha_{cj} \ket{c, j}\bra{c , j}. \]

This measurement is designed in order to yield the state $\frac{1}{a}
\ket{A_i(x)}$ on $\ket{U(x)}$ if the measurement makes the observation
$M_i^* M_i$. The fact that the amplitudes  of $\ket{A_i(x)}$ are not
too large makes $\{ M_i^*M_i, I - M_i^*M_i\}$ a valid measurement.

The probability of observing $M_i^{*}M_i$ is $\bra{U(x)} M_i^* M_i
\ket{U(x)} = \frac{\delta a^2}{2}$, where we used that $\delta =
\frac{\eps K}{m}$. So the algorithm repeatedly applies the measurement
until observing outcome $M_i^* M_i$. If it never makes the
observation, the algorithm outputs $0$ or $1$ uniformly at random. If
the algorithm does observe $M_i^* M_i$, it runs the output of the
algorithm of Lemma~\ref{lem:q-alg-ai}. The following simple
calculation (done in \cite{KW2004}) gives the desired probability of
success on random input,
\begin{align*}
\mathop{\E}_{x \in \{0, 1\}^n}\left[ \Pr[\text{returns correctly}] \right] &\geq \left(1 - (1 - \delta a^2/2)^r\right) \left(\frac{1}{2} + \frac{2\eta}{7a^2}\right) + (1 - \delta a^2/2)^r \cdot \frac{1}{2} \geq \frac{1}{2} + \frac{\eta}{7a^2}.
\end{align*}
\end{proof}

\subsubsection{On adaptivity}
\label{sec:adaptivity}

We can extend our lower bounds from the non-adaptive to the adaptive setting. 

\begin{lemma}
\label{lem:adaptivity}
If there exists a deterministic data structure which makes two queries
adaptively and succeeds with probability at least $\frac{1}{2} +
\eta$, there exists a deterministic data structure which makes the two
queries non-adaptively and succeeds with probability at least
$\frac{1}{2} + \frac{\eta}{2^{w}}$.
\end{lemma}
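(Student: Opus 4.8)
The plan is to reduce an adaptive two-probe strategy to a non-adaptive one by \emph{guessing} the content of the first cell that is probed. Recall that in a deterministic adaptive two-probe algorithm the first probe location $j_1 = j_1(q)$ depends only on the query $q$ (so this probe is already non-adaptive), whereas the second probe location $j_2 = j_2(q, D_{j_1})$ and the final decision rule depend additionally on the $w$-bit value $D_{j_1}$ read at the first probe. The only obstruction to non-adaptivity is thus the dependence of $(j_2,\text{decision})$ on $D_{j_1}$, and since $D_{j_1}$ ranges over only $2^w$ values we can afford to guess it.

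Concretely, I would define a randomized non-adaptive algorithm $R'$ as follows: on query $q$, draw $g \in \{0,1\}^w$ uniformly at random, probe the two cells $j_1(q)$ and $j_2(q,g)$ non-adaptively, and read their contents $D_{j_1}, D_{j_2}$. If $D_{j_1} = g$, output exactly what the adaptive algorithm would output — this is well defined, since in that case $j_2(q,g)$ is precisely the cell the adaptive algorithm probes second and its decision rule is a function of $D_{j_1}=g$ and $D_{j_2}$. If $D_{j_1}\neq g$, output an independent uniformly random bit. For every fixed value of $g$ this is a deterministic non-adaptive two-probe procedure.

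For the analysis: since $g$ is uniform and independent of the input $x$ and the query $q$, we have $\Pr[g = D_{j_1}] = 2^{-w}$, and conditioned on this event the joint distribution of $(x,q)$ is unchanged while $R'$ mimics the adaptive algorithm verbatim, so it succeeds conditionally with probability at least $\tfrac12 + \eta$; conditioned on $g \neq D_{j_1}$ it outputs a fair coin and succeeds with probability exactly $\tfrac12$. Hence
\[
\Pr[R' \text{ succeeds}] \;\geq\; 2^{-w}\Bigl(\tfrac12 + \eta\Bigr) + \bigl(1 - 2^{-w}\bigr)\cdot \tfrac12 \;=\; \tfrac12 + \tfrac{\eta}{2^{w}}.
\]
Finally, fixing the internal randomness of $R'$ (equivalently, the map $q \mapsto g$) to its best value by the same averaging/Yao argument used in Lemma~\ref{lem:det-alg}, we obtain a deterministic non-adaptive two-probe data structure succeeding with probability at least $\tfrac12 + \tfrac{\eta}{2^{w}}$.

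The step deserving the most care — and essentially the only subtle point — is the conditioning: one must verify that the event $\{g = D_{j_1}\}$ does not distort the conditional distribution of $(x,q)$ (it does not, by independence of $g$) and that on the complementary event the algorithm contributes exactly $\tfrac12$ rather than something adversarially worse (it does, because the output there is an independent fair coin uncorrelated with $x_i$). Everything else is routine bookkeeping, and the factor $2^{-w}$ loss is exactly the price of enumerating possible first-cell contents.
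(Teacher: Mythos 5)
Your proposal is correct and follows essentially the same route as the paper: guess the $w$-bit contents of the first probed cell, simulate the adaptive algorithm on that guess to make both probes non-adaptively, output the simulated answer if the guess was right and a fair coin otherwise, giving success probability $2^{-w}(\tfrac12+\eta)+(1-2^{-w})\tfrac12=\tfrac12+\eta/2^{w}$. Your added averaging step to fix the randomness (which the paper leaves implicit) is a fine finishing touch but not a departure from the argument.
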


\begin{proof}
The algorithm guesses the outcome of the first cell probe and simulates
the adaptive algorithm with the guess. After knowing which two probes
to make, we probe the data structure non-adaptively. If the algorithm
guessed the contents of the first cell-probe correctly, then we output
the value of the non-adaptive algorithm. Otherwise, we output a random
value. This algorithm is non-adaptive and succeeds with probability at
least $\left(1 - \frac{1}{2^w}\right) \cdot \frac{1}{2} +
\frac{1}{2^w} \left( \frac{1}{2} + \eta \right) = \frac{1}{2} +
\frac{\eta}{2^{w}}$.
\end{proof}

Applying Lemma~\ref{lem:adaptivity}, from an adaptive algorithm succeeding with probability $\frac{2}{3}$, we obtain a non-adaptive algorithm succeeding with probability $\frac{1}{2} + \Omega(2^{-w})$. This value is lower than the intended $\frac{2}{3}$, but we may still reduce to a weak LDC, where we require $\gamma = \Theta(2^{-w})$, $\eps = \Theta(2^{-w})$, and $|S| = \Omega(2^{-w} n)$. With these minor changes to the parameters in Subsections~\ref{sec:det-ds} through~\ref{sec:conn-ldc}, one can easily verify 
\[ \dfrac{m \log m \cdot 2^{\Theta(w)}}{n} \geq \Omega\left( \Phi_r\left( \frac{1}{m}, \gamma\right)\right). \]
This inequality yields tight lower bounds (up to sub-polynomial factors) for the Hamming space when $w = o(\log n)$. 

In the case of the Hamming space, we can compute robust expansion in a similar fashion to Theorem~\ref{one_probe_thm}. In particular, for any $p, q \in [1, \infty)$ where $(p-1)(q-1) = \sigma^2$, we have
\begin{align*}
\dfrac{m \log m \cdot 2^{O(w)}}{n} &\geq \Omega(\gamma^q m^{1 + q/p - q}) \\
m^{q - q/p + o(1)} &\geq n^{1 - o(1)} \gamma^q \\
m &\geq n^{\frac{1 - o(1)}{q - q/p + o(1)}} \gamma^{\frac{q}{q - q/p + o(1)}} = n^{\frac{p}{pq - q} - o(1)} \gamma^{\frac{p}{p - 1} - o(1)}.
\end{align*}
Let $p = 1 + \frac{w f(n)}{\log n}$ and $q = 1 + \sigma^2 \frac{\log n}{w f(n)}$ where we require that $w f(n) = o(\log n)$ and $f(n) \rightarrow \infty$ as $n \rightarrow \infty$. Then,
\begin{align*}
m &\geq n^{\frac{1}{\sigma^2} - o(1)} 2^{\frac{\log n}{wf(n)}} \geq n^{\frac{1}{\sigma^2} - o(1)}.
\end{align*}

\section{Acknowledgments}

We would like to thank Jop Bri\"{e}t for helping us to navigate
literature about LDCs. We also thank Omri Weinstein for useful
discussions. Thanks to Adam Bouland for educating us on the~topic of quantum computing. Thijs Laarhoven is supported by the SNSF ERC Transfer Grant CRETP2-166734 FELICITY.
This material is based upon work supported by the National Science Foundation Graduate
Research Fellowship under Grant No. DGE-16-44869. It is also supported
in part by NSF CCF-1617955 and Google Research Award.

{
\small
\singlespacing
\bibliographystyle{alpha}
\bibliography{bibfile}
}

\end{document}